\documentclass[prx,twocolumn,nofootinbib, superscriptaddress]{revtex4-2}

\usepackage[utf8]{inputenc}
\usepackage{amsmath}
\usepackage{amsthm}
\usepackage{enumitem}
\usepackage{amsfonts}
\usepackage{amssymb}
\usepackage{mathrsfs}
\usepackage{mathtools}
\usepackage{url}
\usepackage{graphicx}
\usepackage{comment}
\usepackage[caption=false]{subfig}
\usepackage{ragged2e}
\usepackage[colorlinks]{hyperref}
\hypersetup{citecolor=blue,
urlcolor=blue}
\newtheorem{theorem}{Theorem}

\newcommand{\ket}[1]{\lvert #1 \rangle}
\newcommand{\bra}[1]{\langle #1 \rvert}
\usepackage{soul}
\usepackage{color}
\usepackage{tikz}
\usepackage{balance}

\definecolor{nblue}{rgb}{0.3,0.3,1.0}
\definecolor{ngreen}{rgb}{0.2,0.7,0.2}
\definecolor{nred}{rgb}{0.9,0.1,0}
\definecolor{red2}{rgb}{0.6,0.2,0.2}
\definecolor{npurple}{rgb}{0.8,0.2,0.8}
\definecolor{golden}{rgb}{0.8,0.6,0.1}
\definecolor{nsilver}{rgb}{0.3,0.4,0.5}
\definecolor{nbrown}{rgb}{0.8,0.4,0.15}
\definecolor{nrose}{rgb}{0.7,0,0.35}
\definecolor{nviol}{rgb}{0.5,0,1.0}
\definecolor{nazur}{rgb}{0,0.35,0.7}
\definecolor{nchart}{rgb}{0.2,0.4,0}
\definecolor{nbrick}{rgb}{0.55,0.25,0.15}
\def\spc{7pt}

\newcommand{\blk}{\color{black}}
\newcommand{\red}{\color{nred}}


\newcommand{\expectation}[1]{\langle #1 \rangle}

\newenvironment{customthm}[1]  {\innercustomthm}
  {\endinnercustomthm}

\begin{document}
\date{\today}

\title{Properties and Applications of Partially Deterministic Polytopes}
\author{Marwan Haddara}
\affiliation{Quantum and Advanced Technologies Research Institute, Griffith University, Yugambeh Country, Gold Coast, Queensland 4222, Australia}

\affiliation{Centre for Quantum Computation and Communication Technology (Australian Research Council), Quantum and Advanced Technologies Research Institute, Griffith University, Yuggera Country, Brisbane, Queensland 4111, Australia}

\author{Howard M. Wiseman}
\affiliation{Centre for Quantum Computation and Communication Technology (Australian Research Council), Quantum and Advanced Technologies Research Institute, Griffith University, Yuggera Country, Brisbane, Queensland 4111, Australia}

\author{Eric G. Cavalcanti}
\affiliation{Quantum and Advanced Technologies Research Institute, Griffith University, Yugambeh Country, Gold Coast, Queensland 4222, Australia}

\begin{abstract}
    The assumption of a deterministic local hidden variable model constrains the experimentally accessible statistics in a Bell experiment to be contained in a certain convex polytope (the Bell-local polytope).  \blk But what if the outputs for only a subset of the measurements at each site are predetermined by the model? In this work, we embark on a thorough mathematical exploration of this concept of `partial determinism', allowing for arbitrary numbers of parties, inputs and outputs per site. 
    We find that the assumption of (local) partial determinism implies that the behaviours are contained in convex polytopes which always contain the Bell polytope and are contained in the no-signalling polytope, recovering each of them as special cases, but also leading to distinct structures. Nontrivial equivalence classes of partially deterministic models arise, which we classify completely. In particular, we find that the Bell polytope for any scenario can be expressed in multiple different ways in terms of local partially deterministic models on a given scenario. This allows us to both generalise and strengthen Fine's theorem, recovering the original formulation as a special case, but finding new constraints otherwise. While a `partially determinisic model' may be thought of as a weakening of the local hidden variable assumption, we also discuss scenarios with different physical motivations, which do not require the causal structure of the Bell scenario, and where classes of partially deterministic polytopes arise as constraints.  Our example applications include device-independent quantum state inseparability witnesses, classes of broadcast-local polytopes, and Local Friendliness scenarios in quantum foundations.  We also point out instances in previous literature where classes of related objects have been studied. In the case of correlations compatible with the Local Friendliness assumptions in particular, we find a one-to-one correspondence between partially deterministic polytopes and sequential extended Wigner's friend scenarios, showing that every partially deterministic polytope has physical relevance. We discuss how the framework of partially deterministic polytopes captures a broad class of non-classicality notions, and identify an even broader notion of `composable sets', of which partially deterministic polytopes are special cases. 
\end{abstract}

\maketitle
\tableofcontents
\section{Introduction}

The celebrated work of Bell \cite{Bell1964} laid the groundwork for the theoretical approach now referred to as the \emph{device-independent} \cite{Brunner2014, Scarani2012} or \emph{black box} \cite{Acin2017BlackBoxUnspeakablesBook} framework, where the main objects of interest are \emph{behaviours} which represent the collection of data obtainable in a given experiment. In particular, Bell showed \cite{Bell1964} that it is possible to experimentally address the seemingly metaphysical question of the existence of local hidden-variable models for quantum behaviours. He did this by showing that the assumption of local determinism implies certain inequalities in an experiment involving space-like separated measurements, and that quantum theory predicts the violation of Bell's inequalities. This profound conclusion has since been supported by experiments on various kinds of quantum systems \cite{Hensen2015, Rosenfeld2017, Stortz2023}, demonstrating that nature violates local determinism, and indeed, Bell's later and weaker notion of \textit{local causality}~\cite{Bell76,Wiseman2017,Cavalcanti2021}.

Considered as a resource, the property of Bell inequality violation is related to advantages in various information processing tasks including randomness expansion \cite{Ma2016}, quantum key distribution \cite{Pirandola2020, Zapatero2023} and reduction of communication complexity \cite{Buhrman2010}. Furthermore, in many cases the devices can be guaranteed to perform the task correctly even if the manufacturer is untrusted (the \emph{device-independent} approach), and sometimes even without the need to assume the validity of quantum mechanics itself (the \emph{theory-independent} approach). Significant research effort has been devoted to understanding both the theory and the applicability of device-independent information processing  \cite{Pironio_2016, Pirandola2020, Brunner2014, Scarani2012}. 

On the foundational side, the seminal works of Tsirelson \cite{Tsirelson1980, Tsirelson1993} and Popescu and Rohrlich \cite{Popescu1994} drew attention to the fact that the set of quantum behaviours is in fact strictly contained in the set of no-faster-than-light-signalling (no-signalling for short) behaviours. This opened the question of what singles out quantum theory from possible alternative theories with non-classical but no-signalling correlations. Various principles have been proposed \cite{Brunner2014, Popescu2014, Scarani2012} in an ongoing investigation to better understand the physical significance of the quantum-superquantum boundary. Simultaneously, new insight and  implications of quantum correlations has been gained both by considering alternative ideas about causality, or specific models of the physics at play in Bell-scenarios \cite{Svetlichny1987,  Bancal2012, Cavalcanti2012, Bancal2013, Barnea2013, Wiseman2017, Liang2020} and by extending the investigation to more exotic situations  \cite{Branciard2010, Fritz_2012, Bong2020, Utreras-Alarcon2024, Ying2024relatingwigners, Haddara2025} where other forms of classicality assumptions are well motivated. 

Recent works in quantum foundations, exploring the Local Friendliness no-go theorem \cite{Bong2020, Cavalcanti2021Foundphys, Cavalcanti2021, Wiseman2023thoughtfullocal, Utreras-Alarcon2024, Ying2024relatingwigners, Haddara2025} in particular, have accentuated a new class of behaviour sets of interest. These behaviour sets share some of the defining mathematical features of both the local deterministic set and the no-signalling set, but in general equals neither of them. Classes of those objects had also been investigated under the name `partially deterministic polytopes' in the PhD thesis of Woodhead (\cite{Woodhead2014}, Appendix), which is the terminology we adopt in this work.  Intuitively,  a behaviour is partially deterministic (in this terminology) if it is consistent with a  model where the outputs of some given subset of inputs are \emph{locally} determined. A more precise terminology for this property would be `local partial determinism',  which we employ for now, however as we clarify later, there are justifiable reasons why the emphasis on the descriptive term `local' may be dropped in this instance.    

Previous works have only considered the implications of local partial determinism in specific cases, narrowing the attention to either bipartite scenarios \cite{Bong2020, Utreras-Alarcon2024, Woodhead2014} or multipartite scenarios with at most one deterministic input per site \cite{Haddara2025}. In this work, we explore the concept of partial determinism in full generality, allowing for arbitrary numbers of parties, inputs and outputs for each party, and arbitrary subsets of deterministic inputs per site. 

Among our results we show that, similarly to the sets of local deterministic and no-signalling behaviours \cite{Tsirelson1993, Brunner2014, Scarani2012}, local partially deterministic sets are convex polytopes. In addition to describing the form of their vertices,
we investigate the relationships of the different partially deterministic polytopes that may be defined over a given scenario, and contrast them with the other well known behaviour sets, including the sets of Bell-local, quantum, and no-signalling behaviours. We find that that partially deterministic models of different kinds form equivalence classes of polytopes, which we classify completely. The equivalence class corresponding to the no-signalling polytope always has a single element, while the equivalence class containing the local deterministic polytope always has more than one element.  We also show that, excluding the extreme cases where a local  partially deterministic polytope equals the local deterministic or no-signalling polytopes, the set of quantum behaviours is neither fully contained in, nor fully contains, any partially deterministic polytope. 

As a useful technical tool to study the properties of local partially deterministic polytopes, we develop a multipartite generalization of the `behaviour product' introduced by Woodhead \cite{Woodhead2014}, by which behaviours in a given scenario can be constructed from behaviours defined over strictly smaller scenarios. This allows us to define a general notion of `composable sets', of which local partially deterministic polytopes are a special case. While we mostly focus on the properties of local partially deterministic polytopes, we also provide some example cases where the more general composable sets are of interest. 

Inspired by one of the equivalences given in Fine's theorem \cite{Fine1982, Fine1982b}, namely that the set of local deterministic behaviours is equivalent to the set of local factorizable behaviours, we also consider the more general notion of local partial factorizability. We show that a partial analogue of Fine's theorem can be formulated, establishing the equivalence, among other statements,  of the existence of a partially deterministic local model with the existence of a partially factorizable local model.  As a consequence of the nontrivial equivalence class of partially deterministic polytopes corresponding to the local deterministic set, we also  extend Fine's theorem by adding the existence of partially extended distributions with suitable marginals to his list of equivalent statements. 

In this work, we mostly treat  partial determinism as a mathematical property, and describe broadly the behaviour sets compatible with such properties. This approach is intentional, as it allows us to emphasize the convex-geometric structure of partially deterministic polytopes without committing to any particular physical interpretation or motivation. Indeed, we will show that in addition to the previously known cases in the literature \cite{Woodhead2014, Bong2020, Wiseman2023thoughtfullocal, Utreras-Alarcon2024, Haddara2025}, there are various other situations where the mathematical form of a partially deterministic model is relevant. We believe that, owing to the generality of our presentation, classes of partially deterministic polytopes could be of significance in further contexts of device-independent enquiry. 

This work is structured as follows. In Section \ref{Section:preliminaries} we introduce the background theory and mathematical tools.  In order to make the main body of text as easy to follow as possible, we build a self-contained presentation starting from the most basic notions.  In Section \ref{Subsection:Setsofbehaviours} we introduce various common sets of behaviours, with emphasis on their mathematical definitions. Section \ref{section:GeometricalPerespectiveOnBehaviours} reviews the geometrical perspective on behaviour sets and introduces the basic notions of the theory of convex polytopes. In section \ref{Section:SubscenariosandRestrictions} we formally introduce the notion of a subscenario and the restriction map, which we make use of in proving various inclusion relations. 

Section \ref{Section:PartialDeterministicMathsSection} contains the main results. Section \ref{section:PartialDeterminismSubsection} introduces the sets of partially predictable and partially deterministic behaviours, and contains most of the mathematical results concerning partially deterministic (local) polytopes. It also introduces the notion of composable sets. In Section \ref{Section:PartialseparabilityandFactorizable} we define the more general notions of partial uncorrelatedness and partial factorizability, and show that the existence of a  partially factorizable local model is equivalent to the existence of a partially deterministic local model. Here we also prove both a generalized version of Fine's theorem, and a strengthened version of  its usual formulation. 

In Section \ref{Section:Applications} we exhibit some situations where partially deterministic polytopes arise. In Section \ref{Section:device-independentinseparabilitywitnesses} we show that classes of partially deterministic polytopes provide device-independent quantum state inseparability witnesses. We also show how the intersections, unions, and convex hulls of their unions are all related to other notions that have been previously investigated in the literature. In Section \ref{Section:BroadcastLocality} we show that partially deterministic polytopes are precisely the broadcast-local set in a family of broadcasting scenarios \cite{Bowles2021, Boghiu2023} and also discuss how the more general examples of composable sets attain relevance for other classes of broadcast scenarios. In Section \ref{Section:GeneralSequentialWignerSection} we generalise the sequential Wigner's friend scenario introduced in Ref.~\cite{Utreras-Alarcon2024} and show that the sets of behaviours compatible with the assumptions that underlie the Local Friendliness no-go theorem \cite{Bong2020} are in one-to-one correspondence with partially deterministic polytopes. We finish with concluding remarks in Section \ref{Section:conclusions}, and we conclude with a Finnish remark here: Lopussa kiitos seisoo.

\section{\label{Section:preliminaries}Preliminaries}
In this work, we take the background structure to be that of a \emph{correlation scenario}, also known as a Bell scenario \cite{Brunner2014}. A correlation scenario $S$ can be specified by a set $I=\{1,...,|I|\}$ of space-like separated parties or agents, a set $M_i$ of inputs for each agent $i\in I$, and a set $O_{x_i}$ of outputs for each input $x_i\in M_i$. Pictorially, in these scenarios each party $i$ is in possession of a black box, which on each round of experiment takes an input $x_i \in M_i$ and produces an output $a_{x_i} \in O_{x_i}$. An illustration of the setup is given in Fig.~\ref{fig:correlationscenarioFig}.
\begin{figure}
    \centering
    \includegraphics[width=\columnwidth]{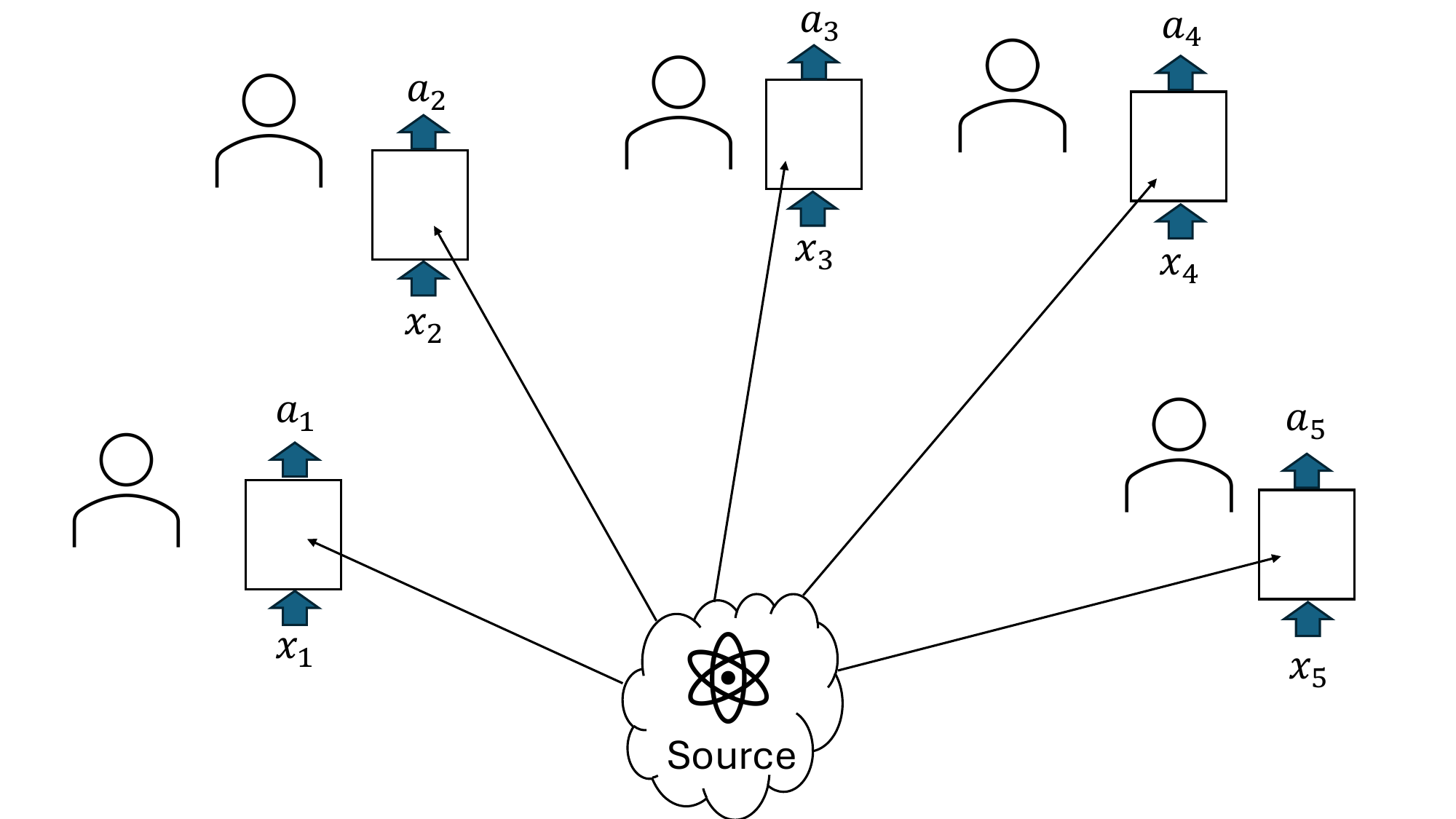}
    \caption{\label{fig:correlationscenarioFig} An example of a correlation scenario with five parties. Each party is in possession of a black box, which takes an input $x_i$ and produces an output $a_{x_i}$. It is assumed that the parties perform their interventions at space-like separation, but whatever is inside their boxes may have originated from a common source. 
    }
\end{figure} 

Following multiple rounds of the experiment, the statistics of the parties are gathered to estimate the collection of empirical probability distributions denoted by 
\begin{align}
   \wp \coloneqq \{\wp(a_{x_1}, a_{x_2}, \ldots , a_{x_{|I|}} | x_1, x_2, \ldots , x_{|I|} )\}  \label{behaviourEquation}
\end{align}
referred to as the \emph{behaviour} \cite{Brunner2014}.
We will often  refer directly to the individual probability distributions 
\begin{align}
  \wp(\vec{a}|\vec{x}) \coloneqq \wp(a_{x_1}, a_{x_2}, \ldots , a_{x_{|I|}} | x_1, x_2, \ldots , x_{|I|} ) \label{Equation:ProbabilitiesinBehaviour}
\end{align}
when discussing the properties of behaviours. Here we have adopted notation analogous to \cite{Haddara2025}, where $\Vec{x} \in M_1 \times M_2 \times \ldots M_{|I|} := \vec{M}$ represents a string of the measurements chosen by the agents, and $\Vec{a} \in O_{\Vec{x}} \coloneqq O_{x_1} \times O_{x_2} \times \ldots \times O_{x_{|I|}}$ is a string representing the outcomes of those measurements.  We highlight that the indices labelling the measurements have been dropped from the notation for the string $\Vec{a}$ in Eq.~\eqref{Equation:ProbabilitiesinBehaviour} above, as those can be implicitly taken to be specified by the value of $\Vec{x}$ in the conditional probability. Accordingly, we may also denote by $a_i$ the outcome of the $i$'th agent when the corresponding intervention $x_i$ is already clearly specified. Throughout this work, in order to exclude trivial cases, we will set $|M_i|, |O_{x_i}| \geq 2$ for all $i \in I, x_i \in M_i$, unless explicitly stated otherwise. The sets $I$, $ M_i$ and $O_{x_i}$ may otherwise be of arbitrary but finite size. 

Given a subset of parties $V \subset I$ we denote 
\begin{align}
    \wp(\Vec{a}_V | \Vec{x}) := \sum_{{ a_i:\, } i\in (I\setminus V)} \wp(\Vec{a} | \Vec{x})  
\end{align}
for the marginal distribution with  $\vec{a}_V $ representing a string consisting of outcomes in the sets $O_{x_{v}}$ for each $v\in V$. More generally, given a string $\Vec{s}$ we denote by $\Vec{s}_V$ the ordered substring of $\Vec{s}$ containing the elements $\{s_i\}_{i\in V}$. For a string of length one, such as  $\vec{a}_{\{i\}}$ we will simply write $a_i$.

Note that we employ the convention where the subset relation ``$A \subset B$'' is understood as the implication $p \in A \Rightarrow p \in B$, which does not rule out the possibility that $A=B$. For a strict subset the notation ``$\subsetneq$'' is used.

It is useful to use $M$ to represent the collection $\{M_i\}_{i\in I}$  of all the $M_i$, 
and, similarly, $O$ to represent $\{ O_{\Vec{x}} \}_{\Vec{x} \in \vec{M}}$. Then a correlation scenario $S$ can be identified by the triple $S = (I, M, O)$. In some works, e.g.~in \cite{Haddara2025}, the triple $S$ has also been referred to as the \emph{public scenario}; such terminology becomes particularly relevant, for example, in the type of extended Wigner's friend scenarios considered in Refs.~\cite{Brukner2017,Brukner2018,Bong2020,Utreras-Alarcon2024,Wiseman2023thoughtfullocal, Haddara2025}, where the experiment contains additional structure while maintaining the behaviour defined over the sets of public variables as the relevant object of interest.  Similar distinction will become useful for us in Section \ref{Section:GeneralSequentialWignerSection} as well. For now, however, by a \emph{scenario}, we mean any triple $S = (I,M,O)$.  A behaviour is always defined relative to some $S$, a fact which we incorporate in the definition in the next subsection for ease of reference.

\subsection{\label{Subsection:Setsofbehaviours}Sets of behaviours}
\definition{\label{definition:ThesetOfAllBehaviours}(The set of all  behaviours)\\Since the study of behaviour is ethology, the set of all behaviours for a given scenario $S$ is denoted by $\mathbf{E}(S)$. Every behaviour $\wp \in \mathbf{E}(S)$  satisfies

\begin{align}
    \wp(\vec{a}|\vec{x}) \geq 0 \hspace{0.3cm} \forall \vec{x},\vec{a}\in O_{\Vec{x}}, \hspace{0.2cm} \mathrm{and} \hspace{0.2cm}  \sum_{\vec{a}
    }\wp(\vec{a}|\vec{x}) =1 \hspace{0.3cm} \forall \vec{x}
\end{align}
by virtue of consisting of normalised probability distributions. 
}\\
\normalfont

In Definition \ref{definition:ThesetOfAllBehaviours} it is assumed that every input must produce one and only one output in every run of the experiment.

\definition{(Predictable behaviour)\label{predictablebehaviour}\\
A behaviour $\wp \in \mathbf{E}(S)$ is \emph{predictable} if and only if 
\begin{align}
    \wp(\vec{a}|\vec{x}) \in \{0,1 \} \hspace{0.3cm} \forall \vec{a}, \vec{x}.
\end{align}
The set of predictable behaviours is denoted by $\mathbf{P}(S)$.
\normalfont

\definition{(Uncorrelated behaviour)\label{DefinitionSeparablebehaviour}\\
A behaviour  $\wp \in \mathbf{E}(S)$ is termed \emph{uncorrelated}, or \emph{product } if and only if 
\begin{align}
  \label{Equation:separablebehaviour}  \wp(\vec{a}|\vec{x}) = \prod_{i\in I} \wp(a_i|\vec{x}) \hspace{0.3cm} \forall \vec{a}, \vec{x}.
\end{align}
The set of uncorrelated behaviours is denoted by $\mathbf{U}(S)$
}\\
\normalfont

Clearly $\mathbf{P}(S) \subsetneq \mathbf{U}(S)$. This can be seen, for example, by noting that by definition if $\wp\in \mathbf{P}(S)$ then from the definition of conditional probability

\begin{align}
    \wp(\vec{a}|\vec{x}) = \wp(\vec{a}_{I \setminus \{i \}}|\vec{x}, a_i )\cdot \wp(a_i|\vec{x}).\label{Equation:ConditionalProbForPredictableisSep}
\end{align}
The conditional expression $\wp(\vec{a}_{I \setminus \{i \}}|\vec{x}, a_i )$ is defined only for the events $a_i$ with nonzero probability, which owing to predictability of $\wp$ is unique. Hence the dependence on $a_i$ in  Eq.~\eqref{Equation:ConditionalProbForPredictableisSep} can be dropped. The  same kind of argument can be employed on the remaining marginal $\wp(\vec{a}_{I \setminus \{i \}}|\vec{x})$ and by iteration eventually leadsto  
\begin{align}
    \wp(\vec{a}|\vec{x}) = \prod_{i\in I} \wp(a_i|\vec{x}).
\end{align}
Thus, $\wp \in \mathbf{P}(S) \Rightarrow \wp \in \mathbf{U}(S)$ and hence $\mathbf{P}(S) \subset \mathbf{U}(S)$. To see that the inclusion is strict, take for example $\wp(a_i|\vec{x}) = 
1/{|O_{x_i}|}$ for all $a_i, \vec{x}$ in the defining Eq.~\eqref{Equation:separablebehaviour}, which provides a counterexample to the claim that the sets would be equal.

In Bell-type scenarios 
it is usually assumed that 
communication between the distant laboratories  is impossible, 
thus restricting to no-signalling behaviours: 
 
\definition{(No-signalling behaviours\label{No-signallingbehaviour})\\
  The subset $\mathbf{NS}(S) \subset \mathbf{E}(S)$ of no-signalling behaviours   consists of those $\wp$ for which  $\forall (i, x_i \neq x_i')$ 
  \begin{align}
     \wp(\Vec{a}_{I\setminus\{ i\}}| x_1\ldots x_i \ldots x_N) = \wp( \vec{a}_{I\setminus \{i\}}|x_1 \ldots x_i' \ldots x_N). \label{no-signalling}
 \end{align}}

\normalfont
For independence relations such as those in Eq.~\eqref{no-signalling}, we can equivalently write $\wp(\vec{a}_{I\setminus\{i\}}|\vec{x}) = \wp(\vec{a}_{I\setminus \{i\}}|\vec{x}_{I\setminus \{i\} })$,  which is to say that the distribution $\wp(\vec{a}_{I\setminus \{i\}}|\vec{x}_{I \setminus \{i\}})$ may represent any of the  marginal distributions $\wp(\vec{a}_{I\setminus\{i\}}|\vec{x})$ $\wp(\vec{a}_{I\setminus\{i\}}|\vec{x}')$  $\vec{x} \neq\vec{x}'$ which have $\vec{x}_{I\setminus \{i\}} = \vec{x}'_{I\setminus\{i\}}$, since they are equivalent.
 It can be seen, by repeated use of Equation \eqref{no-signalling}, that for all $\wp \in \mathbf{NS}(S)$,  $\wp(\Vec{a}_V| \Vec{x}) = \wp(\Vec{a}_V| \Vec{x}_V)$ for any $V \subset I$.

 The marginal independence  relations of Definition \ref{No-signallingbehaviour} guarantee, in particular,  that for every no-signalling behaviour the local statistics $\wp(a_i|\vec{x})$ at site $i \in I$ do not depend on the settings at other laboratories.  We introduce analogues of Definitions \ref{predictablebehaviour} and \ref{DefinitionSeparablebehaviour} which are consistent with this picture.  

\definition{(Predictable no-signalling behaviour)\\
The set $\mathbf{P}_{\mathbf{NS}}(S)$of predictable no-signalling behaviour is defined as the intersection 
\begin{align}
   \mathbf{P}_{\mathbf{NS}}(S) =  \mathbf{P}(S) \cap \mathbf{NS}(S).
\end{align}
}
\definition{(Uncorrelated no-signalling behaviour)\\
The set $\mathbf{U}_{\mathbf{NS}}(S)$ of uncorrelated no-signalling behaviour is defined as the intersection 
\begin{align}
   \mathbf{U}_{\mathbf{NS}}(S) = \mathbf{U}(S)\cap \mathbf{NS}(S).
\end{align}
}
\normalfont

A similar argument as in the case of $\mathbf{P}(S), \mathbf{U}(S)$ can be employed to establish that $\mathbf{P}_{\mathbf{NS}}(S) \subsetneq \mathbf{U}_{\mathbf{NS}}(S)$.

All of the definitions so far refer directly to properties of behaviours $\wp$, which represent the observational statistics in the experiment.  In the device-independent framework, important sets of behaviour are often characterized by indirect means, via constraints imposed on the \emph{models} which are expected to reproduce them. We introduce a few of such sets which are of particular importance.

\definition{\label{dEFINITION:lOCALDeterminism} A behaviour $\wp \in \mathbf{E}(S)$ is said to admit a local deterministic model, if and only if 
\begin{align}
    \wp(\vec{a}|\vec{x}) = \sum_{\lambda \in \Lambda} P(\lambda)\prod_{i\in I}D(a_i|x_i, \lambda),
\end{align}
for some finite set $\Lambda $, a probability distribution $P(\lambda)$ over $\Lambda$, and  a 
set of deterministic mappings $D(a_i|x_i, \lambda) \in \{0,1\}$ for all $a_i, x_i, \lambda$. The set of local-deterministic behaviours is denoted by $\mathbf{LD}(S)$.
}\\
\normalfont

That, in fact $\mathbf{LD}(S)\subset \mathbf{NS}(S)$ holds can be seen from the normalization of the individual distributions $D(a_i|x_i,\lambda)$.

The set $\mathbf{LD}(S)$ of Definition \ref{dEFINITION:lOCALDeterminism} consists of the behaviours which are compatible with a picture of the source encoding on each of the systems deterministic instructions for how to respond to all local measurements $x_i$  at each site $i\in I$. Thus, behaviour $\wp \in \mathbf{LD}(S)$ could in principle be produced by a purely classical (deterministic) process. Formally, if there  are some preparations $\xi_1$ and $\xi_2$ of the systems at the source  which, when probed in a scenario $S$ lead to behaviours $\wp_{\xi_1} = \{\wp_{\xi_1}(\vec{a}|\vec{x})\}$  and $\wp_{\xi_2} = \{\wp_{\xi_2}(\vec{a}|\vec{x})\}$, respectively, the mixture $\sum_{k\in \{1,2\}} P(\xi_k)\wp_{\xi_k}$  of those behaviours is another well-defined behaviour, which may be realised by having the source distribute systems prepared according to the different recipes $\xi_k$ randomly with probabilities $P(\xi_k)$ at each round of the experiment.  It is clear that the set $\mathbf{LD}(S)$ is exactly equivalent to all the set of behaviour which can be produced in the above manner with $\wp_{\xi_k}\in \mathbf{P}_{\mathbf{NS}}(S)$ for all $k$.  Mathematically this equivalence can be expressed as the set $\mathbf{LD}(S)$ being equal to the convex hull of $\mathbf{P}_{\mathbf{NS}}(S)$.

\definition{(\label{Definition:ConvexhULL}Convex hull of a set of behaviours)\\ Let $S$ be a scenario and $\mathbf{K}(S)\subset \mathbf{E}(S)$ some arbitrary subset of behaviours in that scenario. The convex hull of $\mathbf{K}(S)$ is defined as the set 
\begin{align}
&\mathrm{conv}[\mathbf{K}(S)] =  \left\{ \blk\sum_{j}\omega_j\wp_j  | \wp_j \in \mathbf{K}(S) \hspace{0.1cm}\forall j\in J , |J| < \infty,\right. \nonumber \\
& \hspace{3cm} \left. \omega_j \geq 0   \hspace{0,1cm} \forall j \in J , \sum_{j\in J} \omega_j = 1  \right\}. \blk
\end{align}
 The set $\mathbf{K}(S)$ is convex, iff it is closed under convex combinations, that is iff $\mathrm{conv}[\mathbf{K}(S)] = \mathbf{K}(S)$.
}\\
\normalfont

The convex hull is a standard notion in convex geometry, and  in Definition \ref{Definition:ConvexhULL} we have stated it in one of many equivalent ways it can be defined, while referring to the main objects of interest in this work. Thought of as an operator on sets, $\mathrm{conv}[\cdot]$ is idempotent, meaning that $\mathrm{conv}[\mathrm{conv}[\mathbf{K}(S)]] = \mathrm{conv}[\mathbf{K}(S)]$ for all sets $\mathbf{K}(S)$.

While the identification $\mathrm{conv}[\mathbf{P}_{\mathbf{NS}}(S)] = \mathbf{LD}(S)$ is mathematically clear,  it is worth drawing attention to a point of conceptual significance. Namely,  a distinction between the existence of a model with certain properties, and the claim that the behaviour is realised by such a process, should be made. For example, there are no-signalling behaviours $\wp \in \mathbf{NS}(S)$ that have convex decompositions $\wp=\sum_{j}\omega_j \wp_j$ where some of the $\wp_j \in \mathbf{E}(S)\setminus \mathbf{NS}(S)$, which is to say that the component $\wp_j$'s do not satisfy the no-signalling condition of Def.~\ref{No-signallingbehaviour}. If a physical theory postulates the existence of such ontic variables, it may therefore satisfy the operational no-signalling principle of Def.~\ref{No-signallingbehaviour}, defined at the level of behaviours, while violating an ontological notion of locality definable also at the level of such ontic variables -- see, for example, Ref.~\cite{Cavalcanti2012} for a discussion of this point.

For this reason, whenever defining various models for behaviours, we will use notation such as $P(\cdot)$ instead of $\wp(\cdot)$  to refer to probabilities specified by the model, which need not themselves represent operational statistics, and may also be defined over a probability space including further (``ontic'' or ``hidden'') variables postulated by the model. 

We  also recall  a set of behaviours compatible with a seemingly broader class of models. 

\definition{(LHV-modelable behaviour\label{LHVBehaviour})\\
  The subset $\mathbf{B}(S) \subset \mathbf{E}(S)$ of LHV-modelable, or Bell-local behaviour, consists of those $\wp$ which admit a factorizable local model, that is, which admit a decomposition of the form
  \begin{align}
      \wp(\Vec{a}|\Vec{x}) = \int_{\Lambda} p(\lambda)\cdot \prod_{i \in I} P(a_i | x_i, \lambda), \label{BellLocalDeterministiceq}
  \end{align}
where $\Lambda$ is a measurable set, $p(\lambda)\geq 0$ and $ \int_{\Lambda}p(\lambda)=1$. }\\
\normalfont

LHV models can be justified from various different sets of natural notions about locality and causality when applied to a Bell scenario, as discussed at length for example in Ref.~\cite{Wiseman2017}. We are mostly concerned with the mathematical features of behaviours which are compatible with such a decomposition. 

Clearly $\mathbf{B}(S) \supset \mathrm{conv}[\mathbf{U}_{\mathbf{NS}}(S)] \supset \mathrm{conv}[\mathbf{P}_{\mathbf{NS}}(S)] = \mathbf{LD}(S)$.  A well-known result established by Fine \cite{Fine1982, Fine1982b} implies, however, that in fact $\mathbf{B}(S) = \mathbf{LD}(S)$ and thus,  from a mathematical perspective, the question of the existence of either kind of model for a behaviour is equivalent.

\theorem{(A Fine equivalence theorem)\label{FinesTHRM}\\
The following statements are equivalent.

\begin{enumerate}[label=\roman*), ref=\ref{FinesTHRM}.\roman*]
    \item The behaviour $\wp$ is LHV-modelable in the sense of Definition \ref{LHVBehaviour}.
    \label{FinesLHVmodelable}

    \item There exists a joint distribution  \label{FinesJointDistribution} ${P^{S_{\infty}}}(\boldsymbol{\alpha})$, $\boldsymbol{\alpha} = (\alpha_{x_1 = x_1^1},\ldots ,\alpha_{x_1 = x^{|M_1|}_1}, \ldots ,\alpha_{x_{|I|} = x^{|M_{|I|}}_{|I|}}) \in O_{x_1 = x_1^{1}} \times \ldots \times O_{x_1 = x_1^{|M_1|}} \times \ldots \times O_{x_{|I|} = x^{|M_{|I|}|}_{|I|}} $
    over all the potential outcomes of all possible measurements for all agents, which recovers all the distributions $\wp(\vec{a}|\Vec{x})$ in the behaviour $\wp$ by appropriate marginalization. That is,
    \begin{equation}\label{JointdistributionMarginalizationEq}
     \wp(\vec{a}_{\vec{x}}|\vec{x}) = P^{S_{\infty}}(\boldsymbol{\alpha}_{\vec{x}} = \vec{a}_{\vec{x}}) 
    \end{equation}
    where 
    \begin{align}
          P^{S_{\infty}}(\boldsymbol{\alpha}_{\vec{x}}) := \sum_{\alpha_{x_i'}: x_i' \neq x_i} {P}^{S_{\infty}}( \boldsymbol{\alpha}).
    \end{align}

    \item \label{FinesDeterministicDistribution} The behaviour admits a local deterministic model in the sense of Definition \ref{dEFINITION:lOCALDeterminism}.

\end{enumerate}
}
\begin{proof}
   We omit the proof.  The theorem was stated in this exact form with proofs in Ref.~\cite{Haddara2025}. Other alternative presentations of the result, or some aspects of it,  can be found for example in Refs.~\cite{Fine1982, Fine1982b,Werner2001,Abramsky2011}.
\end{proof}
\normalfont

\remark{ \label{Remark:RemarkonFinesThrm} Fine's original work \cite{Fine1982} established the equivalences of Theorem \ref{FinesTHRM} in the special case of a bipartite scenario with two inputs 
$x_i \in \{x_i^1, x_i^2 \}$ and outputs $a_i \in \{a_i^1, a_i^2\}$ per site. That work \cite{Fine1982} also included the two additional statements in the list of equivalences of Theorem \ref{FinesTHRM} in the form stated below (in our notation): 
\begin{enumerate}
    \item The behaviour $\wp$ satisfies all the 4 versions obtainable by relabelings of inputs and outputs $(a_i^j,x_i^j) \leftrightarrow (a_{i}^{j'},x_i^{j'})$ of the Clauser-Horne (CH)-inequality \cite{Clauser1974}, which can be represented as
    \begin{align}
        -1 &\leq \wp(a_1^1, a_2^1|x_1^1,x_2^1) + \wp(a_1^1,a_2^2 |x_1^1, x_2^2) + \wp(a_1^2, a_2^2 |x_1^2,x_2^2) \nonumber \\
        &- \wp(a_1^2, a_2^2|x_1^2,x_2^2) - \wp(a_1^1|x_1^1) - \wp(a_2^2|x_2^2) \leq 0
    \end{align}
    \item There exists a pair of joint distributions $P_1(a_1^1, a_1^2, a_2^1|x_2^1)$ and $P_2(a_1^1, a_1^2, a_2^2|x_2^2)$ which agree on the marginal 
    \begin{align}
        P_1(a_1^1, a_1^2|x_2^1) =P_2(a_1^1,a_1^2|x_2^2) := P(a_1^1,a_1^2)
    \end{align}
    and which can be used to recover the distributions in the behaviour as marginals as in
    \begin{align}
        \wp(a^j_1, a^{j'}_{2}|x_1^j, x_2^{j'}) = P_{j'}(a_1^j, a_2^{j'}|x_2^{j'}).
    \end{align}
\end{enumerate}
}
\normalfont

In later work \cite{Fine1982b} Fine generalized some of the statements in Ref.~\cite{Fine1982} to the scope of Theorem \ref{FinesTHRM}, and included a result which generalized the second statement in Remark \ref{Remark:RemarkonFinesThrm} further to a certain multipartite form. The general applicability of these two bipartite statements in particular was challenged \cite{Garg1982,Fine1982Response} and finally shown not to be generally sufficient for the existence of an LHV model by a counterexample by Garg and Mermin \cite{Garg1982b}. A modern geometric view of Bell inequalities, which we review in Section \ref{section:GeometricalPerespectiveOnBehaviours}, gives a complete generalization of statement 1 that applies for all scenarios $S$. As a consequence of our main results,  we find, in Section \ref{Section:PartialseparabilityandFactorizable}, a multipartite, multi-input generalization of the second statement above which applies to all scenarios $S$, expanding the scope of Theorem \ref{FinesTHRM} even further and recovering both the form in above bipartite case, and Fine's later multipartite extension \cite{Fine1982b} of statement 2 as special cases.  

\normalfont

Finally, we define the set of behaviours compatible with a quantum mechanical description.

\definition{(Quantum behaviours\label{quantumbehaviour})\\
The set $\mathbf{Q}(S)$ of quantum behaviours in a Bell scenario, with  $|I|$ agents, is the set which can be modelled by a Hilbert space of the form $\mathcal{H}_{\Vec{A}} = \mathcal{H}_{A_1} \otimes \mathcal{H}_{A_2} \otimes \ldots \otimes \mathcal{H}_{A_{|I|}}$, a quantum state $\rho$, i.e. a positive unit-trace linear operator $\rho: \mathcal{H}_{\Vec{A}} \rightarrow \mathcal{H}_{\Vec{A}} $, and a set of positive operator valued measures (POVMs) with elements $M_{a_i|x_i}$ (i.e.~for which $M_{a_i|x_i} \geq 0 \; \forall i,x_i, a_i$ and $\sum_{a_i \in O_{x_i}} M_{a_i|x_i} = I_{\mathcal{H}_{A_i}} \forall i, x_i$), such that
\begin{align}
    \wp(\Vec{a}|\Vec{x}) = \mathrm{tr}[M_{a_1|x_1} \otimes M_{a_2|x_2} \otimes \ldots \otimes M_{a_{|I|} | x_{|I|}} \rho ]. \label{QuantumGeneralBornRule}
\end{align}
}
\normalfont

The set $\mathbf{Q}(S)$ defined in this manner is compatible with the scenario depicted in Fig.~\ref{fig:correlationscenarioFig} in the sense of  allowing only the implementation of local measurements on each system $i\in I$. It is immediately seen that $\mathbf{Q}(S)$ is a subset of $\mathbf{NS}(S)$, owing to the normalization of the local POVMs.

  Quantum behaviour are, in general, not predictable. It is easy to show, however, that for any predictable behaviour $\wp \in \mathbf{P}_{\mathbf{NS}}(S)$, there exists a quantum model that realises it. This immediately entails that $\mathbf{LD}(S)\subset \mathbf{Q}(S)$; a well known relation (see e.g.~\cite{Brunner2014}), which,  for completeness,  we prove in Appendix \ref{appendixProofthatLDisInQuantum}.

\theorem{\label{Theorem:LDisInQuantum}Let $S = (I,M,O)$ be a scenario. Then $\mathbf{LD}(S) \subset \mathbf{Q}(S)$. }
\begin{proof}
See Appendix \ref{appendixProofthatLDisInQuantum}. 
    \end{proof}

\normalfont

Theorem \ref{Theorem:LDisInQuantum} completes this part of the mathematical introduction by establishing the order $\mathbf{B}(S) \subset \mathbf{Q}(S) \subset \mathbf{NS}(S)$. In the next section, we will provide a geometric outlook to the sets of behaviour described here. We will discuss both Bell's theorem \cite{Bell1964}, which established the famous strictness relation $\mathbf{B}(S) \subsetneq \mathbf{Q}(S)$, and the works  of Tsirelson \cite{Tsirelson1980} along with Popescu and Rohrlich \cite{Popescu1994}, which made famous the strictness relation $\mathbf{Q}(S) \subsetneq \mathbf{NS}(S)$. Knowledge of these results will be useful for a general description of partial determinism in Section \ref{Section:PartialDeterministicMathsSection}.

\normalfont

\subsection{\label{section:GeometricalPerespectiveOnBehaviours}Geometrical perspectives on behaviours}

As seen in the previous section, some sets of behaviours admit a convex structure. It is possible to give a substantially sharper geometric picture however, which applies generally. This geometric perspective has proven very useful and is nowadays a standard tool in  both the study of device-independent, or black-box, information processing and foundations, see e.g. Refs.~\cite{Werner2001, Scarani2012, Brunner2014} for reviews. We will give a soft introduction to the topic here. The mathematical concepts we deal with are standard in the study of convex polytopes, and we present them as such.  For  more comprehensive mathematical treatments of the topic, we direct the reader to any of the textbooks available in the subject, e.g.~\cite{Brondsted1983, Grünbaum2003}. 

As a starting point, we note that any behaviour $\wp \in \mathbf{E}(S)$ may be thought of as a point in $\mathbb{R}^{d}$, with $d = \sum_{\vec{x}\in \vec{M}}|O_{\vec{x}}|$ the number of all the individual probabilities in a behaviour. By fixing coordinate systems labelled by the strings $(\vec{a},\vec{x})$ a behaviour $\wp$ may then be represented as a vector $\vec{\wp}\in \mathbb{R}^d$ (not to be confused with the string notation employed for $(\vec{a},\vec{x}$)). The labels in the sets $I, M, O$ may of course be arbitrary, but a choice has to be made as to how the behaviours are represented in $\mathbb{R}^d$.

A possible convention is what we term here a `natural representation' in which the elements in each set $O_{x_i}$ $M_i$, for all $i\in I, x_i$ are ordered by some choice of bijections $O_{x_i} \rightarrow \{1, 2, \ldots, |O_{x_i}|\}$, $M_i \rightarrow \{1,2,\ldots , |M_i| \}$ and so on. Thus associating each outcome $a_i \in O_{x_i}$ and input $x_i \in M_i$ for all $i\in I, x_i \in I$ with a natural number. The vector  $\vec{\wp}$ is then constructed by convention as the ordered set of distributions $\wp(\vec{a}|\vec{x})$ via $\vec{\wp}= [\wp( \vec{a}=\vec{1}|\vec{x}=\vec{1}), \wp(a_1 =2, \vec{a}_{I\setminus \{1\}}| \vec{x}=\vec{1}), \ldots, \wp(\vec{a}= |\vec{O}_{\vec{x}=\vec{1}}| |\vec{x}=\vec{1}), \ldots, \wp( \vec{a} = |\vec{O}_{\vec{x} = |\vec{M}|}| | \vec{x} = |\vec{M}|)]^{T}$, with $|\vec{O}_{\vec{x}}| = (|O_{x_1}|, \ldots , |O_{x_{|I|}} )$ and $|\vec{M|} = (|M_1|, \ldots |M_{|I|}|)$ etc., given the choice of bijection. Fortunately fixing such a convention will not be necessary for our purely analytical approach, but drawing attention to this fact, is instructive for the following discussion.

Since each individual probability satisfies $1 \geq \wp(\vec{a}|\vec{x})\geq 0$, the set of all behaviour $\mathbf{E}(S)$ is contained in the hypercube $\mathscr{C}^d = [0,1]^{d}$, which itself is clearly a convex,  closed and bounded subset of $\mathbb{R}^d$, with finite number of vertices. In other words, $\mathscr{C}^d $ is a \emph{convex polytope} \cite{Grünbaum2003}. 

\definition{(Convex polytope)\label{Definition:Convexpolytope}
A convex polytope is the convex hull of any finite set of points.
}\\
\normalfont

We emphasize for the purpose of upcoming discussions that Definition \ref{Definition:Convexpolytope} is general, in that it does not apply just to behaviour sets. There are also alternative equivalent ways to describe a convex polytope \cite{Grünbaum2003}. A particularly useful sharpening of Definition \ref{Definition:Convexpolytope} is that a convex polytope may be expressed as the convex hull of a specific finite set, namely the set of its extreme points, or vertices.

\definition{(Extreme point of a convex set)\label{Definition:Extremepointofaconvexset} Let $K \subset \mathbb{R}^d$ be a convex subset. The set $\mathrm{Ext}(K)$ of extreme points of $K$ are those $\vec{r}\in K$ for which if $\omega_j>0$, $\sum_{j\in j}\omega_j=1$ and $\vec{r}= \sum_j \omega_j \vec{r}_j$ for some $\vec{r}_j\in K$ for all $j\in J$, then $\vec{r}_j = \vec{r}$ for all $j\in J$. }\\
\normalfont

Intuitively, a point is extremal, if and only if it does not lie on a line between any two points in the convex set. The extreme points of a convex polytope provide its vertex representation, or $V$-representation, which is a minimal set of points which can be used to represent every point in the polytope by convex combinations. It is easy to see, that in the case of the hypercube $\mathscr{C}^d = [0,1]^d $ the extreme points are precisely the vectors $\vec{r}\in \{0,1\}^d$, which form the `corners' of the cube. This follows from the observation that for any $\vec{r}_1, \vec{r}_2\in \{0,1\}^d$, the vectors differ if and only if there is at least one component $r_{\vec{a}|\vec{x}}$ which equals to zero in one of the vectors and which is nonzero in the other. Therefore for that component, a convex combination gives a value in the interval $(0,1)$ and thus, it is not possible to express any third point $\vec{r}_3\in \{0,1\}^d$, $\vec{r}_1\neq\vec{r}_2 \neq \vec{r}_3$, in this manner. 

Yet another equivalent, and very useful way to define a convex polytope is as a bounded intersection of finite number of closed half-spaces of $\mathbb{R}^d$, meaning that the polytope is defined by the set of points $\vec{r}\in \mathbb{R}^d$ which satisfy a collection of linear inequalities of the form $\vec{b}\cdot\vec{r} \leq c$, with $\vec{b}$ a fixed vector of coefficients (with at least one non-zero) and $c$ a constant. Clearly every half-space is itself a convex set, and hence its intersection with any closed and bounded convex set gives in general another closed and bounded convex set. 

By definition, if an  inequality $\vec{b}\cdot \vec{r} \leq c$ is valid for a polytope $K$, then none of the points are in the open half-space $\vec{b}\cdot \vec{r} >c$. The hyperplane $H = \{ \vec{r}\in \mathbb{R}^d: \vec{b}\cdot \vec{r} =c \}$ defined by the closed half-space touches (or supports, since it does not cut the set) the polytope if the intersection $H\cap K$ is non-empty. In such a case either the intersection $H\cap K$ consists of a single point, or a set of points. In any case it is said to delimit a \emph{face} of the polytope $K$, which is easily seen to be another convex polytope, in general of smaller dimension.

Formally, for any subset $W \subset \mathbb{R}^d$, the (affine) dimension $\mathrm{dim} [W] $ of $W$ is defined as the dimension of the vector space that spans its affine hull $\mathrm{aff}[W]= \{\sum_{j\in J} \gamma_j w_j| w_j \in W \hspace{0.1cm} \forall j\in J \}$, with $J$ a finite set and $\gamma_j \in \mathbb{R}$ for all $j\in J$ and $\sum_j \gamma_j = 1 $. Note that in contrast with  convex combinations, the affine weights $\gamma_j$ are not required to be non-negative, while in contrast to general linear combinations the sum of the weights is required to be 1. A set of points $\{p_j\}_{j\in J}$ is affinely independent, if and only if none of the points $p_j$  is an affine combination of the others. It is seen that if a set of points is linearly independent, then it is also affinely independent.  On the other hand, the  affine independence of the set $\{p_j\}_{j\in J}$ is equivalent to the linear independence of a set with $|J|-1$ elements  of the form $\{p_j -p_{k} \}_{j\in J\setminus \{k\}}$. A $D$-dimensional object therefore contains $D+1$ affinely independent points. 

 Faces of different type for the polytope can be identified depending on their dimension, with a $0-$face corresponding to a single point, $1-$face corresponding to an edge, and so on.  A special class of half-spaces valid for a $D$-dimensional polytope are of particular importance: namely those, that delimit the \emph{facets} of the polytope, which are faces of dimension $D-1$. If the polytope $K$ is full-dimensional in $\mathbb{R}^d$, that is if $\mathrm{dim}[K] = d$, then the facet defining inequalities give the unique minimal half-space representation ($H$-representation) $K = \{ \vec{r}\in \mathbb{R}^d : \vec{r}\cdot \vec{b}_j \leq c_j \hspace{0.1cm} \forall j\in J \}$. In cases where $\mathrm{dim}[K]  < d$, the uniqueness of such a  representation is lost.

The hypercube $\mathscr{C}^d = [0,1]^d$ is full-dimensional in $d$, since in particular, the $d$ unit vectors $e_j, j\in J$ with $|J|= d$ and $e_j\cdot e_{j'} = \delta_{j, j'}$ for all $j, j' \in J $ are all contained in $\mathscr{C}^d$ and span the entirety of $\mathbb{R}^d$. Indeed, a set of $d+1$ affinely independent points contained in the cube can be formed from  the basis unit vectors $e_j$ and the origin $\vec{0}$. An $H$-representation is given by the $d^2$ inequalities $  \vec{r} \cdot e_j \geq 0 $ and  $\vec{r}\cdot e_j \leq 1$. These inequalities form the `sides' of the cube, and it is easily seen that they are, in fact,  facet defining. Namely, since for each inequality $\vec{r}\cdot e_j \geq 0$ there are,  $2^{d-1}$ vertices of the cube with that component of the vector zero, and in particular, those $2^{d-1}$ vectors contain the  $d-1$ vectors $e_{j'}, j' \in J\setminus \{j\}$ which are linearly independent and hence the claim follows as before. Exactly the same argument can be used in the case of all the inequalities $\vec{r}\cdot e_j\leq 1$ to get $d-1$ linearly independent vectors  and to that set can be added the vector $\vec{1}$ which has all components equal to one, which is affinely independent from the others and hence, every such inequality is indeed facet defining. 

The inclusion $\mathbf{E}(S) \subset \mathscr{C}^d$ was pointed out before. Now it is clear that not every vector $\vec{r}$ in the hypercube $\mathscr{C}^d$ represents a a valid behaviour. As an example, the origin $\vec{0}$ of $\mathbb{R}^d$ is one of the vertices of the cube, but does not correspond to any behaviour, which by virtue consisting of probability distributions have to have non-zero elements to be consistent with normalization. Since each behaviour $\wp$ has $\prod_{i\in I}|M_i|$ probability distributions labelled by the distinct $\vec{x}$, the set of points in  $\mathbf{E}(S)$ may be expressed as the intersection of $\mathscr{C}^d \cap \{\vec{r} \in \mathbb{R}^d| \sum_{\vec{a}}r_{\vec{a}|\vec{x}}=1 \hspace{0.1cm} \forall \vec{x}\} = \mathscr{C}^d \cap \{ \mathrm{Norm} \}$. The set $\{\mathrm{Norm}\}$  consists of those $\vec{r} \in \mathbb{R}^d$ which have certain numbers of their components, denoted individually by $r_{\vec{a}|\vec{x}}$, sum to one. 

The demand that $\vec{r}$ satisfies constraints of the form $\vec{r}\cdot \vec{b} = c$ may, of course, always be represented as the statement that $\vec{r}$ is contained in the intersection of two closed half spaces defined by $\vec{r}\cdot \vec{b} \leq c$ and $\vec{r}\cdot \vec{b}\geq c$. Thus, the intersection $\mathscr{C}\cap \{\mathrm{Norm}\}$, as the intersection of a convex polytope and a convex set (the intersection  of the two half-spaces), gives another convex polytope, namely $\mathbf{E}(S)$.

That $\mathbf{E}(S)$ is a polytope could, of course,  also have been established by simply noting that a convex combination of probability distributions is another valid probability distribution, and hence $\mathbf{E}(S)$ is convex. On the other hand, every probability distribution with a finite number of outcomes may be represented as a convex combination of precisely the predictable distributions, so that in fact any $\wp \in \mathbf{E}(S)$ would admit a convex decomposition  $\wp = \sum_{j\in J}\omega_j P_j$ with  $ P_j \in \mathbf{P}(S)$ for all $j\in J$. Thus $\mathbf{E}(S)$ is seen to be equivalent to the convex hull of predictable behaviour. Furthermore, each predictable behaviour is, by an analogous argument as in the case of corners of the cube, easily seen to be extremal. Thus, also $\mathrm{Ext}(\mathbf{E}(S)) = \mathbf{P}(S)$, so that the vertex representation of $\mathbf{E}(S)$ is easy to give. We believe providing this alternative, more generally grounded perspective to be insightful, however. 

The dimension $ \tilde{D}$ of $\mathbf{E}(S)$ is reduced by the marginalization constraints, since essentially only $|O_{\vec{x}}|-1$ probabilities of the outcomes $\vec{a}\in O_{\vec{x}}$ need to be specified in each distribution to determine the last one. The dimension of $\mathbf{E}(S)$ after the marginalization constraints are taken into account is therefore at most $\tilde{D} \leq \sum_{\vec{x}\in \vec{M}}|O_{\vec{x}}| - \prod_{i\in I}|M_i|$. To see that in fact $\tilde{D} = \sum_{\vec{x}\in \vec{M}}|O_{\vec{x}}| - \prod_{i\in I}|M_i|$, it is sufficient to consider predictable behaviours, of which there are $|\mathbf{P}(S)| = \prod_{\vec{x}\in \vec{M}}|O_{\vec{x}}|$. Sets of $\tilde{D}+1$ linearly independent points are easily seen to be included in the set. Consider for example  the predictable behaviour constructed of the $|O_{\vec{x}=\vec{1}}| + \sum_{\vec{x} \in \vec{M}, \vec{x}\neq 1}(|O_{\vec{x}}|-1) = \tilde{D}+1$ points which have the components $\wp(\vec{a}|\vec{x}=\vec{1})$ going through all permutations of the location of the nonzero output, while in every block with $\vec{x}\neq\vec{1}$ the nonzero component $\wp(\vec{a}|\vec{x})$ is fixed to some single output. Then, the next family of $|O_{\vec{x}}|-1$ vectors $\vec{x}\neq \vec{1}$ has the output-permutations of the component $\wp(\vec{a}|\vec{x})$ nonzero by going through all the $|O_{\vec{x}}|-1$ components not equal to the fixed value in the previous case and for every other case $\vec{x}'\neq \vec{1}, \vec{x}' \neq \vec{x}$ fixed to to the same single output as in the case of $\vec{x}=\vec{1}$, with $\wp(\vec{a}|\vec{x}=\vec{1})$ fixed to be nonzero for an arbitrary ouput. Clearly, these two families are linearly independent, since by construction, every such vector has a zero in a different location, and hence any linear linear combination of those points has to have their weights zero to represent the origin $\vec{0}$. This construction can be repeated for the remaining $\vec{x}$, thus leading to a set of $\tilde{D}+1$ linearly independent, and hence also affinely independent points, proving the claim. 

 By a similar counting argument we can also immediately establish, that the inequalities $\vec{r}\geq0$ remain facet defining for $\mathbf{E}(S)$. To do this, note first that for every inequality $r_{\vec{a}|\vec{x}}\geq 0$ the number $|\mathbf{P}(S)_{r_{\vec{a}|\vec{x}=0}}|$ of predictable points on that hyperplane  is

\begin{align}
  |\mathbf{P}(S)_{r_{\vec{a}|\vec{x}=0}}| = & (|O_{\vec{x}}| -1)(\prod\limits_{ \mathclap{\substack{\vec{x}' \in \vec{M}, \\ \vec{x}'\neq  \vec{x}}}} |O_{\vec{x}'}|), 
\end{align}  
we can then use a similar construction, by removing one of the $|O_{\vec{x}}
|$ first choices in the process which is set to zero to be compatible with $\vec{r}_{\vec{a}|\vec{x}}=0$. Now it is seen that it is possible  to find $\tilde{D}$  affinely independent points in $\mathbf{E}(S)$ which saturate the inequality $\vec{r}_{\vec{a}|\vec{x}}\geq 0$. 

In contrast, the inequalities $r_{\vec{a}|\vec{x}} \leq 1$ do  not define facets of $\mathbf{E}(S)$. Intuitively this is because that condition is guaranteed by the normalization conditions. More formally, fixing $r_{\vec{a}|\vec{\tilde{x}}}=1$ for some component of the vector $\vec{r}$ in the affine subspace $\{\mathrm{Norm}\}$ forces all the other components in the block labelled by $\vec{\tilde{x}}$ to equal zero. A simple counting argument can then be used to establish that the dimension $D'$ of vectors compatible with this constraint is strictly smaller than $\tilde{D}$, namely since $D' \leq \sum_{\vec{x}\in \vec{M}, \vec{x}\neq \vec{\tilde{x}}}|O_{\vec{x}}| - \prod_{i\in I}|M_i| < \sum_{\vec{x}\in \vec{M}}|O_{\vec{x}}| - \prod_{i\in I}|M_i| = \tilde{D}$. Therefore, the constraints $r_{\vec{a}|\vec{x}}\leq 1$ defining facets of the cube $\mathscr{C}^d$ could be omitted from the intersection. 

An illustration of $\mathbf{E}(S)$ in a trivial scenario with one party having one input with three outcomes is given in Fig.~\ref{fig:ProbabilitySimplex} In that case, $d =3$ and $\tilde{D}=2$ so a graphical representation is possible. Note that in the case of the figure, the set $\mathbf{E}(S)$ in fact forms a simplex, which has the special property of every convex decomposition in terms of extreme points being unique. In the cases of nontrivial scenarios $S$, the resulting polytopes will not in general be simplices due to the fact that there are at least two normalization conditions to be satisfied, which restricts the numbers of affinely independent points in the body. 

\begin{figure}
    \centering
    \includegraphics[width=\linewidth]{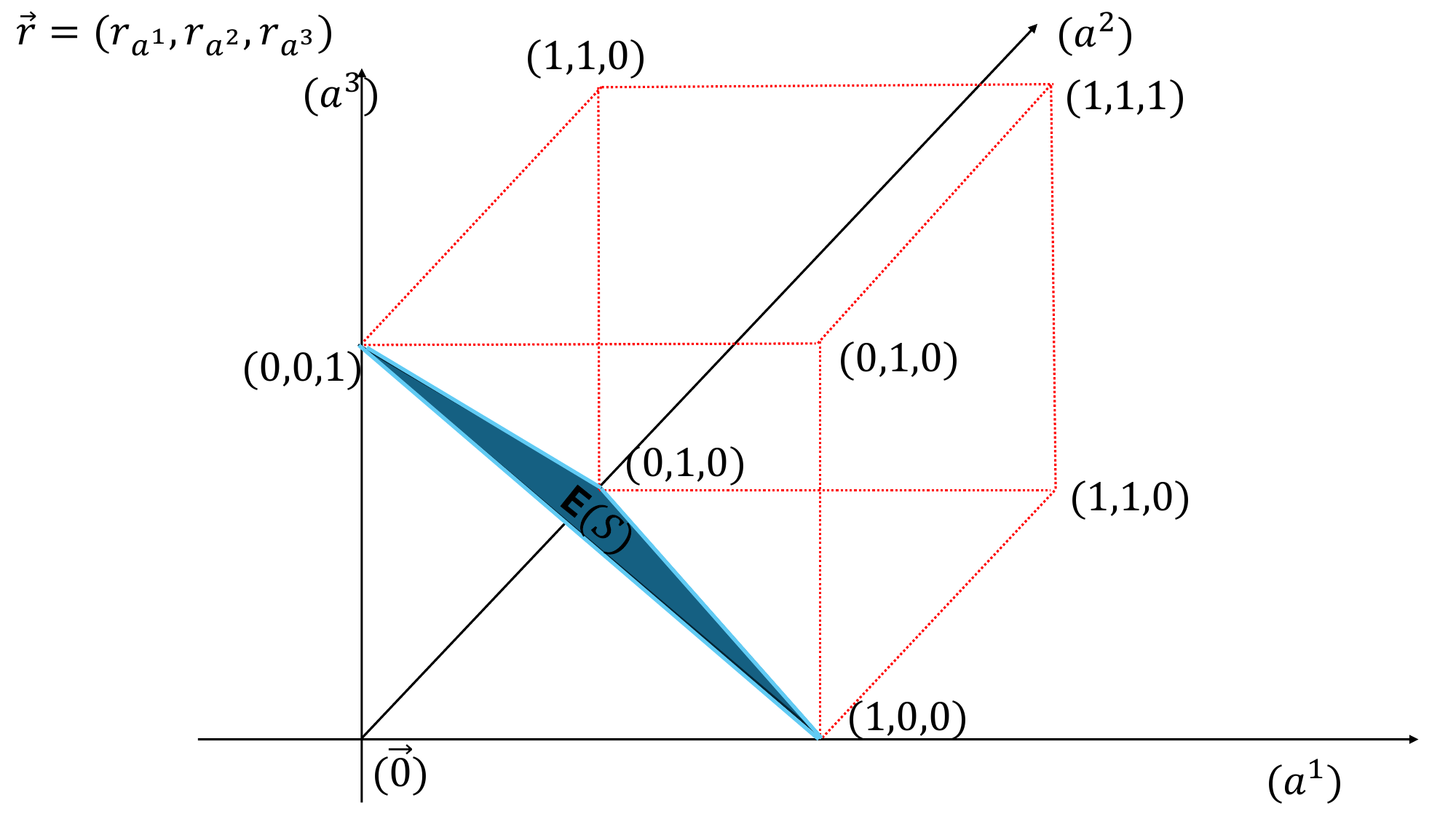}
    \caption{An illustration of the geometry of the set behaviour $\mathbf{E}(S)$, in a trivial scenario where the behaviour $\wp$ consists of a single probability distribution $\wp(a^k)$, $k\in \{1,2,3\}$. The vectors $\vec{r}$ representing the behaviour are necessarily contained in the cube $\mathscr{C}^3 = [0,1]^3$ (red dashed lines)  since each probability satisfies $ 1\geq \wp(a^k)\geq 0$. The behaviour are further constrained by the normalization constraints, which bounds the appropriate vectors to be in the set $\{\mathrm{Norm}\} = \{\vec{r}\in \mathbb{R}^3 : \sum_{k}r_{a^k}=1\}$ which defines a plane in $\mathbb{R}^3$. The set $\mathbf{E}(S)$ is the intersection $\mathscr{C}^3\cap \{\mathrm{Norm}\}$, a two-dimensional polytope (triangle) contained in the cube. The set $\mathbf{E}(S)$ is seen to be contained in the convex hull of the three predictable vertices $\wp(a^k)\in \{0,1\}$ which are a subset of the vertices of the cube $\mathscr{C}$. The facets of the set $\mathbf{E}(S)$ are the $1$-dimensional edges (light blue) corresponding to the positivity constraints $r_{a^k} \geq 0$. Since $\mathrm{dim}(\mathbf{E}(S)) = 2 <3$, the representation of the facet defining inequalities is not unique. For example constraining $\vec{r}$ to the hyperplane $r_{a^1}=0$ for the polytope $\mathbf{E}(S)$ is equivalent to $r_{a^2} +r_{a^3} = 1$, as may be seen by using the normalization constraints.  }
    \label{fig:ProbabilitySimplex}
\end{figure}

\normalfont
The sets $\mathbf{B}(S)$ and $\mathbf{NS}(S)$ are arguably more interesting. Both sets are contained in the intersection $\mathscr{C}^d\cap\{\mathrm{Norm}\}$, but also further constrained by the affine no-signalling constraints of Definition \ref{No-signallingbehaviour}.

The set $\mathbf{B}(S) =  \mathbf{LD}(S)$ is immediately identifiable as a convex polytope, since by Definition \ref{dEFINITION:lOCALDeterminism} it is the convex hull of the finite number of predictable no-signalling points $\mathbf{P}_{\mathbf{NS}}(S)$. Again, it is easily established that in fact $\mathbf{P}_{\mathbf{NS}}(S) = \mathrm{Ext}(\mathbf{LD}(S))$, since the only way a given component of the vector representing the behaviour may be zero in a convex sum is if that component is zero for all vectors in the decomposition. The vertex representation of $\mathbf{B}(S)$ is therefore easy to give.

The $H$-representations of $\mathbf{B}(S)$ are of great  importance. The inequalities that bound $\mathbf{B}(S)$ are precisely the Bell-inequalities \cite{Brunner2014}, which may be tested against the experimental data $\wp$, to probe whether the behaviour is compatible with a local deterministic model or not. While violation of a single inequality is sufficient to refute the existence of such a model, a complete set of inequalities gives both necessary and sufficient conditions for its existence.  

Constructions for complete sets of Bell inequalities are only known in some specific scenarios, such as the bipartite scenario mentioned in Remark \ref{Remark:RemarkonFinesThrm} where the CH-inequalities \cite{Clauser1974} are necessary and sufficient \cite{Fine1982}, or scenarios with specific structure (see e.g.~\cite{Brunner2014} for a review of some of those cases). Given knowledge of the vertices, it is possible to seek for an  $H$-representation computationally. The general task of computing a the facet defining inequalities of a polytope given its vertices is known as the \emph{facet enumeration problem} (see e.g.\cite{Avis1997}). In the case of Bell inequalities, the complexity of the facet enumeration problem grows quickly when the numbers parties, inputs and outputs are increased \cite{Pitowsky1989}. Therefore, though algorithms exist for solving for the facet defining inequalities, full solutions have been computed only in restricted cases.  Finding a complete set of inequalities bounding the polytope given knowledge of its vertices is the hard part in the case of $\mathbf{B}(S)$.

Note also  that in common terminology a Bell-inequality is any inequality valid for $\mathbf{B}(S)$ that may in principle witness $\wp \notin \mathbf{B}(S)$, and hence that terminology is not limited to just those that delimit the facets. The facet defining inequalites do provide a set that is 'optimal' in the sense of being minimal (but which is not unique, since $\mathbf{B}(S)$ is not full-dimensional).  

With this generally valid geometric description, Theorem \ref{FinesTHRM} can be sharpened by adding another equivalent statement, which refers to the $H$-representation of the Bell-polytope, as in the remark below. 

\remark{\label{Remark:FinestheoremSatisfiesBell} The statement `the behaviour satisfies any finite set of Bell-inequalities which form an $H$-representation of the polytope $\mathbf{B}(S)$' can be added to to the list of equivalent statements in Fine's theorem (Theorem \ref{FinesTHRM}). }
\normalfont
\\

 The set $\mathbf{NS}(S)$ is also a polytope, as it is defined by the intersection $\mathbf{E}(S)\cap \{NS\}$ =  $\mathscr{C}^d\cap \{\mathrm{Norm}\}\cap\{NS\}$ with the hyperplane $\{NS\}$ of vectors $\vec{r}$ which satisfy the affine no-signalling constraints 
\begin{align}
   \sum_{a_i} r_{\vec{a}|\vec{x}} = \sum_{a_i}r_{\vec{a}'|\vec{x}'}, 
\end{align}
enforced  for all $\vec{x}, \vec{x}'$ with $\vec{x}_{I\setminus\{i\}} = \vec{x}'_{I\setminus \{i\}}$ and $\vec{a}, \vec{a}'$ with $\vec{a}_{I\setminus \{i\}} =\vec{a}'_{I\setminus \{i\}}$. These are the only constraints on $\mathbf{NS}(S)$, and hence an $H$-representation of $\mathbf{NS}(S)$ is essentially known. The extreme points of $\mathbf{NS}(S)$ include all the predictable points $\mathbf{P}_{\mathbf{NS}}(S)$, a fact which can be established by analogous arguments as before, but also in general includes new extreme points which are not in  predictable and are non-classical, in the sense of being outside the set $\mathbf{B}(S)$. 

Complete descriptions of the extreme points of the no-signalling polytope have been obtained in some restricted scenarios (see e.g.~\cite{Barret2005, Jones2005}), however no compact generally valid classification of the vertices is known in general. Given that an $H$-representation of the no-signalling polytope is known, it is possible to try to find the vertices computationally.  The general problem of determining the vertices of a polytope given an $H$-representation is known as the \emph{vertex enumeration problem} \cite{Avis1997}. Unfortunately, as is the case with the facet enumeration problem for the Bell-local polytope, solving the vertex enumeration problem for a no-signalling polytope quickly becomes computationally prohibitive and complete computational solutions for the  have been obtained only in specific scenarios \cite{Pironio2011}. Finding  the vertices given the $H$-representation is the hard part in the case of $\mathbf{NS}(S)$, in contrast to $\mathbf{B}(S)$ where the vertices are known but the facet-defining inequalities are generally not.   

The affine hulls of $\mathbf{B}(S)$ and $\mathbf{NS}(S)$ have been completely characterized by Pironio \cite{Pironio2005}.

\theorem{\label{Theorem:Pironio'sTheorem}(Pironio's affine hull theorem\footnote{In Ref.~\cite{Pironio2005} the expression for the dimension $D$ is given as $D = \prod_{i\in I}(\sum_{x_i \in M_i}(|O_{x_i}|-1)+1) -1$ which is equivalent to the expression in Eq.~\eqref{Equation:PironiosDimensionEq.}. We choose to present the result in the form of Eq.~\eqref{Equation:PironiosDimensionEq.} as we find it easier to construct from the basic argument.})\\
$\mathrm{dim}(\mathbf{P}_{\mathbf{NS}}(S)) = \mathrm{dim}(\mathbf{NS}(S)) = D$, with 
\begin{align}
 \label{Equation:PironiosDimensionEq.}   D = \sum_{V\in 2^{I}\setminus \emptyset}\left( \prod_{i\in V}(\sum_{x_i \in M_i}( |O_{x_i}|-1))\right).
\end{align}
}
\begin{proof}[Proof sketch.]
    This is Theorem 1 in Ref.~\cite{Pironio2005}. We provide a sketch of the general idea and refer the reader to $\cite{Pironio2005}$ for further details. 

    First, note that by the defining feature of the no-signalling constraits $\wp(\vec{a}_V|\vec{x}) = \wp(\vec{a}_V|\vec{x}_V)$ for all $\vec{a}, \vec{x}$ and $V\subset I$. Consider the family of all such marginal distributions defined by $V \in 2^{I}\setminus \emptyset$, with $2^{I}$ the power set, i.e. the set of all possible subsets of $I$.  Each of these marginals $\wp(\vec{a}_{V}|\vec{x}_{V})$ further satisfies normalization constraints, thus one can remove some outcomes while still having the ability to construct the behaviour $\wp$ with the knowledge of these quantities. In fact, one can remove an output for each input $x_i$ in every distribution, leading in total to  
    \begin{align}
     D =    \sum_{V\in 2^{I}\setminus \emptyset}\left( \prod_{i\in V}(\sum_{x_i \in M_i}( |O_{x_i}|-1))\right)
    \end{align}
numbers. These are sufficient to construct the entire $\wp \in \mathbf{NS}(S)$ and hence $\mathrm{dim}(\mathbf{P}_{\mathbf{NS}}(S)) \leq \mathrm{dim}(\mathbf{NS}(S)) \leq D$. To see that the inequalities hold in the opposite order, it is sufficient to show that there are $D+1$ affinely independent predictable behaviour in $\mathbf{P}_{\mathbf{NS}}(S)$. In fact, a set of $D+1$ linearly independent predictable no-signalling points can be found, which thus span the entire affine hull of $\mathbf{NS}(S)$ and hence   it follows that in fact $\mathrm{dim}({\mathbf{P}}_{\mathbf{NS}}(S)) = \mathrm{dim}(\mathbf{NS}(S))$.
\end{proof}
\normalfont

As an immediate corollary of Pironios affine hull theorem, all the sets $\mathbf{LD}(S) \subset \mathbf{Q}(S) \subset \mathbf{NS}(S)$ live in the same affine subspace of $\mathbb{R}^d$ constrained by the no-signalling and normalization constraints, and spanned by the predictable no-signalling points $\mathbf{P}_{\mathbf{NS}}(S)$. It also follows that, these objects may be be represented in a space $\mathbb{R}^D$ with suitable parametrization, e.g. specifying the marginals $\wp(\vec{a}_J|\vec{x}_J)$ for all but one outcome for every input,  where the polytopes $\mathbf{NS}(S)$ and $\mathbf{LD}(S)$ are full-dimensional and that even the non-classical extreme points of $\mathbf{NS}(S)$ may be represented as affine combinations of points in $\mathbf{P}_{\mathbf{NS}}(S)$. 

Pironio also established \cite{Pironio2005}, that the positivity constraints remain facets of both $\mathbf{B}(S)$ and $\mathbf{NS}(S)$ as well, since they can, in analogy to the case of $\mathbf{E}(S)$, be saturated by $D+1$ linearly independent predictable no-signalling points. Note that since the inequalities $0 \leq r_{\vec{a}|\vec{x}} \leq 1$ are valid for the hypercube $\mathscr{C}^d$, they are valid for all of the behaviour sets as well. Inequalities of the form $r_{\vec{a}|\vec{x}} \leq 1$ were shown not to be facet defining for $\mathbf{E}(S)$, hence they cannot be facet defining for any of the smaller sets $\mathbf{NS}(S)$ or $\mathbf{B}(S)$ either, since they contain even less points. 

The geometry of the quantum set is significantly more sophisticated, and very rich even in the simplest cases. The quantum set is in general convex, but not a polytope see e.g.~Ref.~\cite{Brunner2014} for an introductory review and Refs~\cite{Goh2017,Le2023quantumcorrelations, Barizien-Bancal2025} for recent works that explore the geometry more thorougly in some specific scenarios.  We will not have much to say about it in this work, aside from some inclusion relations.

The sets $\mathbf{P}(S)$ and $\mathbf{P}_{\mathbf{NS}}(S)$ are finite sets of points, and obviously not convex. We can also establish that the sets $\mathbf{U}(S)$ and $\mathbf{U}_{\mathbf{NS}}(S)$ are not convex, despite not being finite. To illustrate this it is sufficient to consider an equal mixture $\frac{1}{2} \wp_1 + \frac{1}{2}\wp_2$, of distinct points with $\wp_1, \wp_2 \in \mathbf{P}_{\mathbf{NS}}(S)$. Let $\vec{a}^1$ and $\vec{a}^2$ represent the strings for which $\wp_k(\vec{a}|\vec{x}) =1$, respectively.  From the definition of conditional probabilities 
\begin{align}
    \wp(\vec{a}|\vec{x}) = \wp(\vec{a}_{I\setminus \{i\}}|\vec{x}_{I\setminus \{i\}}, a_i)\wp(a_i |x_i) 
\end{align}
but for the mixture, clearly 
\begin{align}
    \wp(\vec{a}_{I\setminus \{i\}}|\vec{x}_{I\setminus \{i\}}, a_i) = \begin{cases}
        \wp_1(\vec{a}_{I\setminus\{i \}}|\vec{x}_{I\setminus \{i\}})& \textrm{if } a_i =a_i^1\\
        \wp_2(\vec{a}_{I\setminus \{i\}}| \vec{x}_{I\setminus \{i\}}) & \textrm{if } a_i = a_i^2,
    \end{cases} 
\end{align}
which are in general not equal, and hence $\frac{1}{2}\wp_1 +\frac{1}{2}\wp_2 \notin \mathbf{U}_{\mathbf{NS}}(S)$. 

While not generally convex, the sets $\mathbf{U}_{\mathbf{NS}}(S)$ and $\mathbf{U}(S)$ do contain identifiable convex subsets of behaviour. For example, let $\wp_{1}, \wp_2 \in \mathbf{U}_{\mathbf{NS}}(S)$ be behaviour such that $\wp_1(a_1|x_1) \neq \wp_2(a_1|x_1)$ for some $x_1 \in M_1$ but otherwise $\wp_1(a_j|x_j) = \wp_2(a_j|x_j) = \wp(a_j|x_j)$. Then 
\begin{align}
   & \frac{1}{2}\prod_{i\in I}\wp_1(a_i|x_i) + \frac{1}{2}\prod_{i\in I}\wp_2(a_i|x_i) \\& \hspace{0.1cm} = \wp_3(a_1|x_1)\prod_{j \in I\setminus \{1\}}\wp(a_j|x_j),
\end{align}
where $\wp_3(a_i|x_i) = \frac{1}{2}\wp_1(a_i|x_i) + \frac{1}{2}\wp_2(a_i|x_i)$. This is clearly an element in $\mathbf{U}_{\mathbf{NS}}(S)$ as claimed. Intuitively, the sets $\mathbf{U}(S)$ and $\mathbf{U}_{\mathbf{NS}}(S)$ are closed under mixing of  outputs corresponding to some fixed single inputs, since these do not create correlations between parties. Geometrically, these sets may perhaps best be understood as  collections of hyperlines, which do not cover the entire convex hull.

A summary of the geometric description of the various sets of behaviour discussed so far is given in  Table~\ref{tab:SetsofBehaviourandSummaryofGeometry}. 

\begin{table*}[]
    \centering
    \subfloat[ \ \justifying{ Summary of the geometries of the sets of behaviour introduced in Section \ref{Subsection:Setsofbehaviours}. The sets $\mathbf{P}(S), \mathbf{P}_{\mathbf{NS}}(S), \mathbf{U}(S)$ and $\mathbf{U}_{\mathbf{NS}}(S)$ are not convex, but their convex hulls equal other convex sets in the list, namely the convex polytopes $\mathbf{E}(S)$ and $\mathbf{B}(S)$. The sets of behaviours which form convex polytopes are highlighited in red.}]{ 
    \begin{tabular}[t]{c|c|c||c}
       Set  & Geometry & Dimension & Convex hull \\ \hline
      $\mathbf{E}(S)$ & \red{Convex polytope} & $\tilde{D}$& equal\\ \hline
      $\mathbf{P}(S)$ & Set of points &$\tilde{D}$& $\mathbf{E}(S)$ \\ \hline
      $\mathbf{U}(S)$& Set of hyperlines & $\tilde{D}$ & $\mathbf{E}(S)$\\  \hline \hline
      $\mathbf{NS}(S)$ & \red{Convex polytope} & $D$ & equal\\ \hline
      $\mathbf{Q}(S)$ & Convex set & $D$ & equal\\ \hline
      $\mathbf{B}(S)$ & \red{Convex polytope} & $D$ & equal\\ \hline
      $\mathbf{P}_{\mathbf{NS}}(S)$ & Set of points & $D$ & $\mathbf{B}(S)$ \\ \hline
      $\mathbf{U}_{\mathbf{NS}}(S)$ & Set of hyperlines & $D$ & $\mathbf{B}(S)$
    \end{tabular}
    \label{Table:GeometryandDimensiontable}}
    \quad 
    \subfloat[ \justifying{ The vertex and half-space representations of the convex polytopes in Table \ref{Table:GeometryandDimensiontable}. The extreme points of $\mathbf{B}(S)$ are precisely the predictable no-signalling behaviours $\mathbf{P}_{\mathbf{NS}}(S)$, which are easy to give. The $H$-representations of $\mathbf{B}(S)$ consist of finite collections of Bell-inequalities which have to be necessarily satisfied, and the satisfcation of which is sufficient, for the behaviour $\wp$ be LHV-modelable. Precisely the opposite is the case for $\mathbf{NS}(S)$ for which an $H$-representation is easy to give in terms of the intersection of the polytope $\mathbf{E}(S)$ with the affine hyperplane $\{NS\}$ of vectors satisfying no-signalling type constraints.  $\mathrm{Ext}(\mathbf{NS}(S))$ always contains $\mathbf{P}_{\mathbf{NS}}(S)$, but in general $\mathbf{NS}(S)$ will also have vertices which are not contained in $\mathbf{B}(S)$. The set $\mathbf{E}(S)$ on the other hand can be represented as the intersection of the hypercube $\mathscr{C}^d=[0,1]^d$, which itself is an intersection of facet defining half spaces of the form $1\geq r_{\vec{a}|\vec{x}}\geq0$, and the affine hyperplane $\{\mathrm{Norm}\}$ of vectors satisfying normalization type constraints, or as the convex hull of predictable behaviour $\mathbf{P}(S)$.  }  ]{\begin{tabular}[t]{c|c|c}
        Polytope &  $V$-representation & $H$-representation    \\
         $\mathbf{E}(S)$ & $\mathrm{conv}[\mathbf{P}(S)]$ & $\mathscr{C}^d\cap \{\mathrm{Norm}\}$ \\ \hline
         $\mathbf{NS}(S)$ & generally unknown & $\mathscr{C}^d\cap \{\mathrm{Norm}\}\cap \{NS\}$ \\ \hline
         $\mathbf{B}(S)$& $\mathrm{conv}[\mathbf{P}_{\mathbf{NS}}(S)]$& generally unknown
\end{tabular}
\label{Table:PolytopesandRepresentationsiNGEOMETRY}}
   \caption{A description of the  main geometrical properties of the sets of behaviour introduced in Section \ref{Subsection:Setsofbehaviours}.  Table \ref{Table:GeometryandDimensiontable} outlines the basic geometrical features of the various sets, and provides a description of their convex hulls. The sets which form convex polytopes are highlighted in red, while a description of what is generally known about the vertex and half-space representations of those polytopes is given in  Table \ref{Table:PolytopesandRepresentationsiNGEOMETRY}.   }
    \label{tab:SetsofBehaviourandSummaryofGeometry}
\end{table*}

We shall finish this section with a case-study  of the simplest nontrivial Bell-scenario where the strict inclusions $\mathbf{B}(S) \subsetneq \mathbf{Q}(S) \subsetneq \mathbf{NS}(S)$ can be established. Namely, the bipartite scenario also described in Remark \ref{Remark:RemarkonFinesThrm}, with $I = \{1,2\}$, $M_i = \{x_i^1, x_i^2 \}$ and $O_{x_i} = \{a_i^1, a_i^2 \}$ for both $i\in I$.  Here complete geometric characterizations of the sets $\mathbf{B}(S)$ and $\mathbf{NS}(S)$ in particular, are known \cite{Tsirelson1993, Brunner2014, Barret2005}. A complete characterization of the quantum set even in this scenario has been missing until very recently \cite{Barizien-Bancal2025}, but for our purposes, only the very basic facts will suffice. 

In this scenario the behaviours can be represented as vectors in $\mathbb{R}^{d}$, with $d = \sum_{\vec{x}}|O_{\vec{x}}| = 4\times 4 =16$. The set of all valid behaviour $\mathbf{E}(S)$ in this scenario has dimension 
\begin{align}
    \tilde{D} = \sum_{\vec{x}\in \vec{M}}|O_{\vec
x}| - \prod_{i\in I}|M_i| = 4\times4 -4 = 12.
\end{align} The extreme points are the predictable behaviour $\mathbf{P}(S)$, of which there are $4^4 = 256$, since each of the four distributions $\wp(\vec{a}|\vec{x})$ have four individual probabilities in total. The sets $\mathbf{NS}(S)$  and $\mathbf{B}(S)$ have  their dimension $D$ further reduced to 
\begin{align}
    D &= \sum\limits_{V\in 2^I \setminus \emptyset}(\prod\limits_{i \in V}(\sum_{x_i\in M_i }(|O_{x_i}|-1))) \\
   &= 2(2-1) + 2(2-1) + 2(2-1)\times2(2-1) = 8.
\end{align} The extreme points of $\mathbf{B}(S)$ are precisely the predictable no-signalling behaviour, which are of the product form $\wp(a_1|x_1)\wp(a_2|x_2)$, since $\mathbf{P}_{\mathbf{NS}}(S) \subset \mathbf{U}_{\mathbf{NS}}(S)$. Thus $\mathbf{B}(S)$ has $2^2 \times 2^2 =16$ predictable extreme points. Note that these points in fact form a subset of the extreme points of $\mathbf{E}(S)$, and also the hypercube $\mathscr{C}^{16} = [0,1]^{16}\subset \mathbb{R}^{16}$. The nontrivial facets of $\mathbf{B}(S)$ are the symmetries of the CH-inequality \cite{Clauser1974} introduced in Remark \ref{Remark:RemarkonFinesThrm}. That they are indeed facets, can be seen, for example, by counting the number of affinely independent predictable behaviour saturating the bounds of the inequality. Thus, the polytope $\mathbf{B}(S)$ has a vertex representation as the convex hull of 16 vertices, or alternatively, a half-space representation as the intersection of 24 facet-defining inequalities, 16 of which correspond to the (trivial) positivity conditions and the 8 inequalities which are obtainable from the symmetries of the (upper and lower limits of the) CH-expression. 

The $H$-representation of $\mathbf{B}(S)$ in terms of the facet definining inequalities is not unique. Another famous equivalent representation of the nontrivial inequalities is given by the 8 relabeling symmetries of  Clauser-Horne-Shimony-Holt (CHSH)-inequalities  \cite{Clauser1969}, which can be expressed in terms of the parameters $  C_{\vec{x}} = \wp(a_1^1,a_2^1|\vec{x}) + \wp(a_1^2 ,a_2^2|\vec{x}) 
   - \wp(a_1^1 ,a^2_2|\vec{x}) -\wp(a_1^2, a_2^1 |\vec{x})$
as 

\begin{align}
    \sum_{\vec{x}\neq x_1^2x_2^2} C_{\vec{x}} - C_{x_1^2x_2^2} \leq 2.\label{Equation:CHSHinequality}
\end{align}

It is worth it to emphasize that the equivalence of the CH-expressions  used in Remark \ref{Remark:RemarkonFinesThrm} and the CHSH-expression of Eq.~\eqref{Equation:CHSHinequality}  as bounds for behaviours holds under the assumption that the behaviour are also constrained to the affine hyperplanes $\{\mathrm{Norm}\}$ and $\{NS\}$. If that property is relaxed for some reason, then the equivalence is lost. A nice discussion of this point, and the practical significance of  such possible 'relaxations' from the perspective of different post-processing strategies of experimental data can be found in Ref.~\cite{Czechlewski2018}. 

The no-signalling polytope in this scenario has 16 facets, corresponding to the (trivial) positivity constraints for each individual probability in the behaviour. The set of extreme points is nontrivial: in addition to the 16 predictable no-signalling behaviour which are shared with the polytope $\mathbf{B}(S)$, there are 8 non-classical extremal points \cite{Tsirelson1993, Brunner2014, Barret2005}. A representative of such an extremal no-signalling behaviour is given by the distributions
\begin{align}
 \label{equation:PRboxDefined} \wp(a_1^ja_2^{j'}|\vec{x}) = \begin{cases}\frac{1}{2} &  \mathrm{if }\hspace{0.1cm} \vec{x} \neq x_1^2x_2^2 \hspace{0,1cm}  \mathrm{and} \hspace{0.1cm} j=j' \hspace{0.1cm} \\
  \frac{1}{2} & \mathrm{if} \hspace{0.1cm} \vec{x}=x_1^2x_2^2 \hspace{0,1cm} \mathrm{and} \hspace{0.1cm} j\neq j'\\
  0 & \mathrm{otherwise.}
    \end{cases}
\end{align}
The other remaining extremal behaviour are obtained from Eq.~\eqref{equation:PRboxDefined} by relabeling the local inputs and outputs. 

It is easily seen that the behaviour in Eq.~\eqref{equation:PRboxDefined} obeys the no-signalling, as all the marginals $\wp(a_i|x_i)$ are unbiased. By inspection of Eq.~\eqref{Equation:CHSHinequality} and the definition of $C_{\vec{x}}$ that preceeded it, it can be seen that this behaviour also reaches the algebraic maximum 4 of the CHSH-expression. It has become customary to refer to behaviour of the type in Eq.~\eqref{equation:PRboxDefined} as Popescu-Rohrlich (PR)-correlations, owing to the work of those authors in Ref.~\cite{Popescu1994} which drew significant attention to the foundational role of such behaviour. The 8 variants of the PR-behaviour have nice relationship with the nontrivial facet-defining inequalities of the Bell-polytope $\mathbf{B}(S)$: each PR-box violates exactly one of the CHSH-inequalities.  The relationship between the nontrivial facets of the Bell-polytope and the nonclassical extreme points of the no-signalling polytope is generally more complicated however \cite{Barret2005}.

The celebrated work of Bell \cite{Bell1964} first established that there are indeed inequalities valid for the set $\mathbf{B}(S)$ which may be violated by some quantum behaviour, though the geometric understanding of these notions became later c.f.~\cite{Tsirelson1993, Brunner2014}. The CHSH-inequality of Eq.~\eqref{Equation:CHSHinequality}, for example, can be violated up to a maximum value of $2\sqrt{2}$ \cite{Tsirelson1980} by suitable measurements on the maximally entangled state $\rho^{\Psi_-} = \ket{\Psi_-}\bra{\Psi_-} \in \mathbb{S}(\mathcal{H}_{\vec{A}})$, $\mathcal{H}_{\vec{A}} \simeq \mathbb{C}^2\otimes \mathbb{C}^2$, with $\{\varphi_0, \varphi_1 \}$ an orthonormal basis for $\mathbb{C}^2$ and
\begin{align}
    \ket{\Psi_-} = \dfrac{1}{\sqrt{2}}(\varphi_0 \otimes \varphi_1 - \varphi_1 \otimes \varphi_0).
\end{align}

The seminal work of Tsirelson \cite{Tsirelson1980} established that the bound $2\sqrt{2}$ of the CHSH-expression reachable by local measurements on a two-qubit system is in fact a strict limit of quantum theory. Here the phrase 'strict limit' is meant in the sense in the sense that any pair of quantum systems, irrespective of dimensions or choices of observables, obey the inequality. Thus, the result  establishes the famous  strict inclusions $\mathbf{B}(S) \subsetneq \mathbf{Q}(S) \subsetneq \mathbf{NS}(S)$ in this scenario.

Intuitively,  that the strictness result holds in the simplest scenario implies that the strictness result holds in any other more general scenario as well. This is because the simplest scenario is a 'subscenario' of the others, and hence the CHSH-inequalities for the different pairs of parties and pairs of their inputs in $S$ in particular, give necessary (but not always sufficient as shown for example by the result  of Garg and Mermin \cite{Garg1982b}) conditions for a behaviour in a multipartite or multi-input scenario to be LHV-modelable. In the next section, we formalise this intuition and show how to rigorously extend the strictness result from the above special case to arbitrary scenarios $S$.

\normalfont
\subsection{\label{Section:SubscenariosandRestrictions}Subscenarios and restrictions of behaviour}

In this section we formalise the notion of a subscenario, and associate it with a corresponding restriction map, which operates on behaviours. We believe the basic ideas to be familiar to people working in the field. These concepts turn out, however, to be of significant utility   for our general treatment of partial determinism in Section \ref{Section:PartialDeterministicMathsSection}, which is why we believe giving a formal treatment of these ideas is worthwhile.  

\definition{(Restriction of a correlation scenario)\label{DefinitionRestrictionofaScenario}\\
Let $S$ be a scenario. Let  $M' = \{M'_i \}_i$ be a collection of subsets of the inputs $M'_i \subset M_i$ for each $i\in I$. Then, a restriction of $S$ to this collection, denoted $S_{|M'}= (I_{|M'}, M_{|M'}, O_{|M'})$, is defined by setting
\begin{align}
    I_{|M'} &= \{i \in I | M'_i \neq \emptyset \},\\
    M_{|M'} &= \{ M'_{i}\}_{i \in I_{|M'}},\\
    O_{|M'} &= \{ O_{\vec{x}_{|\vec{M'}}} \}.
\end{align}
Here the input string $\vec{x}_{|M'} \in  \vec{M'}$ = $\prod_{i\in I_{|M'}} M_i'$ is understood to be an element in the Cartesian product of the non-empty sets $M_i'$.
}\\
\normalfont

Note that contrary to the usual conventions of this work, the restricted scenarios $S_{|M'}$ may be trivial in the sense of, say, containing a single party, or a single input per party. Furthermore if $M' = M$, then by definition $S_{|M'} = S$ and the restriction may be considered trivial. At the other extreme, if $M_i' = \emptyset $ $ \forall i$, then the restriction is not well defined, since the resultant scenario would have empty sets of parties, inputs and outputs. In every other case, the mapping $S \mapsto S_{{|M'}}$ is nontrivial, and one may consider various definitions of sets of behaviour over $S_{|M'}$. Evidently, the number of possible restrictions $|S_{|M'}|$ equals the product of the cardinalities of the powersets of each $M_i$. Excluding the trivial and ill-defined cases, this equals $ |\{S_{|M'}\}| = \left(\prod_{i \in I } 2^{|M_i|} \right)-2$, for $M_i' \neq M_i, \emptyset$ for at least one $i\in I$.

For any two scenarios $S'$ and $S$ we may also say that $S'$ is a \emph{subscenario} of $S$ if there exists a collection $M'$ of subsets of inputs $M_i'\subset M_i$ for all $i\in I$ such that $S_{|M'} = S'$.

By means of Definition \ref{DefinitionRestrictionofaScenario}, it is possible to define a 'restriction mapping' $R_{|M'}: \mathbf{NS}(S)\longrightarrow \mathbf{NS}(S_{|M'})$, which will prove useful.

\definition{(Restriction of a behaviour)\label{Definition:RestrictionofaBehaviour}\\
Let $S_{|M'}$ be a restriction of $S$ in the sense of Definition \ref{DefinitionRestrictionofaScenario}. The restriction of a behaviour $\wp^S \in \mathbf{NS}(S)$ onto a behaviour $\wp^{S_{|M'}}\in \mathbf{NS}(S_{|M'})$ is defined via the mapping $R_{|M'}: \mathbf{NS}(S) \rightarrow \mathbf{NS}(S_{|M'})$ 

\begin{align}
   R_{|M'}(\wp^S) := \wp^{S_{|M'}} = \{\wp^{S}(\vec{a}_{|I_{M'}} |\vec{x}_{|M'})\}_{S_{|M'}}\hspace{0.2cm}, \label{Eq.restrictionmap2}
\end{align}
with which the restricted behaviour can be defined as
\begin{align}
    \wp^{S_{|M'}}(\vec{a}'|\vec{x}')&=\wp^S(\vec{a}_{|I_{M'}} = \vec{a}'|\vec{x}_{|M'} = \vec{x}')  
 \hspace{0.2cm}. 
\end{align}
Here the superscript $S_{|M'}$ has been added to emphasize that $\wp^{S_{|M'}}(\vec{a}'|\vec{x}')$ is a behaviour defined in the restricted scenario $S_{|M'}$.
}\\
\normalfont

The restriction map of Definition \ref{Definition:RestrictionofaBehaviour} produces a behaviour in a sub-scenario $S_{|M'}$ of $S$ by simply taking appropriate marginals of distributions in a behaviour over parties whose inputs are excluded  from $S$. This marginalization property is well defined owing to no-signalling, by appeal to which the distributions in Eq.~\eqref{Eq.restrictionmap2} can be recovered from any $\wp^S(\vec{a}| \vec{x})$ with $\vec{a}_{I_{M'}} = \vec{a}'$ and $\vec{x}_{I_{M'}} = \vec{x}'$ by marginalization. In cases with $I_{M'} = I$, where the restricted scenario has the same number of parties, the mapping simply removes distributions with certain inputs $x_i \notin M_i'$ from the behaviour. The action of this map is illustrated in Fig.~\ref{fig:RestrictionMap}.

\begin{figure}
    \centering
    \includegraphics[width=\linewidth]{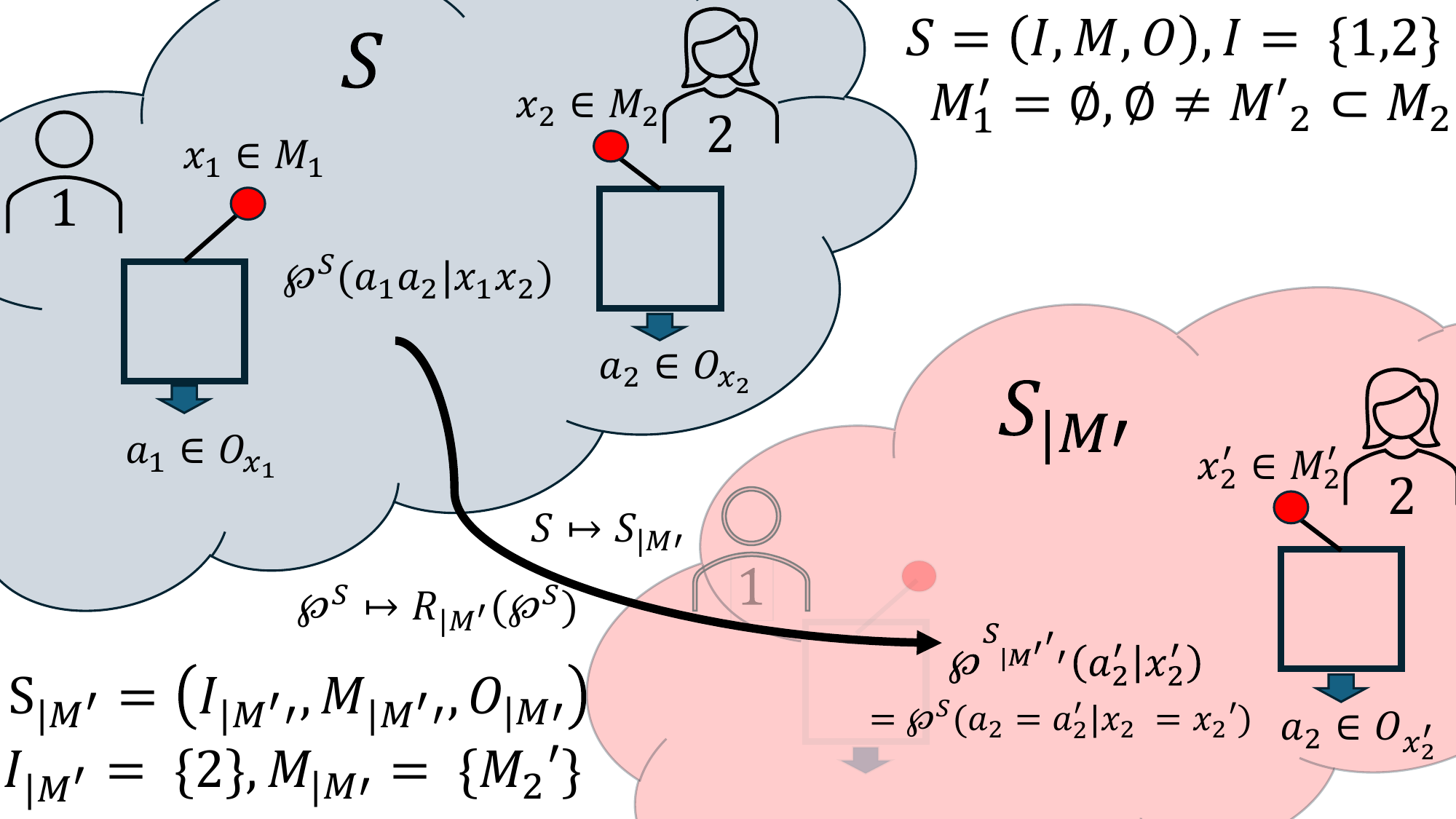}
    \caption{A bipartite scenario $S = (I,M,O), I= \{1,2\}$, is restricted to the scenario $S_{|M'}$ as defined by the collection $M'$ with $M_1'=\emptyset$ and $\emptyset\neq M_2'\subset M_2$. The scenario $S_{|M'} = (I_{|M'}, M_{|M'}, O_{|M'})$ contains only one agent $i =2$, with some subset $M_2'$ of the inputs of agent $i=2$ in scenario $S$. The restriction map $R_{|M'}$ represented in the figure maps the bipartite behaviour $\wp^{S}$ to a well-defined behaviour $\wp^{S_{|M'}}$ essentially by marginalizing over agent $i=1$, and omitting distributions for which $x_1 \notin M_1'$. }
    \label{fig:RestrictionMap}
\end{figure}

Note also that since  $\sum_{\vec{a}}\wp(\vec{a}|\vec{x}) = 1$ $\forall \vec{x}$, it is possible to formally allow for the case where $I_{M'} =\emptyset$ which on the level of scenarios corresponds to the situation where the restricted scenario $S_{|M'}$ itself is not well defined.  Conventionally we allow for this case, but we emphasize that  $R_{|M'= (\emptyset, \emptyset, \emptyset)}$ does not produce a behaviour, but instead, maps everything to a single point. We refer to either of the cases where $M_i' = M_i$ $\forall i$ or $M_i' =\emptyset$ $\forall i$ corresponding to the cases with $R_{|M'= M}(\wp) = \wp$ $\forall \wp \in \mathbf{NS}(S)$ and $R_{|M'= (\emptyset, \emptyset, \emptyset)}(\wp) = 1$ $\forall \wp \in \mathbf{NS}(S)$, respectively, as trivial restriction maps.

For any subset $\mathbf{K}(S)\subset \mathbf{NS}(S)$ we denote by $R_{|M'}(\mathbf{K}(S))$ the image of $\mathbf{K}(S)$ under the restriction $R_{|M'}$, that is the set $\{ \wp^{S_{|M'}} : \exists \wp^S \in \mathbf{K}(S) \hspace{0.1cm}\mathrm{ s.t. } \hspace{0.1cm} R_{|M'}(\wp^S) = \wp^{S_{|M'}} \}$.

We report an immediately evident feature of the restriction map as a lemma. 

\lemma{\label{Lemma:convexityofRestrictionmap}Let $\mathbf{K}(S) \subset \mathbf{NS}(S)$, $\mathrm{conv}(\mathbf{K}(S)) = \mathbf{K}(S)$ be a convex subset of no-signalling behaviours. Then for any collection of inputs $M'$
\begin{align}
    R_{|M'}(\mathbf{K}(S)) = \mathrm{conv}[R_{|M'}(\mathbf{K}(S))].
\end{align}
}
\begin{proof}
Immediate. We provide a proof for completeness.

Let $\wp =\omega\wp_1 + (1-\omega)\wp_2$, $\omega \in (0,1)$, and consider the image $R_{|M'}(\wp)  = \wp^{S_{|M'}}$ of $\wp$. By definition of the restriction map
\begin{align}
\wp^{S_{|M'}}(\vec{a}'|\vec{x}') &= \wp(\vec{a}_{I_{M'}} = \vec{a}'|\vec{x}_{|M'} = \vec{x}')\\
\begin{split}
&=\omega \wp_1(\vec{a}_{I_{M'}} =\vec{a}'|\vec{x}_{|M'} = \vec{x}') \\
 & \hspace{0.2cm}+ (1-\omega)\wp_2(\vec{a}_{I_{M'}}=\vec{a}' |\vec{x}_{|M'}=\vec{x}'),
\end{split}\\
&= \omega \wp_1^{S_{|M'}}(\vec{a}'|\vec{x}') + (1-\omega)\wp_2^{S_{|M'}}(\vec{a}'|\vec{x}')
\end{align}
which is to say that 
\begin{align}
    R_{|M'}(\wp) = \omega R_{|M'}(\wp_1)+ (1-\omega)R_{|M'}(\wp_2)
\end{align}
and hence the map $R_{|M'}$ preserves convexity. 
\end{proof}

\normalfont

The following proposition incorporates observations made with respect to Lemmas 16 and 17 from Ref.~\cite{Haddara2025} into a single, refined statement concerning restriction maps and the images of the subsets of $\mathbf{NS}(S)$ discussed in Section \ref{Subsection:Setsofbehaviours}. 

\proposition{\label{Proposition:restrictionmapidentities}Let $R_{|M'}:\mathbf{NS}(S) \rightarrow \mathbf{NS}(S_{|M'})$ be an arbitrary nontrivial restriction map. Then

\begin{enumerate}[label=(\roman*), ref=\ref{Proposition:restrictionmapidentities}.\roman*] 
    \item $R_{|M'}(\mathbf{P}_{\mathbf{NS}}(S)) = \mathbf{P}_{\mathbf{NS}}(S_{|M'})$ \label{prop:restrictionidentity1}
    \item $R_{|M'}(\mathbf{U}_{\mathbf{NS}}(S)) = \mathbf{U}_{\mathbf{NS}}(S|_{M'}).$\label{prop:restrictionidentity2}
    \item $R_{|M'}(\mathbf{B}(S)) = \mathbf{B}(S_{|M'}).$ \label{prop:restrictionidentity3}
    \item  $R_{|M'}(\mathbf{Q}(S)) =\mathbf{Q}(S_{|M'}).$\label{prop:restrictionidentity4}
    \item $ R_{|M'}(\mathbf{NS}(S)) = \mathbf{NS}(S_{|M'}).$\label{prop:restrictionidentity5}
\end{enumerate}}
\begin{proof}[Proof sketch]
    The proofs are straightforward and analogous in each case, and we will therefore only provide an outline of the general argument. 
    
    Let $\mathbf{K}(S)\in \{\mathbf{P}_{\mathbf{NS}}(S), \mathbf{U}_{\mathbf{NS}}(S), \mathbf{B}(S), \mathbf{Q}(S), \mathbf{NS}(S)\} $ represent any of the sets in \ref{prop:restrictionidentity1}-\ref{prop:restrictionidentity5}.  
    
    Clearly $R_{|M'}(\mathbf{K}(S)) \subset \mathbf{K}(S_{|M'})$ for any of the sets $\mathbf{K}(S)$.  For the cases \ref{prop:restrictionidentity1} and \ref{prop:restrictionidentity2}, this can be seen by noting that the marginals of predictable behaviour are themselves predictable, and the marginals of product behaviour are also of the product form. For \ref{prop:restrictionidentity3} and \ref{prop:restrictionidentity4}, note simply that the restriction map always either restricts inputs or omits parties, and in both cases the LHV/quantum model of a given behaviour always induces an LHV/quantum model for the image. In the case of \ref{prop:restrictionidentity5} the claim is trivial, since the map always produces a valid no-signalling behaviour contained in $\mathbf{NS}(S_{|M'})$. 

To see that also $R_{|M'}(K)(S) \supset \mathbf{K}(S_{|M'})$, it is sufficient to show that for every element $\wp^{S'} \in \mathbf{K}(S_{|M'})$ there is in fact an element $\wp^S$ such that $R_{|M'}(\wp^S) =  \wp^{S'}$. Consider then the `complement restriction'  $S_{|M'^{\perp}}$  of $S_{|M'}$ defined by  the collection of sets $M'^{\perp}_i$, which have $M'^{\perp}_i = M_i\setminus M_i'$ $\forall i \in I$. Let $F_{\vec{x}}= \{ i \in I| x_i \in M_i' \}$ denote a context-dependent set of parties in $S$ whose inputs $x_i \in M_i'$ in a given string $\vec{x}$ are inputs in the restricted scenario $S_{|M'}$, and likewise $F^{\perp}_{\vec{x}}= \{i \in I| x_i \in M_i'^{\perp} \} = I\setminus F_{\vec{x}}$ for the complement.

Let  $\wp^{S_{|M'}}\in \mathbf{K}(S_{|m'})$ be a behaviour in one of the sets of interest defined over the restricted scenario $S
_{|M'}$. We will show that this can always be lifted to a behaviour in $S$, which can be guaranteed to be an element in $\mathbf{K}(S)$. 

Since $\mathbf{P}_{\mathbf{NS}}(S) \subset \mathbf{K}(S)$ in particular, we can in every case use the same construction, namely by taking products with some distributions from the behaviour $\wp^{S_{|M'^{\perp}}} \in \mathbf{P}_{\mathbf{NS}}(S_{|M'^{\perp}})$ so that 
\begin{align}
\begin{split}
    \wp^S(\vec{a}|\vec{x}) &=   \wp^{S'}(\vec{a}'=
    \vec{a}_{F^{\perp}_{\vec{x}}} |\vec{x}'=\vec{x}_{F_{\vec{x}}}) \\
    & \hspace{0,2cm} \times \wp^{S_{|M'^{\perp}}}(\vec{a}'^{\perp}=\vec{a}_{F^{\perp}_{\vec{x}}}|\vec{x}'^{\perp}=\vec{x}_{F^{\perp}_{\vec{x}}}) \,.
    \label{Equation:restrictedbehaviourandProduct}
    \end{split}
\end{align}
The $\wp^S$ defined by the context-dependent expansion of Eq.~\eqref{Equation:restrictedbehaviourandProduct} is a well defined behaviour and clearly contained in  $\mathbf{K}(S)$ for every one of the sets $\mathbf{K}(S)$, owing to the fact that the predictable sub-behaviour are valid sub-behaviour of all of these sets.  Furthermore, by construction,  $R_{|M'}(\wp^S) = \wp^{S'}$ and hence   $R_{|M'}(\mathbf{K}(S)) \supset \mathbf{K}(S_{|M'})$ for each of the sets in \ref{prop:restrictionidentity1}-\ref{prop:restrictionidentity5}, as claimed. 
   \end{proof}
\normalfont

The construction of Eq.~\eqref{Equation:restrictedbehaviourandProduct}, which essentially maps behaviour from smaller scenarios to larger ones will turn out to be useful in its own right.  In the next section, we will define such operations formally, and employ them in our general investigation of `partial determinism', which shall be defined in due time.

Proposition \ref{Proposition:restrictionmapidentities} points to an important observation; the defining properties of the sets of behaviour are usually inherited by the images under the restriction map to arbitrary subscenarios. This may be used to formalise the intuition that for the purpose of proving strictness of inclusion relations for these sets via counterexamples, it is sufficient to narrow the attention to simple cases with small numbers of parties or inputs. 

Before showing a concrete example,  let us acknowledge that there is another natural  class of sub-scenarios, namely those scenarios with smaller sets of outputs for the given inputs, which we have not discussed. Such  subscenarios may be associated  with another notion of  'restrictions of a behaviour' which may be understood as the coarse-grainings of outputs of a given set of inputs.  Considering restrictions of  scenarios in the sense of Definition \ref{DefinitionRestrictionofaScenario} and the behaviour in terms of the restriction map $R_{|M'}$ of Definition \ref{Definition:RestrictionofaBehaviour} is of greater significance to our treatment of partial determinism in Section \ref{Section:PartialDeterministicMathsSection} than coarse-grainings of outputs, and hence we omit a detailed description of those 'output-restricted' subscenarios and any associated coarse-graining maps. The relevant observation which is sufficient for our purposes is stated in the remark below.

\remark{It is always possible to add more outcomes of probability zero to a distribution $\wp(\vec
a|\vec{x})$, thus for any behaviour $\wp$ there exists, in a trivial way, a behaviour $\wp'$ in a scenario with larger numbers of outcomes which can be coarse-grained to $\wp$, namely, the $\wp'$ which has zero probability for any additional outcomes, and otherwise the same probabilities as $\wp$. Therefore, while in general we allow for arbitrary (but finite) input sets $O_{x_i}$, for the purpose of specific counter examples we may narrow the outcome sets to any particular sizes, with the understanding that no generality of the conclusions is lost with such a choice.}\\
\normalfont

Finally, we may provide a formal argument to the well known fact that the strict inclusion relations $\mathbf{B}(S) \subsetneq \mathbf{Q}(S) \subsetneq \mathbf{NS}(S)$ shown for the bipartite case in Section \ref{section:GeometricalPerespectiveOnBehaviours} holds generally.

\proposition{\label{Proposition:BellisStrictSubsetofQisstrictofNS}Let $S$ be any nontrivial scenario. Then 
\begin{align}
    \mathbf{B}(S)\subsetneq \mathbf{Q}(S) \subsetneq \mathbf{NS}(S).
\end{align}
}\begin{proof}
Suppose that the inclusions were not strict for all nontrivial scenarios. Then, for some scenario $\widetilde{S}$ either $\mathbf{B}(\widetilde{S}) = \mathbf{Q}(\widetilde{S})$ or $\mathbf{Q}(\widetilde{S}) = \mathbf{NS}(\widetilde{S})$ in which case  also $R_{|M'}(\mathbf{B}(\widetilde{S})) = R_{|M'}(\mathbf{Q}(\widetilde{S}))$ or $R_{|M'}(\mathbf{Q}(\widetilde{S})) = R_{|M'}(\mathbf{NS}(\widetilde{S}))$ for all  restriction maps $R_{|M'}$.

    Since $\widetilde{S}$ is nontrivial, there exists in particular a restriction defined by a collection $M^*$ such that the subscenario $S_{|M^*}$ is bipartite $I_{|M^*} = \{j,j'\}$, with two inputs $x_i \in \{x_i^k, x_i^{k'} \}$  per site. In Section \ref{section:GeometricalPerespectiveOnBehaviours} it was shown that in such scenarios the strict inequalitities $\mathbf{B}(S_{|M^*}) \subsetneq \mathbf{Q}(S_{|M^*}) \subsetneq \mathbf{NS}(S_{|M^*})$ can be established for example by violations of the CHSH inequality, and thus, by Proposition \ref{Proposition:restrictionmapidentities} $R_{|M^*}(\mathbf{B}(S)) \subsetneq R_{|M^*}(\mathbf{Q}(S)) \subsetneq R_{|M^*}(\mathbf{NS}(S))$ contradicting the claim that the inclusions would not be strict. For if some of the sets would be equal, so would their images under the same restriction $R_{|M^*}$. 
\end{proof}
\normalfont

The proof of Proposition \ref{Proposition:BellisStrictSubsetofQisstrictofNS} hinges on the fact that if the images of two behaviour sets differ under a restriction map, the sets themselves have to differ. It is important to emphasize that the converse need not be true, despite the suggestive list of identities shown in Proposition \ref{Proposition:restrictionmapidentities}. This is is a simple consequence of the fact that some restricted scenarios $S_{|M'}$ do not have enough room for a distinction between seemingly different definitions of sets to emerge. We will demonstrate this fact by presenting a proof of another  well-known result, phrased in terms of a restricted scenario.

\proposition{\label{PropositionNSisBell} Let $S_{|M'} = (I_{|M'}, M_{|M'}, O_{|O'})$ be a restriction of some scenario $S$. Then $\mathbf{NS}(S_{|M'}) = \mathbf{B}(S_{|M'})$ if and only if $|M_i'| = 1$ for all but at most one  $i \in I_{|M'}$. }
\begin{proof}
    The `only if' is seen by the technique used in  Proposition \ref{Proposition:BellisStrictSubsetofQisstrictofNS}, by which $R_{|M'}(\mathbf{B}(S)) = \mathbf{B}(S_{|M'}) \subsetneq\mathbf{NS}(S_{|M'}) = R_{|M'}(\mathbf{NS}(S))$ holds whenever $S_{|M'}$ has at least two parties with two inputs each. 

    For the `if part', note first that since $R_{|M'}(\mathbf{B}(S)) = \mathbf{B}(S_{|M'}) \subset \mathbf{NS}(S_{|M'}) = R_{|M'}(\mathbf{NS}(S))$ still holds, it is sufficient to show that also $\mathbf{NS}(S_{|M'}) \subset \mathbf{B}(S_{|M'})$ holds under the given conditions. We will prove this, by using Fine's construction \cite{Fine1982b} to build a join distribution $P(\boldsymbol{\alpha})$ over all the outcomes for all potential inputs, which by Theorem \ref{FinesTHRM} is equivalent to LHV-modelability. 

    Let $\wp^{S_{|M'}} \in \mathbf{NS}(S_{|M'})$ be an arbitrary no-signalling behaviour in the restricted scenario $S_{|M'}$, where exactly one party $j\in I$ has more than one input. The behaviour $\wp^{S_{|M'}}$ therefore consists of $|M_i|$ distributions denoted by$\wp^{S_{|M'}}(\vec{a}'|\vec{x}_{I\setminus \{j\}}'x_j') = \wp^{S_{|M'}}_{x_i}(\vec{a}'|\vec{x}')$, since the inputs $\vec{x}_{I\setminus \{j\}}$ are fixed. Owing to no-signalling, the marginals $\wp_{x_i}^{S_{|M'}}({\vec{a}'_{I\setminus \{j \}}}|\vec{x}') = \wp^{S_{|M'}}(\vec{a}'_{I\setminus \{j\}}|\vec{x}'_{I\setminus \{j\}})$ are all independent of the input $x_j'$ and hence the product distribution  defined by
    \begin{align}
        P(\boldsymbol{\alpha}) = \dfrac{\prod_{x_i \in M_i}\wp_{x_i}^{S_{|M'}}(\vec{a}'|\vec{x}')}{\wp^{S_{|M'}}(\vec{a}'_{I\setminus \{j\}}|\vec{x}'_{I\setminus \{j\}})^{|M_i|-1}},
    \end{align}
    if $\wp^{S_{|M'}}(\vec{a}'_{I\setminus \{j\}}|\vec{x}'_{I\setminus \{j\}}) \neq 0$ and $P(\boldsymbol{\alpha}) = 0$ otherwise, is another well-defined probability distribution. This distribution can be used to  recover any distribution $\wp_{x_j}(\vec{a}'|\vec{x}')$ in the behaviour $\wp^{S_{|M'}}$ by marginalizing over all $\alpha_{x_j}= a'_{x_j}$ such that $x_j \neq x_j'$ and hence $P(\boldsymbol{\alpha})$ satisfies the criteria of Theorem \ref{FinesTHRM}, meaning that $\wp^{S_{|M'}}\in \mathbf{B}(S_{|M'})$ which was the claim. 
\end{proof}
\normalfont

Proposition \ref{PropositionNSisBell} shows that for two distinct sets of behaviours $\mathbf{K}(S) \neq K'(S)$ it may nonetheless be the case that for all nontrivial restriction maps $R_{|M'}(\mathbf{K}(S)) = R_{|M'}(\mathcal{K'}(S))$. Therefore, knowledge of (even all of the) restrictions of a behaviour is not in general sufficient to infer all the potentially relevant properties of the preimage. In section \ref{Section:PartialDeterministicMathsSection}, we introduce the notion of composable sets which, to some extent, have the property of being describable by means of the images of their restrictions. 

\normalfont

\section{Partially Deterministic Polytopes \label{Section:PartialDeterministicMathsSection}}

In this section we introduce a framework to describe, in a general manner, a certain broad class of behaviours. The main concept of interest will be that of partial determinism, which we shall build from the ground up. This leads to the family of mathematical objects  termed `partially deterministic polytopes' the properties of which we shall explore. The term `partially deterministic polytope' is due to Woodhead \cite{Woodhead2014}, who defined and explored these objects in the bipartite case. Similar objects, or special classes thereof, have emerged in different contexts with different motivations. In this work we choose this terminology for two reasons: first, the phrase `partially deterministic' is in our view an apt description of their mathematical features, and second, the terminology does not suggest any particular physical motivation underlying them. In this sense, `partial determinism' may be considered as a convenient unifying mathematical terminology to describe behaviour arising in various circumstances.  

We draw insight from previous works \cite{Woodhead2014, Bong2020, Haddara2025}  which investigated properties of some sub-classes of partially deterministic polytopes.  The first part of the  presentation of Section \ref{section:PartialDeterminismSubsection} is greatly inspired by the techniques of Refs.~\cite{Bong2020, Haddara2025}. The second part,  in particular, is greatly inspired by Woodhead's \cite{Woodhead2014} approach to the topic, and in some cases we provide liberal generalizations and rephrasings of his definitions. This leads us to identify the general notion of `composability' of sets, and to pinpoint partially deterministic polytopes as specific instances of composable sets. 

\subsection{Partial Determinism\label{section:PartialDeterminismSubsection}}

We build our definitions over the basic structure $S= (I,M,O)$. Similarly to the Definitions \ref{DefinitionRestrictionofaScenario} and \ref{Definition:RestrictionofaBehaviour} of restrictions of a scenario and behaviour, respectively, we will use notation such as $M'$ when referring to a collection of subsets $M_i' \subset M_i$, say.  When describing properties of behaviours, we will often also make use of context-dependent subsets such as $F_{\vec{x}}$ introduced for the first time in the proof-sketch of Proposition \ref{Proposition:restrictionmapidentities}.

\definition{(Partially predictable behaviour)\label{PartialPredictabilityDefinition}\newline
Let $S=(I,M,O)$ be a scenario, and $M'$ a collection of subsets $M'_i \subset M_i$ for all $i\in I$. Let $F_{\vec{x}} \subset I$ be a subset of the agents who choose an input $x_i \in M_i'$, so that $F_{\vec{x}} = \{i\in I| x_i \in M_i' \}$. A behaviour is termed partially predictable, or predictable with respect to the collection of inputs $M'$,  if and only if  the marginal on any subset $V_{\vec{x}} \subset {F_{\Vec{x}}}$ is always predictable, that is if and only if  
\begin{align}
   \wp(\Vec{a}_{{V_{\Vec{x}}}} | \Vec{x})  \in \{0, 1 \} \label{PartialPredictabilityEquation},
\end{align}
for all $\Vec{x}$ and all subsets ${V_{\Vec{x}}} \subset F_{\vec{x}}$.  The set  of behaviours partially predictable with respect to $M'$ is denoted by $\mathbf{PP}(S,M')$
}
\\

\normalfont
Note that since the marginals of a predictable distribution can themselves be identified as predictable, which follows from the normalization constraints,  an equivalent definition to that of Definition \ref{PartialPredictabilityDefinition} is to demand that Equation \eqref{PartialPredictabilityEquation} holds only for the marginals of maximal length; namely that Eq.~\eqref{PartialPredictabilityEquation} holds for $V_{\Vec{x}} = F_{\vec{x}}$. As an immediate corollary it also follows that if $M''$ is a collection of subsets with $M_i'' \subset M_i'$ for all $i \in I$  and $\wp \in \mathbf{PP}(S,M')$, then also $\wp \in \mathbf{PP}(S,M'')$. Clearly, that implication cannot in general be inverted. In fact, the following theorem is straightforwardly verified.

\theorem{Let $S = (I,M,O)$ be a scenario and $M'', M'$ arbitrary collections of subsets of $M$. \label{TheoremBasicInclusionTHRMALLPP}
\begin{enumerate}[label=\roman*), ref=\ref{TheoremBasicInclusionTHRMALLPP}.\roman*]
    \item If also $M_i'' \subset M_i'$ for all $i\in I$  then $\mathbf{PP}(S,M') \subset \mathbf{PP}(S,M'') $ \label{TheoremBasicInclusionTHRMALLPP1}
    \item $\mathbf{PP}(S,M') = \mathbf{PP}(S,M'') $ if and only if $M'' = M'$. \label{TheoremBasicInclusionTHRMAllPP2}
\end{enumerate}
}
\proof \

    \begin{enumerate}[label=\roman*), ref=\ref{BasicInclusionTHRMALLPPproof}.\roman*]
    \item We will show, that $\wp \in \mathbf{PP}(S,M') \Rightarrow \wp \in \mathbf{PP}(S,M'') $.

    Let ${F_{\Vec{x}}'} = \{ i \in I : x_i \in M_i' \}$ and ${F_{\Vec{x}}''} = \{ i \in I : x_i \in M_i''  \}$. By hypothesis always ${F_{\Vec{x}}''} \subset {F_{\Vec{x}}'}$. On the other hand from the defining Equation \eqref{PartialPredictabilityEquation}  if $\wp
 \in \mathbf{PP}(S,M') $ then 
    
    \begin{align}
   \wp(\Vec{a}_{{F_{\Vec{x}}}} | \Vec{x}) \in \{0, 1 \} ,
\end{align}
for all $\Vec{x}$, and hence also in particular
\begin{align}
  \sum_{i\in (F_{\Vec{x}}'\setminus F_{\Vec{x}}'')}  \wp(\Vec{a}_{{F_{\Vec{x}}'}} | \Vec{x}) = \wp(\Vec{a}_{F_{\Vec{x}}''}| \Vec{x}) \in \{0, 1 \}
\end{align}
for all $\Vec{x}$,  which means that also $\wp \in \mathbf{PP}(S,M'')$ as claimed.
    \item The `if' part is immediate, while the hypothesis $M'' \neq  M'$  can be easily seen from Definition \ref{PartialPredictabilityDefinition} to lead to a contradiction with the  the claim $\mathbf{PP}(S,M') = \mathbf{PP}(S,M'')$. We provide a proof of the latter claim for completeness. 
    
   Suppose then that $M'' \neq M'$, which means that there is at least one $\tilde{x}_i \in M_i$ such that either $\tilde{x}_i \in M_i'$ and $\tilde{x}_i \notin M_i''$ or  $\tilde{x}_i \notin M_i'$ and $\tilde{x}_i\in M_i''$. say, without loss of generality that the former is the case and let $\wp_{M'}\in \mathbf{PP}(S,M')$ be any partially predictable behaviour with respect to $M'$. Consider the 
   the marginals of $\wp_{M'} \in \mathbf{PP}(S,M')$ and $\wp_{M''} \in \mathbf{PP}(S,M'')$ onto  $F'_{\vec{x}}\cap F''_{\vec{x}} \cup \{i:\in I : x_i = \tilde{x_i} \}:= \widetilde{F}_{\vec{x}} $. Since $\tilde{x_i}\in M_i'$, it follows from  Definition \ref{PartialPredictabilityDefinition} that 
   \begin{align}
       \wp_{M'}(\vec{a}_{\widetilde{F}_{\vec{x}}}|\vec{x}) \in \{0,1\}
   \end{align}
   for all $\vec{x}$.    The same does not hold for all the behaviour $\wp_{M''}\in \mathbf{PP}(S,M'')$. Indeed, consider the behaviour $\wp_{|M''}$ defined by 
   \begin{align}
       \wp_{M''}(\vec{a}|\vec{x}) &= \prod_{i\in F''_{\vec{x}}}D(a_i|x_i)\wp(a_J|\vec{x})\\
       &= \prod_{i\in F''_{\vec{x}}}D(a_i|x_i) \times \dfrac{1}{|O_{\vec{x}_{I\setminus F''_{\vec{x}}}}|}
   \end{align}
   $\forall \vec{x}$ and for all $\vec{a}_J$. By assumption, there is at least one $\vec{x}'$ such that $\vec{x}'_{\{i\}} = \tilde{x_i}$ and so 
   \begin{align}
       \wp_{M''}(\vec{a}_{\{i\}}|\vec{x}') = \dfrac{1}{|O_{\tilde{x}_i}|} \neq \wp_{M'}(\vec{a}_{\{i\}}|\vec{x}') \in \{0,1\},
   \end{align}
which proves the claim
\qed
\end{enumerate}

\normalfont

Theorem \ref{TheoremBasicInclusionTHRMALLPP} highlights that partial predictability can be considered as a strictly hierarchical notion, with a fully predictable behaviour being partially predictable with respect to arbitrary collections of subsets of measurements $M$, while a fully nonpredictable behaviour is not partially predictable with respect to any collection of nonempty subsets.

Definition \ref{PartialPredictabilityDefinition} is fully general. We are particularly interested in the set of partially predictable behaviours that obey the no-signalling constraints.

\definition{(Partially predictable no-signalling behaviour)\label{DefinitionPartialDeterministicNosignallingbehaviours}\\
The set $\mathbf{PP}_{\mathbf{NS}}(S, M')$ of be partially predictable  no-signalling behaviour (wrt. $M'$) is defined as the intersection 
\begin{align}
     \mathbf{PP}_{\mathbf{NS}}(S, M') = \mathbf{PP}(S, M') \cap \mathbf{NS}(S).
\end{align}
}

\normalfont

Using the no-signalling constraints of Def.~\ref{No-signallingbehaviour}, and the defining property of partially predictable behaviour outlined in Equation \eqref{PartialPredictabilityEquation}, one can get a concise form for any no-signalling partially predictable behaviour. Namely, if $\wp \in \mathbf{PP}_{\mathbf{NS}}(S, M')$, then using properties of conditional probabilities it can be seen that 
\begin{align}
    \wp(\Vec{a}|\Vec{x}) &= \wp(\Vec{a}_{F_{\Vec{x}}} | \Vec{x})\cdot \wp(\Vec{a}_{I \setminus {F_{\Vec{x}}}} |\vec{a}_{F_{\vec{x}}},\Vec{x})\\
    &=  D(\Vec{a}_{F_{\Vec{x}}}|\Vec{x}_{F_{\Vec{x}}})\cdot \wp(\Vec{a}_{I \setminus {F_{\Vec{x}}}} |\vec{a}_{F_{\vec{x}}}, \Vec{x})  \\ &=D(\Vec{a}_{F_{\Vec{x}}}|\Vec{x}_{F_{\Vec{x}}})\cdot \wp(\Vec{a}_{I \setminus {F_{\Vec{x}}}} | \Vec{x}) 
    \\ 
    &= D(\Vec{a}_{F_{\Vec{x}}} | \Vec{x}_{F_{\Vec{x}}})\cdot \wp(\Vec{a}_{I \setminus {F_{\Vec{x}}}}| \Vec{x}_{I \setminus F_{\Vec{x}}}),\label{partialPredictableNosignallingequation}
\end{align}
where ${F_{\Vec{x}}} = \{ i \in I | x_i \in M_i' \} $ and  the notation $D(\Vec{a}_F|\Vec{x}_F)$ has been added to  emphasize that this distribution is by assumption predictable and thus completely determined by the measurements ${\Vec{x}}_{F_{\Vec{x}}}$. Note also that since the conditional probability $\wp(\Vec{a}_{I \setminus {F_{\Vec{x}}})} |\vec{a}_{F_{\vec{x}}}, \Vec{x})$ is defined only for events $\vec{a}_{F_{\vec{x}}}$ with nonzero probability, which owing to the predictability of $\vec{a}_{F_{\vec{x}}}$ is unique, the dependency on $\vec{a}_{F_{\vec{x}}}$  can be dropped, leading to  Eq.~\eqref{partialPredictableNosignallingequation}. The terms $\wp(\Vec{a}_{(I \setminus {F_{\Vec{x}}})}|\Vec{x}_{(I \setminus F_{\vec{x}})})$ in Eq.~\eqref{partialPredictableNosignallingequation}, on the other hand,  need not be predictable, but must in particular always satisfy no-signalling. 

Neither of the partially predictable sets $\mathbf{PP}(S,M')$ or $\mathbf{PP}_{\mathbf{NS}}(S,M')$ is convex, by appeal to similar arguments as in the case of the fully predictable sets $\mathbf{P}(S)$ and $\mathbf{P}_{\mathbf{NS}}(S)$, respectively.  They do, however, contain easily identifiable convex subsets of behaviours. To illustrate this, let $\wp^{1}, \wp^{2} \in \mathbf{PP}_{\mathbf{NS}}(S,M')$ be behaviours with $\wp^1(\vec{a}_{F_{\vec{x}}}|\vec{x}_{F_{\vec{x}}}) = \wp^2(\vec{a}_{F_{\vec{x}}}|\vec{x}_{F_{\vec{x}}}) = D(\vec{a}_{F_{\vec{x}}}|\vec{x}_{F_{\vec{x}}})$. Then, for $1 > \omega >0$ the mixture $\omega \wp^{1} + (1-\omega) \wp^{2}$ decomposes as

\begin{align}
\begin{split}
 \label{Equation:mixtureoftwoPartiallyPreditableNS}& \omega \wp^1 (\vec{a}|\vec{x}) + (1-\omega) \wp^{2}(\vec{a}|\vec{x}) =   D(\vec{a}_{F_{\vec{x}}}|\vec{x}_{F_{\vec{x}}}) \\ 
 & \times  \left[ \omega \wp^1(\vec{a}_{F_{\vec{x}}\setminus I}|\vec{x}_{F_{\vec{x}}\setminus I}) + (1-\omega) \wp^2(\vec{a}_{F_{\vec{x}}\setminus I}|\vec{x}_{F_{\vec{x}}\setminus I}) \right]
  \end{split}
\end{align}
which is likewise an element of $\mathbf{PP}_{\mathbf{NS}}(S,M')$. Evidently,  behaviours which share the same `predictable component' in this manner form such subsets. We will later use this fact in order to establish extremality results. 

While the set  $\mathbf{PP}(S,M')$ is not convex, its convex hull is not that interesting, since by Theorem \ref{TheoremBasicInclusionTHRMALLPP1} \emph{any} partially predictable set contains in particular, the set $\mathbf{P}(S)$ which corresponds to the partially predictable set $\mathbf{PP}(S,M')$ with  $M' =M$. This entails that $\mathrm{conv}[\mathbf{PP}(S,M')] = \mathbf{E}(S)$ for any collection $M'$. 

The situation is different for the convex hull $\mathrm{conv}(\mathbf{PP}_{\mathbf{NS}}(S, M'))$. We shall further characterize this set by generalizing the arguments used in  \cite{Bong2020} and \cite{Haddara2025}. 

First, let us observe that Eq.~\eqref{partialPredictableNosignallingequation} can be massaged into a useful form as a mixture of partially predictable behaviours with the same predictable component analogously to  Eq.~\eqref{Equation:mixtureoftwoPartiallyPreditableNS}.  Specifically,  since the behaviour obeys no-signalling, it must be the case that the marginals $\wp(\vec{a}_{I\setminus F_{\vec{x}}}| \Vec{x}_{I \setminus F_{\Vec{x}}})$ on the right side of Equation \eqref{partialPredictableNosignallingequation} can be recovered from \emph{any} distribution $\wp(\Vec{a}|\Vec{x}) $ in the behaviour where the substring $\Vec{x}_{I \setminus F_{\Vec{x}}}$ is present in $\Vec{x}$. Hence, in particular, those marginals ought to be recoverable from a distribution from any such suitable context $\vec{x}$ which is `maximal' in the sense that the set $I\setminus F_{\vec{x}} \subset I$ attains its maximum cardinality. To formalize this, let $ND = \{ \vec{x} : |I\setminus F_{\vec{x}}| = \max_{\vec{x}'}|I\setminus F_{\vec{x}'}|\} $ and let $\wp^{|ND}({\vec{a}|\vec{x}})$ represent any distribution of the behaviour with $\vec{x}\in ND$. Note that $ND = \emptyset$ if and only if $M_i' = M_i$ for all $i\in I$, which corresponds to the situation where Eq.~\eqref{partialPredictableNosignallingequation} has only distributions $D(\vec{a}|\vec{x})$ in the expansion.  In any other case, marginals of $\wp(\vec{a}|\vec{x})$ onto $I\setminus F_{\vec{x}}$ for any $\vec{x}$ satisfy

\begin{align}
    \wp(\vec{a}_{I\setminus F_{\vec{x}}}|\vec{x}_{I\setminus F_{\vec{x}}}) = \wp^{|ND}(\vec{a}_{I\setminus F_{\vec{x}}}|\vec{x}_{I\setminus F_{\vec{x}}}). \label{Equation:marginalfromNDequation}
\end{align}

Which is to say that information about the distributions in the subset $ND$ of contexts is sufficient to specify the marginals onto $I \setminus F_{\vec{x}}$ for any $\vec{x}$.  In fact, even the distributions $\wp^{|ND}(\vec{a}|\vec{x})$ may contain some redundant information for this purpose. Namely if $M_i' = M_i$ for some $i\in I$ then the outcomes $a_i$ for $i\in F_{\vec{x}}$ get marginalized over anyway. Let us then draw attention to the collection of marginal distributions of the following form
\begin{align}
    \sum_{i: M_i' = M_i }\wp^{|ND}(\vec{a}|\vec{x}) = \wp^{|ND} (\vec{a}_{I_{|M'^{\perp}}}|\vec{x}_{I_{|M'^{\perp}}}), \label{Equation:NDisBasicallyTheImage}
\end{align}
where $I_{|M'^{\perp}} = \{ i\in I : M_i' \neq M_i \}$. This set of marginals is complete in the sense that from information contained in them one can by construction  recover any $\wp(\vec{a}_{I\setminus F_{\vec{x}}}|\vec{x}_{I\setminus F_{\vec{x}}})$ in the sense of Eq.~\eqref{Equation:marginalfromNDequation}. Furthermore, these marginals form a minimal necessary collection as the information contained in them is also needed to specify all the $\wp(\vec{a}_{I\setminus F_{\vec{x}}}|\vec{x}_{I \setminus F_{\vec{x}}})$. To see this, one may consider the marginals of length $\max_{\vec{x}'}|I\setminus F_{\vec{x}}| = |I_{|M'^{\perp}}|$.

By examination of  Definition \ref{Definition:RestrictionofaBehaviour} of the restriction of a behaviour it can be verified that the set of distributions $\wp^{|ND} (\vec{a}_{I_{|M'^{\perp}}}|\vec{x}_{I_{|M'^{\perp}}})$ in Eq.~\eqref{Equation:NDisBasicallyTheImage} is in fact  identifiable precisely with the image of the behaviour $\wp^{S}\in \mathbf{PP}_{\mathbf{NS}}(S,M'^{\perp})$ under the restriction map $R_{|M'^{\perp}}$, with  $M'^{\perp}$ denoting the collection  of subsets $M_i^{\perp} = M_i \setminus M_i'$ for all $i\in I$. This collection maps the scenario $S = (I,M,O)$ to the substructure $S_{|M'^{\perp}} = (I_{|M'^{\perp}}, M_{|M'^{\perp}}, O_{M'^{\perp}}) $ in the sense of Definition \ref{DefinitionRestrictionofaScenario}.  Since the only constraint on these distributions in the subscenario $S_{|M'^{\perp}}$ is that they obey no-signalling, the sub-behaviour $R_{|M'^{\perp}}(\wp^{S})= \wp^{S_{|M'^{\perp}}}$ consisting of the distributions $ \wp^{S_{|M'^{\perp}}}(\vec{a}'' | \vec{x}'')$ can always be expanded in terms of the finite number of extreme points of the no-signalling polytope $\mathbf{NS}(S_{|M'^{\perp}})$. That is, one may write
\begin{align}
  \label{Equation:NDisexpandedExtPointsofSubscenario}  \wp^{|ND} (\vec{a}_{I_{|M'^{\perp}}}|\vec{x}_{I_{|M'^{\perp}}}) &= \wp^{S_{|M'^{\perp}}}(\vec{a}'' = a_{I_{|M'^{\perp}}} | \vec{x}'' = \vec{x}_{I_{|M'^{\perp}}})\\
    = \sum_{k}l_k P^{S_{|M'^{\perp}}}_{k}&(\vec{a}'' = \vec{a}_{I_{|M'^\perp }} | \vec{x}'' = \vec{x}_{I_{|M'^{\perp}}}),
\end{align}
where $l_k \in [0,1]$ gives the convex weight of the extremal no-signalling behaviours indexed by $k$, so that $\sum_k l_k = 1$.  Using  $\vec{a}_{I\setminus F_{\vec{x}}} = (\vec{a}_{I_{|M'^{\perp}}})_{F_{I \setminus \vec{x}}}$ and $\vec{x}_{I \setminus F_{\vec{x}}} = (\vec{x}_{I_{|M'^{\perp}}})_{I\setminus F_{\vec{x}}}$ which are valid for all $\vec{x}\in \vec{M}$,  we may  adopt the notational convention 
\begin{align}
\begin{split}
  P^{S_{|M'^{\perp}}}_k(\vec{a}_{I \setminus F_{\vec{x}}}|\vec{x}_{I\setminus F_{\vec{x}}})   \\=  \sum_{i\in F_{\vec{x}}} P^{S_{|M'^{\perp}}}_{k}&(\vec{a}'' = \vec{a}_{I_{|M'^\perp }} | \vec{x}'' = \vec{x}_{I_{|M'^{\perp}}}), \label{Equation:NotationalConvenience}
  \end{split}
\end{align}
which tracks the matching of the marginals $\vec{a}_{I
 \setminus F_{\vec{x}}}'' = (\vec{a}_{I_{|M'^{\perp}}})_{I
 \setminus F_{\vec{x}}} = \vec{a}_{I \setminus F_{\vec{x}}}$. This  should lead to no confusion since the substring $I\setminus F_{\vec{x}}$ consists of precisely the parties with inputs in $M'^{\perp}$ and well-defined since $P_k^{S_{|M'^{\perp}}}\in \mathbf{NS}(S_{|M'^{\perp}})$.
 
 Putting the identities in  Eqs.~\eqref{Equation:NDisexpandedExtPointsofSubscenario}-\eqref{Equation:NotationalConvenience} back to Eq.~\eqref{partialPredictableNosignallingequation}
it follows that \emph{any} partially predictable behaviour $\wp^{S}\in \mathbf{PP}(S,M')$ admits a decomposition of the form 

\begin{align}
    \wp^S(\vec{a}|\vec{x}) &=  D(\vec{a}_{F_{\vec{x}}}|\vec{x}_{F_{\vec{x}}}) \wp^{S_{|M'^{\perp}}}(\vec{a}_{I\setminus F_{\vec{x}}}|\vec{x}_{I\setminus F_{\vec{x}}})  \\
   &=  \sum_{k}l_k D(\vec{a}_{F_{\vec{x}}}|\vec{x}_{F_{\vec{x}}}) P^{S_{|M'^{\perp}}}_{k}(\vec{a}_{I\setminus F_{\vec{x}}}|\vec{x}_{I\setminus F_{\vec{x}}}). \label{Equation:PartiallypredictableExpandedeXTREMEOFns}
\end{align}

One can also run a similar argument for the maximal  contexts  $D = \{ \vec{x} : |F_{\vec{x}}| = \max_{\vec{x}'}|F_{\vec{x}'}| \}$. Now owing to no-signalling, the marginal distributions of the form $\wp(\Vec{a}_{F_{\Vec{x}}}|\Vec{x}_{F_{\Vec{x}}})$ in Eq.~\eqref{partialPredictableNosignallingequation} have to be recoverable from a distribution in any context  $\Vec{x} \in D$. Note also that $D = \emptyset$ if and only if $M_i' = \emptyset$ for all $i \in I$. We can then express any $D(\vec{a}_{F_{\vec{x}}}|\vec{x}_{F_{\vec{x}}})$
as 

\begin{align}
    D(\vec{a}_{F_{\vec{x}}}|\vec{x}_{F_{\vec{x}}}) = D^{|D}(\vec{a}_{F_{\vec{x}}}|\vec{x}_{F_{\vec{x}}}), \label{Equation:sAMEHoldsfORmaximald}
\end{align}
where the superscript $|D$ has been added on the right hand side to emphasize that one may take any distribution from the maximal contexts $\vec{x}\in D$ while no such constraint is placed on the left side. By similar steps, if $M_i' =\emptyset$ for some $i\in I$, the relevant information is seen to be contained in the collection of the marginals 

\begin{align}
    \sum_{i: M_i' = \emptyset} D^{|D}(\vec{a}|\vec{x}) = D^{|D}(\vec{a}_{I_{|M'}}|\vec{x}_{I_{|M'}}),
\end{align}
where $I_{|M'} = \{ i \in I : M_i' \neq \emptyset \}$. In any case, these correspond to the image of $\wp^{S}$ under the restriction $R_{|M'}$ defined over the substructure $S_{|M'} = (I_{|M'}, M_{M'}, O')$ so that one may write 
\begin{align}
    D^{|D}(\vec{a}_{I_{|M'}}|\vec{x}_{I_{|M'}}) = D^{S_{|M'}}(\vec{a}' = \vec{a}_{I_{|M'}}|\vec{x}'= \vec{x}_{I_{|M'}}).
\end{align} The sub-behaviour $\wp^{S_{|M'}}$ are clearly elements in the finite set $\mathbf{P}_{\mathbf{NS}}(S_{|M'})$ of predictable no-signalling behaviour defined over $S_{|M'}$. It will be useful to add the label $r$ to indicate each distinct sub-behaviour of this kind, so that for $r \neq r'$ it holds that $\exists \vec{x}' \in M_{|M'}$ with
\begin{align}
    D^{S_{|M'}}_{r}(\vec{a}'|\vec{x}') \neq D^{S_{|M'}}_{r'}(\vec{a}'|\vec{x}') \label{Equation:DistinctPredictableSubBehaviour}
    \end{align} for some $\vec{a}'\in O_{\vec{x}'}$. 

Again, we adopt the notational convention 
\begin{align}
  D^{S_{|M'}}(\vec{a}_{F_{\vec{x}}}|\vec{x}_{F_{\vec{x}}}) = \sum_{i\in I \setminus F_{\vec{x}}} D^{S_{|M'}}(\vec{a}' = \vec{a}_{I_{|M'}}|\vec{x}'= \vec{x}_{I_{|M'}}), \label{Equation:NotationalconventionforD} 
\end{align}
which simply tracks the matching of the marginals $\vec{a}_{F_{\vec{x}}}' = (\vec{a}_{I_{|M'}})_{F_{\vec{x}}} = \vec{a}_{F_{\vec{x}}}$ and $\vec{x}_{F_{\vec{x}}}' = (\vec{x}_{I_{|M'}})_{F_{\vec{x}}} = \vec{x}_{F_{\vec{x}}}$, which is well defined owing to no-signalling. 

    Combining the steps between Eqs.~\eqref{Equation:PartiallypredictableExpandedeXTREMEOFns}-\eqref{Equation:NotationalconventionforD} it is then found that any behaviour $\wp^{S}\in \mathbf{PP}_{\mathbf{NS}}(S,M')$ admits a decomposition in terms of the finite numbers of parameters as 
    \begin{align}
        \wp^{S}(\vec{a}|\vec{x}) = \sum_{k}l_{k} D_r^{S_{|M'}}(\vec{a}_{F_{\vec{x}}}|\vec{x}_{F_{\vec{x}}}) P^{S_{|M'^{\perp}}}_k(\vec{a}_{I\setminus F_{\vec{x}}}|\vec{x}_{I\setminus F_{\vec{x}}}),\label{Equation:expansionOfpARTIALLYpredictableNS}
    \end{align}
where $r$ labels a predictable sub-behaviour in $S_{|M'}$ and $k$ an extremal point of the no-signalling polytope $\mathbf{NS}(S_{|M'^{\perp}})$ in the sub-scenario $S_{|M'^{\perp}}$. Comparison with Eq.~\eqref{Equation:mixtureoftwoPartiallyPreditableNS} allows us to state that not only is the subset of behaviours with the same predictable component (determined by $r$) convex, but it is in fact a convex polytope, since any such partially predictable behaviour with fixed $r$ can be decomposed as a convex sum in terms of the same $|\mathrm{Ext}(\mathbf{NS}(S_{|M'^{\perp}}))|$ points $k$. 

We can immediately extract the relevant features of $\mathrm{conv}[\mathbf{PP}_{\mathbf{NS}}(S,M')]$ as well. From Eq.~\eqref{partialPredictableNosignallingequation} and \eqref{Equation:expansionOfpARTIALLYpredictableNS} it follows that  any mixture $\wp^{S} = \sum_{j}\omega_j \wp^{S}_j \in \mathrm{conv}(\mathbf{PP}_{\mathbf{NS}}(S, M'))$, with $\omega_j \geq 0$ $\forall j \in J$ and $\sum_{j} \omega_j = 1$,   decomposes  as

\begin{align}
    \wp^S(\Vec{a}|\Vec{x}) &= \sum_{j} \omega_j \wp^S_j(\vec{a}|\vec{x}) \\
    &=\sum_{j} \omega_j D_j(\Vec{a}_{F_{\Vec{x}}}|\Vec{x}_{F_{\Vec{x}}}) \cdot \wp_{j} (\Vec{a}_{I \setminus F_{\Vec{x}}}|\Vec{x}_{I\setminus F_{\Vec{x}}})\\
    \begin{split}
    &= \sum_{j,r,k} \omega_j l_k(\omega_j)s_r(\omega_j) D_r^{S_{|M'}}(\vec{a}_{F_{\vec{x}}}|\vec{x}_{F_{\vec{x}}}) \\
    & \hspace{0.5cm} \times P^{S_{|M'^{\perp}}}_k(\vec{a}_{I \setminus F_{\vec{x}}}|\vec{x}_{I\setminus F_{\vec{x}}}) \label{EqconvexPPNSEq}
     \end{split}
\end{align}
 where $l_k(\omega_j) \in [0,1],  s_{r}(\omega_j)\in \{0,1\}$ for all $k,r,j$ and $\sum_{j,k,r}\omega_j l_k(\omega_j)s_r(\omega_j) = 1$. The quantity $l_k(\omega_j)$ gives the  weight of the extremal no-signalling sub-behaviour over the scenario $S_{|M'^{\perp}}$ labelled by $k$  and similarly $s_{r}(\omega_j)$ gives the  weight of the predictable sub-behaviour over the scenario $S_{|M'}$ labelled by $r$ in the partially predictable distribution labelled by $j$, respectively. 

Inspection of Eq.~\eqref{EqconvexPPNSEq} allows us to verify the following theorem.

\theorem{The set $\mathrm{conv}[\mathbf{PP}_{\mathbf{NS}}(S, M')]$ is a convex polytope. \label{convPPNSisapolytopeTheorem}}
\begin{proof} 
The set $\mathrm{conv}[\mathbf{PP}_{\mathbf{NS}}(S, M')]$ is convex by definition. To prove that it is a polytope,  it is sufficient to show that the set can  in fact be defined as the convex hull of a finite number of points. This follows immediately from previous considerations. 

Namely, in the steps between Eqs.~\eqref{partialPredictableNosignallingequation}-\eqref{EqconvexPPNSEq} it was shown that any $\wp\in \mathrm{conv}[\mathbf{PP}_{\mathbf{NS}}(S,M')]$ can be given a convex decomposition in terms of the finite numbers of parameters $r,k$, which label the extreme points of the set of predictable no-signalling behaviours defined over the subscenario $S_{|M'}$ and the extreme points of the no-signalling polytope defined over the subscenario $S_{|M'^{\perp}}$:
\begin{align}
    \begin{split}
     \wp(\Vec{a}|\Vec{x})  &= \sum_{j,r,k} \omega_j l_k(\omega_j)s_r(\omega_j) D_r^{S_{|M'}}(\vec{a}_{F_{\vec{x}}}|\vec{x}) \\
    & \hspace{0.5cm} \times P^{S_{|M'^{\perp}}}_k(\vec{a}_{I \setminus F_{\vec{x}}}|\vec{x}_{I\setminus F_{\vec{x}}}) \\
    \end{split}
   \\[2ex]
   \begin{split}
    &= \sum_{r,k}t_{rk} D_r^{S_{|M'}}(\vec{a}_{F_{\vec{x}}}|\vec{x}_{F_{\vec{x}}})  P^{S_{|M'^{\perp}}}_k(\vec{a}_{I \setminus F_{\vec{x}}}|\vec{x}_{I \setminus F_{\vec{x}}})\label{Equation:ExpansionOfPDPintermsofpredictableandNSpoints},   
   \end{split}
\end{align}
where the quantities $t_{rk} \coloneq \sum_j \omega_j l_k(\omega_j)s_r(\omega_j)$ are by construction positive, and sum to one. 
\end{proof} 
\normalfont
The extreme points of this set are obtained for free. 

\theorem{$ $\label{TheoremextremepointsofPDP} The extreme points of $\mathrm{conv} [\mathbf{PP}_{\mathbf{NS}}(S,M')]$ are precisely the behaviours for which $t_{rk} \in \{0,1\}$ for all $r,k$ in the decomposition of Eq.~\eqref{Equation:ExpansionOfPDPintermsofpredictableandNSpoints}. That is,
\begin{align}
\begin{split}
   \wp &\in  \mathrm{Ext}(\mathrm{conv}[\mathbf{PP}_{\mathbf{NS}}(S,M'))]) \Longleftrightarrow \\
     \wp(\vec{a}|\vec{x}) &= D_r^{S_{|M'}}(\vec{a}_{F_{\vec{x}}}|\vec{x}_{F_{\vec{x}}})  P^{S_{|M'^{\perp}}}_k(\vec{a}_{I \setminus F_{\vec{x}}}|\vec{x}_{I \setminus F_{\vec{x}}}),
     \end{split}
\end{align}
for some $r,k$. 
}
\begin{proof}
    That the equivalence holds in the direction ``$\Rightarrow$'' was essentially established in the proof of Theorem \ref{convPPNSisapolytopeTheorem}. Hence, it is sufficient to show that for every pair $(r,k)$ such points are indeed extremal.

    Suppose then that this was not the case for some pair $(r=R,k=K)$. Then for $1>t_{r,k}>0, \sum_{r,k}t_{rk} = 1$, $(r,k) \neq (R,K)$  a decomposition 
    \begin{align}
      \wp_{RK}^{S}(\vec{a}|\vec{x}) = D^{S_{|M'}}_{R}(\vec{a}_{F_{\vec{x}}}|\vec{x}_{F_{\vec{x}}}) \times P^{S_{|M'^\perp}}_\mathbf{K}(\vec{a}_{I \setminus F_{\vec{x}}}|\vec{x}_{I\setminus F_{\vec{x}}}) \\
        = \sum_{(r,k) \neq (R,K)} t_{rk} D^{S_{|M'}}_{r}(\vec{a}_{F_{\vec{x}}}|\vec{x}_{F_{\vec{x}}}) \times P^{S_{|M'^\perp}}_k(\vec{a}_{I \setminus F_{\vec{x}}}|\vec{x}_{I\setminus F_{\vec{x}}}) \label{EquationConvexcombiofExtremepointscontradictionEQ}
    \end{align}
    exists for $\wp_{RK}^{S} \in \mathrm{Ext}(\mathrm{conv}[\mathbf{PP}_{\mathbf{NS}}(S,M'))])$. By no-signalling, therefore in particular
    \begin{align}
    \wp_{RK}^{S}(\vec{a}_{F_{\vec{x}}}|\vec{x}_{F_{\vec{x}}}) = D^{S_{|M'}}_R(\vec{a}_{F_{\vec{x}}}|\vec{x}_{F_{\vec{x}
        }})\\
        = \sum_{(r,k)} t_{rk} D^{S_{|M'}}_r(\vec{a}_{F_{\vec{x}}}|\vec{x}_{F_{\vec{x}}}),
    \end{align}
    which can only hold for all $F_{\vec{x}}$ if $t_{rk}= \delta_{r,R}t_{rk}= t_{Rk}$ for all $k$, since by construction $r$ labels the distinct deterministic extreme points of the restricted scenario $S_{|M'}$ which in turn uniquely determine the 'maximal' marginals for which $F_{\vec{x}} = I_{|M'}$.  This property fixes a significant portion of the decomposition of any behaviour in $\wp_{RK}^{S} \in \mathrm{Ext}(\mathbf{PD}(S,M'))$.

    Indeed, by a similar argument, one finds that the remaining degrees of freedom are in turn fixed by the properties of the marginals on the $I\setminus F_{\vec{x}}$ and the sub-behaviour $S_{|M'^{\perp}}$. By marginalizing over $F_{\vec{x}}$ in Eq.~\eqref{EquationConvexcombiofExtremepointscontradictionEQ} one gets
    \begin{align}
        P^{S_{{|M'^{\perp}}}}_\mathbf{K}(\vec{a}_{I\setminus F_{\vec{x}}}|\vec{x}_{I\setminus F_{\vec{x}}}) = \sum_{k}t_{Rk} P^{S_{|M'^{\perp}}}_{k}(\vec{a}_{I\setminus F_{\vec{x}}}|\vec{x}_{I\setminus F_{\vec{x}}})\label{Equation:marginalsofNSsubscenarioareeXTREMELAINpdpPROOF}
    \end{align}
    for all $F_{\vec{x}}$. In particular, since Eq.~\eqref{Equation:marginalsofNSsubscenarioareeXTREMELAINpdpPROOF} has to hold for the cases where $I\setminus F_{\vec{x}} = I_{|M'^{\perp}}$  it must be that in fact $t_{Rk}= \delta_{k,K}t_{Rk}$ since each $k$ labels by definition a distinct extreme point of $\mathbf{NS}(S_{|M'^{{\perp}}}$). Thus, each $(r,k)$ labels a unique extreme point of $\mathrm{conv}[\mathbf{PP}_{\mathbf{NS}}(S,M')]$ as claimed.    
 \end{proof}
\normalfont
Theorems \ref{convPPNSisapolytopeTheorem} and \ref{TheoremextremepointsofPDP} lead us to define the main object of interest of this work.

\definition{(Partially Deterministic behaviour\footnote{\label{footnote:PDnotLocalPD}Here we defined Partially Deterministic behaviour as those in the convex hull of the no-signalling partially predictable behaviour. If no-signalling is not enforced, that convex hull simply equals the set of all behaviour $\mathbf{E}(S)$. Since that case may be considered trivial, there is no loss of generality in defining Partial Determinism in this manner. })\\ \label{PartiallyDeterministicPolytopeDefinition}
The set $\mathbf{PD}(S,M')$ of behaviours partially deterministic with respect to the subset of inputs $M'$ is defined as $\mathbf{PD}(S,M') = \mathrm{conv}(\mathbf{PP}_{\mathbf{NS}}(S, M'))$. By Theorem $\ref{convPPNSisapolytopeTheorem}$ this set is a polytope, which is referred to as the Partially Deterministic Polytope (w.r.t. $M'$). Any $\wp \in \mathbf{PD}(S,M')$ admits a `partially deterministic model' meaning that the distributions $\wp(\vec{a}|\vec{x})$ decompose as 
\begin{align}
    \wp(\Vec{a}|\Vec{x}) = \sum_{j} \omega_j D_j(\vec{a}_{F_{\vec{x}}}|\vec{x}_{F_{\vec{x}}}) \wp_j(\vec{a}_{I \setminus F_{\vec{x}}}|\vec{x}_{I \setminus F_{\vec{x}}}), \label{Equation:partialdeterminismDefiningEQ.}
\end{align}
with $\omega_j \geq 0$ for all $j\in J$, $\sum_{j}\omega_j = 1$, $F_{\vec{x}} = \{i \in I | x_i \in M_i' \}$, $D_j(\vec{a}_{F_{\vec{x}}}|\vec{x}_{F_{\vec{x}}}) \in \{0,1\}$ for all $i\in I, \vec{a}_{F_{\vec{x}}}, \vec{x}$ and the distributions $\wp_j(\vec{a}_{I\setminus F_{\vec{x}}}| \vec{x}_{I\setminus F_{\vec{x}}})$ being compatible with no-signalling for all $j\in J$.

A partially deterministic model of the type in Eq.~\eqref{Equation:partialdeterminismDefiningEQ.} can be further expanded in terms of mixtures  of predictable behaviour $D_r^{S_{|M'}}$ over the substructure $S_{|M'}$ labelled by $r$ and the extreme points $P_k^{S_{|M'^{\perp}}}$ of the no-signalling polytope defined over $S_{|M'^{\perp}}$ labelled by $k$ so that any $\wp \in \mathbf{PD}(S,M')$ satisfies
\begin{align}
   \wp(\vec{a}|\vec{x}) = \sum_{r,k} t_{rk} D^{S_{|M'^{\perp}}}_{r}(\vec{a}_{F_{\Vec{x}}}| \Vec{x}_{F_{\Vec{x}}})P^{S_{|M'^{\perp}}}_{k}(\Vec{a}_{I \setminus F_{\Vec{x}}}|\Vec{x}_{I \setminus F_{\Vec{x}}}), \label{equation:partialdeterminismDefiningEq.EXTpoints}
\end{align}
with $t_{rk} \geq 0$ for all $r,k$ and $\sum_{rk}t_{rk} = 1$.
}\\
\normalfont
Let us again emphasize that a conceptual distinction between the existence of a model of a certain kind, and the operationally clear meaning of the convex combinations as mixing  of preparations which lead to partially predictable behaviours should be made. Indeed, mathematically Definition \ref{PartiallyDeterministicPolytopeDefinition} is equivalent to demanding that the behaviour $\wp$ has a model in terms of some local partially deterministic ontic variables $\lambda\in \Lambda$ which in the space-time relations of the Bell-scenario imply that
\begin{align}
    \wp(\vec{a}|\vec{x}) = \sum_{\lambda}P(\lambda)D(\vec{a}_{F_{\vec{x}}}|\vec{x}_{F_{\vec{x}}}, \lambda)P(\vec{a}_{I\setminus F_{\vec{x}}}|\vec{x}_{I\setminus F_{\vec{x}}}, \lambda).
\end{align}
The term `local' can be dropped in this case as well as stated in footnote \ref{footnote:PDnotLocalPD}, since nonlocal partially deterministic models are just arbitrary nonlocal models, which can be used to express any behaviour.

The two familiar polytopes $\mathbf{B}(S)$ and $\mathbf{NS}(S)$ are obtained as the two extreme cases of Definition \ref{PartiallyDeterministicPolytopeDefinition}. 

\theorem{\label{Theorem:BellandNSspecialcasesofPDP}Let $S= (I,M,O)$ be a scenario and $M'$  a collection of subsets of inputs with  $M_i'\subset M_i.$
\begin{enumerate}[label=(\roman*), ref=\ref{Theorem:BellandNSspecialcasesofPDP}.\roman*] 
    \item If $ M_i'= \emptyset$ for all $i\in I$ then $\mathbf{PD}(S,M') = \mathbf{NS}(S)$. 
    \item If $M_i' = M_i $ for all $i\in I$, then $\mathbf{PD}(S,M') = \mathbf{B}(S)$
\end{enumerate}
}
\begin{proof}
See, for example, Eq.~\eqref{equation:partialdeterminismDefiningEq.EXTpoints} in Definition \ref{PartiallyDeterministicPolytopeDefinition}. Cases $(i)$ and $(ii)$ correspond to the situation where $S_{|M'^{\perp}}$ = $S$ and $S_{|M'} = S$, respectively.  
\end{proof}
\normalfont

Theorem \ref{TheoremextremepointsofPDP} establishes the form of the extreme points of any partially deterministic polytope $\mathbf{PD}(S,M')$ in terms of predictable no-signalling behaviours in the restricted scenario $S_{|M'}$ and the extreme points of no-signalling behaviours in the scenario $S_{|M'^{\perp}}$. Hence, the vertex representation of $\mathbf{PD}(S,M')$ is essentially known. Unfortunately  this suggests that if a complete list of all the vertices of $\mathbf{PD}(S,M')$ is wanted, a vertex representation of the polytope $\mathbf{NS}(S_{|M'^{\perp}})$ needs to be solved first, which in general is computationally demanding \cite{Pironio2011}. Fortunately, we can say quite a bit about the structure of various partially deterministic polytopes $\mathbf{PD}(S,M')$ purely by analytical techniques.

Let us begin by pointing out some instantaneous consequences of the definitions considered so far.

\begin{theorem} \label{TheoremPDpolytopeBasicInclusionTHRM}
   Let $S = (I,M,O)$ be a scenario.  If $M_i'' \subset M_i' \subset M_i$, for all $i\in I$ then $\mathbf{PD}(S,M') \subset \mathbf{PD}(S,M'') $.
\end{theorem}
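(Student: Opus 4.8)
The plan is to reduce the claim entirely to the monotonicity of partial predictability already established in Theorem \ref{TheoremBasicInclusionTHRMALLPP1}, together with the elementary facts that intersection with a fixed set and the convex-hull operation both preserve set inclusion. The overall argument is a short three-link chain, and no new combinatorial or geometric content is needed beyond what has already been set up; the only point demanding care is that the inclusion direction reverses between the input collections and the behaviour sets.

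First, I would invoke Theorem \ref{TheoremBasicInclusionTHRMALLPP1} directly. Since $M_i'' \subset M_i'$ for all $i \in I$ by hypothesis, that theorem gives $\mathbf{PP}(S,M') \subset \mathbf{PP}(S,M'')$. Intuitively, demanding predictability of the marginals on a larger collection of deterministic inputs is a strictly stronger requirement, so the set of partially predictable behaviours can only shrink as the deterministic input collection grows. Concretely, for every $\vec{x}$ one has $F_{\vec{x}}'' \subset F_{\vec{x}}'$, and summing out the outputs of the parties in $F_{\vec{x}}' \setminus F_{\vec{x}}''$ carries a predictable marginal on $F_{\vec{x}}'$ to a predictable marginal on $F_{\vec{x}}''$; this is exactly the content packaged in the cited theorem.

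Second, I would intersect both sides of this inclusion with the no-signalling polytope $\mathbf{NS}(S)$. Intersection with a fixed set preserves inclusion, so using Definition \ref{DefinitionPartialDeterministicNosignallingbehaviours} I obtain
\begin{align}
\mathbf{PP}_{\mathbf{NS}}(S,M') &= \mathbf{PP}(S,M') \cap \mathbf{NS}(S) \nonumber \\
&\subset \mathbf{PP}(S,M'') \cap \mathbf{NS}(S) \nonumber \\
&= \mathbf{PP}_{\mathbf{NS}}(S,M'').
\end{align}
Third and finally, I would take convex hulls. The operator $\mathrm{conv}[\cdot]$ is monotone: if $A \subset B$ then $\mathrm{conv}[A] \subset \mathrm{conv}[B]$, since every convex combination of points of $A$ is in particular a convex combination of points of $B$. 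Applying this to the previous inclusion and using Definition \ref{PartiallyDeterministicPolytopeDefinition} yields $\mathbf{PD}(S,M') = \mathrm{conv}[\mathbf{PP}_{\mathbf{NS}}(S,M')] \subset \mathrm{conv}[\mathbf{PP}_{\mathbf{NS}}(S,M'')] = \mathbf{PD}(S,M'')$, which is the claim.

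Since each step is a one-line invocation of an already-established fact, there is no genuine obstacle here. The closest thing to a pitfall is simply keeping track of the reversal of the inclusion (a larger deterministic collection $M'$ gives a \emph{smaller} polytope $\mathbf{PD}(S,M')$), which matches the behaviour of the two extreme cases recorded in Theorem \ref{Theorem:BellandNSspecialcasesofPDP}: the empty collection yields the largest polytope $\mathbf{NS}(S)$, while the full collection $M'=M$ yields the smallest, $\mathbf{B}(S)$.
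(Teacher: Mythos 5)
Your proposal is correct and follows exactly the same route as the paper's own proof: invoke Theorem \ref{TheoremBasicInclusionTHRMALLPP1} to get $\mathbf{PP}(S,M') \subset \mathbf{PP}(S,M'')$, intersect with $\mathbf{NS}(S)$, and take convex hulls, each step preserving inclusion. The only difference is that you spell out the monotonicity of intersection and $\mathrm{conv}[\cdot]$ explicitly, which the paper compresses into a single sentence.
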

\begin{proof}
    Follows immediately from Theorem \ref{TheoremBasicInclusionTHRMALLPP} by which $M_i'' \subset M_i'$ $\forall i\in I$ $ \Rightarrow $ $\mathbf{PP}(S,M') \subset \mathbf{PP}(S, M'')$ and hence also 
         $\mathrm{conv}(\mathbf{PP}(S,M')\cap \mathbf{NS}(S)) \subset \mathrm{conv}(\mathbf{PP}(S,M'')\cap \mathbf{NS}(S))$ as claimed. 
\end{proof}
Theorem \ref{TheoremPDpolytopeBasicInclusionTHRM} can be used to establish the following result.

\normalfont

\theorem{Let $S= (I, M, O)$ be a scenario and  $M'$ an arbitrary collection of subsets $M_i' \subset M_i$ for all $i\in I$. Then $\mathrm{dim}(\mathbf{B}(S)) = \mathrm{dim}(\mathbf{PD}(S,M')) = \mathrm{dim}(\mathbf{NS}(S))$.  \label{AffineHullThrm}}
\begin{proof}
    In Theorem \ref{Theorem:BellandNSspecialcasesofPDP}  it was established that if $M^{\emptyset}$ is a collection of inputs with $M_i^\emptyset = \emptyset$ for all $i\in I$ then $\mathbf{PD}(S, M^\emptyset) = \mathbf{NS}(S)$ and that $\mathbf{PD}(S,M) = \mathbf{B}(S)$. Since  every collection  $M'$ of inputs of the parties satifies $\emptyset \subset M_i' \subset M_i$ for all $i\in I$, it follows  by virtue of Theorem \ref{TheoremPDpolytopeBasicInclusionTHRM} that $\mathbf{B}(S) \subset \mathbf{PD}(S,M') \subset \mathbf{NS}(S)$. This proves the claim when paired with Pironio's affine hull theorem (Theorem \ref{Theorem:Pironio'sTheorem}), which establishes that $\mathrm{dim}(\mathbf{B}(S)) = \mathrm{dim}(\mathbf{NS}(S)) $ since both those sets are constrained by the same the no-signalling and normalization requirements \cite{Pironio2005}.
\end{proof}
\normalfont

Notably the conditions for the subset relations in Theorem \ref{TheoremPDpolytopeBasicInclusionTHRM} to be strict are  more nuanced than one would perhaps immediately expect from Theorem \ref{TheoremBasicInclusionTHRMAllPP2}. Specifically, in previous works \cite{Woodhead2014, Bong2020, Haddara2025}  classes of nontrivial situations\footnote{The mathematical results in Refs.~\cite{Bong2020, Haddara2025},  were framed in the context of better understanding the Local Friendliness no-go result building on certain kinds of extensions of the Wigner's friend scenario \cite{Bong2020}. In Sec.~\ref{Section:GeneralSequentialWignerSection} we show how partially deterministic polytopes arise in the sequential Wigner's friend scenarios introduced in Ref.~\cite{Utreras-Alarcon2024} as physically motivated constraints.} where $M'' \neq M'$ but $\mathbf{PD}(S,M'')=\mathbf{PD}(S,M') = \mathbf{B}(S)$ have been found. We will later strengthen Theorem \ref{TheoremPDpolytopeBasicInclusionTHRM} by establishing precisely the cases when the inclusions are strict by finding all distinct equivalence classes of partially deterministic polytopes. 

Let us observe that Theorem \ref{AffineHullThrm} uses the fact that $\mathbf{B}(S) \subset \mathbf{PD}(S,M') \subset \mathbf{NS}(S)$. It is in fact possible to prove a strictly stronger relation concerning the extreme points of those sets. For this purpose, the following lemma will be useful.

\lemma{\label{LemmaExtremepointsimplication}Let $A,B$ and $C$ be  closed and bounded convex subsets of $\mathbb{R}^{d}$ with $A \subset B \subset C$. If $p \in \mathrm{Ext}(A)$ and $p \in \mathrm{Ext}(C)$ then also $p\in \mathrm{Ext}(B)$.}
\begin{proof}
   From the closedness of $A$ and $C$, it follows that $p\in A, C$. Therefore, by $A\subset B $ also $p\in B$. We will show, that $p$ is also an extreme point of $B$. 
   
   Suppose for the sake of contradiction, that $p\notin \mathrm{Ext}(B)$. If $p$ is not extremal, from the convexity of $B$ it follows that there are some points $b_1, b_2 \in B$ such that for $1>\omega >0$
   \begin{align}
      p =  \omega b_1 + (1-\omega)b_2.
   \end{align}
   From $B\subset C$ it follows that also $b_1, b_2 \in C$ but then, we have found a nontrivial convex decomposition of $p$ contradicting the assumption that $p\in \mathrm{Ext}(C)$. 
\end{proof}
    \normalfont
    
\theorem{Let $S= (I, M, O)$ be a scenario and  $M'$ an arbitrary collection of subsets $M_i' \subset M_i$ for all $i\in I$. Then $\mathrm{\mathrm{Ext}}(\mathbf{B}(S)) \subset \mathrm{\mathrm{Ext}}(\mathbf{PD}(S,M')) \subset \mathrm{\mathrm{Ext}}(\mathbf{NS}(S))$\label{TheoremExtremePointsInclusionBellPDPNS}}
\begin{proof}
  Note that by Defs.~\ref{predictablebehaviour}, \ref{No-signallingbehaviour} and \ref{LHVBehaviour} the set $\mathrm{Ext}(\mathbf{B}(S))$ consists precisely of the predictable no-signalling behaviour, which can be seen also to form a class of extreme points of the set of all behaviours $\mathbf{E}(S)$  so that $\mathrm{Ext}(\mathbf{B}(S)) \subset \mathrm{Ext}(\mathbf{E}(S))$. It then follows from Lemma \ref{LemmaExtremepointsimplication} that $\mathrm{Ext}(\mathbf{B}(S)) \subset \mathrm{Ext}(\mathbf{PD}(S,M'))$ and $\mathrm{Ext}(\mathbf{B}(S)) \subset \mathrm{Ext}(\mathbf{NS}(S))$, using the fact that all the sets in question are convex polytopes, and the inclusion $\mathbf{B}(S) \subset \mathbf{PD}(S,M') \subset \mathbf{NS}(S)$. It remains to show that $\mathrm{Ext}(\mathbf{PD}(S,M')) \subset \mathrm{Ext}(\mathbf{NS}(S))$.

Clearly the inclusion relation holds if $M_i' = \emptyset$ for all $i\in I$ when, by Theorem \ref{Theorem:BellandNSspecialcasesofPDP} $\mathbf{PD}(S,M') = \mathbf{NS}(S).$
We can demonstrate, by an argument analogous to that employed in the proof of Theorem \ref{TheoremextremepointsofPDP}, that points of this form are also extreme points of $\mathbf{NS}(S)$ in cases where $\mathbf{PD}(S,M')\neq \mathbf{NS}$. 

In Theorem \ref{TheoremextremepointsofPDP} it was shown that the extreme points  $\wp_{\mathrm{Ext}}\in \mathbf{PD}(S,M')$ are those which have the form 

\begin{align}
    \wp_{\mathrm{Ext}}(\vec{a}|\vec{x}) = D^{S_{|M'^{\perp}}}_{r}(\vec{a}_{F_{\Vec{x}}}| \Vec{x}_{F_{\Vec{x}}})P^{S_{|M'^{\perp}}}_{k}(\Vec{a}_{I \setminus F_{\Vec{x}}}|\Vec{x}_{I \setminus F_{\Vec{x}}}),\label{Equation:EXTpointofPDPinEXTofAlsonsPROOF}
\end{align}
with $r,k$ labelling distinct predictable behaviour in $S_{|M'}$ and distinct extreme points of the no-signalling polytope over $S_{|M'^{\perp}}$. 

Suppose for the sake of argument that points of the form of Eq.~\eqref{Equation:EXTpointofPDPinEXTofAlsonsPROOF} were not extreme points of $\mathbf{NS}(S)$ for all $r,k$. Then, there would exist some $P_j \in \mathbf{NS}(S)$, convex parameters $\omega_j$ such that $\omega_j \geq 0, \omega_j \neq 1$ for all $j\in J$, $\sum_j \omega_j = 1$ and a pair $r,k$ with  
\begin{align}
    \sum_{j}\omega_j P_j(\vec{a}|\vec{x})  = D^{S_{|M'^{\perp}}}_{r}(\vec{a}_{F_{\Vec{x}}}| \Vec{x}_{F_{\Vec{x}}})P^{S_{|M'^{\perp}}}_{k}(\Vec{a}_{I \setminus F_{\Vec{x}}}|\Vec{x}_{I \setminus F_{\Vec{x}}}).
\end{align}
This implies, by marginalizing over $I \setminus F_{\vec{x}}$ and using no-signalling, in particular that 
\begin{align}
    \sum_{j}\omega_j P_j(\vec{a}_{F_{\vec{x}
    }}|\vec{x}_{F_{\vec{x}}}) = D_r(\vec{a}_{F_{\vec{x}}}|\vec{x}_{F_{\vec{x}}})
\end{align}
for all $F_{\vec{x}}$. By the uniqueness of the predictable behaviour $D_r$ in scenario $S_{|M'}$ this can hold if and only if
\begin{align}
    P_j(\vec{a}_{F_{\vec{x}  }}|\vec{x}_{F_{\vec{x}}}) = D^{S_{|M'}}_r(\vec{a}_{F_{\vec{x}}}|\vec{x}_{F_{\vec{x}}}) \hspace{0,3cm} \forall j \label{EquationHastobePD}.
\end{align}

We can stop here,  since Eq.~\eqref{EquationHastobePD} essentially states that the $P_j \in \mathbf{NS}(S)$ above have to be partially predictable with respect to $M'$, and the form of the extreme points of $\mathbf{PD}(S,M')$ in terms of those behaviours is already known to match the form of Eq.~\eqref{Equation:EXTpointofPDPinEXTofAlsonsPROOF}. 
\end{proof}
\normalfont

As an immediate consequence a hierarchy for the extreme points of  partially deterministic polytopes can be obtained as a  strict sharpening  of Theorem \ref{TheoremPDpolytopeBasicInclusionTHRM}.

\theorem{\label{Theorem:extremepointsareIncludedinPDPsubset}Let $S=(I, M, O)$ be a scenario and $M', M'' $ collections of subsets of inputs such that $M_i'' \subset M_i' \subset M_i$ for all $i\in I$.  Then $\mathrm{\mathrm{Ext}}(\mathbf{PD}(S, M') \subset \mathrm{\mathrm{Ext}}(\mathbf{PD}(S, M''))$}
\begin{proof}
    By Theorem \ref{TheoremExtremePointsInclusionBellPDPNS}, $\wp \in \mathrm{Ext}(\mathbf{PD}(S,M') )\Rightarrow \wp \in \mathrm{Ext}(\mathbf{NS}(S))$. On the other hand, if $M'' \subset M'$ then by Theorem \ref{TheoremPDpolytopeBasicInclusionTHRM}, $\mathbf{PD}(S,M') \subset \mathbf{PD}(S,M'')$. Hence, the conditions of Lemma \ref{LemmaExtremepointsimplication} hold, and the claim $\wp \in \mathrm{Ext}(\mathbf{PD}(S,M')) \Rightarrow \wp \in \mathrm{Ext}(\mathbf{PD}(S,M''))$ follows.
\end{proof}
\normalfont

As a corollary of Theorems \ref{Theorem:BellandNSspecialcasesofPDP} and  \ref{Theorem:extremepointsareIncludedinPDPsubset}, if a full solution of the vertices of a no-signalling polytope $\mathbf{NS}(S)$ in particular are known, then the vertices of every single partially deterministic polytope $\mathbf{PD}(S,M')$ that may be defined over $S$ are contained in that solution. Since the vertices of $\mathbf{PD}(S,M')$ are precisely those behaviours which are partially predictable with respect to $M'$, excluding the vertices of $\mathbf{NS}(S)$ which are not predictable with respect to those subsets of inputs is an easy task.  

In deriving Theorems \ref{convPPNSisapolytopeTheorem}-\ref{Theorem:extremepointsareIncludedinPDPsubset}, the marginal distributions $\wp(\vec{a}|\vec{x})$ from the maximal contexts $D$ and $ND$, which we identified as equivalenty obtainable from the restricted behaviours $\wp^{S_{|M'}}$ and $\wp^{S_{M'^{\perp}}}$ defined over substructures $S_{|M'}$ and $S_{|M'^{\perp}}$, played a significant role in describing the features of partially deterministic behaviours $\wp \in \mathbf{PD}(S,M')$. In what follows, we describe this property in a sense from the opposite direction, by formally defining a notion of composability of scenarios and behaviours, by which sets of behaviours in a scenario $S$ are constructed from behaviours in strictly smaller scenarios. This leads to the notion of behaviour product, the properties of which turn out to be useful for describing the general features of partially deterministic polytopes in particular.

\definition{ (Bipartition of a scenario)\label{DefinitionBipartitionofaScenario}\\
Let $S=(I,M,O)$ be a scenario and $M'$ a collection of subsets $M'_i \subset M_i$ $\forall i\in I$. Let $M'^{\perp}$ denote the  complement of  $M_i$, that is, the collection of subsets  where $M_i'^{\perp} = M_i\setminus M_i'$ $\forall i$.  An (input-disjoint) bipartition of $S$ into two sub-scenarios $S'= (I', M', O')$ and $S'^{\perp} = (I'^{\perp}, M'^{\perp}, O'^{\perp})$ is defined  by the restrictions of $S$ defined by $M'$ and $M'^{\perp}$ so that we may identify
\begin{align}
  S' = S_{|M'} \hspace{0.5cm} \mathrm{and} \hspace{0.5cm} S'^{\perp} = S_{|M'^{\perp}}\label{EquationBipartitionofscenarioDefinesaScenarioforM'}.
\end{align}

When $S'$ and $S'^{\perp}$ form a bipartition of $S$, we may also equivalently say that $S$ is composed of $S'$ and $S'^{\perp}$, and write $S = S' \blacktriangle S'^{\perp}$ for the composition of $S$ via $I = I'\cup I'^{\perp}, M = \{M_i | M_i = M_i'\cup M_i'^{\perp} \}$ and $O = O' \cup O'^{\perp}$.
}\\
\normalfont

Note that similarly to the case of restricting $S$ to $S_{|M'}$ in Definition \ref{DefinitionRestrictionofaScenario}, some sub-scenarios $S'$ and $S'^{\perp}$ compatible with Def.~\ref{DefinitionBipartitionofaScenario} can be trivial in the sense of (say) containing a single party, or a single measurement for a number of parties. Accordingly,  in the extreme cases  the collection $M'$ compatible with Definition \ref{DefinitionBipartitionofaScenario} would consist completely of the sets $M_i = M_i \forall i$ or of empty sets $M_i' = \emptyset $ $\forall i$. Conventionally we allow for these situations with the understanding that the subscenarios generated in this way consist of the empty scenario (over which no behaviour are defined), and the original scenario itself. We refer to such extreme bipartitions as `redundant', since they do not map into novel pairs of substructures. A nonredundant bipartition therefore has that for at least one $i \in I$,  $ \emptyset \subsetneq M_i' \subsetneq M_i$.

The bipartition operation can be immediately generalized to the notion of $n$-fold partition of $S$ into $n$ scenarios $S_{k}$, by virtue of collections $M^{k}, k\in \{1, \ldots , n \}$  of subsets of inputs $M_i^{k}\subset M_i$ $\forall i$ with the properties 
\begin{align}
    M_i^{k_1} \cap M_i^{k_2} =\emptyset \hspace{0.1cm} \forall k_1,k_2 \in \{1, \ldots, n \}, k_1 \neq k_2
\end{align}
and \begin{align}
    \cup_{k} M_i^{k} = M_i \hspace{0.1cm} \forall i \in I.
\end{align}
Each $S^{k} = I^{k}, M^{k}, O^{k}$ would then be defined analogously to the recipe in Eq.~\eqref{EquationBipartitionofscenarioDefinesaScenarioforM'}.    

We will, for the most part, not need to deal with $n$-fold bipartitions rigorously in full generality. It is easy to see, however, that any $n$-partition of $S$ can be in principle realised by applying suitable bipartition operations of Definition \ref{DefinitionBipartitionofaScenario} iteratively $n$-times, starting from scenario $S$, then to one of the sub-scenarios and so on. In this sense, no generality is lost in focusing on the bipartite case. 

It is easy to see that the bipartite  'composition operator' $\blacktriangle$ is commutative and associative, since the composition is defined with respect to unions of sets forming the scenarios. When generalized to $n$-fold partitions, this indicates that the relevant structural properties of $S$  are encoded in the individual scenarios $S^{k}$ which form the $n$-partition,  and not in the order in which the scenario $S$ is constructed from them. Hence in such situations $S$ may be decomposed  as $S = S^1\blacktriangle S^{2} \blacktriangle \ldots \blacktriangle S^{n} = \blacktriangle_{k=1}^{k=n}S^{k}$, omitting the parentheses. 

With the definition of a bipartition in place, we are now ready to formally introduce the behaviour product. 

\definition{(The  behaviour product)\label{BehaviourProductDefinition}\\
Let $S= (I, M, O)$ be a scenario, and $M'$, $M'^{\perp}$ define  disjoint bipartition  $S$, into  the restricted scenarios $S' = (I', M', O')$, and $S'^{\perp}= (I'^{\perp}, M'^{\perp}, O'^{\perp}$). Let $\mathbf{K}(S') \subset \mathbf{NS}(S')$ and $\mathbf{K}(S'^{\perp}) \subset \mathbf{NS}(S'^{\perp})$ be some arbitrary sets of no-signalling behaviours defined in the restricted scenarios. 

 The behaviour product ${\odot}: \mathbf{K}(S') \times \mathbf{K}(S'^{\perp}) \rightarrow \mathbf{NS}(S)$ is defined as the binary operation 
\begin{align}
    \wp^{S'}{\odot}  \wp^{S'^\perp} = \wp^S,
\end{align}
 by enforcing that the behaviour $\wp^S\in \mathbf{E}(S)$ admits a context dependent expansion 
\begin{align}
\begin{split}
  \label{EquationBehaviourproductproduct}  \wp^S(\Vec{a}|\Vec{x}) &= \wp^{S'}(\vec{a}'_{F_{\vec{x}}} = \Vec{a}_{F_{\Vec{x}}}| \vec{x}'_{F_{\vec{x}}}= \Vec{x}_{F_{\Vec{x}}}) \\ & \hspace{0.3cm} \times  \wp^{S'^\perp}(\vec{a}'^{\perp} = \Vec{a}_{I \setminus F_{\Vec{x}}}| \vec{x}'^{\perp} = \Vec{x}_{I \setminus F_{\Vec{x}}})
  \end{split}
\end{align}
with   $ F_{\Vec{x}} = \{ i \in I | x_i \in M'_i   \}$. 
}
If the bipartition is redundant, i.e. $M_i' = M_i$ $\forall i \in I$ or $M_i' = \emptyset$ $\forall i \in I$, then $S' = S$ or $S'^{\perp} =S$, respectively, and  Eq.~\eqref{EquationBehaviourproductproduct} is understood to reduce to the identity map on $\mathbf{K}(S')$ or $\mathbf{K}(S'^{\perp})$, respectively. 
\normalfont
\\

We will employ the notation where the distributions in Eq.~\eqref{EquationBehaviourproductproduct} which explicititly take  care of the the `matching' $\wp^{S'}(\vec{a}'_{F_{\vec{x}}} = \vec{a}_{F_{\vec{x}}}|\vec{x}_{F_{\vec{x}}}' = \vec{x}_{F_{\vec{x}}})$  are replaced simply by $\wp^{S'}(\vec{a}_{F_{\vec{x}}}|\vec{x}_{F_{\vec{x}}})$. Hence, in place of Eq.~\eqref{EquationBehaviourproductproduct} for the product $\wp = \wp^{S'}\odot \wp^{S'^{\perp}}$ we may write 
\begin{align}
    \wp(\vec{a}|\vec{x})= \wp^{S'}(\vec{a}_{F_{\vec{x}}}|\vec{x}_{F_{\vec{x}}}) \wp^{S^{\perp}}(\vec{a}_{I\setminus F_{\vec{x}}}|\vec{x}_{I\setminus F_{\vec{x}}}).
\end{align}
This notation is well defined, even if the scenario $S'$ (resp. $S'^{\perp}$) does not support strings of length $|I|$, owing to the definition of $F_{\vec{x}}$ (resp. $I \setminus F_{\vec{x}})$  and the no-signalling constraints.  Occasionally we will revert back to the more precise notation of Eq.~\eqref{EquationBehaviourproductproduct} if needed. 

\begin{figure*}
    \centering
    \includegraphics[width=\textwidth]{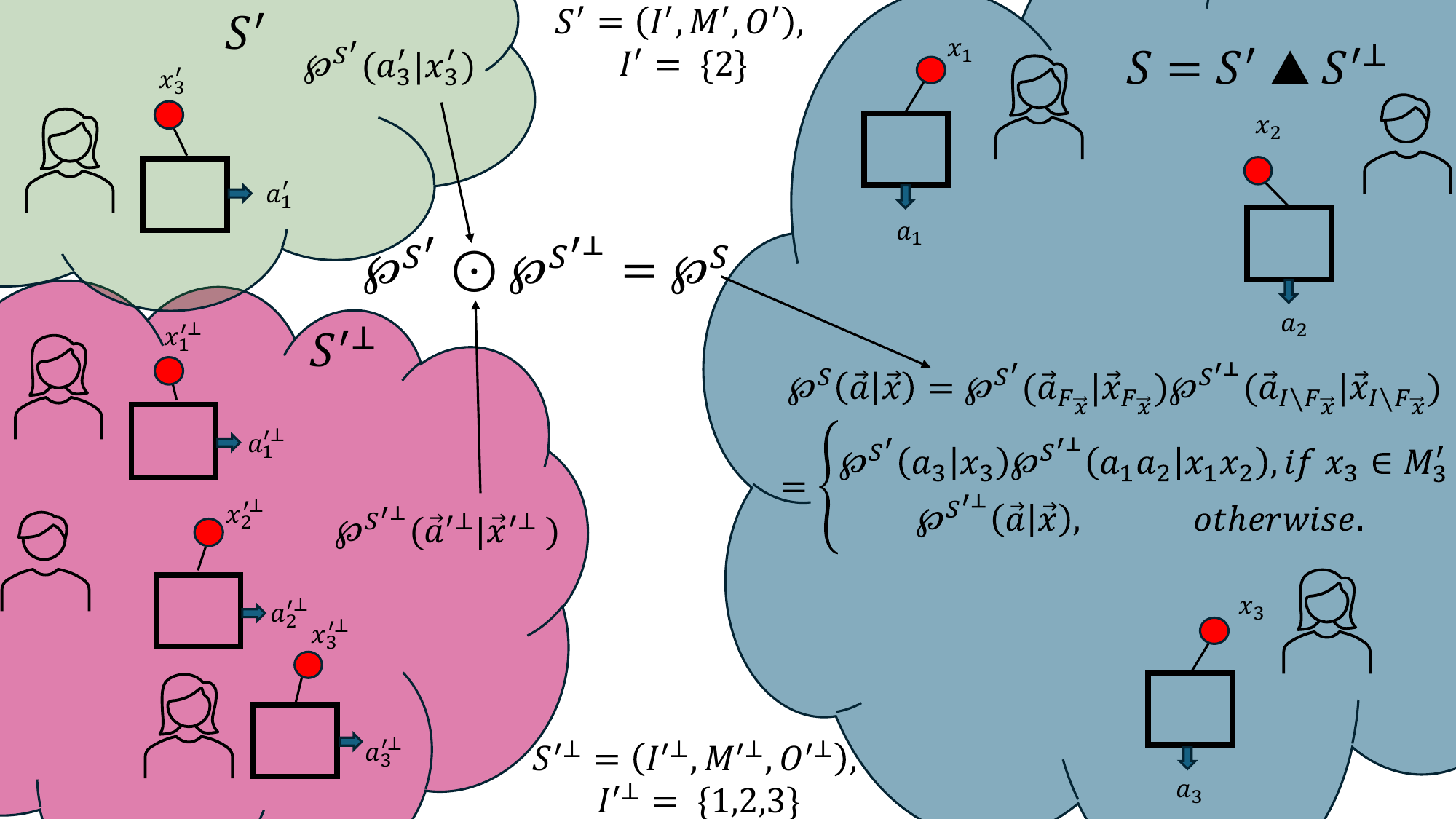}
    \caption{Illustration of the action of the behaviour product. The scenarios $S'=(I',M',O')$, which contains a single agent $i=3$, and $S'^{\perp}=(I'^{\perp}, M'^{\perp}, O'^{\perp})$, which contains three agents $i\in \{1,2,3\}$ form a disjoint bipartition of the tripartite scenario $S = (I,M,O)$. The behaviour product takes behaviour $\wp^{S'}, \wp^{S'^{\perp}}$ defined over the subscenarios $S', S'^{\perp}$ and produces a well-defined behaviour $\wp^S$ on scenario $S$. The expansion of the distributions $\wp^S(\vec{a}|\vec{x})$ is context-dependent, with each distribution given as the products of marginals  $\wp^{S'}(\vec{a}_{F_{\vec{x}}}|\vec{x}_{F_{\vec{x}}})$ and $\wp^{S'^{\perp}}(\vec{a}_{I\setminus F_{\vec{x}}}|\vec{x}_{I\setminus F_{\vec{x}}})$, with $F_{\vec{x}} = \{ i\in I : x_i \in M_i'\}$. In the situation depicted above, the general formula gives $\wp^S(\vec{a}|\vec{x}) = \wp^{S'}(a_3|x_3)\wp^{S'^{\perp}}(a_1a_2|x_1x_2)$ if $x_3 \in M_3'$ and $\wp^{S}(\vec{a}|\vec{x}) = \wp^{S'^{\perp}}(\vec{a}|\vec{x})$ otherwise, owing to the fact that the scenario $S'$ contains only the agent $i=3$, with input set $M_3'$. Note that expressions such as $\wp^{S'^{\perp}}(\vec{a}|\vec{x})$ should more precisely be understood as referring to the 'matching' $\wp^{S'^{\perp}}(\vec{a}'^{\perp}=\vec{a}_{I'}|\vec{x}'^{\perp}=\vec{x}_{I'})$. The precise notation is generally dropped for convenience, when there is no risk of confusion.  } 
    \label{fig:BehaviourProductIllustration}
\end{figure*}

Definition \ref{BehaviourProductDefinition} provides a unique way of constructing a behaviour in $S$ from two behaviours in two strictly smaller scenarios $S'$ and $S'^{\perp}$ which may be thought to be contained in $S$, as illustrated in Fig.~\ref{fig:BehaviourProductIllustration}. On the level of individual distributions $\wp(\vec{a}|\vec{x})$, it can be thought to generalise the notion of a product distribution $\wp(\vec{a}|\vec{x}) = \prod_{i\in I}\wp(a_i|x_i)$ to allow for correlations inside the partitions $F_{\vec{x}}$ and $I\setminus F_{\vec{x}}$, but not between them. 

The idea of combining behaviours from two subscenarios to a behaviour in a larger scenario is, of course, not new. For example, in a recent work of Ref.~\cite{Acin2015} many such products were studied in a general `hypergraph' approach to to contextuality.  

Instances of use of the type of operation in Definition \ref{BehaviourProductDefinition} in particular, or special cases thereof,  can be found in other works, see for example proof of Theorem 3 in Ref.~\cite{Mazzari2023} where properties of behaviours in experiments which combine both contextuality and Bell-nonlocality aspects into a single test where explored,   or Theorem 8 in Ref.~\cite{Haddara2025} where properties of Local Friendliness behaviours in extended Wigner's friend scenarios were studied. Our Definition \ref{BehaviourProductDefinition} is inspired by the more formal, structural approach in Woodhead's work \cite{Woodhead2014}, where a behaviour product for bipartite scenarios was defined\footnote{Similarly to Woodhead \cite{Woodhead2014} in the case of bipartite scenarios, we have defined the product over input disjoint subscenarios $S'$ and $S'^{\perp}$. Strictly speaking, this would not be necessary in order to define a map from smaller scenarios to larger ones. This convention, however, removes the choice as to how inputs (and even parties) of the smaller scenarios are mapped to inputs in the larger scenario, simplifying the presentation considerably. } and investigated. Indeed, the product of Definition \ref{BehaviourProductDefinition} can be thought of as a direct multipartite extension of the one defined by Woodhead \cite{Woodhead2014}. 

A closely related operation has also been investigated in the exclusivity graph approach to correlations in hybrid causal structures \cite{rodari2023characterizinghybridcausalstructures}, where the product of sub-behaviour was identified essentially as  a Hadamard-type product of distributions in the sub-scenarios. Note that the operation in our Definition \ref{BehaviourProductDefinition} is in general \emph{not} of that form, unless $M_i' \in \{M_i, \emptyset \}$ for all $i\in I$,  as it includes taking the products of different size marginals of the appropriate distributions. In fact, our operation subsumes that of Ref.~\cite{rodari2023characterizinghybridcausalstructures} by reducing to it as a special case.  To our knowledge the behaviour product has not been previously studied in this manner at this level of generality.

The behaviour product is commutative, since the ordinary product commutes. More precisely, an analogous mapping can be defined on the domain $\mathbf{K}(S'^{\perp}) \times \mathbf{K}(S')$, which uniquely maps to a behaviour in $S$. We equate this `symmetric' mapping with the product $\odot$ for simplicity, without affecting any of the conclusions. Furthermore, since the mapping is well defined over $\mathbf{NS}(S')\times \mathbf{NS}(S'^{\perp})$ in particular, we  use the same symbol $\odot$ for the cases in which the domain is restricted to  $\mathbf{K}(S') \times \mathbf{K}(S'^{\perp})$ with some subsets $\mathbf{K}(S') \subset \mathbf{NS}(S')$ and $\mathbf{K}(S'^{\perp}) \subset \mathbf{NS}(S'^{\perp})$ as well.

Let us also point out that, strictly speaking, enforcing no-signalling is not always necessary for the product of Definition \ref{BehaviourProductDefinition} to be well-defined. 

\example{(Behaviour product for signalling sets)\label{example:BehaviourproductCHSHoneWaysignalling}\\
Let $S= (I,M,O)$ be a bipartite two-input per-site scenario, that is a scenario with  $I = \{1,2 \}$  and  $|M_i|=2$ for $i\in I$. Let $S' = (I',M',O')$ be a minimal nontrivial subscenario defined by subsets $M_i'$ with $|M_1'| =1, |M_2'| = \emptyset$ so that $I' = \{1\}$. The complementary scenario $ S'^{\perp}$ has  $I'^{\perp} = \{1,2\}$ and $|M_1^{\perp}| = 1, |M_2^{\perp}|=2$. The behaviour product $\odot $ may be extended over $\mathbf{E}(S') \times \mathbf{E}(S'^{\perp})$  even though clearly $\mathbf{NS}(S'^{\perp}) \subsetneq\mathbf{E}(S'^{\perp})$.  Indeed note that the marginals $\sum_{a_1}\wp^{S'^{\perp}}(a_1a_2|x_1^{\perp}x_2) = \wp^{S'^{\perp}}(a_2|x_1^{\perp}x_2) \equiv \wp^{S'^{\perp}}(a_2|x_2)$ are uniquely defined for all $x_2$ since the site $i =1$ has only one fixed input $x_1^{\perp}$ in the subscenario $S'^{\perp}$.   By Eq.~\eqref{EquationBehaviourproductproduct} 
\begin{align}
    \wp^{S} = \wp^{S'}\odot \wp^{S'^{\perp}} = \begin{cases}
        \wp^{S'}(a_1|x_1) \wp^{S'^{\perp}}(b_2|x_2) &  \mathrm{if } \hspace{0.2cm} x_1 \in M'_1 \\
        \wp^{S'^{\perp}}(a_1a_2|x_1x_2)& \mathrm{otherwise} 
    \end{cases}
\end{align}
is well defined, despite the fact that the scenario $S'^{\perp}$ allows signalling from site $i=2$ to $i=1$.  
}\\
\normalfont

The case of Example \ref{example:BehaviourproductCHSHoneWaysignalling} where a product can be extended to arbitrary subsets of $\mathbf{E}(S') \times \mathbf{E}(S'^{\perp})$ is really an exception and not the norm. The following  example illustrates a case where the behaviour product of Def.~\ref{BehaviourProductDefinition}  \emph{is not} well defined over arbitrary subsets of behaviour.  

\example{(Failure of behaviour product for signalling sets)\label{example:behproductFailsforsignallingSETS}\\
Let $S = (I,M,O)$ be a bipartite scenario with $|M_i|=3$ for both $i$. Let $M', M'^{\perp}$ define a disjoint bipartition of $S$ into $S', S'^{\perp}$ such that $I' = I'^{\perp} = I$ and $|M_i'| = 1$ for both $i\in I$ so that $|M_i'^{\perp}| = 2$ for both $i\in I$.  The recipe of Eq.~\eqref{EquationBehaviourproductproduct} works fine for the cases where $x_1 \in M_1', x_2 \in M_2'$ or $x_1 \in M_1'^{\perp}, x_2\in M_2'^{\perp}$ by giving
\begin{align}
    \wp^{S}(\vec{a}|\vec{x}) = \begin{cases}
        \wp^{S'}(a_1a_2|x_1x_2) & \mathrm{if } \hspace{0.2cm} x_1 \in M_1', x_2\in M_2'\\
        \wp^{S'^{\perp}}(a_1a_2|x_1x_2) &\mathrm{if} \hspace{0.2cm}   x_1 \in M_1'^{\perp}, x_2 \in M_2'^{\perp}
    \end{cases} 
\end{align}
but fails in `mixed' contexts where $x_1 \in M_1', x_2\in M_2'^{\perp}$ or $x_1\in M_1'^{\perp}, x_2 \in M_2'$ which would demand 
\begin{align}
    \wp^{S}(\vec{a}|\vec{x}) = \
        \wp^{S'^{}}(a_1|x_1) \wp^{S'^{\perp}}(a_2|x_2) \hspace{0.1cm}  & \mathrm{if } \hspace{0.1cm} x_1 \in M_1', x_2 \in M_2'^{\perp} 
        \end{align}
        and
        \begin{align}
        \wp^{S}(\vec{a}|\vec{x})=\wp^{S'^{\perp}}(a_1|x_1)\wp^{S'^{\perp}}(a_2|x_2)  \hspace{0.1cm}  & \mathrm{if } \hspace{0.1cm} x_1 \in M_1'^{\perp}, x_2 \in M_2'.
\end{align}
The problem is that all of the marginal expressions referring to the subscenario $S'^{\perp}$ in the two equations above are ill-defined, as for example, $\sum_{a_2}\wp^{S'^{\perp}}(a_1a_2|x_1x_2) = \wp^{S'^{\perp}}(a_1|x_1x_2)$ which may explicitly depend on $x_2$, since now $\wp^{S^{\perp}} \in \mathbf{E}(S'^{\perp})$.  
}\\ 
\normalfont

 We will mostly be focusing on the case where no-signalling is assumed, and so investigation of the generally valid conditions under which the behaviour product of Def.~\ref{BehaviourProductDefinition} may be used is left for future work. For the purposes of later discussion however, we identify a class of useful cases which does allow for the product operation to be used. 

\remark{(Behaviour product for classes of signalling  sets)\label{remark:behaviourProductOFsignALLINGsubsets}\\
Let $S = (I,M,O)$ be a scenario and  $M^{I'}, M^{{I'}^{\perp}}$ a disjoint bipartition of $M$ specified by $M_i^{I'} = M_i$ if $i\in I'$ and $M_{i}^{I'}= \emptyset$ otherwise. Denote by $S^{I'}$ and $S^{I'^\perp}$ the restricted scenarios corresponding to this partition. Then, the behaviour product of Definition \ref{BehaviourProductDefinition} can be extended to a map $\widetilde{\odot}: \mathbf{K}(S^{I'}) \times \mathbf{K}(S^{I'^{\perp}}) \rightarrow \mathbf{E}(S)$  for any (not necessarily no-signalling) sets $\mathbf{K}(S^{I'})\subset \mathbf{E}(S^{I'})$ and $\mathbf{K}(S^{I'^{\perp}}) \subset \mathbf{E}(S^{I'^{\perp}})$  }\\
\normalfont

The bipartition $S^{I'}, S^{I'^{\perp}}$ in Remark \ref{remark:behaviourProductOFsignALLINGsubsets} guarantees that the distributions comprising the behaviour in arbitrary contexts $\vec{x}$ in the scenario $S$ are well defined. Such a condition on the partition is needed in this case, since if no-signalling is not enforced  the marginal distributions of the behaviour defined in $S'$ and $S'^{\perp}$ are allowed to depend on arbitrary remaining inputs in those scenarios leading to the kind of problem illustrated in Example \ref{example:behproductFailsforsignallingSETS}.  Intuitively, the image $\mathbf{K}(S^{I'}) \widetilde{\odot}\mathbf{K}(S^{I'^{\perp}}) \subset \mathbf{E}(S)$ in Remark \ref{remark:behaviourProductOFsignALLINGsubsets} consists of those behaviours which allow signalling only inside the partitions $I'$ and $I'^{\perp}$, but not across them.

The behaviour product of Definition \ref{BehaviourProductDefinition} formalises and generalises the example used in the proof of Proposition \ref{Proposition:restrictionmapidentities}, where it was used to demonstrate that behaviours defined in a subscenario $S_{|M'}$ of $S$ are always in the image of the corresponding restriction map $R_{|M'}$. Formally, this means that if $S'$ and $S'^{\perp}$ is an arbitrary bipartition of a scenario $S$ and  $\mathbf{K}(S') \subset \mathbf{NS}(S')$ and $\mathbf{K}(S'^{\perp}) \subset \mathbf{NS}(S'^{\perp})$ are some sets of no-signalling behaviours defined in those scenarios, then
\begin{align}
    R_{|M'}(\mathbf{K}(S') \odot \mathbf{K}(S'^{\perp})) = \mathbf{K}(S') \label{Equation:restrictiontoM'isKS'}
\end{align}
and 

\begin{align}
    R_{|M'^{\perp}}(\mathbf{K}(S') \odot \mathbf{K}(S'^{\perp})) = \mathbf{K}(S'^{\perp}).\label{Equation:restrictiontoM'isKS'perp}
\end{align}

Both of the above identities can in fact, be obtained as special cases of the following general result.

\proposition{\label{Proposition:ImageRestrictionofBehproductistheDomain} Let $S'$ and $S'^{\perp}$ be any bipartition of a scenario $S= I,M,O$ defined by collections $M'$ and $M'^{\perp}$. Let $\mathbf{K}(S') \subset \mathbf{NS}(S')$ and $\mathbf{K}(S'^{\perp}) \subset \mathbf{NS}(S'^{\perp})$ be arbitrary sets of no-signalling behaviour defined in those scenarios and $\wp = \wp^{S'} \odot \wp^{S'^{\perp}} \in \mathbf{K}(S') \odot \mathbf{K}(S'^{\perp})$ . If $V$ is an arbitrary collection  of subsets $V_i \subset M_i$ of inputs for all $i \in  I$, then 
\begin{align}
R_{|V}(\wp^{S'} \odot \wp^{S'^{\perp}}) = R_{|M'\cap V}(\wp^{S'}) \odot R_{|M'^{\perp}\cap V}(\wp^{S'^{\perp}}) 
\end{align}
Here, $M'\cap V$ and $M'^{\perp}\cap V$ refer to the collections of input sets  $M_i' \cap V_i $, $i \in I$ and $M_i'^{\perp} \cap V_i$, $ i\in I$ respectively.
}

\begin{proof}
    Follows immediately from Definitions \ref{Definition:RestrictionofaBehaviour} and \ref{BehaviourProductDefinition}  of the restriction map and behaviour product. 
    
    We provide the proof for completeness. 
    
    From  $\wp \in \mathbf{K}(S')\odot \mathbf{K}(S'^{\perp})$ it follows that the behaviour decomposes as 
    \begin{align}
        \wp(\vec{a}|\vec{x}) = \wp^{S'}(\vec{a}_{F_{\vec{x}}}|\vec{x}_{F_{\vec{x}}}) \cdot \wp^{S'^{\perp}}(\vec{a}_{I\setminus_{F_{\vec{x}}}}|\vec{x}_{I \setminus F_{\vec{x}}}), 
    \end{align}
    which, with $I_V = \{ i\in I : V_i \neq 
    \emptyset \} $ and $F^V_{\vec{x}'} = \{ i \in I_V : x_i' \in M_i' \}$  implies that the image $R_{|V}(\wp) = \wp^{S_{|V}}$ consists of distributions 
    \begin{align}
     \wp^{S_{|V}}(\vec{a}'|\vec{x}') &= \wp(\vec{a}_{I_V} = \vec{a}'|\vec{x}_{|V}=\vec{x}') \\
     \begin{split}
&= \wp^{S'}( \vec{a}'_{F^V_{\vec{x}'}}| \vec{x}'_{F^V_{\vec{x}'}}) \wp^{S'^{\perp}}(\vec{a}'_{I_V \setminus F^V_{\vec{x}'}}|\vec{x}'_{I_V \setminus F^V_{\vec{x}'}})
      \end{split}
\end{align}

 For the converse, note first that $ R_{|M'\cap V}(\wp^{S'}) \odot R_{M'^{\perp}\cap V}(\wp^{S'^{\perp}}) = \wp_w$ is a well defined behaviour in scenario $S_{|V}$, since $M'_i$ and $M_i'^{\perp}$ are collections of of disjoint subsets  $M_i$ and $(M'_i\cap V_i)\cup (M_i'^{\perp}\cap V_i) = V_i$ for all $i \in I$. A direct calculation gives

\begin{align}
\begin{split}
   \wp_w(\vec{a}'|\vec
   x') = & \wp^{S'}(\vec{a}'_{F^V_{\vec{x}}}|\vec{x}_{F^V_{\vec{x}}})\cdot \wp^{S'^{\perp}}(\vec{a}_{I_V \setminus F^V_{\vec{x}}}|\vec{x}_{I_V \setminus F^V_{\vec{x}}}),
    \end{split}
\end{align}
which establishes the claim. 

\end{proof}
\normalfont

Proposition \ref{Proposition:ImageRestrictionofBehproductistheDomain} implies, of course, that  
\begin{align}
\begin{split}
        R_{|V}&(\mathbf{K}(S') \odot \mathbf{K}(S'^{\perp})) \\& = R_{|M'\cap V}(\mathbf{K}(S')) \odot R_{|M'^{\perp}\cap V}(\mathbf{K}(S'^{\perp}))
    \end{split}
\end{align}
from which Equations \eqref{Equation:restrictiontoM'isKS'} and \eqref{Equation:restrictiontoM'isKS'perp} are recovered by setting $V= M'$ or $V = M'^{\perp}$, respectively.

 A further few remarks may be readily made, which we gather in the following propositions.

\proposition{\label{pROPOSITION:BehaviourProdForsignallingIdentities} Let $S = (I,M,O)$ be a scenario and $S^{I'}, S^{I'^{\perp}}$ a nonredundant bipartition of $S$ defined by the collections $M^{I'}$ $M^{I'^{\perp}}$ where $M_i^{I'} = M_i$ if $i\in I'$ and $M_i^{I'}=\emptyset$ otherwise, and similarly for $M^{I'^{\perp}}$. This class of bipartitions is consistent Remark \ref{remark:behaviourProductOFsignALLINGsubsets}, and allows use of behaviour product between signalling sets defined over the subscenarios $S^{I'}$ and $S^{I'^{\perp}}$. The following identities hold 

\begin{enumerate}[label=(\roman*), ref=\ref{PropositionBehaviourproductProperties}.\roman*] 
    \item $\mathbf{P}(S') \odot \mathbf{P}(S'^{\perp}) \subsetneq \mathbf{P}(S). $ \label{PropositionBehaviourproductProperties.signallingP}
  \item $\mathbf{U}(S') \odot \mathbf{U}(S'^{\perp}) \subsetneq \mathbf{U}(S).$ \label{PropositionBehaviourproductProperties.signallingS} 
  \item $\mathbf{E}(S^{I'})\odot \mathbf{E}(S^{I'^{\perp}}) \subsetneq \mathbf{E}(S)$
\end{enumerate}
}
\begin{proof}
    \textit{(i,ii,iii)} The proofs follows immediately from the definition of the behaviour product. First, let us observe that  the inclusions $\mathbf{P}(S^{I'})\odot \mathbf{P}(S^{I'^{\perp}})\subset \mathbf{P}(S)$, $\mathbf{U}(S^{I'}) \odot \mathbf{U}(S^{I'^{\perp}}) \subset \mathbf{U}(S)$ and $\mathbf{E}(S^{I'})\odot \mathbf{E}(S^{I'^{\perp}}) \subset \mathbf{E}(S)$ clearly hold. In the first two cases, owing to the fact that products of product distributions are product distributions, and products of predictable distributions are predictable, and in the last since $\mathbf{E}(S)$ is by definition the set of all valid behaviour over $S$. It is therefore sufficient to show that the inclusions do hold in the other direction. 
    
    We will sketch the general argument for the case (iii), the other cases can be proved analogously. 

    Let $\wp \in \mathbf{E}(S^{I'})\odot \mathbf{E}(S^{I'})$ so that by definition 
\begin{align}
    \wp(\vec{a}|\vec{x}) &= \wp^{S^{I'}}(\vec{a}_{F_{\vec{x}}}|\vec{x}_{F_{\vec{x}}})\wp^{S^{I'^{\perp}}}(\vec{a}_{I\setminus F_{\vec{x}}}|\vec{x}_{I\setminus F_{\vec{x}}})\\
& = \wp^{S^{I'}}(\vec{a}_{I'}|\vec{x}_{I'})\wp^{S^{I'^{\perp}}}(\vec{a}_{I\setminus I'}|\vec{x}_{I\setminus I'}).
\end{align}
Since the bipartition is nonredundant, this expansion is nontrivial.  By inspection, the marginals $\wp(\vec{a}_{I'}|\vec{x}), \wp(\vec{a}_{I\setminus I'}|\vec{x})$ of any $\wp$ in the product set are seen to satisfy the 'partial independence' relations $\wp(\vec{a}_{I'}|\vec{x}) = \wp(\vec{a}_{I'}|\vec{x}_{I'})$ and $ \wp(\vec{a}_{I\setminus I'}|\vec{x}) = \wp(\vec{a}_{I\setminus I'}|\vec{x}_{I\setminus I'})$, which clearly need not be the case for every $\wp^{S}\in \mathbf{E}(S)$ proving the claim. 
\end{proof}
\normalfont

\proposition{$ $ \label{PropositionBehaviourproductProperties} Let $S$ be an arbitrary nontrivial scenario and  $S'$ and $S'^{\perp}$  an arbitrary (nonredundant) bipartition of $S$. Then
\begin{enumerate}[label=(\roman*), ref=\ref{PropositionBehaviourproductProperties}.\roman*] 
    \item $ \mathbf{P}_{\mathbf{NS}}(S') \odot \mathbf{P}_{\mathbf{NS}}(S'^{\perp}) =\mathbf{P}_{\mathbf{NS}}(S).$ \label{PropositionBehaviourproductProperties.PNS} 
    \item  $\mathbf{U}_{\mathbf{NS}}(S') \odot \mathbf{U}_{\mathbf{NS}}(S'^{\perp}) =  \mathbf{U}_{\mathbf{NS}}(S).$ \label{PropositionBehaviourproductProperties.SNS} 
    \item $ \mathbf{B}(S') \odot\mathbf{B}(S'^{\perp}) \subsetneq \mathbf{B}(S).$\label{PropositionBehaviourproductProperties.BELLPRODUCT} 
    \item $\mathbf{NS}(S') \odot \mathbf{NS}(S'^{\perp}) \subsetneq \mathbf{NS}(S).$ \label{PropositionBehaviourproductProperties.NOSIGNALIGNPRODUCT} 
\end{enumerate}
}

\begin{proof}
   \textit{(i,ii)} Immediate. The sets $\mathbf{P}_{\mathbf{NS}}(S)$ and $\mathbf{U}_{\mathbf{NS}}(S)$ consist by definitions of distributions which can be represented as products of single party marginals, which can be grouped arbitrarily to the two subscenarios $S'$, $S'^{\perp}$. Conversely, the products of product distributions are product distributions, and products of predictable distributions are themselves predictable, and the whole behaviour can be specified by giving all the single party distributions, which can be fully specified with respect to any bipartition.

    \textit{(iii, iv)} A similar argument can be made in both cases. We will sketch  the argument for the case of the Bell-polytopes. 
    
    Clearly $\mathbf{B}(S') \odot \mathbf{B}(S'^{\perp}) \subset \mathbf{B}(S)$. It suffices to show that the converse inclusion does not hold. 
    
    By definition if $\wp \in \mathbf{B}(S')\odot \mathbf{B}(S'^{\perp})$, then the distributions $\wp(\vec{a}|\vec{x})$ decompose as
    \begin{align}
        \label{equation:anexpansionofbEHPRODUCT}\wp(\vec{a}|\vec{x}) = \wp^{S'}(\vec{a}_{F_{\vec{x}}}|\vec{x}_{F_{\vec{x}}})\wp^{S'^{\perp}}(\vec{a}_{I\setminus F_{\vec{x}}}|\vec{x}_{I\setminus F_{\vec{x}}}).
    \end{align}
    Furthermore, from the fact that the bipartition is nonredundant it follows, in particular, that at least for some $\vec{x}=\vec{x}'$  the distribution $\wp(\vec{a}|\vec{x})$ is uncorrelated with respect to the partitions $F_{\vec{x}'}, I\setminus F_{\vec{x}'}$ meaning that 
    \begin{align}
        \wp(\vec{a}_{F_{\vec{x}'}}|\vec{x}', \vec{a}_{I\setminus F_{\vec{x}'}}) = \wp(\vec{a}_{F_{\vec{x}}}|\vec{x}'_{F_{\vec{x}}})
    \end{align}
    and 
    \begin{align}
         \wp(\vec{a}_{I\setminus F_{\vec{x}'}}|\vec{x}', \vec{a}_{F_{\vec{x}'}}) = \wp(\vec{a}_{I\setminus F_{\vec{x}}}|\vec{x}'_{I \setminus F_{\vec{x}}}),
    \end{align}
    which follow immediately from the definition of conditional probabilities and the expansion of $\wp$  in Eq.~\eqref{equation:anexpansionofbEHPRODUCT}. Clearly, such constraints need not be obeyed by distributions $\wp^S(\vec{a}|\vec{x})$ of a general behaviour $\wp^S \in \mathbf{B}(S)$, proving the claim. 
\end{proof}

\normalfont
For the cases where no-signalling is not assumed, the identities in Proposition \ref{pROPOSITION:BehaviourProdForsignallingIdentities} can be seen simply noting that behaviours $\wp$ obtained as a behaviour product from behaviour in subscenarios do not in general depend on the whole context and hence in general $\mathbf{E}(S') \odot \mathbf{E}(S'^{\perp}) \subsetneq \mathbf{E}(S)$. The cases in Proposition \ref{PropositionBehaviourproductProperties} where no-signalling is assumed are more interesting, and Propositions \ref{PropositionBehaviourproductProperties.BELLPRODUCT} and \ref{PropositionBehaviourproductProperties.NOSIGNALIGNPRODUCT} formalize the intuition that a behaviour that can be obtained as a product does not in general allow correlations between the disjoint scenarios $S'$ and $S'^{\perp}$. From a mathematical perspective, this can be understood as a demonstration of the fact that the product does not in general preserve convexity. In fact, the following statement, reported here as a lemma, holds in full generality. 

\lemma{\label{LemmaConvexitynotPreservedInBehaviourproduct} Let $S'$ and $S'^{\perp}$ be an arbitrary bipartition of a scenario $S$. Let $\mathbf{K}(S') \subset \mathbf{NS}(S')$ and $\mathbf{K}(S'^{\perp}) \subset \mathbf{NS}(S'^{\perp})$ be arbitrary sets of no-signalling behaviour defined in those scenarios. Then 
\begin{align}
    \mathrm{conv}[\mathbf{K}(S')] \odot \mathrm{conv}[\mathbf{K}(S'^{\perp})] \subset \mathrm{conv}[\mathbf{K}(S') \odot \mathbf{K}(S'^{\perp})].
\end{align}
}
\begin{proof}
     Follows imediately from definitions. Namely if $\wp \in \mathrm{conv}(\mathbf{K}(S')) \odot \mathrm{conv}(\mathbf{K}(S'^{\perp})) $ then 
    \begin{align}
        \wp(\vec{a}|\vec{x}) = \sum_{j,k}\omega_j\omega_k \wp^{S'}_j(\vec{a}_{V_{\vec{x}}}|\vec{x}) \cdot \wp^{S'^{\perp}}_k(\vec{a}_{I\setminus V_{\vec{x}}}|\vec{x})
    \end{align}
    with 
    \begin{align}
        \wp^{S'}_j \in \mathbf{K}(S') \hspace{0.2cm} \mathrm{ and } \hspace{0.2cm} \wp^{S'^{\perp}}_k \in \mathbf{K}(S'^{\perp})
    \end{align}
    for all $j,k$. Since  $\sum_j \omega_j = 1$, $\sum_k \omega_k =1$ and $\omega_j , \omega_k \geq 0$ for all $j,k$ it is evident that the products $\omega_{jk}= \omega_j \omega_k$ themselves satisfy the criteria of being convex parameters and hence also $\wp \in \mathrm{conv}(\mathbf{K}(S') \odot \mathbf{K}(S'^{\perp})$, as claimed. 
    \end{proof}
\normalfont

Propositions \ref{PropositionBehaviourproductProperties.PNS}-\ref{PropositionBehaviourproductProperties.SNS}, point out to an interesting observation. Since  Proposition \ref{PropositionBehaviourproductProperties.PNS} in particular is equivalent to 
\begin{align}
    \mathrm{Ext}(\mathbf{B}(S')) \odot \mathrm{Ext}(\mathbf{B}(S'^{\perp})) = \mathrm{Ext}(\mathbf{B}(S)) \label{equationExtofBSisExtofBehproduct}
\end{align} 
which naturally implies that
\begin{align}
    \mathrm{conv}[\mathrm{Ext}(\mathbf{B}(S')) \odot \mathrm{Ext}(\mathbf{B}(S'^{\perp}))] = \mathbf{B}(S).
\end{align}
    Therefore, the behaviour product can be used to provide an alternative characterization of the set  $\mathbf{B}(S)$, provided that the convex hull is taken in the end.

We can, however, sharpen these considerations even further. Namely, following a bipartition of $S$ into $S'$ and $S'^{\perp}$, one can consider further bipartitions of those scenarios as well leading in full generality to a notion of an $n$-fold partition of $S = \blacktriangle_{i=1}^{n}S^{i}$.  Evidently, the behaviour product of sets defined between pairs of sets $\mathbf{K}(S^{i})$ defined over the the $S^{i}$ then inherits the associativity property of the ordinary product so that for example if $S^1, S^2, S^3$ form a 3-fold partition of $S$, say, then
\begin{align}
\begin{split}
    \left(\wp^{S^1}(\vec{a}^1|\vec{x}^1) \odot \wp^{S^2}(\vec{a}^2|\vec{x}^2)\right)\odot \wp^{S^3}(\vec{a}^{3}|\vec{x}^{3})\\
   = \wp^{S^1}(\vec{a}^1|\vec{x}^1) \odot \left(\wp^{S^2}(\vec{a}^2|\vec{x}^2)\odot \wp^{S^3}(\vec{a}^{3}|\vec{x}^{3})\right),
    \end{split}
\end{align}
which is an element in $\mathbf{NS}(S)$. 
  For our purposes, there is no need to consider the features of $n$-fold operations rigorously in full generality. Instead, we will just point out that since the scenario $S= (I,M,O)$ consists of finite sets, after a finite number $n=n^{max}$ of iterations of the nontrivial bipartition operation, one is left with a family of ''singleton scenarios`` $S^{k}= I^{k}, M^{k}, O^{k} $ with $ I^{k} = \{i_k \}, M_i^{k} =\{x_{i_{k}}\} \forall k$ and $O^{k}= \{O_{x_{i_k}} \}$. Such scenarios consist of a single agent $i_k \in I$ and a single measurement ${x_{i_k}} \in M_{i_k}$ and it is easy to convince oneself that $n^{max}= |\{S^{k}\}|= \prod_{i\in I}|M_i|$, since the bipartition operation is defined by a collection $M', M'^{\perp}$ of mutually disjoint subsets of inputs, and is nontrivial only when at least for one $i\in I$ the subset relation $M_i' \subset M_i$ is strict. 

One can then consider various sets of behaviour on the singleton scenarios: most of the definitions will trivialise, since behaviour on those scenarios will consists of just a single distribution. Nonetheless, since  Proposition $\ref{PropositionBehaviourproductProperties.PNS}$ in particular applies to the singletons as well, it is evident that the set $\mathbf{B}(S)$ can also be expressed as
\begin{align}
    \mathbf{B}(S) = \mathrm{conv}[\overset{n^{\max}}{\odot} \mathrm{Ext}(\mathbf{B}(S^{k}))],\label{Equation:B(S)isequaltoConvprodSingletonscenarios}
\end{align}
the convex hull of extreme points of Bell-behaviour in singleton scenarios. That this indeed is the case can be seen from the fact that the set $\mathbf{P}_{\mathbf{NS}}(S)$ always consists of (predictable) product behaviour. Equation \eqref{Equation:B(S)isequaltoConvprodSingletonscenarios} is a generalization of an observation made by Woodhead \cite{Woodhead2014}, who  pointed out the result of Eq.~\eqref{Equation:B(S)isequaltoConvprodSingletonscenarios} in bipartite scenarios. 

Appeal to the extreme cases with singleton scenarios could also be employed to provide another demonstration of the fact that in general $\mathbf{NS}(S') \odot \mathbf{NS}(S'^{\perp})  \subsetneq \mathbf{NS}(S)$ since if those sets were equal, then so would their $n^{max}$-fold product in the singleton scenarios as well. Since the sets can be equal only if their extreme points match, and the extreme points of each singleton consists of the predictable behaviour, a contradiction would follow. 

It is also possible to show that one does not explicitly need to deal with the extreme points of $\mathbf{B}(S)$ to establish an identity using the behaviour product.  To illustrate this, it  will be useful to spell out the following observation in the form of a lemma for ease of reference.

\lemma{\label{LemmaBheaviourproductsubsetlemma}
Let $S'$ and $S'^{\perp}$ be a bipartition of $S$ and suppose $\mathbf{K}^{\alpha}(S')$ and $\mathbf{K}^{\alpha^{\perp}}(S'^{\perp})$ are some subsets of behaviours for $\alpha \in \{1\ldots N' \}$ and $\alpha^{\perp} \in \{1, \ldots N'^{\perp} \}$ defined in the restricted scenarios respectively. If also $ \alpha_i \leq \alpha_j \Rightarrow \mathbf{K}^{\alpha_i}(S') \subset \mathbf{K}^{\alpha_j}(S') $ for all $\alpha_i, \alpha_j \leq N'$ and similarly for $S^{\perp}$, then 
\begin{align}
    \mathbf{K}^{\alpha'_i}(S') \odot \mathbf{K}^{\alpha^{\perp}_i}(S'^{\perp}) \subset \mathbf{K}^{\alpha'_j}(S') \odot \mathbf{K}^{\alpha^{\perp}_j}(S'^{\perp})
\end{align}
for all $\alpha'_i \leq \alpha'_j, \alpha^{\perp}_i \leq  \alpha^{\perp}_j$. In particular 
\begin{align}
    \mathbf{K}^{\alpha'_i}(S') \odot \mathbf{K}^{\alpha^{\perp}_i}(S'^{\perp}) \subset \mathbf{K}^{N'}(S') \odot \mathbf{K}^{N'^{\perp}}(S'^{\perp})
\end{align}
for all $\alpha_i$ and $\alpha^{\perp}_i$.
}
\begin{proof}
    Follows immediately from Definition \ref{BehaviourProductDefinition} of the behaviour product. 
\end{proof}
\normalfont
This can be used to establish the following theorem. 
\theorem{\label{TheoremProductBellPolytope}Let $S = (I, M, O)$ be a scenario and let $S', S'^{\perp}$ be a bipartition of $S$.  Then 
\begin{align}
    \mathrm{conv}[\mathbf{B}(S')) \odot \mathbf{B}(S'^{\perp})] = \mathbf{B}(S)
    \end{align}
\begin{proof}
    By Eq.~\eqref{equationExtofBSisExtofBehproduct}, Lemma  \ref{LemmaBheaviourproductsubsetlemma} and Proposition \ref{PropositionBehaviourproductProperties.BELLPRODUCT}
    \begin{align}
    \begin{split}
        \mathrm{Ext}(\mathbf{B}(S')) \odot \mathrm{Ext}(\mathbf{B}(S'^{\perp})) = \mathrm{Ext}(\mathbf{B}(S)) \\
        \subset   \mathbf{B}(S')\odot \mathbf{B}(S^{'\perp})\label{EquationSandwichingBellextremepoints1}
       \subset \mathbf{B}(S).
       \end{split}
    \end{align}
   Equation \eqref{EquationSandwichingBellextremepoints1} then implies, by taking the convex hull, that
   \begin{align}
       \mathbf{B}(S) \subset \mathrm{conv}(\mathbf{B}(S')\odot \mathbf{B}(S'^{\perp})) \subset \mathbf{B}(S),
   \end{align}
   which proves the claim.
\end{proof}
\normalfont

Proposition \ref{PropositionBehaviourproductProperties} and Theorem \ref{TheoremProductBellPolytope} show that there are some sets, such as $\mathbf{P}_{\mathbf{NS}}(S)$ and $\mathbf{B}(S)$, which can be constructed by the behaviour product $\odot$, or by taking the convex hull of  behaviour products of sets defined over some non-redundant sub-scenarios $S', S'^{\perp}$. It is easy to see that if $\mathbf{K}(S')\subset \mathbf{NS}(S)$ and $\mathbf{K}(S'^{\perp}) \subset \mathbf{NS}(S'^{\perp})$ are some finite sets of behaviour, then the set 
\begin{align}
  \mathbf{K}(S) =   \mathbf{K}(S') \odot \mathbf{K}(S'^{\perp})
\end{align} 
constructed from the behaviour product is also finite. Therefore by definition, the convex hull of the product of finite sets always produces a convex polytope. The generalization of Theorem \ref{TheoremProductBellPolytope} allows extension of the statement to include products of arbitrary polytopes themselves, since their extreme points are finite sets.

\theorem{\label{Theorem:generalConvexbehaviourproduct}Let $S = (I,M,O)$ be a scenario and $S', S'^{\perp}$ some arbitrary disjoint bipartition of $S$ defined by collections $M'$ $M'^{\perp}$. Let $\mathbf{K}(S') \subset \mathbf{NS}(S')$ and $\mathbf{K}(S'^{\perp})\subset \mathbf{NS}(S'^{\perp})$ be some arbitrary convex polytopes defined over the sub-scenarios $S', S'^{\perp}$. Then, the convex hull of the product 
\begin{align}
   \mathbf{K}(S) =  \mathrm{conv}[\mathbf{K}(S') \odot \mathbf{K}(S'^{\perp})]
\end{align}
is itself a convex polytope. Furthermore, its extreme points are precisely the products of the extreme points of $\mathbf{K}(S')$ and $\mathbf{K}(S'^{\perp}).$ That is
\begin{align}
    \mathrm{Ext}(\mathbf{K}(S)) = \mathrm{Ext}(\mathbf{K}(S') \odot \mathrm{Ext}(\mathbf{K}(S'^{\perp})). 
\end{align}
}
\begin{proof}
    Note that by definition
    \begin{align}
        \mathbf{K}(S) = \mathrm{conv}[\mathrm{conv}[\mathrm{Ext}(\mathbf{K}(S')))] \odot \mathrm{conv}[\mathrm{Ext}(\mathbf{K}(S'^{\perp}))]] \label{Equation:DefinitionofConvHullofProd}
    \end{align}
    which by application of Lemma \ref{LemmaConvexitynotPreservedInBehaviourproduct} implies that 
    \begin{align}
        \mathbf{K}(S) \subset \mathrm{conv}[\mathrm{Ext}(\mathbf{K}(S') \odot \mathrm{Ext}(\mathbf{K}(S'^{\perp}))].\label{Equation:KisConvExtK'andExtK'perp}
    \end{align}
     On the other hand, application of Lemma \ref{LemmaBheaviourproductsubsetlemma} on the right hand side of Eq.~\eqref{Equation:KisConvExtK'andExtK'perp} implies the converse, that is 
    \begin{align}
    \begin{split}
         \mathrm{conv}[\mathrm{Ext}(\mathbf{K}(S') \odot \mathrm{Ext}(\mathbf{K}(S'^{\perp}))] \\
         \subset \mathrm{conv}[\mathbf{K}(S') \odot \mathbf{K}(S'^{\perp})],
          \end{split}
    \end{align}
    which by comparison of Eqs.~\eqref{Equation:DefinitionofConvHullofProd} and \eqref{Equation:KisConvExtK'andExtK'perp} means that 
    \begin{align}
        \mathbf{K}(S) \subset  \mathrm{conv}[\mathrm{Ext}(\mathbf{K}(S') \odot \mathrm{Ext}(\mathbf{K}(S'^{\perp}))] \subset \mathbf{K}(S).
    \end{align}
    Therefore $\mathbf{K}(S)$ is identifiable as  the convex hull of the finite number of product behaviour, namely the extreme points of $\mathbf{K}(S')$ and $\mathbf{K}(S'^{\perp}).$
    To show that the products of the extreme points of $\mathbf{K}(S')$ and $\mathbf{K}(S'^{\perp})$ are  in fact the extreme points of $\mathbf{K}(S)$, we may use a similar technique as in the proof of  Theorem \ref{TheoremextremepointsofPDP} where the extreme points of the partially deterministic polytope $ \mathbf{PD}(S,M')$ where identified. 

    First note that Eq.~\eqref{Equation:KisConvExtK'andExtK'perp} already entails that $\mathrm{Ext}(\mathbf{K}(S))\subset \mathrm{Ext}(\mathbf{K}(S')) \odot \mathrm{Ext}(\mathbf{K}(S'^{\perp}))$ and hence it is sufficient to show that this relation can also be inverted. 

Suppose then that that  $\wp_{rk}  =\wp_{r}^{S'} \odot \wp^{S'^{\perp}}_{k}$ with $\wp^{S'}_r \in \mathrm{Ext}(\mathbf{K}(S'))$ and $\wp_{k}^{S'^{\perp}}\in \mathrm{Ext}(\mathbf{K}(S'^{\perp}))$ with $r,k$ labelling distinct extreme points in those polytopes. If it were the case that $\wp\notin \mathrm{Ext}(\mathbf{K}(S)$, then  $\wp_{rk} $ admits a convex decomposition $\wp = \sum_{j}\omega_j\wp_j$, $\omega_j \geq 0$ for all $j\in J$ and $\sum_j \omega_j =1$ with 
    \begin{align}
         \wp_{rk}(\vec{a}|\vec{x}) &= \sum_{j} \omega_j \wp_j^{S'}(\vec{a}_{F_{\vec{x}}}|\vec{x}_{F_{\vec{x}}}) \wp_j^{S'^{\perp}}(\vec{a}_{I\setminus F_{\vec{x}}}| \vec{x}_{I\setminus F_{\vec{x}}})\\
         & = \wp_r^{S'}(\vec{a}_{F_{\vec{x}}}|\vec{x}_{F_{\vec{x}}})\wp_k^{S'^{\perp}}(\vec{a}_{I\setminus F_{\vec{x}}}|\vec{x}_{I\setminus F_{x}}).
    \end{align}
    By considering the images of $\wp_{rk}$ under the restrictions $R_{|M'}$ and $R_{|M'^{\perp}}$ one  therefore gets
    \begin{align}
       \sum_j \omega_j \wp_j^{S'} = \wp_{r}^{S'} 
    \end{align}
    and 
    \begin{align}
        \sum_j \omega_j \wp_{j}^{S'^{\perp}} = \wp_k^{S'^{\perp}},
    \end{align}
    which would contradict the assumption of extremality of the behaviours labelled by $r,k$ in the sub-scenarios $S'$, $S'^{\perp}$, respectively. Therefore, it has to be that $\wp_{rk} \in \mathrm{Ext}(\mathbf{K}(S))$ also. 
\end{proof}
\normalfont

Theorem \ref{Theorem:generalConvexbehaviourproduct} suggests a powerful method to construct convex polytopes of behaviour from polytopes in sub-scenarios $S, S'^{\perp}$, generalizing the results of Proposition \ref{PropositionBehaviourproductProperties} and Theorem \ref{TheoremProductBellPolytope}. We can also show, however, that some convex polytopes of behaviour can not be constructed this way, and hence this property is a nontrivial feature of a class of sets.  This idea leads to the notion of 'composable sets', which we define in full generality. 

\definition{(Composable sets)\label{definition:composableset}\\
Let $S= (I,M,O)$ be a scenario and $\mathbf{K}(S)\subset \mathbf{NS}(S)$ an arbitrary subset of no-signalling behaviour in that scenario. Let $S'$ and $S'^{\perp}$ be some arbitrary non-redundant bipartition of $S$.  If 
    \begin{align}
        \mathbf{K}(S) = \mathbf{K}(S') \odot \mathbf{K}(S'^{\perp})
    \end{align}
    for some sets $\mathbf{K}(S') \subset \mathbf{NS}(S'),$ $\mathbf{K}(S'^{\perp}) \subset \mathbf{NS}(S'^{\perp})$ then the set $\mathbf{K}(S)$ is said to be strictly composed of the sets $\mathbf{K}(S') $ and $ \mathbf{K}(S')$ (with respect to the  product $\odot$). Similarly if,  
    \begin{align}
         \mathbf{K}(S) = \mathrm{conv}[\mathbf{K}(S') \odot \mathbf{K}(S'^{\perp})]
    \end{align}
then we say that $\mathbf{K}(S)$ is convex-composed (c-composed) of the sets $\mathbf{K}(S')$ and $\mathbf{K}(S')$. We refer to both types of sets as composable sets, unless further specification is warranted. 

If the set $\mathbf{K}(S)$ is not strictly composed or c-composed of any sets of sub-behaviours, we say that $\mathbf{K}(S)$ is not composable or, that the set $\mathbf{K}(S)$ is solid.  
}\\
\normalfont

Intuitively a solid set has, for every bipartition at least some behaviour which cannot be represented as a product or $c-$product of behaviour defined over the substructures in the product.  Indeed by definition, if a set $\mathbf{K}(S)^{\alpha}$ is solid, then for any composable set $\mathbf{K}(S)$ it holds that $\mathbf{K}^{\alpha}(S)\not\subset \mathbf{K}(S)$.

We have defined composability with respect to the bipartitions $S'$ and $S'^{\perp}$. Naturally one could consider $n$-fold partitions as well. Note however that if a set $\mathbf{K}(S)$ is strictly $n$-fold composable, that is if
\begin{align}
    \mathbf{K}(S) = \bigodot_{i = 1}^{n} \mathbf{K}(S^{i}),
\end{align}
then it is also composable in the sense of Definition \ref{definition:composableset}, for example with respect to any bipartition $S^{j}$ and $\blacktriangle_{i\neq j}^{n}S^{i}$. Similarly, $n$-fold c-composability entails the bipartite case. The converse clearly need not be true, and hence to prove solidity of $\mathbf{K}(S)$ with respect to any $n$-fold partition, it is sufficient to consider solidity in the sense of Definition \ref{definition:composableset}. For this reason, we leave more general investigations of the nyances related to $n$-fold composability for future work. 

The images of composable sets under arbitrary restriction maps behave in a very specific way, as was established for the case of strictly composable sets in Proposition \ref{Proposition:ImageRestrictionofBehproductistheDomain}. From Lemma \ref{Lemma:convexityofRestrictionmap} it follows that the image of a $c$-composable set under any restriction map is the convex hull of the image of the corresponding strictly composable set and thus the nontrivial properties under the restriction extend in a way to the $c$-composable sets as well. This feature turns out to be very useful for our later investigation. 

 From Proposition \ref{PropositionBehaviourproductProperties} we may extract that even if the images restriction maps agree with the hypothesis of strict composability, this does not alone entail that the set $\mathbf{K}(S)$ is strictly composable. Indeed, by Proposition \ref{Proposition:restrictionmapidentities} $R_{|M'}(\mathbf{B}(S)) = \mathbf{B}(S')$ for any collection $M'$, and thus one might expect that $\mathbf{B}(S)$ is strictly composed of $\mathbf{B}(S')$ and $\mathbf{B}(S'^{\perp})$ in accordance of the 'distributive' properties of the restriction map established in Proposition \ref{Proposition:ImageRestrictionofBehproductistheDomain}. In  Proposition \ref{PropositionBehaviourproductProperties} it was shown, however,  that the set $\mathbf{B}(S)$ is not strictly composed of $\mathbf{B}(S')$ and $\mathbf{B}(S')$, though Theorem \ref{TheoremProductBellPolytope} shows that  $\mathbf{B}(S)$ is $c$-composed of the sets $\mathbf{B}(S')$ and $\mathbf{B}(S')$. Evidently, strict composability is a stronger constraint on a set of behaviour than $c$-composability. 
 
The following result shows that there are also some sets which are solid, and thus that composability is a nontrivial property of a set of behaviour. 

\proposition{\label{Proposition:NS(s)isSolidSimplecase}Let $S = (I,M,O)$ be a bipartite scenario with $|M_i| = 2$ for all $i\in I$. The set $\mathbf{NS}(S)$ is not composable. }
\begin{proof}
    Suppose $\mathbf{NS}(S)$ were composable. Then there would exist some nonredundant partition of $S$ to $S'$ and $S'^{\perp}$ as defined by collections $M'$ and $M'^{\perp}$ along with  some sets $\mathbf{K}(S')\subset \mathbf{NS}(S')$ and $\mathbf{K}(S'^{\perp}) \subset \mathbf{NS}(S')$ such that 
    \begin{align}
        \mathbf{NS}(S) = \mathrm{conv}[\mathbf{K}(S') \odot \mathbf{K}(S'^{\perp})].
    \end{align}
    Then, by Propositions \ref{Proposition:restrictionmapidentities} and \ref{Proposition:ImageRestrictionofBehproductistheDomain} and Lemma \ref{Lemma:convexityofRestrictionmap} it follows in particular that
    \begin{align}
        R_{|M'}(\mathbf{NS}(S))= \mathbf{NS}(S') = \mathrm{conv}[\mathbf{K}(S')]
    \end{align}
    and 
    \begin{align}
        R_{|M'^{\perp}}(\mathbf{NS}(S)) = \mathbf{NS}(S'^{\perp}) = \mathrm{conv}[\mathbf{K}(S'^{\perp})]. 
    \end{align}
    Therefore, by use of Lemma \ref{LemmaBheaviourproductsubsetlemma}
    \begin{align}
        \mathbf{NS}(S) \subset \mathrm{conv}[\mathbf{NS}(S') \odot \mathbf{NS}(S'^{\perp})].\label{Equation:NSisSubsetofNS'andNs'perp}
    \end{align}
    On the other hand, any nonredundant bipartition of $S$ has in both subscenarios $S', S'^{\perp}$ at least one of the two agents with one input. Thus, by Proposition \ref{PropositionNSisBell} it has to be that $\mathbf{NS}(S') = \mathbf{B}(S')$ and $\mathbf{NS}(S'^{\perp}) = \mathbf{B}(S'^{\perp})$ and so by combining Theorem \ref{TheoremProductBellPolytope} with Eq.~\eqref{Equation:NSisSubsetofNS'andNs'perp} one gets
    \begin{align}
        \mathbf{NS}(S) \subset \mathbf{B}(S),
    \end{align}
    a contradiction. Therefore, $\mathbf{NS}(S)$ is not composable, as claimed. 
\end{proof}
\normalfont

The relevant feature used in the proof of Proposition \ref{Proposition:NS(s)isSolidSimplecase} was that possible nonredundant bipartitions of $S$  are essentially trivial scenarios, which strongly restricts the composable sets which can be constructed on $S$. This property turns out to be  one of the defining characteristic of composable sets beyond the simple case of Proposition \ref{Proposition:NS(s)isSolidSimplecase}, owing to the fact that the two-party two-input scenario is contained as a subscenario in any other non-trivial scenario $S$, as we will later see. For now, let us also point out that by a similar argument as in Proposition \ref{Proposition:NS(s)isSolidSimplecase} it can be established that the set of quantum behaviour $\mathbf{Q}(S)$ is solid in any bipartite scenario with two inputs per site. 

In this work, we are most interested in the properties of the partially deterministic polytopes $\mathbf{PD}(S,M')$. As it turns out, the behaviour product is a very useful tool to understand the structure of these objects. Let us begin, by providing an alternative characterization of the set $\mathbf{PD}(S,M')$ by means of the behaviour product as a specific instance of a composable set.

 \theorem{\label{TheoremExtremepointsPDPproductpreservationTHRM}Let $S = (I, M, O)$ be a scenario and let $M', M'^{\perp}$  define any disjoint bipartition of $S$. Then $\mathrm{\mathrm{Ext}}(\mathbf{B}(S')) \odot \mathrm{\mathrm{Ext}}(\mathbf{NS}(S'^{\perp})) = \mathrm{\mathrm{Ext}}(\mathbf{PD}(S,M')) $}
 \begin{proof}
     Immediate. See Definition \ref{BehaviourProductDefinition} of the behaviour product and Theorem \ref{TheoremextremepointsofPDP} where the form of the extreme points of $\mathbf{PD}(S,M')$ was derived.
 \end{proof}
 \normalfont

As is the case for $\mathbf{B}(S)$ in Theorem \ref{TheoremProductBellPolytope}, one need not define $\mathbf{PD}(S,M')$ in terms of c-composition of the extreme points of the respective sub-scenarios. 

 \theorem{\label{TheoremPDPProduct}Let $S = (I, M, O)$ be a scenario and let $M', M'^{\perp}$  define any disjoint bipartition of $S$ into $S'$ and $S'^{\perp}$. Then $\mathrm{conv}[\mathbf{B}(S') \odot \mathbf{NS}(S'^{\perp}) ] = \mathbf{PD}(S,M')$}
\begin{proof}
    By Theorem \ref{Theorem:generalConvexbehaviourproduct} $\mathrm{conv}[\mathbf{B}(S') \odot \mathbf{NS}(S'^{\perp}) ]$ as a c-composition of two polytopes $\mathbf{B}(S')$ and $\mathbf{NS}(S')$ is a convex polytope the extreme points of which are given by the products of the sub-polytopes as $\mathrm{\mathrm{Ext}}(\mathbf{B}(S')) \odot \mathrm{\mathrm{Ext}}(\mathbf{NS}(S'^{\perp}))$. Comparison with Theorem \ref{TheoremExtremepointsPDPproductpreservationTHRM}  regarding the extreme points of $\mathbf{PD}(S,M')$ proves the claim. 
\end{proof}
\normalfont

Theorems \ref{TheoremExtremepointsPDPproductpreservationTHRM} and \ref{TheoremPDPProduct} serve as our starting point to establishing general facts related to the polytopes $\mathbf{PD}(S,M')$. Let us begin by identifying precisely the sets of parameters which are necessary and sufficient for the partially deterministic set $\mathbf{PD}(S, M')$ to equal the Bell polytope $\mathbf{B}(S)$. 

\theorem{\label{TheoremMainPDequalBellTHRM}Let $S = (I, M, O)$ be a scenario and $M'$ a collection of subsets of of the $M_i$. Then $\mathbf{PD}(S,M') = \mathbf{B}(S)$ if and only $|M_i| - |M'_i| \in \{0, 1 \}$ for all but at most one $i \in I$. }
\begin{proof} 
For the 'if' part, note that by Theorem \ref{TheoremPDPProduct}
\begin{align}
    \mathbf{PD}(S,M') = \mathrm{conv}[\mathbf{B}(S') \odot \mathbf{NS}(S'^{\perp})]. 
\end{align}
On the other hand, by hypothesis, $|M_i| - |M'_i| = |M_i'^{\perp}|  \in \{0,1 \}$ for all but at most one $i \in I$, which means also that $|M_i'^{\perp}| \in \{0,1\}$ for all but at most one $i$. Hence, by Proposition \ref{PropositionNSisBell}, $\mathbf{NS}(S'^{\perp}) = \mathbf{B}(S'^{\perp})$ which combined with Theorem \ref{TheoremProductBellPolytope} implies that 
\begin{align}
    \mathbf{PD}(S,M') = \mathrm{conv}[\mathbf{B}(S') \odot \mathbf{B}(S'^{\perp})] = \mathbf{B}(S).
\end{align}
The 'only if' is easily seen by appeal to Proposition \ref{PropositionNSisBell}, by which $\mathbf{NS}(S'^{\perp}) = \mathbf{B}(S'^{\perp})$ if and only if said conditions hold,  and the properties of the restriction map $R_{|M^{'\perp}}$ which would otherwise lead to a contradiction. Namely, by Lemma \ref{Lemma:convexityofRestrictionmap}, Proposition \ref{Proposition:ImageRestrictionofBehproductistheDomain} and Proposition  \ref{Proposition:restrictionmapidentities} 
\begin{align}
  &R_{|M'{\perp}}(\mathrm{conv}[\mathbf{B}(S') \odot \mathbf{NS}(S'^{\perp})]
    = \mathrm{conv}[\mathbf{NS}(S'^{\perp})] \\ &= \mathbf{NS}(S'^{\perp}).
\end{align}
Thus, to have $\mathbf{NS}(S'^\perp) = \mathbf{B}(S'^{\perp})$ is necessary for $\mathbf{PD}(S,M') = \mathbf{B}(S)$ to hold, and hence also  that $ |M_i'^{\perp}| = |M_i|-|M_i'| \in \{0,1\}$ for all but at most one $i\in I$, as claimed. 
\end{proof}
\normalfont

\normalfont
 Intuitively, the reason why the equivalence class $\mathbf{PD}(S,M') = \mathbf{B}(S)$ emerges is due to the fact that the 'non-deterministic' sub-scenario $S'^{\perp}$ does not have enough space to allow for behaviour outside $\mathbf{B}(S'^{\perp})$ in the sense of Proposition \ref{PropositionNSisBell}.  As it turns out, the remaining equivalence classes of partially deterministic polytopes can be completely characterised by only slightly extending this intuition by generalizing the technique used in the proof of Proposition \ref{Proposition:NS(s)isSolidSimplecase}.   

First, let us demonstrate a simple but useful 'partial' analogue of Proposition \ref{PropositionNSisBell}, which we report as a proposition phrased in terms of the behaviour product.

\proposition{\label{PropositionNSisPDproposition} Let $S = (I, M, O)$ be a (possibly trivial) scenario and $V  = \{i\in I : |M_i| = 1 \}\subset I$  the subset of agents with a single input. Let $M_V' = \{ M_i' \subset M_i| M_i' = M_i \hspace{0.2cm} \textrm{if} \hspace{0.2cm} i\in V \hspace{0.2cm} \textrm{and} \hspace{0.2cm} M_i' = \emptyset \hspace{0.1cm} \textrm{otherwise.} \} $  
and $S_V', S_V'^{\perp}$ a bipartition of $S$ as defined by $M_V', M_V'^{\perp}$. Then
\begin{enumerate}[label=(\roman*), ref=\ref{PropositionNSisPDproposition}.\roman*] 
    \item  $\mathrm{\mathrm{Ext}}(\mathbf{NS}(S)) = \mathrm{\mathrm{Ext}}(\mathbf{B}(S_V')) \odot \mathrm{\mathrm{Ext}}(\mathbf{NS} (S_V'^{\perp})) \label{PropositionNSisPDproposition1}$
   
    \item $\mathbf{NS}(S) = \mathrm{conv}[\mathbf{B}(S_V') \odot \mathbf{NS}(S_V'^{\perp})]$.
    \label{PropositionNSisPDproposition2}
\end{enumerate}
}
\begin{proof}
Clearly the claims holds if $V = I$ or $V = \emptyset$, where either $\mathbf{NS}(S) = \mathbf{B}(S)$ by Proposition \ref{PropositionNSisBell}, or the product reduces to the identity map $\mathbf{NS}(S) \rightarrow \mathbf{NS}(S)$. It remains to show, that the claims hold if $\emptyset \neq V \neq I $.

Note that by Theorem \ref{TheoremExtremepointsPDPproductpreservationTHRM} the right hand side of Proposition \ref{PropositionNSisPDproposition1} can be identified as the set $\mathrm{\mathrm{Ext}}(\mathbf{PD}(S,M_V'))$ consisting of the extreme points of the partially deterministic polytope $\mathbf{PD}(S,M_V')$. By appeal to Theorem \ref{TheoremPDPProduct} then clearly $(i) \Rightarrow (ii)$, and hence it is sufficient to only prove $(i)$.  Furthermore, by Theorem \ref{TheoremExtremePointsInclusionBellPDPNS}, also $\mathrm{\mathrm{Ext}}(\mathbf{NS}(S)) \supset \mathrm{\mathrm{Ext}}( \mathbf{PD}(S,M_V'))$, which means that it is sufficient to show that under the given conditions also $\mathrm{\mathrm{Ext}}(\mathbf{NS}(S)) \subset \mathrm{\mathrm{Ext}}( \mathbf{PD}(S,M_V'))$. 
   
   Suppose then that $\wp\in \mathrm{\mathrm{Ext}}(\mathbf{NS}(S))$, which owing to no-signalling implies, in particular,  that $\wp(\vec{a}|\vec{x})$ can in any context\footnote{In fact, the $\Vec{x}_V$ could be omitted from the notation since they are always fixed hence make no difference for specifying the context. }, using the properties of conditional probabilities, be decomposed as
   \begin{align}
       \wp(\Vec{a}|\Vec{x}) &= \wp(\Vec{a}_V|\Vec{x}_{V})\cdot \wp (\Vec{a}_{(I_A \setminus V)}|\Vec{a}_V,\Vec{x}).
   \end{align}
   One can create a deterministic LHV model for the term $\wp(\vec{a}_V|\vec{x}_V )$ above, by introducing  $\Vec{\alpha} \in O_{\Vec{x}_V}$ and a deterministic measure $D(\Vec{a}_V| \alpha, \Vec{x}_V)$ and express
   \begin{align}
       \wp(\Vec{a}_V|\Vec{x}_{V}) = \sum_{\Vec{\alpha}}D(\Vec{a}_V|\Vec{x}_V, \Vec{\alpha}) \cdot \wp(\Vec{\alpha}|\Vec{x}_{V}),
   \end{align}
   with the understanding that 
   \begin{align}
       D(\Vec{a}_V|\Vec{x}_V, \Vec{\alpha}) = \begin{cases}
           1,& \textrm{if } \alpha_i = a_i \hspace{0.2cm} \forall i \in V \\
           0, & \textrm{otherwise.}
       \end{cases}
   \end{align}
   Therefore equivalently
   \begin{align}
        \wp(\Vec{a}|\Vec{x}) = \sum_{\Vec{\alpha}} \wp(\Vec{\alpha}|\Vec{x}_V) D(\Vec{a}_V|\Vec{\alpha}, \Vec{x}_V) \cdot \wp (\Vec{a}_{(I_A \setminus V)}|\alpha,\Vec{x}_{I\setminus V}),
   \end{align}
   which is to say that $\wp \in  \mathbf{PD}(S,M_V')$. It can be demonstrated that also $\wp \in \mathrm{\mathrm{Ext}}(\mathbf{PD}(S,M_V')$,  by appeal to a contradictory conclusion otherwise. Namely if  $\wp$ were not an extreme point of $\mathbf{PD}(S,M_V')$, then there there would exist some $\wp_j\in \mathbf{PD}(S,M_V')$ and parameters $ 1>\omega_j >0$, $\sum_{j} \omega_j = 1$ such that 
   \begin{align}
       \wp(\vec{a}|\vec{x}) = \sum_{j}\omega_j \wp_j(\vec{a}|\vec{x}).\label{EquationconvexcombinationofPDPisPropositionProof}
   \end{align} But since $\mathbf{PD}(S,M_V') \subset \mathbf{NS}(S)$, this would indicate that also $\wp_j\in \mathbf{NS}(S)$ for all $i$ and hence Eq.~\eqref{EquationconvexcombinationofPDPisPropositionProof} represents a nontrivial convex combination of points in $\mathbf{NS}(S)$, contradicting the claim that $\wp \in \mathrm{\mathrm{Ext}}(\mathbf{NS}(S))$ to begin with. Therefore it has to be that $ \wp \in \mathrm{\mathrm{Ext}}(\mathbf{PD}(S,M_V'))$ proving the claim.
\end{proof}
\normalfont

Proposition \ref{PropositionNSisPDproposition} is the natural statement that if the scenario $S$ has a subset of parties with only one input, then one can identify this set of parties and their inputs forming a sub-structure $S_V'$ which only allows deterministic correlations, and hence a no-signalling polytope $\mathbf{NS}(S)$ is identifiable as the partially deterministic polytope with that substructure $S_V'$ having determinism imposed on it.

Let us remark that the set $V=\{i \in I : |M_i| = 1 \}$ in the statement of Proposition \ref{PropositionNSisPDproposition} could in fact be replaced by any subset $V^* \subset V \subset I$ with analogous but seemingly stronger conclusions with respect to the bipartition of $S$ into $S_{V^*}', S_{V^*}'^{\perp}$.  More exactly, it is clear that if Proposition \ref{PropositionNSisPDproposition} holds for any $V^* \subset V$, then it holds for $V$ also. On the other hand, assuming that Proposition \ref{PropositionNSisPDproposition1} holds,  by defining a bipartition of  the scenario $S_{V}'$ into $S_{V^*}^*, S^*_{V^*}$ by the collection $M^*_{V^*} = \{M_i^* \subset M_i' | M_i^* = M_i'\hspace{0.2cm} \textrm{if} \hspace{0.2cm} i\in V^* \hspace{0.2cm} \textrm{and} \hspace{0.2cm} M_i^* = \emptyset \hspace{0.1cm} \textrm{otherwise.} \} $  it then follows that 
\begin{align}
    & \mathrm{Ext}(\mathbf{NS}(S)) =  \mathrm{\mathrm{Ext}}(\mathbf{B}(S_V')) \odot \mathrm{\mathrm{Ext}}(\mathbf{NS} (S_V'^{\perp})) \\
     &= \mathrm{\mathrm{Ext}}(\mathbf{B}(S_{V^*}^*) \odot \mathbf{B}(S^{*\perp}_{V^*})) \odot \mathrm{\mathrm{Ext}}(\mathbf{NS}(S_{V}'^{\perp})\\
     &= \mathrm{\mathrm{Ext}}(\mathbf{B}(S_{V^*}^{*})) \odot \left(\mathrm{\mathrm{Ext}}( \mathbf{B}(S_{V^*}^{*\perp})) \odot \mathrm{\mathrm{Ext}}(\mathbf{NS}(S'^{\perp}_{V})\right)\label{Equation:ArbitraryVequation3inNSequalPDP}\\
     &= \mathrm{\mathrm{Ext}}(\mathbf{B}(S'_{V^*})) \odot \mathrm{\mathrm{Ext}}(\mathbf{NS}(S'_{V^*})),
\end{align}
which is to say that equivalently the set $V$ can be replaced with any $V^* \subset V$ as claimed. Here the second line uses Proposition \ref{PropositionBehaviourproductProperties.PNS} (Eq.~\eqref{equationExtofBSisExtofBehproduct}) following a bipartition of $S_V'$ into $S^*_{V^*}*, S_{V^*}^{*\perp}$, the third line follows from the associativity of the behaviour product, and the last line from the observation that $S_{V^*}' = S_{V^*}^*$ and the application of Proposition \ref{PropositionNSisPDproposition1} to the scenario $S_{V^*}'^{\perp}$ identifiable on the right hand side of Eq.~\eqref{Equation:ArbitraryVequation3inNSequalPDP}.

We are now ready to complement the results of Theorem \ref{TheoremMainPDequalBellTHRM} by determining the equivalence classes of partially deterministic polytopes when neither equals the Bell-polytope $\mathbf{B}(S)$.

\theorem{\label{Theorem:nonBellPDP'sEqualIFF} Let $S = (I, M, O)$ be a scenario and $M'$, $M''$ distinct collections of subsets $M_i', M_i'' \subset M_i$ for all $i \in I$. Suppose that $\mathbf{PD}(S,M'), \mathbf{PD}(S,M'') \neq \mathbf{B}(S)$ and denote by $V = \{i \in I| M_i' \neq M_i'' \} \neq \emptyset$ the set of parties $i\in I$ for which the subsets $M_i'$ and $M_i''$ are distinct. Then $\mathbf{PD}(S,M') = \mathbf{PD}(S,M'') $ if and only if $|M_i|- |M'_i| \in \{0,1 \}$, $|M_i| - |M''_i| \in \{ 0,1 \}$ for all $i \in V \subset I $.}
\begin{proof}
    First we shall demonstrate that if $|M_i|-|M_i'| \in \{0,1\}$ and $|M_i|-|M_i''| \in \{0,1\}$ for all $i\in V = \{i\in I | M_i' \neq M_i'' \}$ then $\mathbf{PD}(S,M') = \mathbf{PD}(S,M'')$. Hence, this is a sufficient condition for two partially deterministic polytopes to equal each other. We then verify, that if $\mathbf{PD}(S,M'),\mathbf{PD}(S,M'') \neq \mathbf{B}(S)$, then this condition is also necessary for the equality to hold.

The proof proceeds by showing that 
\begin{align}
    \mathbf{PD}(S,M') = \mathbf{PD}(S,M'') = \mathbf{PD}(S,M^{\min}),
\end{align}
where $M^{\min}$ is a collection consisting of sets $M_i^{\min} = M_i$ if $i \in V$ and $M_i^{\min} = M_i' = M_i''$ otherwise. We will only prove one of the equalities, the other case is analogous. 

Let us then, in addition to $M^{\min}$, introduce the collection $M'^{\max}$ with $M_i^{\max }\subset M_i' $ $\forall i\in I$ where the subsets $M_i'^{\max} $ are defined such that  $|M_i| - |M_i'^{\max}| = 1$ for all $i \in V$ and $M_i'^{\max}= M_i'$ otherwise. By construction $M_i'^{\max} \subset M_i' \subset M_i$ $\forall i\in I$, and hence from Theorem \ref{TheoremPDpolytopeBasicInclusionTHRM} it immediately follows that 
\begin{align}
    \mathbf{PD}(S,M^{\min}) \subset \mathbf{PD}(S,M') \subset \mathbf{PD}(S,M'^{\max}).
\end{align}
It is therefore sufficient to show that $\mathbf{PD}(S,M^{\min}) = \mathbf{PD}(S,M'^{\max})$ or equivalently, that the sets of   extreme points of these two polytopes coincide.

By Theorem \ref{TheoremExtremepointsPDPproductpreservationTHRM}
    \begin{align}
    \begin{split}
       &\mathrm{\mathrm{Ext}} (\mathbf{PD}(S, M^{\min})) \\ = & \mathrm{\mathrm{Ext}}([\mathbf{B}(S^{\min}) )\odot \mathrm{\mathrm{Ext}}(\mathbf{NS}(S^{\min ^{\perp}}))
        \end{split}
    \end{align}
    and 
    \begin{align}
    \begin{split}
        & \mathrm{\mathrm{Ext}}(\mathbf{PD}(S,M'^{\max})) \\
        =& \mathrm{\mathrm{Ext}}(\mathbf{B}(S'^{\max})) \odot \mathrm{\mathrm{Ext}}(\mathbf{NS}(S'^{ ^{\max \perp}})).\label{Equation:maxEXTPDequationinTheorem}
        \end{split}
    \end{align}
    From the definition of $M'^{\max^{\perp}}$ it follows that for all $i\in V$ $|M_i'^{\max^{\perp}}| = 1$, and hence by Proposition \ref{PropositionNSisPDproposition1} 
    \begin{align}
\mathrm{\mathrm{Ext}}(\mathbf{NS}(S'^{\max^{\perp}})) = \mathrm{\mathrm{Ext}}(\mathbf{B}(S'^V)) \odot \mathrm{\mathrm{Ext}}(\mathbf{NS}(S'^{V^{\perp}})),\label{Equation:splittingofMaximuminproofofEqualityofPDP}
    \end{align}
     Where $S'^V$, $S'^{V^{\perp}}$ are a disjoint bipartition of $S'^{max^{\perp}}$ defined by the collection $M'^V= \{M_i^V \subset M_i'^{\max^{\perp}}| M_i'^V  =  M_i'^{\max^{\perp}} \hspace{0.2cm} \mathrm{if} \hspace{0.2cm} i \in V , \hspace{0.2cm} M_i'^{V}= \emptyset \hspace{0.2cm} \mathrm{otherwise} \}$. Putting the identity \eqref{Equation:splittingofMaximuminproofofEqualityofPDP} back to Eq.~\eqref{Equation:maxEXTPDequationinTheorem}, and observing that 
     
     \begin{align}
         \mathrm{\mathrm{Ext}}(\mathbf{NS}(S'^{V^{\perp}})) = \mathrm{\mathrm{Ext}}(\mathbf{NS}(S'^{\min})
     \end{align}
     and, in same vein, by  using Proposition \ref{PropositionBehaviourproductProperties.PNS}
\begin{align}
    \mathrm{\mathrm{Ext}}(\mathbf{B}(S^{\max}) \odot \mathrm{\mathrm{Ext}}(\mathbf{B}(S'^{V})) = \mathrm{\mathrm{Ext}}(\mathbf{B}(S^{\min}))
\end{align}
 the desired claim $\mathbf{PD}(S,M^{min}) = \mathbf{PD}(S,M'^{\max})$ is established. As stated before, a similar collection $M''^{\max}$ can be defined to prove the claim for the case of $\mathbf{PD}(S,M'')$, and since both are supersets of the same polytope $\mathbf{PD}(S,M^{\min})$ the equality $\mathbf{PD}(S,M') = \mathbf{PD}(S,M'')$ follows. It remains to show the 'only if' part of the claim. 

 First, let us observe that if $\mathbf{PD}(S,M'), \mathbf{PD}(S,M'') \neq \mathbf{B}(S)$ then from Theorem \ref{TheoremMainPDequalBellTHRM} it follows that for each polytope defined by $M', M''$ there exists, respectively at least two parties with at least two non-deterministic inputs. Formally, this means that there exists sets $K' = \{i \in I : |M_i| -|M_i'| \geq 2 \}$ and $ K'' = \{i\in I : |M_i|-|M_i''| \geq 2 \}$ with $|K'| \geq 2$ and $|K''| \geq 2$.
 
 Suppose then for the sake of demonstrating a contradiction that $\mathbf{PD}(S,M') = \mathbf{PD}(S,M'')$, but that there exists at least one $v \in V = \{i\in I : M_i' \neq M_i'' \}$ such that either  $|M_v|-|M_v'| \geq2$ or $|M_v|-|M_v''|\geq 2$. Then, it has to also be the case that $v\in K'$ or $V \in K''$. Suppose, without loss of generality that $v\in K'$. By definition of $V$ which entails $M_v' \neq M_v''$, there exists at least one input $\Tilde{x}_v \in M_v$ such that  simultaneously $\Tilde{x}_v \in M_v'$ and $\tilde{x}_v \notin M_v''$, with the latter being equivalent to $\tilde{x}\in M_v''^{\perp}$. From the properties of $K'$ and $K''$ it further follows that, in particular there exists at least one input $x_v'' \in M_v''^{\perp}$. Whether it is the case that $x_v'' \in M'_v$ or $x_v'' \in M_v'^{\perp}$, turns out to not play a in the argument. Now since $|K''| \geq 2$, there is also a party $k'' \in K'' \setminus \{v\}$ with a pair of inputs  $ x^{1}_{k''},x^{2}_{k''}\in M_i''^{\perp}$. We will soon show that again,  whether the $x^{1/2}$ are elements of $M_{k''}'$ or $M_{k''}'^{\perp}$ turns out not to make a difference for the argument.  
 
Owing to Theorem \ref{TheoremPDPProduct} we may write
 \begin{align}
 \mathbf{PD}(S,M') = \mathrm{conv}[\mathbf{B}(S') \odot \mathbf{NS}(S'^{\perp})]
 \end{align}
and 
\begin{align}
    \mathbf{PD}(S,M') = \mathrm{conv}[\mathbf{B}(S'') \odot \mathbf{NS}(S''^{\perp})].
\end{align}

 Consider the restriction to the bipartite scenario $S_{{|\{v,k''\}}}$ defined by the collection $M^{\{v,k''\}}$ with $M^{\{v,k''\}}_v = \{ \tilde{x}_v, x_v''\}$, $M^{\{v,k''\}}_{k''} = \{x^1_{k''}, x^2_{k''}$ \}, $M^{v,k''}_i = \emptyset$ otherwise. By construction, $\tilde{x}_v, x_v'' \in M_v''^{\perp}$ and $x^{1}_{k''}, x^{2}_{k''}\in M_{k''}''^{\perp}$ and therefore, from the fact that the restriction map preserves convexity as established in Proposition \ref{Lemma:convexityofRestrictionmap} and by the 'distributivity over behaviour product' property established in  Proposition \ref{Proposition:ImageRestrictionofBehproductistheDomain}  it is seen that
\begin{align}
\begin{split}
  &  R_{|\{v,k''\}}(\mathbf{PD}(S,M'')) \\
    &= \mathrm{conv}[R_{M^{\emptyset}}(\mathbf{B}(S'')) \odot R_{ M^{\{v,k''\}}}(\mathbf{NS}(S''^{\perp}))]
    \end{split}\\
    &= \mathbf{NS}(S_{\{v,k''\}}),
\end{align} 
where $M^{\emptyset} $ represents the collection of empty sets arising from the intersection $M''_i\cap M_i^{\{v,k''\}} = \emptyset$ for all $i\in I$. We will show that image of the same restriction map for the polytope $\mathbf{PD}(S,M')$ is different which gives the desired contradiction.

First, note that while by construction $\tilde{x_v}\in M'_i$, it may or may not be the case that the remaining three inputs in the collection $M^{v,k''}$ can be found in $M'$. From Theorem \ref{TheoremPDpolytopeBasicInclusionTHRM} it follows, however, that if $M'^{\max}$ is a collection where $M_v'^{\max} = M_v' \setminus \{x_v''\}$, $M^{\max}_{k''} = M_v'\setminus \{x^1_{k''}, x^2_{k''} \}$ and $M_i'^{max} = M_i'$ otherwise, then 
\begin{align}
   \mathbf{B}(S) \subset \mathbf{PD}(S,M') \subset \mathbf{PD}(S,M'^{\max})
\end{align}
and so, with the aid of Proposition \ref{Proposition:restrictionmapidentities}
\begin{align}
  \mathbf{B}(S_{\{v,k''\}}) \subset  R_{\{v,k''\}}(\mathbf{PD}(S,M') ) \label{Equation:ProvingDistinctnessEq1} \\
  \subset R_{\{v,k''\}}(\mathbf{PD}(S,M'^{\max})).
\end{align}
Now by construction  $M'^{\max}_{v} \cap \{\tilde{x}_v, x_v'' \} = \{\tilde{x}_v\}$ and $M'^{\max}_{k''}\cap \{x^{1}_{k''}, x^{2}_{k''}\} = \emptyset$ so that 
\begin{align}
\begin{split}
  & R_{|\{v,k''\}}( \mathbf{PD}(S,M'^{\max}) ) \\
  & = R_{|\{v,k''\}}(\mathrm{conv}[\mathbf{B}(S'^{\max}) \odot \mathbf{NS}(S'^{\max \perp})])
\end{split}\\
& = \mathrm{conv}(\mathbf{B}(S^{\{v,k''\}}_{|\{\tilde{x}_v \}} \odot \mathbf{NS}(S^{\{v,k''\} \perp}_{|\{\tilde{x}_v\}}),\label{Equation:provingDistinctnessofPDP's2}
\end{align}

with $S^{\{v,k''\}}_{|\{\tilde{x}_v \}}$ representing the sub-scenario consisting of a single agent $v$ with a single input $\tilde{x}_v$ and $S^{\{v,k''\} \perp}_{|\{\tilde{x}_v\}})$ the sub-scenario with two agents $v, k''$, with $v$ having a single input $x_v''$ and $k''$ having two inputs $x^1_{k''}$ and $x^2_{k''}$ which form a disjoint bipartition of $S_{\{v,k''\}}$.  From Proposition \ref{PropositionNSisBell} it follows that $\mathbf{NS}(S^{\{v,k''\} \perp}_{|\{\tilde{x}_v\}}) = \mathbf{B}(S^{\{v,k''\} \perp}_{|\{\tilde{x}_v\}})$ and hence by Theorem \ref{TheoremProductBellPolytope} and Eqs.~\eqref{Equation:ProvingDistinctnessEq1}-\eqref{Equation:provingDistinctnessofPDP's2}
\begin{align}
    \mathbf{B}(S_{\{v,k''\}}) \subset  R_{|\{v,k''\}}(\mathbf{PD}(S,M'))  \\
    \subset  R_{|\{v,k''\}}( \mathbf{PD}(S,M'^{\max}) )  \subset  \mathbf{B}(S_{\{v,k''\}}).
\end{align}
Therefore $R_{|\{v,k''\}}(\mathbf{PD}(S,M')) = \mathbf{B}(S_{\{v,k''\}}) \neq \mathbf{NS}(S_{\{v,k''\}}) = R_{|\{v,k''\}}(\mathbf{PD}(S,M''))$ which contradicts the assumption that  $\mathbf{PD}(S,M') = \mathbf{PD}(S,M'')$, for if the sets were equal, so would the images under the same restriction map. 

\end{proof}
\normalfont
 
It is worth emphasizing  that for the proof of the  'if part' of Theorem \ref{Theorem:nonBellPDP'sEqualIFF} above the hypothesis that neither of the partially deterministic polytopes $\mathbf{PD}(S,M')$, $\mathbf{PD}(S,M'')$ equals the Bell-polytope $\mathbf{B}(S)$ was not strictly needed and so the condition that $|M_i|-|M_i'| \in \{0,1\}$ and $|M_i|-|M_i''| \in \{0,1\}$ for all $i\in V = \{i\in I | M_i' \neq M_i'' \}$ is generally sufficient for the two polytopes to equal each other. The 'only if' part of the claim is sensitive to the assumption that the polytopes are not equal to the Bell polytope however, and since the equivalence class corresponding to the Bell polytopes was completely classified in Theorem \ref{TheoremMainPDequalBellTHRM}, we have ruled that situation in the proof. 

Theorem \ref{Theorem:nonBellPDP'sEqualIFF} imposes strong constraints for the collections $M'$ and $M''$ to define the same non-Bell partially deterministic polytope: the collections may differ only for parties $i\in I$ which have at most one non-deterministic input. Armed with this result, we could straight away sharpen Theorem \ref{TheoremPDpolytopeBasicInclusionTHRM} by determining under what conditions partially deterministic polytopes satisfy a strict inclusion and investigate other consequences of this conclusion. We believe it to be illuminating to digress for a moment however, and frame the result in the broader context of composable sets which in the specific case of partially deterministic polytopes leads to a simple statement based on which the distinct equivalence classes can be determined. 

Verily, an alternative, perhaps more intuitive way of looking at the result of Theorem \ref{Theorem:nonBellPDP'sEqualIFF} can be found by observing that the proof employed the fact that the image of the map  $R_{|\{v,k''\}}$ for one of the polytopes $\mathbf{PD}(S,M'')$ was distinct (in fact solid, by Proposition \ref{Proposition:NS(s)isSolidSimplecase}), in contrast to the the image of the other partially deterministic polytope $\mathbf{PD}(S,M')$ owing to their different composability structure. This is essentially a consequence of Proposition \ref{PropositionNSisPDproposition}, where the no-signalling polytope $\mathbf{NS}(S)$ was shown, under certain conditions, to be c-composed of Bell- and no-signalling sub-polytopes $\mathbf{B}(S')$ and $\mathbf{NS}(S'^{\perp})$. Using this, we can further sharpen Proposition \ref{PropositionNSisPDproposition} to a general statement about solidity of $\mathbf{NS}(S)$.

\theorem{\label{Theorem:strictcomposabilityOfNS} Let $S = (I,M,O)$ be an arbitrary scenario, and let $S'$ $S'^{\perp}$ be any nonredundant bipartition of $S$. Then $\mathbf{NS}(S)$ is composable , that is there exists some sets $\mathbf{K}(S') \subset \mathbf{NS}(S')$ and $\mathbf{K}(S'^{\perp}) \subset \mathbf{NS}(S'^{\perp})$ such that
\begin{align}
    \mathbf{NS}(S) = \mathrm{conv}[\mathbf{K}(S') \odot \mathbf{K}(S'^{\perp})],
\end{align}
if and only $|M_i|=1$ for at least one $i\in I$. 
}
\begin{proof}
  The 'if' part of the claim was established in Proposition \ref{PropositionNSisPDproposition} where it was shown that for $\emptyset \neq V = \{i \in I: |M_i| = 1\}$ the set $\mathbf{NS}(S)$ is c-composed of $\mathbf{B}(S'_V)$ and $\mathbf{NS}(S'^{\perp}_V)$, where $S_V'$ and $S_v'^{\perp}$ are a bipartition formed by having all parties in the set $V$ with their inputs in  $S'_V$ and the remaining parties and their inputs in $S_V'^{\perp}$. Thus, it suffices to show that this is a necessary condition for composability of $\mathbf{NS}(S).$ 

  Suppose then that $V =\emptyset$ and $\mathbf{NS}(S)$ were composable. Since the scenario $S$ is nontrivial, it contains in particular bipartite two-input sub-scenarios $S_{\{i, j \}}$, defined  by  collections of the form $M^{\{k,j\}}$ with $M^{\{k,j\}}_j = \{x_j, x_j' \}$, $M^{\{k,j\}}_k = \{x_k, x_k' \}$ and $M^{\{k,j\}}_i = \emptyset$ otherwise. Consider the image of $\mathbf{NS}(S)$ under any  restriction $R_{\{k,j\}}$ to such subscenarios. From the hypothesis of composability, the use of Propositions \ref{Proposition:restrictionmapidentities} and \ref{Proposition:ImageRestrictionofBehproductistheDomain} along with the fact that the restriction map preserves convexity as established in  Lemma \ref{Lemma:convexityofRestrictionmap} it therefore follows that 
  \begin{align}
      R_{\{k,j\}}(\mathbf{NS}(S)) &= \mathbf{NS}(S_{\{k,j\}})\label{Equation:NSisSolidEqOfRESTRICTION} \\
      =& \mathrm{conv}[\mathbf{K}(S_{|M'\cap M^{\{k,j\}}}) \odot \mathbf{K}(S_{|M'^{\perp}\cap M^{\{k,j\}}})],
  \end{align}
where as before, $M'\cap M^{\{k,j\}}$ refers to the collection of intersections $M_i'\cap M_i^{\{k,j\}}$ for all $i\in I$ and so on. Since by hypothesis $S'$ and $S'^{\perp}$ form a nonredundant bipartition of $S$, it must be, that some pairs $k,j$  and subsets  $\{x_k, x_k' \} \subset M_k$ and $\{x_j, x_j' \} \subset M_j$ can be found with the property  that $M'_k\cap M^{\{k,j\}}_k \neq \emptyset$ and $M_j'^{\perp} \cap M_j^{\{k,j\}} \neq \emptyset$. This means that for such collections the subscenarios $S_{|M'\cap M^{\{k,j\}}}$ and $S_{|M'^{\perp}\cap M^{\{k,j\}}}$ become trivial, leading by the use of Proposition \ref{PropositionNSisBell} to the sequence
\begin{align}
\begin{split}
 &\mathrm{conv}[\mathbf{K}(S_{|M'\cap M^{\{k,j\}}}) \odot \mathbf{K}(S_{|M'^{\perp}\cap M^{\{k,j\}}})] \\
& \subset  \mathrm{conv}[\mathbf{NS}(S_{|M'\cap M^{\{k,j\}}}) \odot \mathbf{NS}(S_{|M'^{\perp}\cap M^{\{k,j\}}})]\\
 \end{split}
  \\[2ex]
& =  \mathrm{conv}[\mathbf{B}(S_{|M'\cap M^{\{k,j\}}}) \odot \mathbf{B}(S_{|M'^{\perp}\cap M^{\{k,j\}}})]\\
& =  \mathbf{B}(S_{\{k,j\}})
\end{align}
  contradicting Eq.~\eqref{Equation:NSisSolidEqOfRESTRICTION} and completing the proof. 
\end{proof}
\normalfont

\corollary{\label{corollary:NSisSolidoraPDP}Let $S = (I,M,O)$ be any, possibly trivial, scenario and $V = \{i\in I : |M_i| = 1\}$. Then $\mathbf{NS}(S)$ is composable and 
\begin{align}
 \mathbf{NS}(S)  =  \mathrm{conv}[\mathbf{B}(S'_V) \odot \mathbf{NS}(S_V'^{\perp})] = \mathbf{PD}(S,M'^{V})
    \end{align}
if and only if  $V \neq \emptyset$. }
\begin{proof}
    This is a corollary of Proposition \ref{PropositionNSisPDproposition} and Theorem \ref{Theorem:strictcomposabilityOfNS}.
\end{proof}
\normalfont

By a similar argument, one can also see an analogous result to Theorem \ref{Theorem:strictcomposabilityOfNS} holds for the composability of the  quantum set $Q(S)$. 

\theorem{\label{Theorem:strictcomposabilityresultofQ(S)}Let $S = (I,M,O)$ be an arbitrary, possibly trivial scenario and $V = \{i\in I : |M_i|=1 \}$. Then the quantum set $\mathbf{Q}(S)$ is composable, and 
\begin{align}
    \mathbf{Q}(S) = \mathrm{conv}[\mathbf{B}(S'_V) \odot \mathbf{Q}(S'^{\perp}_{V})] 
\end{align}
if and only if $V\neq \emptyset$. Here $S'_V$ is defined by the collection $M'^V$ satisfying $M'^V_i= M_i$ if $i\in V$ and $M'^{V}_i = \emptyset$ otherwise, and conversely for $S_V'^{\perp}$.}
\begin{proof}
That $\mathbf{Q}(S)$ is composable only if $V \neq \emptyset$, can be shown exactly in the same manner as Theorem \ref{Theorem:strictcomposabilityOfNS} for the composability of the no signalling set $\mathbf{NS}(S)$. We will therefore only prove the 'if' part.

        Note first that if $V \neq \emptyset$ then, using Corollary \ref{corollary:NSisSolidoraPDP},
        \begin{align}
            \mathbf{Q}(S)\subset \mathbf{NS}(S) = \mathrm{conv}[\mathbf{B}(S'_V) \odot \mathbf{NS}(S'^{\perp})].\label{Equation:QissubsetofComposableNS}
        \end{align}
        If, however $\mathbf{Q}(S)$ were solid, then with any composable set $\mathbf{K}(S)$ one would expect $\mathbf{Q}(S)\not\subset \mathbf{K}(S)$, which would contradict Eq.~\eqref{Equation:QissubsetofComposableNS}. Hence, it has to be that $\mathbf{Q}(S)$ is itself composable. 

        Suppose then that $\mathbf{Q}(S)= \mathrm{conv}[\mathbf{K}(S')\odot \mathbf{K}(S'^{\perp})]$ for some bipartition of $S$ into $S'$ and $S'^{\perp}$.
        By Proposition \ref{Proposition:restrictionmapidentities} $R_{|M'}(\mathbf{Q}(S)) = \mathbf{Q}(S_{|M'})$ for any collection $M'$, and as so when combined with Eq.~\eqref{Equation:QissubsetofComposableNS}, in particular 
        \begin{align}
           & R_{|M'^{ V \perp}} (\mathbf{Q}(S)) = \mathbf{Q}(S_V'^{\perp})\\
            &= \mathrm{conv}[R_{M'\cap M'^{V \perp}}(\mathbf{K}(S)) \odot R_{M'\cap M'^{V \perp}}\mathbf{K}(S'^{\perp}))].
        \end{align}
        The image $\mathbf{Q}(S_V'^{\perp})$ is, by the 'only if' condition established before, solid and therefore, it has to be that either $M'^V \cap M' = M'^{V \perp}$ and $M'^V \cap M' = \emptyset$ so that  $\mathrm{conv}[\mathbf{K}(S'^{\perp})] = \mathbf{Q}(S'^{\perp}_V)$  or $M'^V \cap M' = \emptyset$ and $M'^V \cap M'^{\perp} = M'^{V \perp}$ with $\mathrm{conv}[\mathbf{K}(S')] = \mathbf{Q}(S'^{\perp}_V)$. Suppose without loss of generality, that the former. Then, the composition of a quantum set  has to have the form
        \begin{align}
            \mathbf{Q}(S) &= \mathrm{conv}[\mathbf{K}(S'_V) \odot \mathbf{K}(S'^{\perp}_{V})]. \label{Equation:NecessarycompositionformofqSET}
        \end{align} with  $\mathrm{conv}[\mathbf{K}(S')] = \mathbf{Q}(S'^{\perp}_V)$.

        On the other hand, the image $R_{|M'^V}(\mathbf{Q}(S))$ has to satisfy, by using Eq.~\eqref{Equation:NecessarycompositionformofqSET}
        \begin{align}
          &\mathbf{B}(S'_V) \subset  R_{|M'^V}(\mathbf{Q}(S)) \\
          &= \mathrm{conv}[ R_{|M'^V}\mathbf{K}(S'_V) \odot R_{|M'^V } K](S'^{\perp}_V)\\ 
          & \subset  R_{|M'^V}(\mathbf{NS}(S)) = \mathbf{B}(S'_V),
        \end{align}
        so that $\mathrm{conv}[\mathbf{K}(S'_V)] = \mathbf{Q}(S'_V) = \mathbf{B}(S'_V)$. By use of Lemmas  \ref{LemmaConvexitynotPreservedInBehaviourproduct} and \ref{LemmaBheaviourproductsubsetlemma} it can be then seen that 
        \begin{align}
            \mathbf{Q}(S) &= \mathrm{conv}[\mathbf{K}(S'_V) \odot \mathbf{K}(S'^{\perp}_{V})] \\ &\subset \mathrm{conv}[\mathbf{B}(S'_V) \odot \mathbf{Q}(S'^{\perp}_V)] \\
            & \subset \mathrm{conv}[\mathbf{K}(S'_V) \odot \mathbf{K}(S'^{\perp}_{V})] = \mathbf{Q}(S),
        \end{align}
        which proves the claim. 
\end{proof}
\normalfont

Corollary \ref{corollary:NSisSolidoraPDP} specifically,  along with the definition of a partially deterministic polytope $\mathbf{PD}(S,M')$ as the composable set $\mathrm{conv}[\mathbf{B}(S') \odot \mathbf{NS}(S'^{\perp})]$ provides another way to view Theorems \ref{TheoremMainPDequalBellTHRM} and \ref{Theorem:nonBellPDP'sEqualIFF} which together establish all the equivalence classes of partially deterministic polytopes. Namely, by Corollary \ref{corollary:NSisSolidoraPDP} exactly one of the two situations can happen: either the  'nondeterministic component' $\mathbf{NS}(S'^{\perp})$ is solid which happens if and only if $|M_i'^{\perp}| \geq 2$ for all $i\in I'^{\perp}$, in which case it cannot be further expanded as a composed set, or it is another partially deterministic polytope $\mathbf{PD}(S'^{\perp},M'')$ defined over the scenario $S'^{\perp}$. If the latter is the case, one can use the associativity of the behaviour product to absorb the 'deterministic component' $\mathbf{B}(S'')$ of that polytope to that of the original expression $\mathbf{B}(S')$, and hence be left with another equivalent expression of the same partially deterministic polytope defined with respect to a deterministic  collection $M^* := M'\cup M''$ with a non-deterministic component $\mathbf{NS}(S'^*)=\mathbf{NS}(S''^{\perp})$. Clearly this process can be repeated until the non-deterministic component is solid (or empty) in which case all the freedom  in the composition of $\mathbf{PD}(S,M')$ has been exhausted. To formalise this observation, let us introduce the notion of a maximal solid fragment, as a useful tool for classifying  partially deterministic polytopes. 

\definition{\label{Definition:solidfragmentofPDP}(Maximal solid fragment)\\
Let $S = (I,M,O)$ be a scenario, and $M'$ some arbitrary collection of subsets $M_i' \subset M_i$ for all $i\in I$ of inputs which defines the partially deterministic polytope 
\begin{align}
    \mathbf{PD}(S,M') = \mathrm{conv}[\mathbf{B}(S')\odot \mathbf{NS}(S'^{\perp})].
\end{align} 
Let $M'^{\perp \min}$ be the collection of subsets of inputs defined by $M_i'^{\min} = \emptyset$, if $|M_i'^{\perp}|\in \{0,1\}$ and $M_i'^{\min}= M_i'^{\perp}$ otherwise and $V'^{\min} = \{i\in I : M_i'^{\min} \neq \emptyset \}$. The object defined by the mapping
\begin{align}
\begin{split}
     & \mathrm{MSF}(\mathbf{PD}(S,M'))\\ &:= \begin{cases}
          \bot, \hspace{0,3cm} \mathrm{if} \hspace{0,1cm} |V'^{\min}|   \in \{0,1\} \\
          R_{|M'^{\min}}(\mathbf{PD}(S,M'))&  \mathrm{otherwise, }
     \end{cases}
       \end{split}
\end{align}
is termed the maximal solid fragment of the  polytope $\mathbf{PD}(S,M')$.
If $\mathrm{MSF}(\mathbf{PD}(S,M'))= \bot,$ we say that $\mathbf{PD}(S,M')$ has no solid fragments.  
}
\\
\normalfont

By construction,  a partially deterministic polytope $\mathbf{PD}(S,M')$ has no solid fragments if and only if it is equal to the Bell polytope $\mathbf{B}(S)$. Otherwise, the maximal solid fragment satisfies 
\begin{align}
    \mathrm{MSF}(\mathbf{PD}(S,M')) = \mathbf{NS}(S'^{\perp}_{\max }),
\end{align} where $S'^{\perp}_{\max }$  is the largest sub-scenario of $S'^{\perp}$ over which the set $\mathbf{NS}(S'^{\perp}_{\max })$ is not composable. We do not claim that this definition is useful for the study of composable sets in general. Given Corollary \ref{corollary:NSisSolidoraPDP}, however, this can be identified precisely as the relevant structure to distinguish the different equivalence classes of partially deterministic polytopes. 

Indeed, by construction $\mathbf{PD}(S,M') = \mathbf{PD}(S,M'')$ if and only if both polytopes have no solid fragments, in which case they both equal the Bell polytope,  or 
\begin{align}
    \mathrm{MFS}(\mathbf{PD}(S,M')) = \mathrm{MFS}(\mathbf{PD}(S,M'')).
\end{align}
    Note that if the maximal solid fragments are not equal, then the sets $\mathrm{MSF}(\mathbf{PD}(S,M'))$ and $\mathrm{MFS}(\mathbf{PD}(S,M''))$ are strictly speaking defined over different substructures $S_{\max}'^{\perp}$ and $S_{\max}''^{\perp}$ and therefore, some care needs to be exhibited when discussing their relations.  
    
    We may, however, endow the space of scenarios with a  partial order $\preceq$ which is read as 'is a subscenario of'. In particular, if a scenario $S$ is fixed, then there is a natural (unique) maximal element in the set of all scenarios $S_{|M'}'$ which are some restrictions of $S$, with respect to the relation $\prec$, namely the scenario $S$ itself so that $S_{|M'}' \preceq S $ for all $M'$ with equality if and only if $M_i' = \emptyset$ for all $i\in I$. Note that this is really a partial order, since some scenarios, such as $S'$ and $S'^{\perp}$ do not share inputs, and hence neither scenario is a subscenario of the other. 
    
    Owing to the partial ordering of the scenarios and the properties for the image $R_{|M'}(\mathbf{NS}(S)) = \mathbf{NS}(S')$, of the no-signalling set under any restriction map $R_{|M'}$, it follows however that the 'subscenario' relation $'\preceq'$ can be extended to a a partial ordering relation between fragments (which we denote by the same symbol), so  that an expression like 
\begin{align}
    \mathrm{MFS}(\mathbf{PD}(S,M')) \preceq \mathrm{MFS}(\mathbf{PD}(S,M'')),
\end{align} 
is well defined and can be understood to indicate that either their solid fragments are equal, or $  S_{\max}'^{\perp} \prec S''^{\perp}_{\max}$. In the latter case, it follows in particular that if $M'^{\min}$ is a collection like the one in Definition \ref{Definition:solidfragmentofPDP} then
\begin{align}
    &R_{|M'^{\min}}(\mathrm{MSF}(\mathbf{PD}(S,M'')) = R_{|M'^{\min}}(\mathbf{NS}(S_{\max}''^{\perp})) \\
    &= \mathbf{NS}(S_{\max}'^{\perp}) =\mathrm{MSF}(\mathbf{PD}(S,M')),
\end{align}
which is to say that a restriction of the solid fragment $\mathrm{MSF}(\mathbf{PD}(S,M''))$  exists which recovers the fragment $\mathrm{MSF}(\mathbf{PD}(S,M'))$. Clearly, the relation can not be inverted, since the scenario $S_{\max}''^{\perp}$ has, by definition, some inputs not present in $S_{\max}'^{\perp}$. Therefore,  we may also say that $\mathrm{MFS}(\mathbf{PD}(S,M'))$ is strictly contained in  $\mathrm{MFS}(\mathbf{PD}(S,M''))$.

Conventionally, we take $\bot$ to represent a minimum element in the partially ordered set of solid fragments, so that 
\begin{align}
    \bot \preceq \mathrm{MSF}(\mathbf{PD}(S,M'))
\end{align}
holds for all polytopes $\mathbf{PD}(S,M')$, with equality attained if and only if $\mathrm{MSF}(\mathbf{PD}(S,M')) = \bot$.

\theorem{\label{Theorem:EquivalenceClassesofPDP's}Let $S = (I, M, O)$ be a scenario and $M', M''$ be arbitrary collections of subsets of inputs $M_i', M_i'' \subset M_i$ for all $i\in I$ Then exactly one of the  following statements holds.
\begin{enumerate}[label=(\roman*), ref=\ref{Theorem:EquivalenceClassesofPDP's}.\roman*]

    \item \label{Theorem:EquivalenceClassesofPDP's1BELL} $ \mathrm{MSF}(\mathbf{PD}(S,M')) =  \mathrm{MSF}(\mathbf{PD}(S,M'')) = \bot$ and $\mathbf{PD}(S,M')=\mathbf{PD}(S,M'') = \mathbf{B}(S)$. 
    
    \item \label{Theorem:EquivalenceClassesofPDP's2NOTBELLEQUIV}  $\mathrm{MSF}(\mathbf{PD}(S,M')) =  \mathrm{MSF}(\mathbf{PD}(S,M'')) \neq \bot $ and  $\mathbf{PD}(S,M') =  \mathbf{PD}(S,M'') \neq \mathbf{B}(S)$. 
    
    \item \label{Theorem:EquivalenceClassesofPDP's3STRICTRELATION} $ \mathrm{MSF}(\mathbf{PD}(S,M')) \prec  \mathrm{MSF}(\mathbf{PD}(S,M''))$ and $\mathbf{PD}(S,M') \subsetneq \mathbf{PD}(S,M'')$

 \item \label{Theorem:EquivalenceClassesofPDP's3STRICTRELATIONOtherway} $ \mathrm{MSF}(\mathbf{PD}(S,M')) \succ \mathrm{MSF}(\mathbf{PD}(S,M''))$ and $\mathbf{PD}(S,M') \supsetneq \mathbf{PD}(S,M'')$

 \item \label{Theorem:EquivalenceClassesofPDP's4STRICTLYDISTINCT} Both $ \mathrm{MSF}(\mathbf{PD}(S,M')) \nprec  \mathrm{MSF}(\mathbf{PD}(S,M''))$ and $ \mathrm{MSF}(\mathbf{PD}(S,M')) \nsucc \mathrm{MSF}(\mathbf{PD}(S,M''))$ in which case $\mathbf{PD}(S,M') \not\subset \mathbf{PD}(S,M'') $ and $\mathbf{PD}(S,M') \not\supset \mathbf{PD}(S,M'') $

\end{enumerate}
}
\begin{proof}
These are straightforward consequences of results and definitions presented so far. That cases $(i)$ and $(ii)$ hold,  in particular, was already pointed out in the discussion following Definition \ref{Definition:solidfragmentofPDP}. For completeness, we provide a sketch of how the validity of $(iii) $ and $(v)$ can be established.

Case $(iii)$: By definitions
\begin{align}
    \mathbf{PD}(S,M') = \mathrm{conv}[\mathbf{B}(S')\odot \mathbf{NS}(S'^{\perp})]
\end{align}
and either $\mathrm{MSF}(\mathbf{PD}(S,M')) = \bot$, in which case $\mathbf{PD}(S,M') = \mathbf{B}(S)$, by Theorem \ref{Theorem:EquivalenceClassesofPDP's1BELL} or $\mathrm{MSF}(\mathbf{PD}(S,M')) = \mathbf{NS}(S'^{\perp}_{\max}) $. In the first case, the claim follows from $\mathrm{MSF}(\mathbf{PD}(S,M'')) \succ  \bot$ which implies that $\mathbf{PD}(S,M'') \neq \mathbf{B}(S)$ and the fact that $\mathbf{B}(S) \subset \mathbf{PD}(S,M'')$ which follows from Theorem \ref{TheoremPDpolytopeBasicInclusionTHRM}. Suppose then that $\mathrm{MSF}(\mathbf{PD}(S,M')) \neq \bot$. In that case,  $\mathrm{MSF}(\mathbf{PD}(S,M')) = \mathbf{NS}(S'^{\perp}_{\max})$ and from \ref{Theorem:EquivalenceClassesofPDP's2NOTBELLEQUIV}
\begin{align}
    \mathbf{PD}(S,M')  &= \mathrm{conv}[\mathbf{B}(S')\odot \mathbf{NS}(S'^{\perp})] \\
    &= \mathrm{conv}[\mathbf{B}(S'_{\max})\odot \mathbf{NS}(S_{\max}'^{\perp})]
\end{align}
and similarly
\begin{align}
    \mathbf{PD}(S,M'') &= \mathrm{conv}[\mathbf{B}(S'')\odot \mathbf{NS}(S''^{\perp})] \\
    &= \mathrm{conv}[\mathbf{B}(S_{\max}')\odot \mathbf{NS}(S_{\max}'^{\perp})].
\end{align}
Since by assumption $\mathrm{MSF}(\mathbf{PD}(S,M') \prec \mathrm{MSF}(\mathbf{PD}(S,M''))$ it follows, by writing $M^* = M'^{\min}\cap M''^{\min}$,   in particular that
\begin{align}
\begin{split}
  & R_{|M''^{\min}}(\mathbf{PD}(S,M')) \\
    &= \mathrm{conv}[R_{|M^*} (\mathbf{B}(S'_{\max})) \odot R_{|M^*}(\mathbf{NS}(S'^{\perp}_{\max}))]
    \end{split}
    \\[2ex]
    &\neq R_{|M''^{\min}}(\mathbf{PD}(S,M'')) = \mathrm{MSF}(\mathbf{PD}(S,M'')),
\end{align}
where the inequivalence is a consequence of $S'^{\perp}_{\max} \prec S''^{\perp}_{\max}$, which entails that $M^*\neq M''^{\min}$. On the other hand, $S'^{\perp}_{\max} \prec S'^{\perp}_{\max}$ implies conversely that $S''_{\max}\prec S'_{\max}$, since they correspond to two distinct bipartitions of $S$. From Theorem \ref{TheoremPDpolytopeBasicInclusionTHRM} it then follows that $\mathbf{PD}(S,M') \subset \mathbf{PD}(S,M'')$ which establishes the claim of $\mathbf{PD}(S,M') \subsetneq \mathbf{PD}(S,M'')$. Case $(iv)$ can be shown similarly by swapping the roles of the polytopes. 

Case $(v)$: This is the only remaining case given the results \ref{Theorem:EquivalenceClassesofPDP's1BELL}-\ref{Theorem:EquivalenceClassesofPDP's3STRICTRELATIONOtherway} and therefore holds trivially.  We can demonstrate this, however, without assuming cases $(iii)$ or $(iv)$, which we do, for the sake of illustration. 

Suppose then that $\mathrm{MSF}(\mathbf{PD}(S,M')) \nprec \mathrm{MSF}(\mathbf{PD}(S,M''))  $ and $\mathrm{MSF}(\mathbf{PD}(S,M'')) \nprec \mathrm{MSF}(\mathbf{PD}(S,M'))$. It immediately follows, from \ref{Theorem:EquivalenceClassesofPDP's1BELL}- \ref{Theorem:EquivalenceClassesofPDP's2NOTBELLEQUIV} that $\mathbf{PD}(S,M') \neq \mathbf{PD}(S,M'')$. Consider then the action of the restrictions $R_{|M'^{\min}}$ and $R_{|M''^{\min}}$ on the distinct polytopes. By definitions
\begin{align}
    R_{|M'^{\min}}(\mathbf{PD})(S,M')) &= \mathrm{MSF}(\mathbf{PD}(S,M')) 
    \end{align}
but, by writing $M^* = M''^{\min}\cap M'^{\min}$, 
    \begin{align}
    \begin{split}
  & R_{|M'^{\min}}(\mathbf{PD}(S,M'')) \\
      & = \mathrm{conv}[R_{|M'^*}(\mathbf{B}(S_{\max}'')) \odot R_{|M^{*}}\mathbf{NS}(S''^{\perp}_{\max})].
       \end{split}
\end{align}
Again, the images under the same restriction map are distinct, as from $S'^{\perp}_{\max} \neq S''^{\perp}_{\max}$ it follows that $M^* \neq M''$. We can also see from above that in fact $\mathbf{PD}(S,M'') \not\subset \mathbf{PD}(S,M')$, since if the polytopes did obey a subset relation, then, in particular the maximal solid fragment of $\mathbf{PD}(S,M')$ would be necessarily contained in $\mathbf{PD}(S,M'')$, which can not be the case, since the image of the map $R_{|M'^{\min}}$ which defines the fragment is composable for the polytope $\mathbf{PD}(S,M'')$. The converse, that $\mathbf{PD}(S,M'') \not\subset \mathbf{PD}(S,M')$ can be established by considering the map $R_{|M''^{\min}}$ instead. 

\end{proof}
    \normalfont

Theorem \ref{Theorem:EquivalenceClassesofPDP's} completely classifies the equivalence classes of partially deterministic polytopes. As immediate corollaries, we can state a sharpening of Theorem \ref{Theorem:BellandNSspecialcasesofPDP} and a general result on the relationship of the quantum set $\mathbf{Q}(S)$ and a arbitrary partially deterministic polytope. 

\corollary{\label{Corollary:PDPisaNSpolytopeIFFMSFisimageofNS}Let $S = (I,M,O)$ be any (possibly trivial) scenario and $M'$ a collection of subsets $M_i'\subset M_i$ for all $i\in I$. Suppose the scenario has at least two parties with at least two inputs each, so that $ \mathbf{NS}(S) \neq \mathbf{B}(S).$  Then $\mathbf{PD}(S,M') = \mathbf{NS}(S)$ if and only if $\mathrm{MSF}(\mathbf{PD}(S,M')) = R_{|M'^{\min}}(\mathbf{NS}(S))$. } 
\begin{proof}
    This statement  follows from Corollary \ref{corollary:NSisSolidoraPDP} by which $\mathbf{NS}(S)$ is composable and $\mathbf{NS}(S) = \mathbf{PD}(S,M'^{V})$ if and only if $V = \{i\in I: |M_i| = 1 \}\neq \emptyset$, where $M'^{V}$ is a collection with $M_i'^{V} = M_i$ if $i\in V$ and $M_i^{V} = \emptyset$ otherwise, which when combined with the Definition \ref{Definition:solidfragmentofPDP} of the maximal solid fragment yields the claim.  
 \end{proof}
\normalfont
The relevant consequence of Corollary \ref{Corollary:PDPisaNSpolytopeIFFMSFisimageofNS} to be highlighted is that if $S= (I,M,O)$ is a nontrivial scenario, that is has $|M_i|\geq 2$ for all $i\in I$ then $\mathbf{PD}(S,M') = \mathbf{NS}(S)$ if and only if $M_i' = \emptyset $ for all $i\in I$ as in that case. This is to say, that the equivalence class corresponding to the no-signalling polytope always contains exactly one element (which is the no-signalling polytope) in a nontrivial scenario. 

\theorem{\label{Theorem:QuantumsetandPDP}Let $S= (I,M,O)$ be a scenario, and $M'$ some collection of subsets of inputs $M_i'\subset M_i$ for all $i\in I$. Then exactly one of the following three statements holds for the  polytope $\mathbf{PD}(S,M')$ and the quantum set $\mathbf{Q}(S)$.
\begin{enumerate}[label=(\roman*), ref=\ref{Theorem:QuantumsetandPDP}.\roman*]

    \item \label{Theorem:QuantumsetandPDP1:QissupersetofB}  $\mathbf{PD}(S,M') = \mathbf{B}(S) \subsetneq \mathbf{Q}(S)$
    
    \item \label{Theorem:QuantumsetandPDP2:QissubsetofNS} $\mathbf{PD}(S,M') = \mathbf{NS}(S) \supsetneq \mathbf{Q}(S)$
    
    \item \label{Theorem:QuantumsetandPDP3:NeitherisSubsetOfother} $ \mathbf{B}(S) \neq \mathbf{PD}(S,M') \neq \mathbf{NS}(S)$ and both $\mathbf{Q}(S) \not\subset \mathbf{PD}(S,M')$ and $\mathbf{Q}(S) \not\supset\mathbf{PD}(S,M').$
\end{enumerate}
}
\begin{proof}
    Cases $(i)$ and $(ii)$ follow from Proposition \ref{Proposition:BellisStrictSubsetofQisstrictofNS} by which $\mathbf{B}(S) \subsetneq \mathbf{Q}(S) \subsetneq \mathbf{NS}(S)$, and the fact that there are partially deterministic polytopes $\mathbf{PD}(S,M')$ which equal $\mathbf{B}(S)$ and $\mathbf{NS}(S)$. Sufficient examples are extreme cases where $M_i' = M_i $ and $M_i' = \emptyset$ for all $i\in I$ shown in Theorem \ref{Theorem:BellandNSspecialcasesofPDP} while Theorem \ref{Theorem:EquivalenceClassesofPDP's} can be used to completely characterise the situations where the strict inclusions hold. 
    
    Case $(i)$ holds if and only if $\mathrm{MSF}(\mathbf{PD}(S,M')) = \bot$, by Theorem \ref{Theorem:EquivalenceClassesofPDP's1BELL}.

    Case $(ii)$ holds if and only if $\mathrm{MSF}(\mathbf{PD}(S,M')) = R_{M'^{\min}}(\mathbf{NS}(S))$, by Corollary \ref{Corollary:PDPisaNSpolytopeIFFMSFisimageofNS}.

        It remains to show case $(iii)$, that in every other situation the quantum set neither contains or is contained within the partially deterministic polytope $\mathbf{PD}(S,M')$. Suppose then that $\mathbf{B}(S) \neq \mathbf{PD}(S,M') \neq \mathbf{NS}(S)$, which implies  there is a nonredundant bipartition of $S$ such that 
        \begin{align}
            \mathbf{PD}(S,M') = \mathrm{conv}[\mathbf{B}(S') \odot \mathbf{NS}(S'^{\perp})],
        \end{align}
        with $\mathrm{MSF}(\mathbf{PD}(S,M')) = \mathbf{NS}(S_{\max}'^{\perp}) \neq \mathbf{B}(S'^{\perp})$, in particular. The relation $\mathbf{Q}(S)\not\subset \mathbf{PD}(S,M')$ then follows immediately from Theorem \ref{Theorem:strictcomposabilityresultofQ(S)}, by which $\mathbf{Q}(S)$ is solid over $S$.  
        
        The converse is seen by considering the images of the sets under the restriction $R_{|M'^{\perp}}$ defined by the collection $M'^{\perp}_i = M_i\setminus M'_i$ for all $i\in I$. By Proposition \ref{Proposition:restrictionmapidentities} and the fact that $\mathrm{MSF}(\mathbf{PD}(S,M'))\succ \bot$
        \begin{align}
            R_{|M'^{\perp}}(\mathbf{Q}(S)) = \mathbf{Q}(S'^{\perp}) \subsetneq \mathbf{NS}(S'^{\perp}) = R_{|M'^{\perp}}(\mathbf{NS}(S)),
        \end{align}
        and so it must be that  $\mathbf{PD}(S,M') \not\subset \mathbf{Q}(S)$ as claimed.
\end{proof}
\normalfont
We will finish this section with a few examples.

\example{\label{EXAMPLE:PartialDeterminismInBipartitetHREE-INPUTscenario}(Bipartite three-input scenario)\\ Let $S = (I,M,O)$ be a bipartite scenario with $|M_i| = 3$ for both parties.  There are $(\prod_{i\in I}(|2^{|M_i|}|) -2= 62$ different ways to impose partial determinism in a nontrivial way on $S$, i.e. to choose a collection $M'$ of subsets $M_i'\subset M_i$ such that $M_i' \neq \emptyset $ or $M_i' \neq M_i$ for at least for some $i\in I$. The equivalence classes can be characterized by their solid fragments, of which the scenario $S$ supports only two distinct forms. One form, where the subscenario $S'^{\perp}$ has two inputs on one side and three inputs on the other,  corresponding to the six distinct ways to have only one deterministic input corresponding to collections $M'^{\{x_i^j\}}$ defined by $M_i'^{\{x_i^j\}} = \{x_i^j \}$ for exactly one agent $i$ and $M_i'^{\{x_i^j\}} = \emptyset$ otherwise. The other form consists of those where the subscenario $S'^{\perp}$ has two inputs per site,  which correspond to the 9 different ways to choose one deterministic input from one of the agents and one from the other, corresponding to collections $M'^{\{x^j_i, x_{i'}^{j'}\}}$. These polytopes are all distinct by virtue of Theorem \ref{Theorem:EquivalenceClassesofPDP's}, since their maximal solid fragments are defined over different substructures $S'^{\perp}$. In every other of the remaining 47 cases, at least one of the parties in $I$ has at most one non-deterministic measurement, meaning that $\mathrm{MSF}(\mathbf{PD}(S,M')) = \bot$ and so,  Theorem \ref{Theorem:EquivalenceClassesofPDP's} $\mathbf{PD}(S,M') = \mathbf{B}(S)$. Thus, including the trivial cases where $M_i' = \emptyset$ for both $i\in I$ and $M' = M$, there are 17 distinct equivalence classes of partially deterministic polytopes in this scenario. They obey the type of relations illustrated in Fig.~\ref{fig:bipartiteThreeInputScenario}. }\\
\normalfont

\begin{figure}
    \centering
    \includegraphics[width=\linewidth]{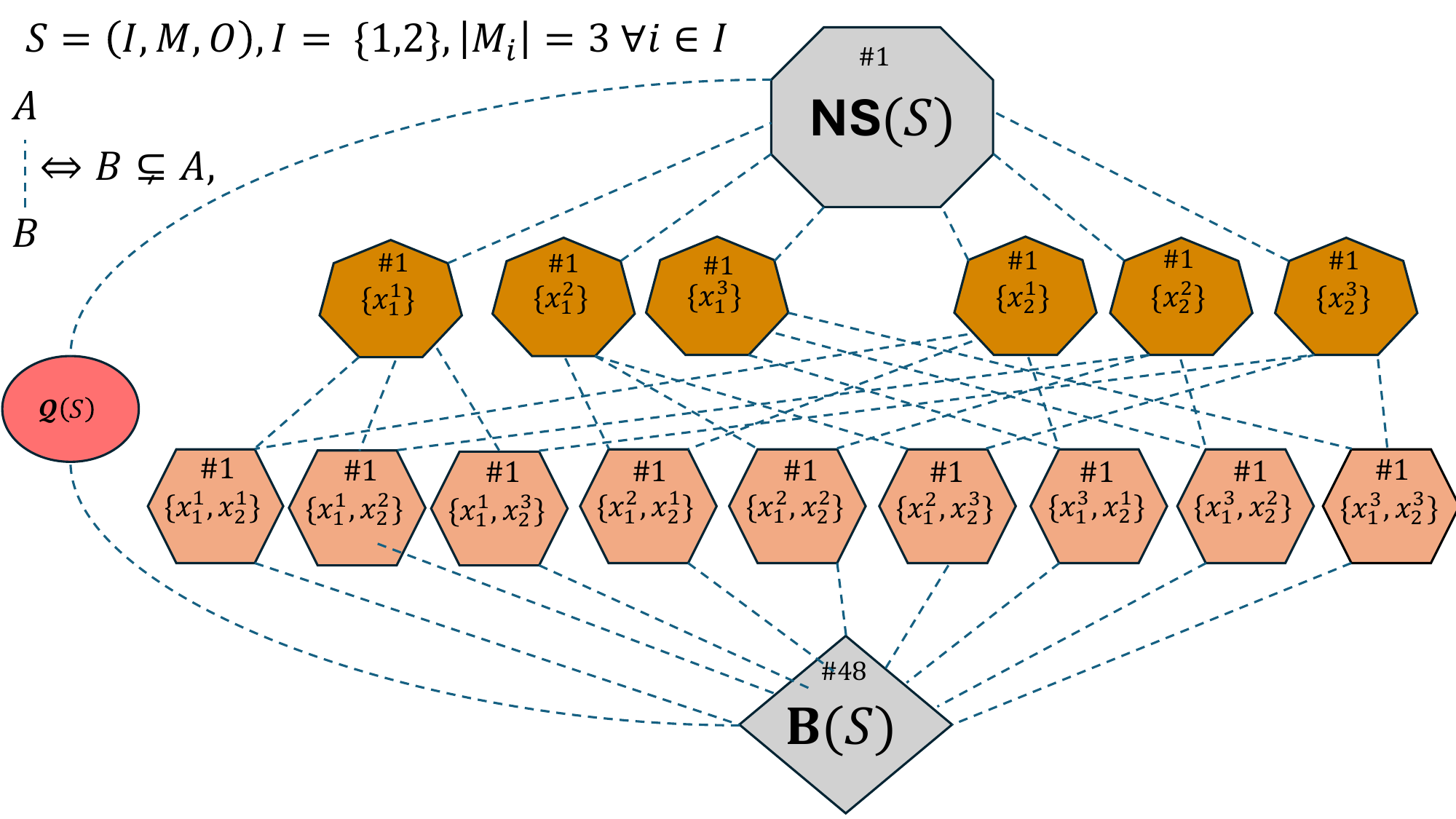}
    \caption{Illustration of the equivalence classes  of all the partially deterministic polytopes of Example \ref{EXAMPLE:PartialDeterminismInBipartitetHREE-INPUTscenario}. The bipartite scenario $S=(I,M,O)$, $I=\{1,2\}$ with three inputs per site allows 17 distinct equivalence classes of partially deterministic polytopes to be defined.  Six of those classes correspond to situations where only one input $x_i^j, i\in I, 1\leq j \leq |M_i|$ is deterministic, with each such class represented as a brown polyhedron with $\{x_i^j\}$ labelling the deterministic input. There are also 9 classes of partially deterministic polytopes for which pairs $x_i^j, x^{j'}_{i'}$, $i,i'\in I, i \neq i'$ of inputs are deterministic, represented in a similar way on a row above the class of Bell polytopes $\mathbf{B}(S)$. Each partially deterministic polytope with two deterministic inputs, is a strict subset of exactly two partially deterministic polytopes with one input. In the figure, if a path (sequence of dashed lines) exists from bottom to top (or top to bottom), then every object in that path can be ordered by the strict subset (or superset) relation. If two objects are not connected by such a path, then neither set is fully contained in the other. In particular, the quantum set $\mathbf{Q}(S)$ obeys $\mathbf{B}(S)\subsetneq \mathbf{Q}(S) \subsetneq \mathbf{NS}(S)$, but neither contains or is contained in any of the other partially deterministic polytopes.   Every equivalence class  has exactly one element, except for the class corresponding to the Bell-polytope which has 48 elements including the trivial fully deterministic case $M'=M$.   }
    \label{fig:bipartiteThreeInputScenario}
\end{figure}
\normalfont

In Ref.~\cite{Bong2020}, in the context of the Local Friendliness no-go theorem,  a complete list of facet defining inequalities was solved for one of the nine polytopes with two deterministic inputs in Fig.~\ref{fig:bipartiteThreeInputScenario}.

\example{\label{Example:partialdeterminismInThreepartiteTwo-inputScenario}(Tripartite binary-input scenario)\\Let $S= (I,M,O)$ be a three-partite scenario with $|M_i| = 2$ for all $i\in I$ and $M'$ a collection of subsets $M_i' \subset M_i $. There are $\prod_{i\in I}2^{|M_i|}-2 = 62$ ways to impose non-redundant partial determinism in this scenario. The partially deterministic polytopes on this scenario admit only one form of solid fragments, namely the form where the subscenario $S'^{\perp}$ simply omits the inputs of one party $i$. Each one of these classes can be realised by two distinct  types of collections $M'^{\{x_j^k\}}$,$M'^{\{j\}}$ of deterministic inputs, two of the cases correspond to the case where $M_i'^{\{x_j^k\}} = \{x^j_i\}$ for exactly one $j\in I$ and input $x_j^k \in M_i$, and $M_i'^{\{x_j^k\}} = \emptyset$ otherwise, and the last one $M_i'^{j} = M_j$ for exactly one $j \in I$ and $M_i' = \emptyset$ otherwise, respectively. In every other of the remaining  53 cases $\mathbf{PD}(S,M') = \mathbf{B}(S)$. Thus, including trivial cases, there are in total 5 distinct equivalence classes of partially deterministic behaviour which obey  the kind of relations illustrated in Fig.~\ref{fig:TripartiteEquivalenceClasses}. }\\
\normalfont

Example \ref{Example:partialdeterminismInThreepartiteTwo-inputScenario} complements the results obtained in Ref.~\cite{Haddara2025} which considered partially deterministic polytopes with one deterministic input per site. Namely,in this case the same polytope is obtained independently of which nonempty set of deterministic inputs is imposed at a given site. In Section \ref{Section:Applications} we will provide examples which concern the three distinct classes of tripartite polytopes portrayed in Fig.~\ref{fig:TripartiteEquivalenceClasses}.

\normalfont

 \begin{figure}
     \centering
     \includegraphics[width=\linewidth]{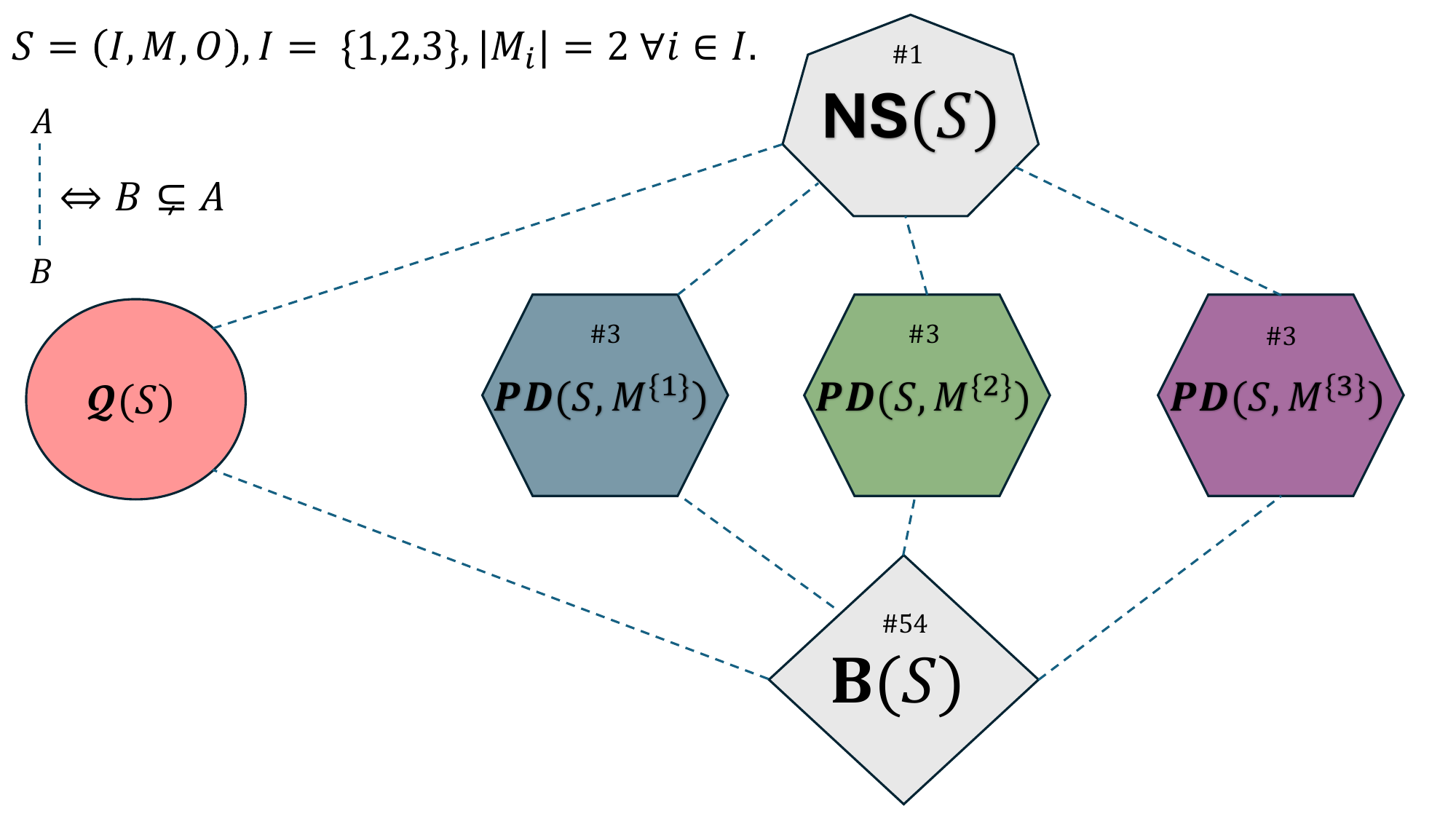}
     \caption{Illustration of all the equivalence classes of partially deterministic polytopes in the case of Example \ref{Example:partialdeterminismInThreepartiteTwo-inputScenario}. Any way to define partial determinism on the tripartite scenario $S=(I,M,O), I = \{1,2,3\}$, $|M_i|=2$ for all $i\in I$ leads to one of the five equivalence classes depicted as convex polyhedra. Three non-trivial classes of partially deterministic polytopes can be identified. Each of those equivalence classes has three elements,  two where one of the two inputs of one agent $i\in I$ is deterministic, and the third one,   which has both the inputs of one agent $i\in I$ deterministic. The latter has been used as representatives of this class. The distinct polytopes $\mathbf{PD}(S,M^{\{i\}})$ are not fully contained nor do they fully contain eachother, with a similar result holding in relation to the quantum set $\mathbf{Q}(S)$. The equivalence class of Bell polytopes $\mathbf{B}(S)$ contains 54 elements, including the trivial fully deterministic case with $M'=M$. }
     \label{fig:TripartiteEquivalenceClasses}
 \end{figure}

\subsection{\label{Section:PartialseparabilityandFactorizable}Partial Uncorrelatedness and Factorizability}
In Section \ref{section:PartialDeterminismSubsection} we constructed partially deterministic polytopes as the convex hull of partially predictable behaviour and showed, that equivalently a partially deterministic polytope is the $c$-composition of a Bell-local set over the substructure consisting of the 'deterministic measurements' in the collection $M'$, and a no-signalling polytope over the complementary substructure.

Given Fine's equivalence theorem (Theorem \ref{FinesTHRM}), by which the set $\mathbf{B}(S)$ can be expressed equivalently as the convex hull of predictable no-signalling behaviour $\mathbf{P}_{\mathbf{NS}}(S)$ or uncorrelated no-signalling behaviour $\mathbf{U}_{\mathbf{NS}}(S)$ it is immediately evident that the seemingly more general notion of uncorrelatedness, suitably generalized,  could be used in place of partial predictability, in order to get to the polytope $\mathbf{PD}(S,M')$. In this section, we show that this is indeed the case, by  providing generalizations of the the notions of uncorrelatedness and LHV-modelability and prove a result which equates the distinctly constructed sets with the partially deterministic polytope $\mathbf{PD}(S,M')$. This result leads to what may be viewed as a 'partial analogue' of Fine's theorem, which refers to the polytopes $\mathbf{PD}(S,M')$ instead of the Bell-polytope $\mathbf{B}(S)$. 

We also draw attention to the fact that the equivalence class of partially deterministic polytopes which equal the Bell-polytope established in Section \ref{section:PartialDeterminismSubsection} itself provides a strict refinement of the usual way in which  Fine's theorem for the LHV-modelability of behaviour is stated, and articulate this observation as a strengthened Fine's theorem.

    \definition{(Partially uncorrelated  behaviour)\label{DefinitionPartiallySeparableBehaviour}\\
   A behaviour $\wp \in \mathbf{E}(S)$ is called partially uncorrelated with respect to the set $M'$ if and only if $\forall \vec{a}, \vec{x}$
   \begin{align}
       \wp(\vec{a}|\vec{x}) = \prod_{i\in F_{\vec{x}}} \wp(a_i|\vec{x})\cdot \wp(\vec{a}_{(I\setminus F_{\vec{x}})})|\vec{x})
   \end{align}
   for all $\Vec{x}$, ${F_{\Vec{x}}} = \{ i \in I | x_i \in M_i'  \}$.  The set  behaviours partially uncorrelated  with respect to $M'$ is denoted by $\mathbf{PU}(S,M')$
}
\normalfont\\

\normalfont

Clearly, $ \mathbf{PP}(S,M') \subset \mathbf{PU}(S,M') $ and therefore also $\mathrm{conv}[\mathbf{PU}(S,M')] = \mathbf{E}(S)$. The convex hull becomes more interesting when the no-signalling constraints of Eq.~\eqref{no-signalling} are taken into account. 

\definition{\label{Definition:PartiallySeparableNSbehaviour}(Partially uncorrelated no-signalling behaviour)\\
The set $\mathbf{PU}_{\mathbf{NS}}(S,M')$ of partially uncorrelated no-signalling behaviour is defined as the intersection 
\begin{align}
  \mathbf{PU}_{\mathbf{NS}}(S,M') :=  \mathbf{PU}(S,M') \cap \mathbf{NS}(S).
\end{align}}\\

\normalfont

From  $ \mathbf{PP}(S,M') \subset \mathbf{PU}(S,M') $ it follows that  $\mathbf{PP}_{\mathbf{NS}}(S,M') \subset \mathbf{PU}_{\mathbf{NS}}(S,M')$.  Let $F_{\vec{x}} = \{i \in I | x_i \in M_i' \}$, as before. Owing to no-signalling,   in analogy to Eq.~\eqref{partialPredictableNosignallingequation} in the case of partially predictable no-signalling behaviour, it is seen that any $\wp \in \mathbf{PU}_{\mathbf{NS}}(S,M') $ decomposes as
\begin{align}
    \wp(\vec{a}|\vec{x}) = \prod_{i\in F_{\vec{x}}}\wp(a_i|x_i)\cdot \wp(\vec{a}_{I\setminus F_{\vec{x}}}|\vec{x}_{I\setminus F_{\vec{x}}})\label{EquationPartiallySeparableNosignallingEq.}.
\end{align}

Exactly the same way as in the case of partially predictable behaviour, we can show by use of the no-signalling conditions that the whole behaviour $\wp$ can be entirely characterized from a subset of the distributions. Namely since the distributions $\prod_{i\in F_{\vec{x}}}\wp(a_i|x_i)$ are to be recovered from a distribution in any context $\vec{x}$ where the substring $F_{\vec{x}}$ is present, it has to be the case that they can be recovered, in particular, from those in the set of maximal contexts $D = \{ \vec{x} : |F_{\vec{x}}| = \max |F_{\vec{x}'}| \}$ so that 
\begin{align}
    \prod_{i\in F_{\vec{x}}}\wp(a_i|x_i) = \sum_{i \in I\setminus F_{\vec{x}}} \wp^{|D}(\vec{a}|\vec{x}) \label{Equation:Separable|Dequation}
\end{align}
where the superscript $|D$ denotes that the $\vec{x}\in D$  in the distribution $\wp^{|D}(\vec{a}|\vec{x})$. Similarly, as was shown in Eq.~\eqref{Equation:marginalfromNDequation}, the distributions $\wp(\vec{a}_{I\setminus F_{\vec{x}}}|\vec{x}_{I\setminus F_{\vec{x}}})$ are to be recovered from any distribution in where $\vec{x}$ is in the set $ND = \{\vec{x} : |I \setminus F_{\vec{x}}| = \max |I\setminus F_{\vec{x}'}| \}$ meaning that
\begin{align}
    \wp(\vec{a}_{I\setminus F_{\vec{x}}}|\vec{x}_{I\setminus F_{\vec{x}}}) = \wp^{|ND}(\vec{a}_{I\setminus F_{\vec{x}}}|\vec{x}_{I\setminus F_{\vec{x}}}).
\end{align}
The collections of distributions $\wp^{|D}$ and $\wp^{|ND}$ may still contain redundant information for specification of a behaviour $\wp$, as any $i\in I$ for which $M_i' = \emptyset$ gets summed over anyway in the distributions $\wp^{|D}(\vec{a}_{F_{\vec{x}}}|\vec{x}_{F_{\vec{x}}})$ and conversely, any $i\in I$ for which $M_i' = M_i$ gets summed over in any distribution $\wp^{|ND}(\vec{a}_{I \setminus F_{\vec{x}}}|\vec{x}_{I \setminus F_{\vec{x}}})$. Following this chain through, and writing $I_{M'} = \{i \in I : M_i' \neq \emptyset \}$ it is seen that equivalently, 

\begin{align}
   \prod_{i \in  I_{M'}} \wp^{|D}(a_i|x_i) = R_{|M'}(\wp) = \wp^{S'}(\vec{a}'= \vec{a}_{I_{M'}}| \vec{x}'= \vec{x}_{I_{M'}} ) \label{Equation:partialseparabilitySubstructure}
\end{align}
and so, the marginals $\wp(\vec{a}_{F_{\vec{x}}}|\vec{x}_{F_{\vec{x}}})$ onto any $F_{\vec{x}}$ can be extracted from uncorrelated behaviour $\mathbf{U}_{\mathbf{NS}}(S')$ defined over the substructure $S'$, namely, from the image $R_{|M'}(\wp)$ of the behaviour under that restriction map. 

That a similar situation holds for the contexts $ND$, and the marginals $\wp(\vec{a}_{I\setminus F_{\vec{x}}}| \vec{x}_{ I\setminus F_{\vec{x}}})$ are completely characterized by considering the no-signalling polytope $\mathbf{NS}(S'^{\perp})$, in particular the image $R_{|M'^{\perp}}(\wp)$, was already shown in the study of partial predictability between Eqs.~\eqref{Equation:marginalfromNDequation}-\eqref{Equation:PartiallypredictableExpandedeXTREMEOFns}.

The following proposition, which also provides an alternative characterization of the set $\mathbf{PU}_{\mathbf{NS}}(S,M')$ is immediate.

\proposition{{\label{Proposition:PartialseparabilitydefinedasBehProduct}Let $S = (I,M,O)$ be a scenario and $M', M'^{\perp}$ define a bipartition of $S$ into $S', S'^{\perp}$. Then 
\begin{align}
    \mathbf{U}_{\mathbf{NS}}(S') \odot \mathbf{NS}(S'^{\perp}) = \mathbf{PU}_{\mathbf{NS}}(S,M')
\end{align}
}
\begin{proof} 
Omitted, follows from Definitions \ref{BehaviourProductDefinition} and \ref{DefinitionPartiallySeparableBehaviour} of the behaviour product and partial uncorrelatedness, paired with the observations made following Eq.~\eqref{Equation:Separable|Dequation}.
\end{proof}
\normalfont

Instead of considering the convex hull of $\mathbf{PU}_{\mathbf{NS}}(S,M')$ straight away, we give another closely related definition which is essentially a partial analogue  of LHV-modelability (local factorizability) of Definition \ref{LHVBehaviour}.

\definition{\label{Definition:PartialFactorizability}(Partially factorizable behaviour)\\
A behaviour $\wp\in \mathbf{E}(S)$ is termed partially factorizable with respect to the collection $M'$  if and only if it admits a model of the form
\begin{align}
    \wp(\vec{a}|\vec{x}) &= \int_{\Lambda} P(\vec{a}|\vec{x},\lambda) p(\lambda)\\
   &= \int_{\Lambda} \prod_{i\in F_{\vec{x}}} P(a_i| x_i, \lambda ) P(\vec{a}_{I\setminus F_{\vec{x}}}|\vec{x}_{I\setminus F_{\vec{x}}}, \lambda) p(\lambda),
\end{align}
with  $F_{\vec{x}} = \{i\in I: x_i \in M_i' \}$ and a measurable set $\Lambda$, $p(\lambda) \geq 0, \int_{\Lambda}p(\lambda) = 1$. The distributions $P(\vec{a}|\vec{x},\lambda)$ satisfy $P(\vec{a}_V|\vec{x},\lambda) = P(\vec{a}_V|\vec{x}_V, \lambda)$ for all subsets $V\subset I$, and $\vec{a}, \vec{x}$.  The set of partially factorizable behaviour is denoted by  $\mathbf{PF}(S,M')$.
}\\
\normalfont

Clearly $\mathrm{conv}[\mathbf{PU}_{\mathbf{NS}}(S,M')]\subset \mathbf{PF}(S,M')$. We will show that $\mathbf{PF}(S,M')$ is a convex polytope and, more precisely, equal to the set $\mathbf{PD}(S,M')$ thus establishing the equivalence between all three sets.

\theorem{\label{Theorem:PartiallyFactorizableEqualsPartiallyDeterministicPolytope} $\mathbf{PF}(S,M') = \mathrm{conv}[\mathbf{PU}_{\mathbf{NS}}(S,M')] = \mathbf{PD}(S,M')$}
\begin{proof} Since 
\begin{align}
    \mathbf{PD}(S,M') \subset \mathrm{conv}[\mathbf{PU}(S,M')] \subset \mathbf{PF}(S,M'),
\end{align} holds by definitions it is sufficient to prove the inclusion $\mathbf{PF}(S,M') \subset \mathbf{PD}(S,M')$. 

To prove this, we adapt the insight that lead to Proposition \ref{Proposition:PartialseparabilitydefinedasBehProduct} on the level of models. 

Suppose then, that $\wp \in \mathbf{PF}(S,M')$ is an arbitrary partially factorizable behaviour with a model 
\begin{align}
    \wp(\vec{a}|\vec{x}) = \int_{\Lambda} \prod_{i\in F_{\vec{x}}} P(a_i| x_i, \lambda ) P(\vec{a}_{I\setminus F_{\vec{x}}}|\vec{x}_{I\setminus F_{\vec{x}}}, \lambda) p(\lambda). \label{Equation:ModelForPFbehaviour}
\end{align}
Recall the sets $D = \{\vec{x}: |F_{\vec{x}}| = \max |F_{\vec{x}'}|\}$ and $ND = \{\vec{x} : |I\setminus F_{\vec{x}}| = \max |I\setminus F_{\vec{x}'}| \}$ and note that by definition the distributions $P(\vec{a}|\vec{x},\lambda)$ satisfy 

\begin{align}
    P(\vec{a}|\vec{x},\lambda) = P^{|D}(\vec{a}_{F_{\vec{x}}}|\vec{x}_{F_{\vec{x}}},\lambda)P^{|ND}(\vec{a}_{I\setminus F_{\vec{x}}}|\vec{x}_{I\setminus F_{\vec{x}}},\lambda). \label{Equation:PFmodelwithLambda}
\end{align}
For each $\lambda$, the collection $P_{\lambda}^{S}$ of distributions obeying Eq.~\eqref{Equation:PFmodelwithLambda} may be viewed as a point in the partially uncorrelated set $\mathbf{PU}_{\mathbf{NS}}(S,M')$,  and hence it immediately follows, by virtue of Proposition \ref{Proposition:PartialseparabilitydefinedasBehProduct} that equivalently
\begin{align}
P_{\lambda}^{S} = P_{\lambda}^{S'} \odot P_{\lambda}^{S'^{\perp}}
\end{align}
where $P_{\lambda}^{S'}, P_{\lambda}^{S'^{\perp}}$ may be viewed as points in the sets $\mathbf{U}_{\mathbf{NS}}(S')$ and $\mathbf{NS}(S'^{\perp})$  obtainable as restrictions $R_{|M'}$, $R_{|M'^{\perp}}$ of $P_{\lambda}^{S}$, respectively.  Now since $\mathbf{U}_{\mathbf{NS}}(S')\subset \mathrm{conv}[\mathbf{U}_{\mathbf{NS}}(S')] = \mathbf{B}(S')$, any point $P_{\lambda}^{S} \in \mathbf{U}_{\mathbf{NS}}(S')$ in particular can be represented as a convex combination $P_{\lambda}^{S'} = \sum_{r} s_r(\lambda) D^{S'}_{r}$ of the distinct deterministic extreme points of $\mathbf{B}(S)$  labelled by $r$. Here, the $s_r(\lambda)$ give the convex weight of each extreme point in the distribution conditioned on $\lambda$  so that  $s_r(\lambda)\geq 0$ for all $r,\lambda$ and $ \sum_{r} s_r(\lambda) =1$. Similarly, any point $P_\lambda^{S'^{\perp}} \in \mathbf{NS}(S'^{\perp})$ admits a convex decomposition $P_{\lambda}^{S'^{\perp}} = \sum_k l_k(\lambda) P_{k}^{S'^{\perp}}$ with $P_k^{S'^{\perp}} \in \mathrm{Ext}(\mathbf{NS}(S'^{\perp}))$  and $l_{k}(\lambda) \geq 0$ for all $k$, $\sum_k l_k(\lambda) =1$. Therefore, by inspection of Eq.~\eqref{Equation:ModelForPFbehaviour}, the assumption $\wp \in \mathbf{PF}(S,M')$ is equivalent to the behaviour admitting a model 
\begin{align}
    \wp(\vec{a}|\vec{x}) &= \int_{\Lambda} \prod_{i\in F_{\vec{x}}} P^{S'}(a_i| x_i, \lambda ) P^{S'^{\perp}}(\vec{a}_{I\setminus F_{\vec{x}}}|\vec{x}_{I\setminus F_{\vec{x}}}, \lambda) p(\lambda) \\
    \begin{split}
     &= \int_{\lambda} \sum_{r,k}s_r(\lambda)l_k(\lambda)p(\lambda) D^{S'}_{r}(\vec{a}_{F_{\vec{x}}}|\vec{x}_{F_{\vec{x}}}) \\
     & \hspace{1cm} P_k^{S'^{\perp}}(\vec{a}_{I \setminus F_{\vec{x}}}| \vec{x}_{I \setminus F_{\vec{x}}})
       \end{split}
       \\[2ex]
    &= \sum_{r,k} t_{rk} D^{S'}_{r}(\vec{a}_{F_{\vec{x}}}|\vec{x}_{F_{\vec{x}}})P_k^{S'^{\perp}}(\vec{a}_{I \setminus F_{\vec{x}}}| \vec{x}_{I \setminus F_{\vec{x}}}), \label{Equation:PFispdFINALEQ}
\end{align}
where we have defined $t_{rk} = \int_{\Lambda}s_{r}(\lambda)l_k(\lambda)p(\lambda)$. Equation \eqref{Equation:PFispdFINALEQ} establishes that also $\wp \in \mathbf{PD}(S,M')$, as claimed. 
\end{proof}
\normalfont

This result naturally leads us to state  a partial analogue of Fine's theorem, which can be thought of as a strict generalization of it appealing to properties of the partially deterministic polytopes. 

\theorem{\label{Theorem:PartialAnalogueofFine}(A partial analogue of Fine's theorem)\\ Let $S = (I,M,O)$ be a scenario, and $M'$ some collection of subsets $M_i' \subset M_i$ for all $i\in I$. The following statements are equivalent 

\begin{enumerate}[label=(\roman*), ref=\ref{Theorem:PartialAnalogueofFine}.\roman*]

    \item \label{Theorem:PartialAnalogueofFinePD}  The behaviour $\wp$ is partially deterministic with respect to $M'$, that is, it admits a model of the form 

    \begin{align}
         \wp(\Vec{a}|\Vec{x}) &= \sum_{i} \lambda_i P_i(\vec{a}|\vec{x}) \\ &= \sum_{i} \lambda_i D_i(\vec{a}_{F_{\vec{x}}}|\vec{x}_{F_{\vec{x}}}) P_i(\vec{a}_{I \setminus F_{\vec{x}}}|\vec{x}_{I \setminus F_{\vec{x}}}),
    \end{align}
    with, $P_i(\vec{a}_{V}|\vec{x}) = P_i(\vec{a}_V|\vec{x}_V)$ for all $V \subset I, \vec{a}, \vec{x}, i$,  $D_i(\vec{a}_{F_{\vec{x}}}|\vec{x}_{F_{\vec{x}}}) \in \{0,1 \}$ for all $i, \vec{a}_{F_{\vec{x}}}, \vec{x}$ and $\lambda_i \geq 0, \sum_i \lambda_i = 1$.
    
    \item \label{Theorem:PartialAnalogueofFineStrechedDistribution}  There exists a collection $P^{S'_{\infty} }$ of  no-signalling distributions $   P^{S'_{\infty}}(\boldsymbol{{\alpha}},\vec{\beta}_{\vec{x}^{\perp}}|\vec{x}^{\perp})$, giving a joint distribution over all the possible outcomes of all the possible measurements for all agents of the scenario $S' = (I',M',O')$ denoted by $\boldsymbol{\alpha}$, and  over the outcomes $\vec{\beta}_{\vec{x}^{\perp}} := \vec{\beta}$ of the agents in scenario $S'^{\perp} = (I'^{\perp},M'^{\perp}, O'^{\perp})$ in the context $\vec{x}^{\perp}$, which can be used to recover the distributions $\wp(\vec{a}|\vec{x})$ in the original behaviour $\wp$ by appropriate marginalization. Formally, 

    \begin{align}
    \begin{split}
   & \hspace{0.8cm}    \boldsymbol{\alpha} := (\alpha_{x_1=x_1^1}, \ldots , \alpha_{x_1 = x_1^{|M_1|}}, \ldots, \alpha_{x_{|I'|} = x^{|M_{|I'|}}_{|I'|} }) \\
 &\hspace{0.8cm}       \in O_{x_1 = x_1^1}  \times \ldots \times O_{x_{1}=x_1^{|M_1|}} \times \ldots \times O_{x_{|I'|} = x^{|M_{|I'|}}_{|I'|}}\\
&\hspace{0.8cm} := O^{S'_{\infty}} 
\end{split}
    \end{align}
    and 
    \begin{align}
    \begin{split}
      & \hspace{0.8cm}  \vec{\beta} = (\beta_{x^{\perp}_1}, \beta_{x^{\perp}_2}, \ldots , \beta_{x_{I'^{\perp}}}) \\
       & \hspace{0.8cm} \in O_{x^{\perp}_1} \times O_{x^{\perp}_2} \times \ldots \times O_{x_{|I_{M'^{\perp}}|}} = O_{\vec{x}^{\perp}},
        \end{split}
    \end{align}
   so that  $(\boldsymbol{\alpha},\vec{\beta}) \in O^{S'_{\infty}} \times O_{\vec{x}^{\perp}}$ 
    with 
    \begin{align}
        \wp(\vec{a}|\vec{x}) &= P^{S'_{\infty}}(\vec{\alpha}_{F_{\vec{x}}} = \vec{a}_{F_{\vec{x}}}, \vec{\beta}_{I\setminus F_{\vec{x}}}=  \vec{a}_{I\setminus F_{\vec{x}}} |\vec{x}^{\perp}_{I\setminus F_{\vec{x}}} = \vec{x}_{I\setminus F_{\vec{x}}})
        \end{align}
where 
        \begin{align}
    P^{S'_{\infty}}(\boldsymbol{\alpha}_{F_{\vec{x}}}, \vec{\beta}_{I\setminus F_{\vec{x}}}|\vec{x}^{\perp}_{I \setminus F_{\vec{x}}})   &= \sum_{\mathclap{\substack{\alpha_{x^{j}_{i'}}: x^j_{i'} \neq x_i \\
   \beta_i : i \in F_{\vec{x}} }}}P^{S'_{\infty}} (\boldsymbol{\alpha}, \vec{\beta}|\vec{x}^{\perp}).
    \end{align}

    \item \label{Theorem:PartialAnalogueofFineINEQUALITIES} The behaviour $\wp$ satisfies all the inequalities valid for the polytope $\mathbf{PD}(S,M')$. Specifically $\wp$ satisfies any finite set of inequalitites which form an $H$-representation of $\mathbf{PD}(S,M')$.

    \item \label{Theorem:PartialAnalogueofFineFACTORIZABILITY} The behaviour $\wp$ is partially factorizable with respect to $M'$, that is, it admits a model of the form 
    \begin{align}
        \wp(\vec{a}|\vec{x}) = \int_{\Lambda} \prod_{i\in F_{\vec{x}}} P(a_i| x_i, \lambda ) P(\vec{a}_{I\setminus F_{\vec{x}}}|\vec{x}_{I\setminus F_{\vec{x}}}, \lambda) p(\lambda),
    \end{align}
    with $p(\lambda) \geq 0$ and $\int_{\Lambda}p(\lambda) = 1$.

    \item \label{Theorem:PartialAnalogueofFineEQUIVALECECLASS} Any of the statements \ref{Theorem:PartialAnalogueofFinePD}-\ref{Theorem:PartialAnalogueofFineFACTORIZABILITY} hold with $M'$ replaced by any collection $M''$ such that  $\mathrm{MSF}(\mathbf{PD}(S,M')) = \mathrm{MSF}(\mathbf{PD}(S,M''))$.
\end{enumerate}
}

\begin{proof}
    The equivalences between \ref{Theorem:PartialAnalogueofFinePD}, \ref{Theorem:PartialAnalogueofFineINEQUALITIES} and \ref{Theorem:PartialAnalogueofFineFACTORIZABILITY} have already been established: they are immediate corollaries of Theorem \ref{convPPNSisapolytopeTheorem} and Definition \ref{PartiallyDeterministicPolytopeDefinition} by which  partially deterministic behaviour  form the polytope $\mathbf{PD}(S,M')$, and Theorem \ref{Theorem:PartiallyFactorizableEqualsPartiallyDeterministicPolytope} which showed the equivalence $\mathbf{PD}(S,M') = \mathbf{PF}(S,M')$ with partial factorizability. The statement \ref{Theorem:PartialAnalogueofFineEQUIVALECECLASS}, holds by appeal to Theorem \ref{Theorem:EquivalenceClassesofPDP's} where partially deterministic polytopes having the same solid fragment was pointed out necessary and sufficient for the equivalence of those polytopes. 
    
    It is therefore sufficient to show that any of those conditions is equivalent to \ref{Theorem:PartialAnalogueofFineStrechedDistribution}. We will prove that \ref{Theorem:PartialAnalogueofFinePD} $\Leftrightarrow \ref{Theorem:PartialAnalogueofFineStrechedDistribution}$.

    Suppose then that $\wp$ admits a partially deterministic model in the sense of \ref{Theorem:PartialAnalogueofFinePD}, which means that $\wp \in \mathbf{PD}(S,M')$. Let $S', S'^{\perp}$ be a bipartition of $S$ as defined by $M'$ and the complementary collection $M'^{\perp}$ with $M_i'^{\perp} = M_i\setminus M_i'$ for all $i\in I$. By Theorem \ref{TheoremPDPProduct} 
    \begin{align}
        \mathbf{PD}(S,M') = \mathrm{conv}[\mathbf{B}(S') \odot \mathbf{NS}(S'^{\perp})],
    \end{align}
    so that 
    \begin{align}
        \wp = \sum_{rk} t_{rk} D_r^{S'} \odot P_k^{S'^{\perp}}, \label{eQUATION:bEHAVIOURPRODUCTinProofofaNALOGUEfiNE}
    \end{align}
    with $D_r^{S'} \in \mathrm{Ext}( \mathbf{B}(S'))$ and $P^{S'^{\perp}} \in \mathrm{Ext}(\mathbf{NS}(S'^{\perp}))  \forall r,k $. In particular, by Fine's theorem (Theorem \ref{FinesTHRM}), each $D_r^{S'}$ can be recovered from a single distribution $\widetilde{P}_r^{S'_{\infty}}(\boldsymbol{\alpha})$, with $\boldsymbol{\alpha} \in O^{S'_\infty}$  the outcomes in the trivialized scenario $S'_{\infty}$, via
    \begin{align}
        D^{S'}_r(\vec{a}'|\vec{x}') = \widetilde{P}_r^{S'_{\infty}}(\vec{\alpha}_{\vec{x}'} = \vec{a}'), 
    \end{align}
    where 
    \begin{align}
        \widetilde{P}_r^{S'_{\infty}}(\boldsymbol{\alpha}_{\vec{x}'} = \vec{a}') = \sum_{\alpha_{x_i'^j} : x_i' \neq x_i'^j} \widetilde{P}^{S'_{\infty}}(\boldsymbol{\alpha}).
    \end{align}
Putting these identitites to Eq.~\eqref{eQUATION:bEHAVIOURPRODUCTinProofofaNALOGUEfiNE} it follows, by the definition of the behaviour product, that 

\begin{align}
\begin{split}
    \wp(\vec{a}|\vec{x}) &= \sum_{rk} t_{rk} \widetilde{P}_r^{S'_{\infty}}(\boldsymbol{\alpha}_{\vec{F}} = \vec{a}_{F_{\vec{x}}}) \\
     &\times P_k^{S'^\perp}(\vec{\beta}_{I \setminus F_{\vec{x}}} = \vec{a}_{I\setminus F_{\vec{x}}}|\vec{x}'^{\perp}_{I\setminus F_{\vec{x}}} = \vec{x}_{I\setminus F_{\vec{x}}}),
    \end{split}
\end{align}
or equivalently,  that the collection $P^{S'_{\infty}}$ of distributions defined by
\begin{align}
    P^{S'_{\infty}}(\boldsymbol{\alpha}, \vec{\beta}|\vec{x}'^{\perp}) = \sum_{rk}t_{rk} \widetilde{P}^{S'_{\infty}}_r(\boldsymbol{\alpha}) P_k^{S'^{\perp}}(\vec{\beta} |\vec{x}'^{\perp})
\end{align}
has the desired properties, proving the claim. 

For the converse, note that one can get a partially deterministic model for $\wp$ from $P^{S'_{\infty}}$ simply from the identity
\begin{align}
    P^{S'_{\infty}}(\boldsymbol{\alpha}, \vec{\beta}|\vec{x}'^{\perp}) &= P^{S'_\infty}(\vec{\beta}|\vec{x}'^{\perp}, \boldsymbol{\alpha})P^{S'_{\infty}}(\boldsymbol{\alpha}|\vec{x}'^{\perp})\\
    &=P^{S'_\infty}(\vec{\beta}|\vec{x}'^{\perp}, \boldsymbol{\alpha})P^{S'_{\infty}}(\boldsymbol{\alpha})
\end{align}
which immediately implies that a partially deterministic model of the form
\begin{align}
\begin{split}
    \wp(\vec{a}|\vec{x}) 
    &= \sum_{\boldsymbol{\alpha}} D^{S'}(\vec{a}_{F_{\vec{x}}} = \boldsymbol{\alpha}_{F_{\vec{x}}}|\boldsymbol{\alpha}) \\
    & \times P^{S'_{\infty}}(\vec{a}_{I\setminus F_{\vec
    x}} = \beta | \vec{x}_{I\setminus F_{\vec
    x}} = \vec{x}'^{\perp}_{I\setminus F_{\vec{x}}}, \boldsymbol{\alpha}) P^{S'_{\infty}}(\boldsymbol{\alpha})
    \end{split}
\end{align}
can be constructed. Here, $P^{S'_{\infty}}(\boldsymbol{\alpha})$ is essentially the convex weight of a parameter $\boldsymbol{{\alpha}}$ which determines outcomes for the inputs in $M'$, and hence $\wp\in \mathbf{PD}(S,M')$ as claimed. 
\end{proof}

\normalfont

We have formatted Theorem \ref{Theorem:PartialAnalogueofFine} as an 'analogue' of Fine's theorem, owing to the evident similarities in with respect to the formulation of Fine's theorem presented in our Theorem \ref{FinesTHRM} (see also Remark \ref{Remark:FinestheoremSatisfiesBell}). We emphasize, however, that the result  may, in fact, be viewed as a strict generalization of the original result for more than one reason. Firstly, omitting statement \ref{Theorem:PartialAnalogueofFineEQUIVALECECLASS}, the formulation of Fine's theorem (in the sense of our Theorem \ref{FinesTHRM}) can be recovered exactly for the special case $M' = M$. Thus, the result of Theorem \ref{Theorem:PartialAnalogueofFine} subsumes it. Secondly, when statement \ref{Theorem:PartialAnalogueofFineEQUIVALECECLASS} which takes care of the equivalence classes of the partially deterministic polytopes is taken into account, novel nontrivial equivalences can be added to the list of statements in Theorem \ref{FinesTHRM}. We can therefore extract a strengthened result, which we state below.

\theorem{\label{theorem:AstrengthenedFine'sTheorem}(A strengthened Fine's theorem) \\
Let $S = (I,M,O)$ be a scenario, and $M'$ a collection of subsets $M_i' \subset M_i$ for all $i\in I$. The following statements are equivalent.

\begin{enumerate}[label=(\roman*), ref=\ref{theorem:AstrengthenedFine'sTheorem}.\roman*]
\item \label{Theorem:Strengthenedfine,NORMALFINE} All the statements of Fine's theorem (Theorem \ref{FinesTHRM}) are valid for the behaviour $\wp$.

\item \label{StrenghtenedFinePDmodel} The behaviour $\wp$ admits a (genuine) partially deterministic model of the form in Definition \ref{PartiallyDeterministicPolytopeDefinition} with respect to any collection $M'$ of subsets $M_i' \subset M_i$ so as long as $|M_i'| \geq 2$ for at most one $i \in I$. 

\item \label{StrengthenedFinePartiallyjoint} There exists a collection of distributions $P^{S'_{\infty}}(\boldsymbol{\alpha}, \vec{\beta}|\vec{x}^{\perp})$, with each distribution being over all the outcomes for all possible inputs and agents in the subscenario $S' = (I',M',O')$ encoded in $\boldsymbol{\alpha}$, and over  outcomes $\vec{\beta}$ of the agents in scenario $S'^{\perp} = (I'^{\perp},M'^{\perp}, O'^{\perp})$ in the context $\vec{x}'^{\perp}$ with at most one agent $i\in I'^{\perp}$ in the scenario $S'^{\perp}$ having $|M_i'^{\perp}|\geq 2$. This collection can be used to recover the entire behaviour $\wp$ by marginalization as in 
 \begin{align}
        \wp(\vec{a}|\vec{x}) &= P^{S'_{\infty}}(\vec{\alpha}_{F_{\vec{x}}} = \vec{a}_{F_{\vec{x}}}, \vec{\beta}_{I\setminus F_{\vec{x}}}=  \vec{a}_{I\setminus F_{\vec{x}}} |\vec{x}^{\perp}_{I\setminus F_{\vec{x}}} = \vec{x}_{I\setminus F_{\vec{x}}}).
        \end{align}

\item \label{StrengthenedFinePartiallyFactorizable} The behaviour $\wp$ admits a (genuine) partially factorizable model of the form in Definition \ref{Definition:PartialFactorizability} with respect to any collection $M'$ of subsets $M_i' \subset M_i$ so as long as  as $|M_i'| \geq 2$ for at most one $i \in I$.

\end{enumerate}
}
\begin{proof}
    This set of equivalences is obtained as a special case of Theorem \ref{Theorem:PartialAnalogueofFine}, by application of Theorem \ref{TheoremMainPDequalBellTHRM}, where the class of partially deterministic polytopes corresponding to the Bell polytope was derived.
\end{proof}

\normalfont

We now draw particular attention to equivalence of Theorem \ref{FinesTHRM} and statement \ref{StrengthenedFinePartiallyjoint}, which establishes the existence of a certain family of 'partially context-dependent' distributions as interchangeable with the behaviour admitting a Bell-local model. It is easily seen that this statement recovers the condition for sufficiency of the existence of distributions over outcomes of triples of inputs established by Fine \cite{Fine1982} for the CHSH-experiment, as stated in Remark \ref{Remark:RemarkonFinesThrm}. Thus, our statement subsumes that condition of Fine's original theorem, and generalizes it to arbitrary numbers of inputs and parties in a novel form. We can immediately point out a few reasons of why this is  significant. Firstly, it is clear that in more general bipartite scenarios, that is in scenarios with $M_i \geq 3$ for both parties,  the statement \ref{StrengthenedFinePartiallyjoint} demands more than simply the existence of distributions over triples of observables (for any $M'$). Therefore we obtain an alternative, structurally based,  perspective to the point made by  Garg and Mermin \cite{Garg1982, Garg1982b} that the said condition should not, and does not generalize to more general bipartite scenarios. On the other hand, the equivalence of existence of triples in  Fine's original construction \cite{Fine1982} has been used also in proofs that the CHSH-inequalities (for all pairs of pairs of inputs) are necessary and sufficient in any bipartite scenario with two inputs on one side, and $\mathbb{N} \ni |M_i| $-inputs on the other. Since our result applies to arbitrary numbers of parties, it is conceivable that  our generalization could provide new insight into the question as to when novel forms of Bell-inequalities arise in more general scenarios, or whether it is sufficient to consider such 'permutations' of ones derived in a strictly smaller scenario. 

From a purely mathematical perspective, Theorems \ref{Theorem:PartialAnalogueofFine} and \ref{theorem:AstrengthenedFine'sTheorem} may be viewed as being related to forms of (marginal) probability extension problems which have been independently studied  in fields of science not related to quantum physics see e.g. Ref.~\cite{Vorobeev1962}. We therefore believe that these results, in their purely mathematical form,  may be of significance for broad classes of problems which concern joint probability distributions with marginal compatibility constraints of certain kinds. 

From a physical perspective, the existence of joint distributions over different inputs of the same party is  closely related to the question of joint measurability (see e.g. \cite{Fine1982, Fine1982b, Wolf2009}) of the measurements representing those inputs. From Theorem \ref{Theorem:PartialAnalogueofFine} it is possible to extract conditions for a given subset of inputs to be compatible with  joint measurability, corresponding to bounds on a certain kind of partially deterministic polytope, while staying agnostic about joint measurability of the others. This suggests a widely applicable approach for probing joint measurability of subsets of observables in a device independent way. This point is closely related to the work of  Ref.~\cite{Quintino2019} which begins their investigation  of device independent tests of 'structures of incompatibility' by considering precisely the partial uncorrelatedness conditions (which they term partially local sets) which form partially deterministic polytopes when the nondeterministic inputs are merely constrained by no-signalling. Our results could provide some general structural insight into the study of those kinds of questions, though we have not provided any general constructions for inequalities.

Given the significance of Fine's theorem  in some approches to the study of the notions of  noncontextuality and nonlocality (see, for example Refs.~\cite{Abramsky2011, Santos2021, BudroniContextualityreview2022}) specifically we believe that these results may find use in those areas of quantum foundational investigations too. Since not all marginal compatibility conditions in non-contextuality scenarios may be realised in Bell-scenarios however \cite{BudroniContextualityreview2022}, there may be room for further generalization of our results to some 'partially noncontextual' analogue equivalent to the notion of full non-contextuality, as well. We leave a more thorough investigation of the possibility of this kind of extension, and its possible physical significance,  for future work.

In Section \ref{Section:Applications}, we will demonstrate the usefulness of these results by presenting classes of situations with seemingly completely different physical motivations where partially deterministic polytopes, in the various equivalent forms it may be defined, arise.

\section{Example Applications \label{Section:Applications}}

The partially deterministic polytopes investigated in Section \ref{Section:PartialDeterministicMathsSection} generalize the bipartite structures introduced by Woodhead \cite{Woodhead2014} under the same name.  Woodhead discussed the significance of those  bipartite polytopes from the perspective of device-independent randomness certification in the presence of an adversary constrained by no-signalling. 

 More recent works have explicitly shown that violation of a Bell inequality is generally necessary but not sufficient for device-independent certification of randomness in multi-input scenarios  \cite{Zhang2025, ramanathan2024maximumquantumnonlocalitysufficient}.  Ref.~\cite{ramanathan2024maximumquantumnonlocalitysufficient} in particular (see the supplemental material E) made use of the bipartite partially deterministic polytopes of Woodhead \cite{Woodhead2014} and  the complete solution of the bipartite three-input Local Friendliness polytope of Ref.~\cite{Bong2020} in their analysis. Ref.~\cite{Zhang2025} also essentially writes down, in the supplemental material,  a partially deterministic model for the case where no randomness is certified. The fact that violations of the weaker notion of partial determinism (as opposed to Bell-locality) is relevant for  certifying randomness has also been acknowledged before see e.g. \cite{Barret2006}, but in cases (the bipartite two-input case) where the distinction disappears. Of course, a distinction also arises whether one assumes that the distributions are furthermore quantum modelable, or just compatible with no-signalling. Nonetheless,  we believe the general results of this work could provide some further insight into questions as to under what constraints randomness may be certified.

The bipartite  polytopes of Woodhead \cite{Woodhead2014} were recognized to be closely related to objects termed 'Local Friendliness polytopes' \cite{Bong2020}, which arose independently in investigations of extensions of Wigner's friend scenario \cite{Bong2020} (see acknowledgements of Ref.~\cite{Bong2020}) thus establishing a kind of common mathematical structure underlying them. Other more recent works, such as \cite{Utreras-Alarcon2024, Haddara2025} which considered generalizations of the Local Friendliness result, have also pointed out that the Local Friendliness polytopes can be viewed as instances of, or generalizations of, Woodheads partially deterministic polytopes. 

In summary, representative cases of different situations where some partially deterministic polytopes, or specific features of them,  have been identified as being of physical interest can therefore already be, and have already been, pointed out in the literature. 

In this section, we present (or propose) another variety of different physically motivated scenarios where classes of partially deterministic polytopes arise as the objects of interest and therefore posit, that partial deterministic polytopes provide a  unifying mathematical language to talk about the properties of these seemingly distinct forms of physical behaviour.   This identification is made possible by the fully general approach taken in Section  \ref{Section:PartialDeterministicMathsSection}, which applies to arbitrary number of parties and subsets of deterministic measurements in particular, which was not established in previous works, such as \cite{Woodhead2014,Bong2020, Haddara2025}.

In Section \ref{Section:device-independentinseparabilitywitnesses}, we show that a class of partially deterministic polytopes form device independent multipartite-inseparability witnesses of specific types. We also show, how unions, and convex-hulls of the unions of these partially deterministic polytopes attain relevance for entanglement certification tasks. In Section \ref{Section:BroadcastLocality} we show that the \emph{broadcast-local} sets of Refs.~\cite{Bowles2021, Boghiu2023} in certain types of scenarios are precisely identifiable as partially deterministic polytopes with certain structure. We also identify a limit to the applicability of partially deterministic polytopes, in the contexts of entanglement certification and broadcast non-locality, by  provididing example cases where more general notions of composability become relevant.  Finally, in Section \ref{Section:GeneralSequentialWignerSection} we find a one-to-one correspondence between partially deterministic polytopes and Local Friendliness scenarios by extending the investigation of the sequential Wigner's friend scenario introduced in Ref.~\cite{Utreras-Alarcon2024} to the multiparite-multi-input case. Therefore, for the sequential Wigner's friend scenario, the sets of interest from the perspective of the Local Friendliness no-go result \cite{Bong2020} are, in a sense, completely characterized by our results.

\subsection{\label{Section:device-independentinseparabilitywitnesses}Device-independent inseparability witnesses}
Let $\mathbb{S}(\mathcal{H})$ denote the quantum state space over the Hilbert space $\mathcal{H}$, that is, the set of positive trace-one linear operators acting on $\mathcal{H}$.

A quantum state $\rho \in \mathbb{S}(\mathcal{H}_{\vec{A}})$ on a tensor product Hilbert space $\mathcal{H}_{\vec{A}} = H_{A_1} \otimes \mathcal{H}_{A_2} \otimes \ldots \otimes \mathcal{H}_{A_{|I|}}$ is termed separable (or sometimes emphatically fully separable \cite{Horodecki2009}), if and only if it decomposes into a mixture\footnote{To be precise, the condition that the state $\rho$ can be \emph{approximated} by a convex sum of product states is sometimes stated instead \cite{Werner1989}.  We omit this detail in order to keep the presentation as simple as possible.} of product states as in 

\begin{align}
    \rho =  \sum_{k \in K} \eta_k\rho^{A_1}_k \otimes \rho^{A_2}_k \otimes \ldots \otimes\rho^{A_{|I|}}_k, \label{Equation:SeparablestatesEquation}
\end{align}
with each $\rho_k^{A_i}\in \mathbb{S}(\mathcal{H}_{A_i})$, $\eta_k \geq 0$ for all $k\in K$ and $\sum_{k\in K} \eta_{k}=1$. The set of fully separable states will be denoted by $\mathbf{SEP}(\mathcal{H}_{\vec{A}})$. If a fully separable decomposition is not admissible, the state $\rho$ is entangled. It is well known that entanglement is necessary for a violation of a Bell-inequality \cite{Brunner2014}. Indeed, if the state $\rho$ were separable, then a quantum behaviour $\wp \in \mathbf{Q}(S)$ would decompose as 
\begin{align}
    \wp(\vec{a}|\vec{x}) &= \mathrm{tr}[\rho M_{a_1|x_1} \otimes M_{a_2|x_2} \otimes \ldots \otimes M_{a_{|I|} |x_{|I|}}] \\
    &= \sum_{k\in K} \eta_k \left[\prod_{i \in I} \mathrm{tr}[\rho^{A_i}_k M_{a_i|x_i}]\right] \\
    &= \sum_{k\in K} \eta_k \left[ \prod_{i \in I} P_k(a_i|x_i) \right] \in \mathbf{B}(S).
\end{align}

 For this reason, separable states have also been termed ``classically correlated states'' \cite{Werner1989}. It is important to note that the converse is significantly more subtle.  While it is generally true that for pure entangled states one can indeed find measurements that lead to a Bell-inequality violation \cite{Gisin1992, Popescu1992, Gachechiladze2017} there are examples of mixed states which are entangled, but cannot be used to violate any Bell inequality \cite{Werner1989,Barret2002}. In this sense, entanglement should not be taken to be a synonym for 'Bell-nonlocality' and similarly, even if a the behaviour assumed to be obtainable from a quantum state were demonstrably LHV-modelable, one ought not to assume that the underlying quantum state is necessarily separable. 

From a violation of a Bell inequality one can, however, rightfully infer the presence of entanglement, irrespectively of whether the dimensions of the underlying quantum systems or details of the measurements are known or not. Therefore, a Bell inequality can be understood as an example of a device independent quantum state inseparability witness. 
 
 In multipartite $|I| \geq 3$ cases the concepts of separability and entanglement attain further nyance. Namely, while an entangled multipartite quantum state may not admit a decomposition of the form of Eq.~\eqref{Equation:SeparablestatesEquation}, it may be the case that the entanglement is distributed only over some disjoint partitions of parties. More generally, at each round of the experiment a state with entanglement distributed differently could be sent by the source, leading to mixtures of distinctly entangled states. Evidently, the violation of a Bell inequality alone does not provide information about the inseparability structure of the states in such cases.

 Our purpose in this work is not to approach the issue of inseparability certification in full generality, nor to comprehensively discuss the various forms in which inseparability may manifest in a quantum system. Rather we point out, that certain classes of partially deterministic polytopes, or more precisely the inequalities bounding them, can be understood to witness specific forms of inseparability. These objects can therefore be  utilised to provide information about the structure of the entangled states in multipartite Bell scenarios which can not be obtained by merely observing a violation of a Bell inequality.

Let us then introduce a specific sub-class of so-called \emph{partially separable} quantum states \cite{Horodecki2009}.

\definition{(Separability with respect to a subset)\label{Definition:SeparablewithrespectToAsubsetofPartiesQstate}
Let $\rho \in \mathrm{S}(\mathcal{H}_{\vec{A}})$ be an $|I|$-partite quantum state.  Then $\rho$ is termed separable with respect to a subset $I' \subset I$ if and only if it can be decomposed as
\begin{align}
     \rho = \sum_{k\in K} \eta_k [\bigotimes_{i \in I' }\rho^{A_i}_k] \otimes \rho^{\vec{A}_{ I \setminus I'}}_k,
 \end{align}
 with $\eta_k \geq 0$ $\forall k$, $\sum_{k\in K}\eta_{k}= 1 $, $\rho_k^{A_i} \in \mathbb{S}(\mathcal{H}_{A_i})$  and $ \rho_k^{\vec{A}_{I\setminus I'}} \in \mathbb{S}(\mathcal{H}_{\vec{A}_{I\setminus I'}}) = \mathbb{S}(\bigotimes_{i \in I\setminus I'} \mathcal{H}_{A_i})$ for each $k\in K$. The set of $I'$-separable states is denoted by $\mathbf{SEP}(\mathcal{H}_{\vec{A}},I')$. 
}\\

\normalfont
A state $\rho \notin \mathbf{SEP}(\mathcal{H}_{\vec{A}},I')$  shall be termed an $I'$ -inseparable state. Since the condition $I' = \emptyset$ in Definition \ref{Definition:SeparablewithrespectToAsubsetofPartiesQstate} imposes no constraints, we rule this case out of the discussion as trivial. Furthermore if $|I| = 2$, then  any nontrivial $ \emptyset \neq I'\subset I$ reduces to the concept of full separability of Eq.~\eqref{Equation:SeparablestatesEquation}. Therefore we focus the exploration of $I'$-separability to multipartite $|I|\geq 3$ scenarios in particular.

From Definition \ref{Definition:SeparablewithrespectToAsubsetofPartiesQstate} it also follows that if $I'' \subset I'$ then $\mathbf{SEP}(\mathcal{H}_{\vec{A}}, I') \subset \mathbf{SEP}(\mathcal{H}_{\vec{A}}, I'')$. This relation is naturally obeyed on the level of behaviour obtainable from  quantum states in the respective sets as well. In fact, the following theorem concerning the type of behaviour obtainable by local measurements on an $I'$-separable state is immediately established.

\theorem{\label{Theorem;I'separableQstateisinBstimesQs}Let $S=(I,M,O)$ be a Bell scenario and $\rho \in \mathbb{S}(\mathcal{H}_{\vec{A}})$ be separable with respect to $I'$. Then a quantum behaviour $\wp^{\rho}$ obtainable from the state $\rho$ in a scenario $S$ is an element in the partially factorizable quantum set $\mathrm{conv}[\mathbf{B}(S^{I'}) \odot \mathbf{Q}(S^{I'^{\perp}})]$, where $S^{I'}, S^{I'^{\perp}}$ is a bipartition of $S$ defined by the collections $M^{I'}, M^{I'^{\perp}}$ with $M_i^{I'} = M_i$ for all $i\in I'$ and $M_i^{I'}=\emptyset$ otherwise. }
\begin{proof}
    Immediate. If $\rho$ is separable with respect to $I'\subset I$, then $\wp^{\rho}$ decomposes as
    \begin{align}
      \wp^{\rho}(\vec{a}|\vec{x}) &=   \mathrm{tr}\left[\sum_{k}\eta_k(\prod_{i\in I'} M_{a_i|x_i}\rho^{A_i}_k )\otimes  M_{a_{I\setminus I'}|x_{I\setminus I'}}\rho^{A_{I\setminus I'}}_k \right]\\
      &= \sum_{k\in K}\eta_{k}\prod_{i\in I'}P_k(a_i|x_i) \cdot P_k^{Q(S_{|M^{I'^{\perp}}})}(a_{I\setminus I'}|x_{I\setminus I'})\\
      &= \sum_{k\in K}\eta_k \prod_{i\in F_{\vec{x}}} P_{k}(a_i|x_i) \cdot P_k^{Q(S^{I'^{\perp}})}(\vec{a}_{I\setminus F_{\vec{x}}}|\vec{x}_{I\setminus F_{\vec{x}}})\label{Equation:EqbellproductQuantuminQuantumSeparability}
      \end{align}
      where on the last line $F_{\vec{x}}=\{i\in I| x_i\in M_i^{I'}\}$, as before. 
      Equation \eqref{Equation:EqbellproductQuantuminQuantumSeparability} means that 
      \begin{align}
          \wp^{\rho} \in  \mathrm{conv}[\mathbf{B}(S^{{I'}}) \odot \mathbf{Q}(S^{{I'^{\perp}}})],
    \end{align}
as claimed.
\end{proof}

\normalfont

Theorem \ref{Theorem;I'separableQstateisinBstimesQs} can be relaxed to concern the appropriate partially deterministic polytopes instead, as in the corollary below. 

\corollary{\label{Corollary:PDPwitnessesI'inseparability}Let $S=(I,M,O)$ be a Bell scenario, $I' \subset I$ and $\mathbf{PD}(S,M^{I'})$ a partially deterministic polytope defined by $M_i^{I'} = M_i$ for all $i\in I'$ and $M_i^{I'}=\emptyset$ otherwise. If $\wp \notin \mathbf{PD}(S,M^{I'})$ then $\wp$ cannot be obtained by performing local measurements on a quantum state separable with respect to $I'$. }
\begin{proof}
    Follows straightforwardly from definitions. For completeness, let $S^{I'}, S^{I'^{\perp}}$ be a bipartition of $S$ defined by the collections $M^{I'}$ and $M^{I'^{\perp}}$. Since always $\mathbf{Q}(S)\subset \mathbf{NS}(S)$ one gets by Lemma \ref{LemmaBheaviourproductsubsetlemma} and Theorem \ref{TheoremPDPProduct}
    \begin{align}
        \mathrm{conv}[\mathbf{B}(S^{I'}) \odot \mathbf{Q}(S^{I'^{\perp}})] \subset \mathbf{PD}(S,M^{I'}).\label{EqautionConvBprodQisSubsetofPD}
    \end{align}
    On the other hand, by Theorem \ref{Theorem;I'separableQstateisinBstimesQs}, if $\rho$ is separable with respect to $I'$ then any behaviour $\wp^{\rho}$ obtainable from it by local measurements is in $\mathrm{conv}[\mathbf{B}(S^{I'}) \odot \mathbf{Q}(S^{I'^{\perp}})]$. Therefore if $\wp \notin \mathbf{PD}(S,M^{I'})$, it follows by inspection of Eq.~\eqref{EqautionConvBprodQisSubsetofPD} that the quantum state underlying $\wp$ cannot be separable with respect to $I'$.
\end{proof}

\normalfont

Corollary \ref{Corollary:PDPwitnessesI'inseparability}  establishes that the inequalities bounding the partially deterministic polytope $\mathbf{PD}(S,M'^{I'})$ witness $I'$-inseparability of a quantum state in a device-independent manner. Theorem \ref{Theorem:QuantumsetandPDP} shows that, omitting the extreme case where $I' = \emptyset$, there are always at least some quantum behaviour outside of $\mathbf{PD}(S,M^{I'})$ and hence the polytope provides at least some nontrivial witnesses. 

We may also consider situations where it is desired that the state $\rho \in \mathbb{S}(\mathcal{H}_{\vec{A}})$ is not $I'$ separable with respect to multiple distinct subsets $I'_1, \ldots, I_n'$ simultaneously.

\definition{\label{Definition:InseparabilityWRTcollectionofSubsets}(Strong separability with respect to a collection of subsets)\\
Let $\rho \in \mathbb{S}(\mathcal{H}_{\vec{A}})$ be an $|I|$-partite quantum state and  $\mathcal{I} \subset 2^{I}$ be some collection of subsets of $I$. Then $\rho$  is termed strongly separable with respect to the collection $\mathcal{I}$, if and only if 
\begin{align}
    \rho \in \mathbf{SEP}(\mathcal{H}_{\vec{A}}, I') \hspace{0.2cm} \forall I' \in \mathcal{I}, 
\end{align}
or equivalently 
\begin{align}
    \rho \in \bigcap\limits_{I' \in \mathcal{I}} \mathbf{SEP}(\mathcal{H}_{\vec{A}}, I').
\end{align}
The set of strongly $\mathcal{I}$-separable states is denoted by $\mathbf{SSEP}(\mathcal{H}_{\vec{A}}, \mathcal{I}).$ A state $\rho \notin \mathbf{SSEP}(\mathcal{H}_{\vec{A}}, \mathcal{I})$ is termed a weakly $\mathcal{I}$-inseparable. 
}\\
\normalfont

Note that if $|\mathcal{I}| = 1$ then strong $\mathcal{I}$-separability reduces to the notion of $I'$-separability of Definition \ref{Definition:SeparablewithrespectToAsubsetofPartiesQstate}, which is nontrivial only if $I' \neq \emptyset$. 

It is straightforward to extend Theorem \ref{Theorem;I'separableQstateisinBstimesQs} and Corollary \ref{Corollary:PDPwitnessesI'inseparability} to provide conditions for device independent weak $\mathcal{I}$-inseparability witnesses. Indeed, if $\rho \in \mathbf{SSEP}(\mathcal{H}_{\vec{A}}, \mathcal{I})$ and $\wp^{\rho}$ is a behaviour obtainable by performing local measurements on $\rho$, then from Theorem \ref{Theorem;I'separableQstateisinBstimesQs} and analogous arguments as in Corollary \ref{Corollary:PDPwitnessesI'inseparability}
\begin{align}
    \wp^{\rho}\in \bigcap_{I' \in \mathcal{I}} \mathrm{conv}[\mathbf{B}(S^{I'}) \odot \mathbf{NS}(S^{I'^{\perp}})] \subset \bigcap_{I' \in \mathcal{I}} \mathbf{PD}(S,M^{I'}). \label{Equation;IntersectionfofI'polytopes}
\end{align}

The intersection $\bigcap_{I' \in \mathcal{I}} \mathbf{PD}(S,M^{I'})$ in Eq.~\eqref{Equation;IntersectionfofI'polytopes}, as the intersection of convex polytopes, is itself a convex polytope. From Theorem \ref{TheoremPDpolytopeBasicInclusionTHRM} it follows that  $\mathbf{B}(S) \subset \bigcap_{I' \in \mathcal{I}} \mathbf{PD}(S,M^{I'}) $ for any $\mathcal{I}$, as each $\mathbf{PD}(S,M^{I'})$ always contains $\mathbf{B}(S)$. Furthermore it can be established that 
\begin{align}
    \mathrm{\mathrm{Ext}}(\mathbf{B}(S)) \subset \mathrm{\mathrm{Ext}}(\bigcap_{I' \in \mathcal{I}} \mathbf{PD}(S,M^{I'})).\label{Equation:extremepointsofBSarecontainedintheIntersectionofPD's}
\end{align}

This can be seen by use of the implication $\mathbf{B}(S) \subset \bigcap_{I' \in \mathcal{I}} \mathbf{PD}(S,M^{I'}) \subset \mathbf{NS}(S) $ obtainable from Theorem \ref{TheoremPDpolytopeBasicInclusionTHRM} and the fact that $\mathrm{\mathrm{Ext}}(\mathbf{B}(S)) \subset \mathrm{\mathrm{Ext}}(\mathbf{NS}(S))$ proved in Theorem \ref{TheoremExtremePointsInclusionBellPDPNS}. Equation \eqref{Equation:extremepointsofBSarecontainedintheIntersectionofPD's} then follows by appeal to Lemma \ref{LemmaExtremepointsimplication}.

    By definition, weak $\mathcal{I}$-inseparability can be witnessed by violation of any of the inequalities bounding some $\mathbf{PD}(S,M^{I'})$ where $I' \in \mathcal{I}$. In this sense, $I'$-inseparability of some $I' \in \mathcal{I}$ is sufficient to demonstrate that $\rho$ is not weakly separable with respect to $\mathcal{I}$. A stronger notion of inseparability, where the state $\rho$ is not separable with respect to any subset $I' \in \mathcal{I}$, can immediately be identified. 

\definition{($\mathcal{I}$-inseparability)\label{Definition;StronginseparabilityWRTasubsetofsets}\\
Let $\rho \in \mathbb{S}(\mathcal{H}_{\vec{A}})$ be an $|I|$-partite quantum state and  $\mathcal{I} \subset 2^{I}$ be some collection of subsets of $I$. Then $\rho$  is termed  inseparable with respect to the collection $\mathcal{I}$, or  $\mathcal{I}$-inseparable, if and only if
\begin{align}
    \rho \notin \mathbf{SEP}(\mathcal{H}_{\vec{A}}, I') \hspace{0.2cm} \forall I' \in \mathcal{I}, 
\end{align}
or equivalently that
\begin{align}
    \rho \notin \bigcup_{I' \in \mathcal{I}} \mathbf{SEP}(\mathcal{H}_{\vec{A}}, I').
\end{align}
}
\normalfont

Similar arguments as before can now be employed to establish classes of device-independent  $\mathcal{I}$-inseparability witnesses of a state $\rho$.  Namely,  if $\wp \in \mathbf{NS}(S)$ is some behaviour with the property that
\begin{align}
    \wp \notin \bigcup_{I' \in \mathcal{I}}\mathbf{PD}(S,M^{I'})
\end{align} 
then, as a consequence of what was shown in Corollary \ref{Corollary:PDPwitnessesI'inseparability},
\begin{align}
\wp \notin \bigcup_{I' \in \mathcal{I}} \mathrm{conv}[\mathbf{B}(S^{I'}) \odot \mathbf{Q}(S^{I'^\perp})]
\end{align}
and hence, by appeal to Theorem \ref{Theorem;I'separableQstateisinBstimesQs}, the behaviour $\wp$ can not be obtainable from a $\rho \in \bigcup_{I'\in \mathcal{I}} \mathbf{SEP}(\mathcal{H}_{\vec{A}},I')$. 

In order to demonstrate  $\mathcal{I}$-inseparability of the state underlying a behaviour $\wp$, it is sufficient to check that a violation of each individual polytope $\mathbf{PD}(S,M^{I'})$ for every $I' \in \mathcal{I}$ occurs.

 In general, the union of a family of convex polytopes need not be a convex polytope. Furthermore, it is worth noting that the convex hull of the union $\bigcup_{I' \in \mathcal{I}}\mathbf{PD}(S,M^{I'})$  is an object worthy of attention in itself. To illustrate this, let us identify another form of separability, which corresponds to the situation where entangled states of different structures are mixed in the experiment.

 \definition{(Weak $\mathcal{I}$-separability\label{Definition;WeakSeparabilityofQstate})\\
 Let $\rho \in \mathrm{S}(\mathcal{H}_{\vec{A}})$ be an $|I|$-partite quantum state and $\mathcal{I}\subset 2^{I}$ a collection of subsets of $I$.  Then $\rho$ is termed weakly separable with respect $\mathcal{I}$ if and only if 
\begin{align}
    \rho \in \mathrm{conv}\left( \bigcup_{I' \in \mathcal{I}} \mathbf{SEP}(\mathcal{H}_{\vec{A}}, I') \right).
\end{align}
The set of weakly $\mathcal{I}$-separable states is denoted by $\mathbf{WSEP}(\mathcal{H}_{\vec{A}}, \mathcal{I})$.
 }\\

 \normalfont

Intuitively, the set $\mathbf{SEP}(\mathcal{H}_{\vec{A}}, \mathcal{I})$ allows arbitrary quantum correlations between any parties in for at least some $I\setminus I' $, where $I'\in  \mathcal{I}$ while all the parties $I'$ are uncorrelated. The set $\mathbf{WSEP}(\mathcal{H}_{\vec{A}}, \mathcal{I}$) of weakly separable states on the other hand further allow for classical correlations between the parties.
 
 A weakly $\mathcal{I}$-separable quantum state $\rho$ can be decomposed as
 \begin{align}
     \rho =\sum_{I' \in \mathcal{I}} \eta^{I'} \rho^{I'},
 \end{align}
 with $\sum_{I' \in \mathcal{I}}\eta^{I'} = 1$, $\eta^{I' \geq 0}$ $\forall I'$ and $\rho^{I'} \in \mathbf{SEP}(\mathcal{H}_{\vec{\mathcal{A}}}, I')$ $\forall I' \in 
 \mathcal{I}$. Clearly for $I' \in \mathcal{I}$
 \begin{align}
   \mathbf{SEP}(\mathcal{H}_{\vec{A}}, I') \subset   \mathbf{SEP}(\mathcal{H}_{\vec{A}}, \mathcal{I}) \subset \mathbf{WSEP}(\mathcal{H}_{\vec{A}}, \mathcal{I}),
 \end{align}
and consequently any $\rho \notin \mathbf{WSEP}(\mathcal{H}_{\vec{A}}, \mathcal{I})$ is also strongly $\mathcal{I}$-inseparable, and inseparable with respect to any $I' \in \mathcal{I}$ so that a hierarchy of inseparability classes may be established. It is useful to give this form of inseparability its own name. 
 
\definition{(Strong $\mathcal{I}$-inseparability\label{Definition:GenuineInseparability})\\
Let $\rho \in \mathrm{S}(\mathcal{H}_{\vec{A}})$ be an $|I|$-partite quantum state and $\mathcal{I}\subset 2^{I}$ a collection of subsets of $I$.  Then $\rho$ is termed strongly inseparable with respect to $\mathcal{I}$, or strongly $\mathcal{
I}$-inseparable,  if and only if 
\begin{align}
    \rho \notin \mathbf{WSEP}(\mathcal{H}_{\vec{A}}, \mathcal{I}).
\end{align}}\\
\normalfont

Table~\ref{Table:SeparabilityandInseparabilityDefinitions} summarizes the different definitions of separability and inseparability considered in this section.

\tikzset{myarrow/.style={-stealth,shorten >=\spc, shorten <=2.5*\spc}}

\begin{table*}[ht!]
    \begin{tabular}{|c|c|c||c|c|}\hline
        Separability property& Terminology & Associated set & Converse & Terminology  \\\hline
        $\rho = \sum_k \eta_k\bigotimes_{i\in I}\rho^{A_i}_k $ & Fully separable & $\mathbf{SEP}(\mathcal{H}_{\vec{A}})$ & $\rho \notin \mathbf{SEP}(\mathcal{H}_{\vec{A}})$& Entangled\\\hline
         $\rho = \sum_{k\in K} \eta_k [\bigotimes_{i \in I' }\rho^{A_i}_k] \otimes \rho^{\vec{A}_{ I \setminus I'}}_k$ &  $I'$-separable& $\mathbf{SEP}(\mathcal{H}_{\vec
         A}, I')$ & $\rho \notin \mathbf{SEP}(\mathcal{H}_{\vec{A}}, I')$& $I'$-inseparable\\ \hline
         $\rho \in \bigcap\limits_{I' \in \mathcal{I}}\mathbf{SEP}(\mathcal{H}_{\vec{A}},I') $ & Strongly  $\mathcal{I}$-separable & $\mathbf{SSEP}(\mathcal{H_{\vec{A}}},\mathcal{I})$)& $\rho \notin \mathbf{SSEP}(\mathcal{H}_{\vec{A}},\mathcal{I})$& Weakly $\mathcal{I}$-inseparable\\\hline
         $\ \rho \in \bigcup\limits_{I' \in \mathcal{I}}\mathbf{SEP}(\mathcal{H}_{\vec{A}}, I')$ & $\mathcal{I}$-separable& $\mathbf{SEP}(\mathcal{H}_{\vec{A}}, \mathcal{I})$& $\rho \notin \mathbf{SEP}(\mathcal{H}_{\vec{A}}, \mathcal{I})$&  $\mathcal{I}$-inseparable \\\hline
         $\rho \in \mathrm{conv}[\mathbf{SEP}(\mathcal{H}_{\vec{A}}, \mathcal{I})]$& Weakly $\mathcal{I}-$separable & $\mathbf{WSEP}(\mathcal{H}_{\vec{A}}, \mathcal{I})$& $\rho \notin \mathbf{WSEP}(\mathcal{H}_{\vec{A}}, \mathcal{I})$& Strongly $\mathcal{I}-$inseparable \\\hline
    \end{tabular}
    \caption{A table summarizing the various quantum state separability and inseparability notions discussed in this section in the order of appearance. Here $I'\subset I$ is some fixed but arbitrary subset of agents and $\mathcal{I}\subset 2^{I}$ is some fixed but arbitrary collection of subsets of agents. If $I' \in \mathcal{I}$ above, then strong $\mathcal{I}$-separability implies $I'$-separability, in which case the separability notions may be reordered with respect to set inclusion by swapping the second and third rows. Similarly, a hierarchy of inseparability notions is obtained with strong $\mathcal{I}$-inseparability in particular implying all other forms of inseparability of the state. We focus our attention to device independent witnesses of $I'$-inseparability,  $\mathcal{I}$-inseparability and strong $\mathcal{I}$-inseparability, and relate them to partially deterministic polytopes of certain types. 
   }
   \label{Table:SeparabilityandInseparabilityDefinitions}
\end{table*}

Again, as in the cases of $I'$, weak $\mathcal{I}$ and $\mathcal{I}$-inseparability witnesses, it can be established by appeal to Theorem \ref{Theorem;I'separableQstateisinBstimesQs} and Corollary \ref{Corollary:PDPwitnessesI'inseparability} that 
$\rho \in  \mathbf{WSEP}(\mathcal{H}_{\vec{A}}, \mathcal{I})$ implies that a behaviour $\wp^{\rho} $ obtainable from the state $\rho$ with local measurements satisfies 

\begin{align}
    \wp^{\rho} &\in \mathrm{conv}\left[ \bigcup_{I' \in \mathcal{I}} (\mathrm{conv}(\mathbf{B}(S^{I'}) \odot \mathbf{Q}(S^{I'{^\perp}}) \right] \\ &\subset \mathrm{conv}\left[ \bigcup_{I' \in \mathcal{I}} \mathbf{PD}(S, M^{I'})) \right].
\end{align}
Therefore, the convex set $\mathrm{conv}\left[ \bigcup_{I' \in \mathcal{I}} \mathbf{PD}(S, M^{I'}) \right]$ provides a device-independent strong $\mathcal{I}$-inseparability witness in the sense that $\wp \notin \mathrm{conv}\left[ \bigcup_{I' \in \mathcal{I}} \mathbf{PD}(S, M^{I'}) \right] $ implies the behaviour $\wp$ is not obtainable from a weakly $\mathcal{I}$-separable quantum state. 

The following result is immediate.

\theorem{\label{Theorem:convexHullofUnionofI'PDP'sisapolytope}Let $S= (I,M,O)$ be a scenario and $\mathcal{I}\subset 2^{I}$ an arbitrary collection of subsets $I'$. Then, the convex set 
\begin{align}
    \mathrm{conv}[\bigcup_{I' \in \mathcal{I}} \mathbf{PD}(S,M^{I'})]
\end{align}
is a convex polytope. Furthermore, the set of is extreme points is exactly the union $\bigcup_{I'\in \mathcal{I}}\mathrm{Ext}(\mathbf{PD}(S,M'))$ of the extreme points of each partially deterministic polytope $\mathbf{PD}(S,M')$.
}\begin{proof}
    Note first that for any collection of sets $U_k$ it always holds that $
        \bigcup_{k}\mathrm{conv}[ U_k ]\subset \mathrm{conv}\left[\bigcup_k U_k \right]$
    and hence also
    \begin{align}
        \bigcup_{I' \in \mathcal{I}} \mathbf{PD}(S,M') &\subset \mathrm{conv}\left[\bigcup_{I' \in \mathcal{I}} \mathrm{Ext}(\mathbf{PD}(S,M^{I'}))\right]\\
        & \subset \mathrm{conv}\left[\bigcup_{I' \in \mathcal{I}} \mathbf{PD}(S,M') \right],
    \end{align}
    which implies, by taking the convex hull of both sides of the relations above that $
        \mathrm{conv}\left[\bigcup_{I' \in \mathcal{I}} \mathbf{PD}(S,M') \right] = \mathrm{conv}\left[\bigcup_{I' \in \mathcal{I}} \mathrm{Ext}(\mathbf{PD}(S,M^{I'}))\right].$

    Since a finite union of finite sets is itself a finite set, this already establishes that the set $\mathrm{conv}\left[\bigcup_{I' \in \mathcal{I}} \mathbf{PD}(S,M^{I'})\right] $ is a convex polytope, the extreme points of which are contained in the union of the extreme points of each polytope $\mathbf{PD}(S,M^{I'})$, that is 
    \begin{align}
        \mathrm{Ext}(\bigcup_{I' \in \mathcal{I}} \mathbf{PD}(S,M^{I'})) \subset \bigcup_{I'\in \mathcal{I}}\mathrm{Ext}(\mathbf{PD}(S,M^{I'})).
    \end{align}
     To see that in fact every point in the set $\bigcup_{I' \in \mathcal{I}} \mathrm{Ext}(\mathbf{PD}(S,M^{I'}))$ is extremal, it is sufficient to note that 
    \begin{align}
        \mathbf{PD}(S,M^{I'}) \subset \mathrm{conv}\left[\bigcup_{I' \in \mathcal{I}} \mathbf{PD}(S,M^{I'})\right] \subset \mathbf{NS}(S)
    \end{align}
   holds for every $I' \in \mathcal{I}$. And since by Theorem \ref{TheoremExtremePointsInclusionBellPDPNS} the subset relation $\mathrm{Ext}(\mathbf{PD}(S,M^{I'}))\subset \mathrm{Ext}(\mathbf{NS}(S))$ holds for all $I'\in \mathcal{I}$, an application of Lemma \ref{LemmaExtremepointsimplication} yields $\wp\in \mathrm{Ext}(\mathbf{PD}(S,M^{I'})) \Rightarrow \wp \in \bigcup_{I' \in \mathcal{I}} \mathrm{Ext}(\mathbf{PD}(S,M^{I'}))$  for every $I' \in \mathcal{I}$, or equivalently that 
   \begin{align}
      \bigcup_{I'\in \mathcal{I}}\mathrm{Ext}(\mathbf{PD}(S,M^{I'}))  \subset \mathrm{Ext}(\bigcup_{I' \in \mathcal{I}} \mathbf{PD}(S,M^{I'})),
   \end{align}
   completing the proof. 
\end{proof}
\normalfont

We can also demonstrate by means of specific scenario that a distinction between $\mathcal{I}$ and strong $\mathcal{I}$-inseparability witnesses phrased relative to the partial deterministic polytopes does indeed emerge. 

\lemma{\label{Lemma:ThreepartyPDPI'lemma}Let $S= (I,M,O)$ be a scenario with $|I|=3$. Let $\emptyset \neq I' \subset I$ be a non-empty subset of the parties. Then either  
\begin{enumerate}
    \item $|I'|=1$ and $\mathbf{PD}(S,M^{I'})$ is one of the non-Bell partially deterministic polytopes $\mathbf{PD}(S,M^{\{i'\}})$ which has all the inputs for one party $i'\in I$ deterministic.  \\
    \item $\mathbf{PD}(S,M^{I'}) = \mathbf{B}(S)$.
\end{enumerate}
}
\begin{proof}
    Omitted. This is a special case of Theorem \ref{TheoremMainPDequalBellTHRM}. See also Example \ref{Example:partialdeterminismInThreepartiteTwo-inputScenario} and Fig.~\ref{fig:TripartiteEquivalenceClasses}. 
\end{proof}

\theorem{\label{Theorem:unionof3partypdp'SiSNOTCONVEX}
Let $S= (I, M, O)$ be a scenario with $|I| = 3.$ Let $\mathcal{I} \subset 2^{I}$ be a collection of subsets of $I$ such that  $|I'| \neq 0$ for all $I' \in \mathcal{I}$. Then the union  $\bigcup_{I' \in \mathcal{I}}\mathbf{PD}(S,M^{I'})$ is not convex if and only if  $|I|-|I'| =2 $ for at least two $I' \in \mathcal{I}$.
}
\begin{proof}[Proof sketch.]
    We provide a detailed proof in Appendix \ref{appendixSection:ProofofProp}. Here we merely provide an outline of how the proof is constructed.

    The 'only if' is an immediate consequence of Lemma \ref{Lemma:ThreepartyPDPI'lemma}, by which the condition $|I|-|I'| = 2$ for at least two $I' \in \mathcal{I}$ is necessary for two distinct non-Bell partially deterministic polytopes to be included in the union. Since by definition every partially deterministic polytope is individually convex  and by Theorem \ref{TheoremPDpolytopeBasicInclusionTHRM} contains the Bell polytope, this is necessary for the possibility of having non-convexity of the union.
    
    To prove the 'if' claim, we show  that for any collection $\mathcal{I}$ compatible with the assumption in the three-partite case,  one can find a class of Bell-inequalities which are valid for one non-Bell polytope  polytopes $\mathbf{PD}(S,M^{I'})$ defined by the $I'\in \mathcal{I}$ with $|I|-|I'| = 2$, but is not valid for any of the other ones. Then, by taking suitable convex combinations of points of the different partially deterministic polytopes, one finds that their mixtures violate every such inequality for each polytope, witnessing that the convex combination of said points cannot be contained in any single one of the polytopes. 
\end{proof}

\normalfont
 
 We conjecture that the claim of Theorem \ref{Theorem:unionof3partypdp'SiSNOTCONVEX} generalises to cases with $I>3$ so that in general one may expect the sets $\bigcup_{I' \in \mathcal{I}} \mathbf{PD}(S,M^{I'})$ and its convex hull $\mathrm{conv}[\bigcup_{I' \in \mathcal{I}} \mathbf{PD}(S,M^{I'})]$ to provide different kinds of witnesses. A proof of that claim is, however, beyond the scope of this work.  

As stated before, by appeal to Theorem \ref{Theorem:QuantumsetandPDP} the set $\mathbf{PD}(S,M^{I'})$ provides witnesses $I'$-inseparability for at least some quantum states whenever $I' \neq \emptyset$. So far, however, we have not shown that the (at least sometimes strictly) larger sets $\bigcup_{I'\in \mathcal{I}}\mathbf{PD}(S,M^{I'})$ and $\mathrm{conv}[\cup_{I'\in \mathcal{I}}\mathbf{PD}(S,M^{I'})]$ witness any distinct forms of quantum nonseparability. For completeness, we therefore demonstrate by means of an example in the 3-party case that the objects  constructed in this manner may indeed witness inseparability of  distinct subsets of quantum states. The following definition is introduced for convenience. 

\definition{\label{Definition:maximalUnion}(Maximal collections and unions)\\
Let $S= I,M,O$ be an $|I|\geq 3$-partite scenario and $\mathcal{I} \subset 2^{I}$ a collection of subsets of the parties such that $I' \neq \emptyset$ for all $I' \in \mathcal{I}$. The collection $\mathcal{I}$ is referred to as a maximal collection if and only if it contains all the, and only the $|I|$ distinct singleton sets $I' = \{i \}$ with $i\in I$. The notation $\mathcal{I}^{\max}$ is used for a maximal collection. 

The set $ \bigcup_{I' \in \mathcal{I}} \mathbf{PD}(S,M^{I'})$} is referred to as the maximal union of  partially deterministic polytopes if and only if
\begin{align}
    \bigcup_{I' \in \mathcal{I}} \mathbf{PD}(S,M^{I'}) = \bigcup_{I'' \in 2^{I}\setminus \{ \emptyset \}} \mathbf{PD}(S,M^{I''}).\label{Equation:maximaluniuonofPDP's}
\end{align}
 Similarly, the sets  composed of partially separable quantum states $\bigcup_{I' \in \mathcal{I}}\mathbf{SEP}(\mathcal{H}_{\vec{A}}, I')$ and  of partially factorizable quantum behaviour $ \bigcup_{I'\in \mathcal{I}} [\mathrm{conv}[\mathbf{B}(S^{I'}) \odot \mathbf{Q}(S^{I'^{\perp}})]]$ are termed maximal unions, when they satisfy their respective analogue of Eq.~\eqref{Equation:maximaluniuonofPDP's}.
}\\
\normalfont

The relationship of maximality of a collection $\mathcal{I}^{\max}$ and of a union of sets in Definition \ref{Definition:maximalUnion} can be illustrated in the three-partite case; here, for example,  the union $\bigcup_{I'\in \mathcal{I}} \mathbf{PD}(S,M^{I'})$ is maximal if it contains the three polytopes which are deterministic with respect to only one party, since by Lemma \ref{Lemma:ThreepartyPDPI'lemma} every other polytope deterministic with respect to $I'$ is equal to the Bell-polytope. It is easy to see that the argument generalizes irrespectively of that fact, so that the union $\bigcup_{I'\in \mathcal{I}} \mathbf{PD}(S,M^{I'})$ is maximal, whenever the set $\mathcal{I}$ is maximal in the sense of containing the $|I|$ distinct singleton sets. Namely if $I' \subset I$ is a subset with $|I|-|I'| > |I|-1$, then by definition of maximality of the union, there exists at least one subset $\widetilde{I}' \in \mathcal{I}$ such that $I' \subsetneq \widetilde{I}'$ and hence by Theorem \ref{TheoremPDpolytopeBasicInclusionTHRM} at least one polytope $\mathbf{PD}(S,M^{\widetilde{I}'})$ such that $\mathbf{PD}(S,M^{I'}) \subset \mathbf{PD}(S,M^{\widetilde{I'}})$. Thus, in particular,  it is sufficient to only take the $|I|$ distinct polytopes which have the inputs of $i\in I$ deterministic in the union. Owing to Definition \ref{Definition:SeparablewithrespectToAsubsetofPartiesQstate} of $I'$-separability, similar conclusions hold for the sets comprised of partially separable quantum states and partially factorizable quantum behaviours. 

Before proceeding further with our discussion,  let us mention  results which apply to the  intersections $\mathbf{SSEP}(\mathcal{H}_{\vec{A}}, \mathcal{I}^{\max})$ and $\bigcap\limits_{I' \in \mathcal{I}^{\max}}\mathbf{PD}(S,M^{I'})$ obtained in Refs.~\cite{ Vertesi2012, Liang2020}. Namely, as a special case of a result in Ref.~\cite{Liang2020}, there are quantum-realizable behaviours $\wp^Q  \in \bigcap\limits_{I' \in \mathcal{I}^{\max}}\mathbf{PD}(S,M^{I'})$ in a tripartite scenario  $S$ with two dichotomic inputs per site, which violate a triparite Bell inequality. Hence $\mathbf{B}(S) \subsetneq \bigcap\limits_{I' \in \mathcal{I}^{\max}}\mathbf{PD}(S,M^{I'})$ in that scenario. This phenomenon, where a behaviour is 'separable with respect to any bipartition' but not fully separable was termed 'anonymous nonlocality' in \cite{Liang2020}, and it may be considered as a device-independent analogue of a related phenomenon on the level of quantum states called 'delocalized entanglement' \cite{diVincenzo2003}, where a quantum state is entangled, but is separable with respect to every nontrivial partition. Note that this tripartite scenario is within the scope of Example \ref{Example:partialdeterminismInThreepartiteTwo-inputScenario} illustrated in Fig.~\ref{fig:TripartiteEquivalenceClasses}, where three distinct equivalence classes of polytopes are included in the intersection\footnote{Interestingly  in the appendix of Ref.~\cite{ramanathan2024maximumquantumnonlocalitysufficient} it was shown that the intersection of the nine different bipartite partially deterministic polytopes with one deterministic input at each site portrayed in Fig.~\ref{fig:bipartiteThreeInputScenario} was shown to be the Bell-polytope, in contrast to the intersection of the tripartite two-input polytopes. There, the intersection of the bipartite polytopes was put forward as a witness of 'randomness in at least some pair of inputs'. } with respect to the collection $\mathcal{I}^{\max}$. 

Ref.~\cite{Vertesi2012} explicitly  constructed a three-qubit state $\rho^{VB}$\footnote{Here the letters VB stand for $Vertesi-Brunner$ --the initials of the authors of Ref.~\cite{Vertesi2012}. } which is separable with respect to any bipartition, and hence an element in $\mathbf{SSEP}(\mathcal{H}_{\vec{A}}, \mathcal{I}^{\max})$, but which can be used to violate a Bell-inequality.  Together the results of \cite{Vertesi2012, Liang2020} indicate that (at least in the tripartite scenario) there are indeed strongly $\mathcal{I}^{\max}$ separable states for which the behaviour $\wp^\rho$ obtainable by local measurements are always contained in the intersection $\bigcap\limits_{I' \in \mathcal{I}^{\max}}\mathbf{PD}(S,M^{I'})$, but which can nonethless be witnessed to be entangled in a device-independent way.

As stated before, a violation of any bounds on a single partially deterministic polytope $\mathbf{PD}(S,M^{I'})$, $I'\in \mathcal{I}$ is sufficient to demonstrate weak $\mathcal{I}$-inseparability. It is easy to show examples of behaviours $\wp^\rho \in \mathbf{PD}(S,M^{I'})$ for some $I'\in \mathcal{I}$ but $\wp^{\rho}\notin \bigcap\limits_{I' \in \mathcal{I}}\mathbf{PD}(S,M^{I'})$.

\example{\label{Example:ThreequbitQstatewhichisNotImaxSeP}(A three-qubit quantum state $\rho \notin \mathbf{SEP}(\mathcal{H}_{\vec{A}}, I') $ for an $I' \in \mathcal{I}$ but $\rho \in \mathbf{SEP}(\mathcal{H}_{\vec{A}}, \mathcal{I}^{\max}))$\\

Let $\mathcal{H}_{\vec{A}} \simeq \mathbb{C}^{2}\otimes \mathbb{C}^{2}\otimes \mathbb{C}^{2}$ be the Hilbert space of three qubits distributed to  parties $A, B$ and $C$, respectively.  Consider a pure state $\rho_{A}^{\Psi^{-}}$ defined by 
\begin{align}
    \label{Equation:DefinitionofPartiallyI'Three-partySingletState}\rho_{A}^{\Psi^{-}} = \ket{1_A}\bra{1_A} \otimes \ket{\Psi^{-}_{BC}}\bra{\Psi^{-}_{BC}},
\end{align}
where $\ket{\Psi^{-}_{BC}} = \dfrac{1}{\sqrt{2}}(\ket{0_B}\otimes \ket{1_C}- \ket{1_B}\otimes \ket{0_{C}})$ is a singlet state, and analogous states $\rho^{\Psi^{-}}_{B}$, $\rho^{\Psi^{-}}_{C}$ obtainable by swapping suitable labels in Eq.~\eqref{Equation:DefinitionofPartiallyI'Three-partySingletState}. Clearly $\rho \in \mathbf{SEP}(\mathcal{H}_{\vec{A}}, \{A\})$ and hence $\rho_{A}^{\Psi^{-}}\in \mathbf{SEP}(\mathcal{H}_{\vec{A}}, \mathcal{I^{\max}}))$.  

We will show that $\rho^{\Psi^{-}}_{A} \notin \mathbf{SEP}(\mathcal{H_{\vec{A}}}, \{ B \})$ can be witnessed by $\mathbf{PD}(S,M^{\{B\}})$. Suppose then, for the sake of deriving a contradiction, that $\rho^{\Psi^{-}}_{A}\in \mathbf{SEP}(\mathcal{H_{\vec{A}}}, \{ B \}).$ Let $S= (I,M,O)$ be a tripartite scenario with two binary inputs per site. Then, from Theorem \ref{Theorem;I'separableQstateisinBstimesQs} and Corollary \ref{Corollary:PDPwitnessesI'inseparability} it follows that any behaviour $\wp^{\rho^{\Psi^{-}}_{A}}$ obtainable from the state $\rho^{\Psi^{-}}_{A}$ by local measurements by the parties satisfies $\wp^{\rho^{\Psi^{-}}_{A}} \in \mathbf{PD}(S,M^{\{B\}})$. Let $R_{|\{BC\}}$ denote a restriction map in the sense of Definition \ref{Definition:RestrictionofaBehaviour} to a bipartite scenario $S_{|\{BC\}}$ which includes only the inputs of $B$ and $C$. Then by Proposition \ref{Proposition:ImageRestrictionofBehproductistheDomain}, Theorems \ref{TheoremPDPProduct}, \ref{TheoremMainPDequalBellTHRM} and Lemma \ref{Lemma:convexityofRestrictionmap}
\begin{align}
    R_{|BC}(\mathbf{PD}(S,M^{\{B\}}) = \mathbf{PD}(S_{|BC}, M^{\{B\}}) = \mathbf{B}(S_{|BC}).
\end{align}
On the other hand, by Proposition \ref{Proposition:ImageRestrictionofBehproductistheDomain} the relevant constraint on the image of $\wp^{\rho^{\Psi^{-}}_A}$ is that
\begin{align}
    R_{|BC}(\wp^{\rho^{\Psi^{-}_A}}) \in \mathbf{Q}(S_{|BC}) \supsetneq \mathbf{B}(S_{|BC}).
\end{align}
The relation $R_{|BC}(\wp^{\rho^{\Psi^{-}_A}}) \notin \mathbf{B}(S_{|BC})$ can be demonstrated, for example, with $B$ and $C$ having measurements which violate the relevant CHSH-inequality, for example. This is the desired contradiction, as we have shown that there are at least some behaviours $\wp^{\Psi^{-}_{A}}\notin \mathbf{PD}(S,M^{\{B\}})$ obtainable from $\rho^{\Psi^{-}_{A}}$ by local measurements by the parties which is sufficient to demonstrate that $\rho^{\Psi^{-}_{A}}\notin \mathbf{SEP}(\mathcal{H}_{\vec{A}}, \{B\})$. An analogous argument can be used to show that also $\rho^{\Psi^{-}_{A}}\notin \mathbf{SEP}(\mathcal{H}_{\vec{A}}, \{C\})$. 
}\\
\normalfont

Of course, to see that $\rho_A^{\Psi_-} \notin \mathbf{SEP}(\mathcal{H}_{\vec{A}}, \{B\})$ in Example \ref{Example:ThreequbitQstatewhichisNotImaxSeP} one could simply look at the form of the marginal states $tr_A[\rho]$ for $\rho \in \mathbf{SEP}(\mathcal{H}_{\vec{A}, \{B\}})$ and observe a contradiction. We reiterate the point which is  that  the polytope $\mathbf{PD}(S,M^{\{B\}})$ provides a device-independent witness of this form of inseparability, and hence knowledge of the initial state is not strictly required. Example \ref{Example:ThreequbitQstatewhichisNotImaxSeP} simply demonstrates, that there are some states, such as $\rho_A^{\Psi-}$ for which $I'$-inseparability can be witnessed for some $I'$, but which satisfies a weaker notion of $\mathcal{I}^{\max}$-separability.

We will also show that there are quantum states which are not weakly separable with respect to the maximal collection $\mathcal{I}^{\max}$. At this point, however, let us connect this discussion to the notions of genuine multipartite nonlocality which have been previously investigated in the literature for example, in Refs.~\cite{Svetlichny1987, Gallego2012, Bancal2013}.

\definition{\label{Definition:Svetlichny3-partynonlocality}(Svetlichny's genuine three-partite nonlocality \cite{Svetlichny1987})\\
Let $S = (I,M,O)$, $|I|=3$ be a three-partite scenario. A behaviour $\wp \in \mathbf{E}(S)$ admits a Svetlichny-type model\footnote{The terms ``hybrid local-nonlocal model'' \cite{Collins2002} and ``bilocal model'' \cite{Gallego2012}  have also been used in the literature for behaviour of the Svetlichny-type. We choose this terminology since the term 'hybrid' is not, in our view, descriptive enough in the present context, while the term 'bilocal' has also been used to refer to different correlation sets in network scenarios with multiple independent sources \cite{Branciard2010}. } if and only if the behaviour $\wp$ can be decomposed as
\begin{align}
\begin{split}
    \wp(\vec{a}|\vec{x}) = \int_{\mathscr{M}_1}P(a_1|x_1,\mu_1)P(a_2a_3|x_2x_3, \mu_1)p_1(\mu_1) d\mu_1 \\
    + \int_{\mathscr{M}_2} P(a_2|x_2,\mu_2)P(a_1a_3|x_1x_3,\mu_2)p_2(\mu_2)d\mu_2 \\
    + \int_{\mathscr{M}_3}P(a_3|x_3,\mu_3)P(a_1a_2|x_1x_2,\mu_3)p_3(\mu_3)d\mu_3,
    \end{split}
\end{align}
with $\mathscr{M}_i$ a measurable set, $p_i(\mu_i) \geq 0$ for all $i\in I$ and   
\begin{align}
    \int_{\mathscr{M}_1}p_1(\mu_1)d\mu_1 + \int_{\mathscr{M}_2}p_2(\mu_2)d\mu_2 + \int_{\mathscr{M}_3}p_3(\mu_3)d\mu_3 =1. \label{Equation:normalizationofSvetlichny}
\end{align}
The set of Svetlichny-modelable, or Svetlichny two-way-local behaviour in the three-partite scenario is denoted by $\mathbf{SL}(S_{|I|=3})$. If $\wp \notin \mathbf{SL}(S_{|I|=3})$, the behaviour $\wp$ is termed genuinely Svetlichny three-partite nonlocal. 
}\\
\normalfont

Note that in Definition \ref{Definition:Svetlichny3-partynonlocality} the marginals \begin{align}
    P(a_ja_k|x_jx_k) = \int_{\mathscr{M}_i} P(a_ja_k|x_jx_k,\mu_i)p_i(\mu_i)d\mu_{i},
\end{align}
with $i,j,k \in I$, $i\neq j \neq k$, are not enforced to satisfy no-signalling, in contrast to distributions in $ \bigcup_{I'\in \mathcal{I^{\max}}}\mathbf{PD}(S,M^{I'}).$ Due to the observations made in Remark \ref{remark:behaviourProductOFsignALLINGsubsets}, the behaviour product of Definition \ref{BehaviourProductDefinition} can nonetheless in this situation be  utilized to give an alternative characterization of the set $\mathbf{SL}(S_{|I|=3})$.

\proposition{\label{Proposition:Svetlichny-3partiteInTermsofBehaviourproduct}$\mathbf{SL}(S_{|I|=3}) = \bigcup_{I' \in \mathcal{I^{\max}}} \mathrm{conv}[\mathbf{B}(S^{I'})\odot \mathbf{E}(S^{I'^{\perp}})]$}
\begin{proof}
   We omit a detailed proof. Note that if $|I'|\geq 2$ then the set $\mathrm{conv}[\mathbf{B}(S^{I'})\odot \mathbf{E}(S^{I'^{\perp}})]= \mathbf{B}(S)$. Hence it is sufficient to consider only cases with where $I' =\{i\}$, $i\in I$, since those products always contain $\mathbf{B}(S)$. 
   
   The claim can then be established by inspection following, for example,  the argument made with respect to equality of the sets $\mathbf{PF}(S,M')$ and $\mathbf{PD}(S,M')$ in Theorem \ref{Theorem:PartiallyFactorizableEqualsPartiallyDeterministicPolytope} which allows the integrals be replaced with sums for each $i\in \{1,2,3\}$, and noting that a behaviour $\wp\in \mathbf{SL}(S_{|I|=3})$ is a convex mixture of such of behaviour in the composable sets $\mathrm{conv}[\mathbf{B}(S^{I'})\odot \mathbf{E}(S^{I'^{\perp}})]$ once the normalization in Eq.~\eqref{Equation:normalizationofSvetlichny} is taken into account. 
\end{proof}
\normalfont

It may be worth pointing out here, that an arbitrary restriction $R_{|V}$ may not be well defined on the whole set $\mathbf{SL}(S_{|I|=3})$. Nonetheless, the claim of Lemma \ref{LemmaBheaviourproductsubsetlemma} still holds, and one may define restriction maps with respect to the collections $V = M^{\{i\}}$ and $V=M^{\{i\}^{\perp}}$ so that Eqs.~\eqref{Equation:restrictiontoM'isKS'} and \eqref{Equation:restrictiontoM'isKS'perp}  in particular are still valid for each individual composition. From Proposition \ref{Proposition:Svetlichny-3partiteInTermsofBehaviourproduct} it is easy to see that in a three three-partite scenario
\begin{align}
    \mathrm{conv}[\bigcup_{I' \in \mathcal{I}^{\max}} \mathbf{PD}(S,M^{I'})] \subsetneq \mathbf{SL}(S_{|I|=3}).
\end{align} since for each element in the union, the relation $\mathbf{NS}(S^{I'})\subsetneq \mathbf{E}(S^{I'^{\perp}})$ holds. In particular, $\mathbf{SN}(S_{I_{|I|=3}})$ contains some behaviour which are outside the no-signalling polytope $\mathbf{NS}(S)$.

\example{ \label{Example:ThreeQubitNotWeaklySeparable}(A strongly $\mathcal{I}^{\max}$-inseparable three-qubit quantum state $\rho \notin \mathbf{WSEP}(\mathcal{H}_{\vec{A}}, \mathcal{I^{\max}})$.\\ 

Svetlichny derived an inequality valid for the set $\mathbf{SL}(S_{|I|=3})$ \cite{Svetlichny1987}, and showed that it is violated by a behaviour obtained by suitable measurements on the tripartite state $ \rho^G = \ket{\Phi}\bra{\Phi}, \ket{\Phi} \in \mathcal{H}_{\vec{A}} \simeq \mathbb{C}^{2}\otimes \mathbb{C}^2 \otimes \mathbb{C}^2$ with
\begin{align}
    \ket{\Phi} = \frac{1}{2}(\ket{100}+\ket{010}+\ket{001}-\ket{111}).
\end{align}
Since $ \mathrm{conv}[\bigcup_{I' \in \mathcal{I}^{\max}} \mathbf{PD}(S,M^{I'})] \subsetneq \mathbf{SL}(S_{|I|=3})$, it is also the case that $\rho^G \notin \mathbf{WSEP}(\mathcal{H}_{\vec{A}}, \mathcal{I^{\max}})$, since any behaviour $\wp^{\rho}$ was shown to be an element in $ \mathrm{conv}[\bigcup_{I' \in \mathcal{I}^{\max}} \mathbf{PD}(S,M^{I'})] $ for any weakly $\mathcal{I}$-separable state $\rho$.}\\

\normalfont
For other investigations of violations of Svetlichny's tripartite conditions, see for example \cite{Mitchell2004}.

Owing to the strict inclusion relative to $\mathbf{SL}_{|I|=3}$, the polytope $\mathrm{conv}[\bigcup_{I' \in \mathcal{I}^{\max}}\mathbf{PD}(S,I')]$  witnesses an alternate form of 'genuine multipartite nonlocality' , distinct from Svetlichny's form of Definition \ref{Definition:Svetlichny3-partynonlocality}. In Ref.~\cite{Bancal2013} this form of tripartite nonlocality was denoted as $\mathbf{NS}_2(S)$-locality, where the subscript ``$2$'' refers to the fact that the expansion consists of mixtures of terms which are local with respect to a bipartition of the parties. Thus, we may say that behaviour of this form have the property of  ``biseparability''  in analogy with terminology used in the study of  quantum state separability properties in multipartite scenarios \cite{Seevinck2008} (in fact in Ref.~\cite{Gallego2012} the term ``no-signalling bilocality'' was used for behaviour in $\mathbf{NS}_2(S))$.  These are, of course, not the only interesting notions of nonclassical correlations one may consider in multipartite scenarios as has been pointed out in \cite{Bancal2013, Gallego2012}, for example. It seems that our general approach could provide alternative  definitions of other notions in terms of composability as well.  

In the tripartite case, the relevant structure in the union 
$\mathrm{conv}[\bigcup_{I'\in \mathcal{I}^{\max}}\mathbf{PD}(S,M^{I'})]$ is indeed precisely that of biseparability, as the polytopes $\mathbf{PD}(S,M^{I'})$ are distinct from the Bell-polytope if and only if $I'$ is a singleton, in which case the set of parties $I$ is split into two subsets $I'$ and $I'^{\perp}$. The polytope $\mathbf{PD}(S,M^{I'})$ for $I' = \{i\}$ for some $i\in I$ thus witnesses whether the behaviour admits a biseparable model with respect to a \emph{specific} partition $I',I'^{\perp}$. Similar statements hold on the level of quantum states, for example, the set $\mathbf{WSEP}(\mathcal{H}_{\vec{A}},\mathcal{I}^{\max})$ of weakly $\mathcal{I}$-separable states in the three-party scenario can be identified with the set of biseparable quantum states, the set $\mathbf{SEP}(\mathcal{H}_{\vec{A}},\mathcal{I})$ consists of the states which are separable with respect to some $I'\in \mathcal{I}$, while the set $\mathbf{SEP}(\mathcal{H}_{A},I')$ is seen to match the set of states which are biseparable with respect to the specific partition $I', I'^{\perp}$; all in the sense of definitions found for example, in Ref.~\cite{Seevinck2008}. 

We emphasize, however, that the connection between  the sets $\mathbf{SEP}(\mathcal{H}_{\vec{A}}, I'),$ $ \mathbf{SEP}(\mathcal{H}_{\vec{A}}, \mathcal{I}^{\max}),$  $\mathbf{WSEP}(\mathcal{H}_{\vec{A}}, \mathcal{I}^{\max})$ and their associated device-independent witnesses phrased in terms of $I'$-separability of Definition \ref{Definition:SeparablewithrespectToAsubsetofPartiesQstate} to the notion of biseparability \emph{does not} extend\footnote{It does seem though, that some kind of relation to ``$k$ -separability'', which demands that a partition into $k$-subsets of parties is admissible \cite{Seevinck2008},  for $k= |I|-1$ can perhaps be identified. This is because an $I'$-separable quantum state is automatically $|I'|+1$-separable with respect to the partition $\{i'\}_{i'\in I'}, I'^{\perp}$, while the polytope $\mathbf{PD}(S,M^{I'})$ is distinct from the Bell polytope only if $|I|-|I'|\geq 2$. We leave a more thorough investigation  of this issue for future work. } to $|I|>3$-partite states or scenarios. This is because the notion of biseparability for a quantum state allows for states of the form, for $I'\subset I$ with $|I'|, |I \setminus I'|\geq 2$
\begin{align}
   \rho = \sum_{k}\eta_k \rho_k^{A_{I'}}\otimes \rho_k^{A_{I\setminus I'}}\label{Equation:notpartiallydeterministicQstate}
\end{align}
to be included in the sum. When considering separability with  respect to a particular partition with those properties, it is easy to see, following arguments similar to proof of Theorem \ref{Theorem;I'separableQstateisinBstimesQs} that any behaviour $\wp^{\rho}$ obtainable from a state of the form in Eq.~\eqref{Equation:notpartiallydeterministicQstate} would satisfy
\begin{align}
    \wp^{\rho} \in \mathrm{conv}[\mathbf{Q}(S^{I'}) \odot \mathbf{Q}(S^{I'^{\perp}}) ],
\end{align}
which is not a partially deterministic quantum set of the type in Definition \ref{Theorem;I'separableQstateisinBstimesQs} for any $I' \subset I$. Similarly, by relaxing the quantum sets above to the no-signalling sets as was done in the case of $I'$-inseparability witnesses in terms of partially deterministic polytopes $\mathbf{PD}(S,M^{I'})$, one arrives at 
\begin{align}
    \wp^{\rho}\in \mathrm{conv}[\mathbf{NS}(S^{I'})\odot \mathbf{NS}(S^{I'^{\perp}})],
\end{align}
which is not a partially deterministic polytope for any $I'$. Intersections, unions and relationship of objects biseparable with respect to different partitions for $|I|>3$ partite quantum states and scenarios is therefore not fully contained in the study of partial determinism in the sense explored here. Nonetheless,   we have seen that certain classes of partially deterministic polytopes, or more precisely, sets that can be derived from them by suitable set-theoretic operations, have been  studied under different names in the literature before, and that the study of partial determinism is useful for the study of certain kinds of multipartite behaviour. In the next section, we will also demonstrate that the the polytopes $\mathbf{PD}(S,M^{I'})$ are quite generally the objects of interest in a class of physically relevant experimental scenarios.

Finally, while we have mostly focused on the role of partially deterministic polytopes as device-independent witnesses of quantum state inseparability properties, Theorem \ref{Theorem;I'separableQstateisinBstimesQs} suggests that an alternative approach to derive $I'$-inseparability witnesses is to bound the set partially deterministic quantum set $\mathrm{conv}[\mathbf{B}(S^{I'})\odot \mathcal{\mathbf{Q}(S^{I'})}]$ directly instead. Evidently the bounds for this set are easier to violate and therefore, presumably, allow witnessing inseparability properties for larger subsets of quantum states in a device-independent manner as opposed to the polytopes $\mathbf{PD}(S,M^{I'})$.   Whether the composability-structure which we have identified in this work could provide useful insight to bound these types of sets is, in our view, worth future investigation.

We summarise the forms of inseparability discussed in this section and the related inseparability witnesses related to appropriate partially deterministic polytopes in Table \ref{table:InseparabilitywitnessesTable}.

\tikzset{myarrow/.style={-stealth,shorten >=\spc, shorten <=2.5*\spc}}

\begin{table*}[!ht]
\centering
\begin{tikzpicture}
    \node[inner sep=\spc] (t)
        {
         \centering 
         $\rotatebox[origin=c]{90}{Implies} \left\uparrow \hspace{0.2cm} 
    \begin{tabular}{|c|c|c|}\hline
      Property of behaviour   & Tighter device-independent property& Property of quantum state \\\hline 
       $\wp^{\rho} \notin \mathbf{PD}(S,M^{I'})$ & $\wp^{\rho} \notin \mathrm{conv}[\mathbf{B}(S^{I'}) \odot \mathbf{Q}^{I'^{\perp}}]$ & $\rho \notin\mathbf{SEP}(\mathcal{H}_{\vec{A}}, I')$ \\\hline
         $\wp^{\rho} \notin \bigcup\limits_{I' \in \mathcal{I}}\mathbf{PD}(S,M^{I'})$    & $\wp^{\rho} \notin \bigcup\limits_{I' \in \mathcal{I}}\mathrm{conv}[\mathbf{B}(S^{I'}) \odot \mathbf{Q}^{I'^{\perp}}]$ & $ \rho \notin \mathbf{SEP}(\mathcal{H}_{\vec{A}}, \mathcal{I})$ \\\hline
         $\wp^{\rho} \notin \mathrm{conv}[\bigcup\limits_{I' \in \mathcal{I}}\mathbf{PD}(S,M^{I'})]$ & $\wp^{\rho} \notin \mathrm{conv}[\bigcup\limits_{I' \in \mathcal{I}}(\mathrm{conv}[\mathbf{B}(S^{I'}) \odot \mathbf{Q}^{I'^{\perp}}])]$ & $ \rho \notin\mathbf{WSEP}(\mathcal{H}_{\vec{A}}, \mathcal{I})$ \\\hline
    \end{tabular}
    \right.$
        };
    
    \draw[myarrow] (t.north west) -- (t.north east) node[midway,above] {Implies};
\end{tikzpicture}
  \caption{A tabular representation of device-independent witnesses for the forms of $|I|-partite$ quantum state inseparability of Definitions \ref{Definition:SeparablewithrespectToAsubsetofPartiesQstate},  \ref{Definition;StronginseparabilityWRTasubsetofsets} and \ref{Definition:GenuineInseparability}. The objects in the left-most column are nontrivial, i.e. can be used to witness  inseparability of at least some quantum states, whenever $I' \neq \emptyset$ for all $I'\in \mathcal{I} \subset 2^{I}$.  Furthermore, neither of the two bottom-most objects in the left-most column are equal to the Bell or no-signalling polytopes whenever $|I| \geq 3,$ \ and $|I|-|I'|\geq 2$ for at least two subsets $ \mathcal{I} \ni I' \subset I$ when $\mathcal{I} \subset 2^{I}$ is a collection of subsets of the parties for which none equal the empty set. Theorem \ref{Theorem:unionof3partypdp'SiSNOTCONVEX} shows that in such situations one may expect a distinction between all the sets in the left-most column. The inseparability properties are preserved moving up and to the right in the table, so that the bottom-leftmost cell provides witnesses which imply all other forms of inseparability. 
  }
   \label{table:InseparabilitywitnessesTable}
\end{table*}

\subsection{broadcast-locality\label{Section:BroadcastLocality}}

The works of Werner \cite{Werner1989}, which considered projective measurements,  and Barret \cite{Barret2002}, which extended the result to include generic POVM's, established that there are bipartite entangled states which cannot be used to violate any Bell inequality in a standard Bell scenario by explicitly showing that in such cases an LHV model can be constructed which reproduces all quantum behaviours. If the parties can utilize more sophisticated methods, such as joint measurements on multiple copies of the original state \cite{peres1996} or sequential measurements essentially amounting to preprocessing of the joint mixed state \cite{Popescu1995,Gisin1996}, a state which in the original situation could only produce LHV-modelable behaviour may be utilized for a violation of a Bell inequality. These scenarios  broadly speaking provide examples of 'nonlocality activation' where an ``LHV-modelable'' entangled state demonstrates its nonclassical properties when subject to an alternative scheme. For a snapshot of recent works which explore the questions of when and how nonlocality can be activated, see for example Refs.~\cite{Aditi2005, Masanes2006, Navascues2011, Cavalcanti2011, Palazuelos2012, Cavalcanti2013, Hirsch2013, Gallego2014, Tendick2020, Bowles2021, Boghiu2023, Villegas-Aguilar2024}.

Recently \cite{Bowles2021, Boghiu2023}, a  proposal based on broadcasting was put forward  which succeeds at the task of activation without the need for multiple copies. Due to the details of the protocol however \cite{Bowles2021, Boghiu2023}, as we shall also discuss, a violation of a facet Bell inequality is not in general sufficient to demonstrate activation, instead Bell-type 'broadcasting-inequalities' were derived, based on physical assumptions which are more appropriate for the situation, the violation of which suffices at the task.  We will demonstrate that the sets of broadcast-local correlations introduced in those scenarios can in fact be identified  precisely as the partially deterministic polytopes $\mathbf{PD}(S,M^{I'})$ investigated at length in Section \ref{Section:device-independentinseparabilitywitnesses} in a class of broadcasting scenarios, namely ones that resemble the case (c) in Fig.~1 of Ref.~\cite{Boghiu2023}, which we term 'basic broadcasting scenarios'. We also show, that in a wide class of more complicated scenarios which include multiple broadcasters, the relevant sets are composable.

A basic broadcasting scenario consists of a set of space-like separated agents each with a set of inputs $M_i, i\in I$. A subset $I_\mathcal{L} \subset I$ receive a system directly from a source, while on the other side a system is sent through a broadcasting machine. The broadcasting machine outputs $I_\mathbf{B} = I\setminus I_\mathcal{L}$-systems which are distributed between the remaining parties $i\in I_\mathbf{B}$. A broadcasting scenario $S_{\mathbf{B}}$ of this type can then be understood to be completely specified by the sets $S_{\mathbf{B}} = (I,M,O,I_{\mathcal{L}}) = (S,I_{\mathcal{L}})$. An example of a basic broadcasting scenario is provided in Fig.~\ref{fig:BroadcastingScenarioExample}.

\begin{figure}
    \centering
    \includegraphics[width=\columnwidth]{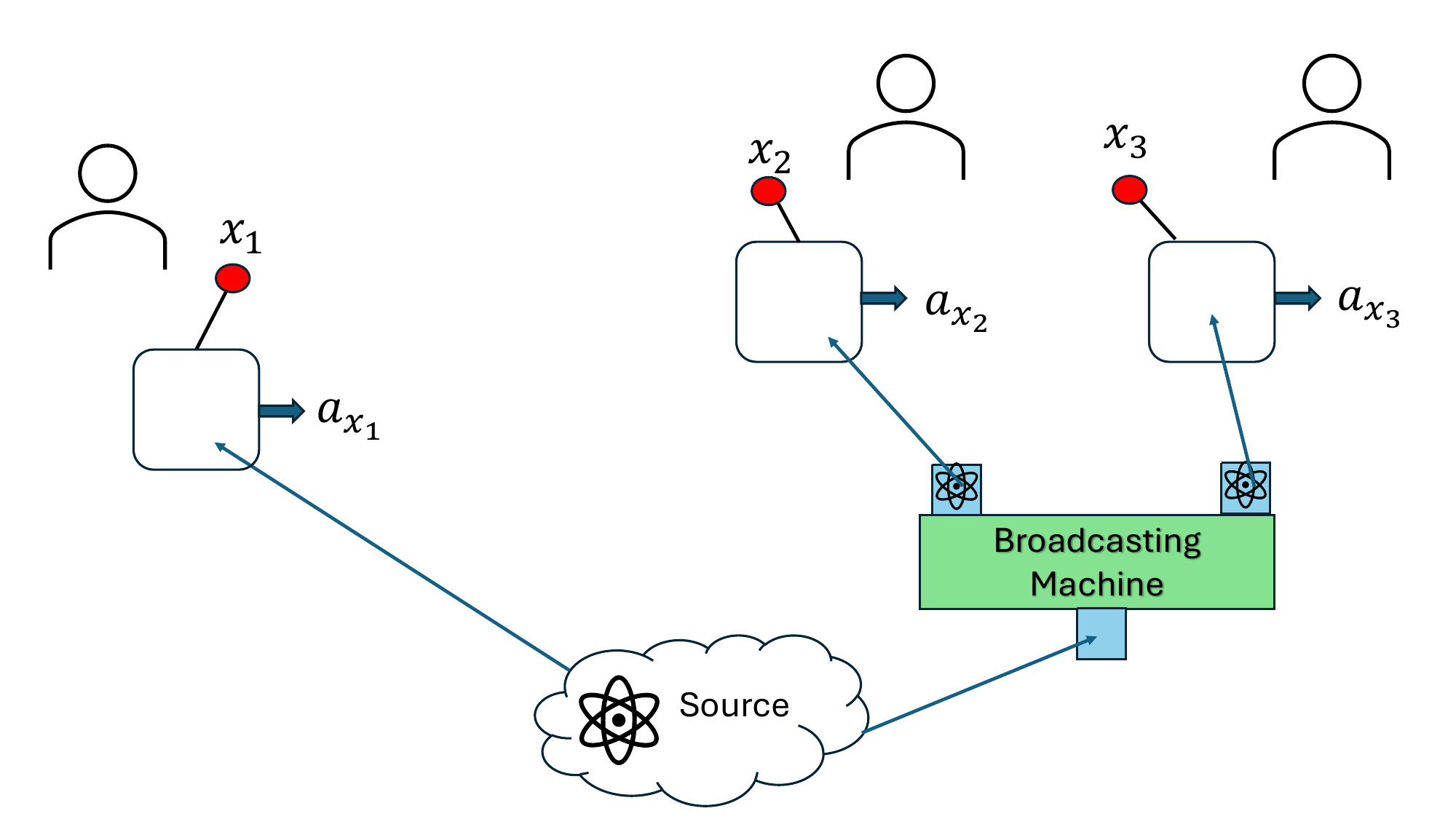}
    \caption{An example of a basic broadcasting scenario $S_\mathbf{B}$ with three parties of the kind investigated at length in Refs.~\cite{Bowles2021,Boghiu2023}.  Here the first agent $i\in I_{\mathcal{L}} = \{1\}$ receives the system directly from the source, while the remaining agents in the set $I_{\mathbf{B}} = \{2,3\}$ receive their systems from a broadcaster. }
    \label{fig:BroadcastingScenarioExample}
\end{figure}

Now suppose the source were such that without the presence of the broadcaster the $|I_{\mathcal{L}}|+1$ agents receiving a system directly would not be violate any Bell inequality, as is the case in nonlocality activation. Clearly a violation of a Bell inequality in the $|I| = |I_{\mathcal{L}}|+|I_{\mathbf{B}}|$ partite scenario $S_{\mathbf{B}}$ with the broadcaster \emph{would not} constitute evidence that some kind of nonclassicality originating from the source was activated in the experiment, as no constraints were placed on the functioning of the broadcaster.  For example,  it could as well  be the case, that the broadcaster simply discards the original system  it receives and produces an $|I_{\mathbf{B}}|$-partite entangled state, perhaps conditioned on the value of the classical variable, which can be used to violate a Bell inequality. A proper witness of nonlocality activation would have to take into account possible nonclassical correlations originating from the broadcasting machine itself.

If it is assumed, that the only constraint on the broadcaster is that correlations coming out of it satisfy no-signalling, one arrives, as in Refs.~\cite{Bowles2021,Boghiu2023} to the notion of \emph{broadcast-locality}.

\definition{(Broadcast-locality)\label{Definition:BroadcastLocality}\\
Let $S_{\mathbf{B}} = (S,I_{\mathbf{B}})$ be a basic broadcasting scenario. The set of broadcast-local correlations $\mathbf{BL}(S_B)$ in the scenario $S_\mathbf{B}$ consists of those behaviour $\wp$ which decompose as 
\begin{align}
    \wp(\vec{a}|\vec{x}) = \int p(\lambda)  \left[\prod_{i\in I_\mathcal{L}} P(a_i|x_i, \lambda)\right] \cdot P^{\mathbf{NS}}(\vec{a}_{I_\mathbf{B}}|\vec{x}_{I_{\mathbf{B}}}, \lambda) d\lambda \label{Equation:BroadcastLocalityDefiningeq}
\end{align}
corresponding to the hypothesis that the source is classical while the broadcaster may produce arbitrary no-signalling correlations. 
}\\

\normalfont

Equation \eqref{Equation:BroadcastLocalityDefiningeq} expresses that the set $\mathbf{BL}(S_{B})$ consists of partially factorizable behaviours $\mathbf{PF}(S,M^{I_{\mathcal{L}}})$ defined by the collection $M^{I_{\mathcal{L}}}_i = M_i$ if $i\in I_{\mathcal{L}}$ and $M^{I_\mathcal{L}}_i = \emptyset$ otherwise. By Theorem \ref{Theorem:PartiallyFactorizableEqualsPartiallyDeterministicPolytope}, $\mathbf{BL}(S_B) = \mathbf{PD}(S,M^{I_{\mathcal{L}}})$ and therefore all the results of section \ref{Section:device-independentinseparabilitywitnesses} apply, and vice versa. We report this result as a theorem below, for ease of reference.

\theorem{\label{Theorem:broadcastLocalCorrelationsEqualPDP}Let $S_{\mathbf{B}}=(S,I_{\mathbf{B}})$ be a basic broadcasting scenario. The set of broadcast-local correlations $\mathbf{BL}(S_{B})$ is precisely the partially deterministic polytope $\mathbf{PD}(S,M^{I_{\mathcal{L}}})$.}
\begin{proof}
    Immediate.
\end{proof}
\normalfont

 Theorem \ref{Theorem:broadcastLocalCorrelationsEqualPDP} is an example of potentially important mathematical correspondences of structures derived in fields which may be a-priori independent. Namely now, any inequalities valid for the broadcast-local set, such as those considered in Refs.~\cite{Bowles2021,Boghiu2023}, are also valid in any other  situation where a suitable partially deterministic polytope $\mathbf{PD}(S,M^{I'})$ arises. Note that the word 'suitable' here,  refers to any nontrivial polytope in the same equivalent class, which by Theorems \ref{Theorem:nonBellPDP'sEqualIFF} and \ref{Theorem:EquivalenceClassesofPDP's} contains more than one element unless $I' = \emptyset$.  

For the purpose of demonstrating activation in a basic broadcasting scenario, one needs an LHV-modelable entangled state in the source, which after broadcasting  violates  an inequality bounding the appropriate polytope $\mathbf{PD}(S,M^{I_{\mathcal{L}}})$. In Ref.~\cite{Bowles2021} a facet inequality for a tripartite scenario of the type illustrated in Fig.~\ref{fig:BroadcastingScenarioExample} with three two-outcome inputs at the first site and two two-outcome sites at the broadcasted sites was derived.    An experiment in which nonlocality activation for all PVM's was demonstrated by means of a violation of this facet inequality in that scenario was recently reported in Ref.~\cite{Villegas-Aguilar2024}.

Note that for the task of  nonlocality activation information about the initial state at the source is required and thus this task  \emph{can not} be done purely device-independently.   Nonetheless, it is a question of significant foundational importance, closely related to the issue of how entanglement relates  to theory independent notions of nonclassicality.   A number of open questions still remain in the field. for example, it is in general a hard problem to decide which entangled quantum states can only provide LHV-modelable behaviour for all possible measurements. For a review in this topic, see for example Ref.~\cite{Augusiak_2014}  and for some other recent works see e.g.~Refs.~\cite{Hirsch2017betterlocalhidden, Designolle2023}.

Aside from activation, the broadcasting scenario can also be used to witness entanglement in the source in a device independent way \cite{Bowles2021, Boghiu2023}. Indeed, by construction, a violation of the appropriate $\mathbf{PD}(S,M^{I_\mathcal{L}})$ can not be solely due to the broadcaster producing non-classical correlations and hence a violation constitutes evidence of entanglement in the source. Here, no knowledge of the inner workings of either the source nor the broadcaster is required. Still, the fact that activation is possible in the broadcasting scenario shows that by considering broadcasting machines distinct, potentially useful ways to certify entanglement which could not be certified without the inclusion of the broadcaster become possible.

For this discussion the action of the broadcasting machine needed not be specified. Assuming a quantum model for the experiment, however, the action of the broadcasting machine will of course have to obey quantum-dynamical rules and hence in particular also limitations such as  the no-broadcasting theorem \cite{Barnum1996}. Evidently quantum mechanics does not allow for arbitrary no-signalling correlations to be produced by the broadcasting machine, and hence the broadcast-local set of Definition \ref{Definition:BroadcastLocality} can be further constrained. We believe it is useful to provide a quantum mechanical description of the broadcasting protocol to support this discussion as an illustration. 

Let us then suppose that the basic broadcasting experiment admits a quantum model. If one furthermore insists that the source is classical, i.e. produces $|I_{\mathcal{L}}|+1$-partite separable states of the form 
\begin{align}
 \rho^S = \sum_{{k}\in K} \eta_k \bigotimes_{i \in  I_{\mathcal{L}}}\rho_{k}^{A_{i}}\otimes \rho^{A_{b}}_{k} =: \sum_{k}\eta_k \rho^{\vec{A}_{{I}_{\mathcal{L}}}}_k \otimes \rho^{b}_{k},
\end{align} where $\rho^{\vec{A}_{{I}_{\mathcal{L}}}}_k  \in S(\mathcal{H}_{\vec{A}})= \bigotimes_{i\in I_{\mathcal{L}}} S(\mathcal{H}_{A_i})$, $\rho^{b}_k \in S(\mathcal{H}_{b})$,  $\eta_k \geq 0$ for each $k\in K$ and $ \sum_{k\in K}\eta_k = 1$. Then, following a broadcasting map $B:S(\mathcal{H}_{b})\rightarrow S(\mathcal{H}_{\vec{A}_{I_{\mathbf{B}}}})$ acting on the system $b$, with $\mathcal{H}_{\vec{A}_{I_{\mathbf{B}}}} = \bigotimes_{i\in I_{\mathbf{B}}} \mathcal{H}_{A_i}$ one gets an $|I_{\mathcal{L}}|+|I_{\mathbf{B}}|$-partite state $\rho \in S(\mathcal{H}_{\vec{A}})$  which is partially separable with respect to $I_\mathcal{L}$:
\begin{align}
 \rho &=  (\mathrm{id}_{S(\mathcal{H}_{\vec{A}_{I_\mathcal{L}}})}) \otimes B)  (\sum_{k\in K}  \eta_k \rho_{k}^{\vec{A}}\otimes \rho^{A_{b}}_{k}) \\
 &= \sum_{k\in K}\eta_{k} \rho_k^{\vec{A}_{I_{\mathcal{L}}}} \otimes B(\rho^{b}_k) \in \mathbf{SEP}(\mathcal{H_{\vec{A}}, I_{\mathcal{L}}}).
\end{align} 
Therefore if $\rho \notin \mathbf{SEP}(\mathcal{H_{\vec{A}}, I_{\mathcal{L}}})$, it is necessarily also the case that $\rho^S \notin \mathbf{SEP}(\mathcal{H}_{\vec{A}_{\mathcal{L}}}\otimes \mathcal{H}_b)$ which is to say that to witness non-separability of the state $\rho^{S}$ produced by the source one may witness $I_{\mathcal{L}}$-inseparability of the state $\rho$ after broadcasting.   Clearly, if one holds on to the black-box description of the experiment,  the set of broadcast-local correlations of Definition \ref{Definition:BroadcastLocality} can be further constrained to the \emph{quantum broadcast-local} set \cite{Bowles2021,Boghiu2023}, corresponding to the situation where it is assumed that the broadcaster operates within the constraints of quantum theory.

\definition{(Quantum broadcast-locality)\label{Definition:quantumBroadcastlocality}\\
Let $S_{\mathbf{B}} = (S,I_{\mathbf{B}})$ be a basic broadcasting scenario. The set of quantum broadcast-local correlations $\mathbf{QBL}(S_{\mathbf{B}})$ consists of the behaviour $\wp^{\rho}$ which admit a model of the form
\begin{align}
    \wp^{\rho} = \mathrm{tr}[(\rho_{I_{\mathcal{L}}} \bigotimes_{i\in I}M_{a_i|x_i} ],
\end{align}
where $\rho_{I_{\mathcal{L}}}\in \mathbf{SEP}(\mathcal{H_{\vec{A}}, I_{\mathcal{L}}})$ is a state partially separable with respect to $I_{\mathcal{L}}$.  
} \\
\normalfont

The set $\mathbf{QBL}(S_{\mathbf{B}})$ can be immediately identified as the set of partially factorizable quantum behaviour of Theorem \ref{Theorem;I'separableQstateisinBstimesQs}.

\theorem{\label{Theorem:QBLisBprodQ}Let $S_{B} = (S,I_{\mathbf{B}})$ be a basic broadcasting scenario. Then the set of quantum broadcast-local correlations $\mathbf{QBL}(S_{\mathcal{L}})$ is precisely the set of partially factorizable quantum behaviour $\mathrm{conv}[\mathbf{B}(S^{I_{\mathcal{L}}})\odot \mathbf{Q}(S^{I_{\mathbf{B}}})]. $}
\begin{proof}
    Immediate. 
\end{proof}
\normalfont

A similar hierarchy as in Table \ref{table:InseparabilitywitnessesTable} of device independent witnesses of entanglement in the source can therefore be identified in the broadcasting scenario. Conceptually, we may also emphasize that a distinction between 'device-independence' and 'theory independence' can be made \cite{Villegas-Aguilar2024}. Namely, in these type of experiments, the partially deterministic polytope $\mathbf{PD}(S,M^{I_{\mathcal{L}}})$ provides non-classicality witnesses which do not presuppose that the broadcasting device operates under quantum mechanical laws, contrary to the strictly smaller set of partially factorizable quantum behaviour $\mathrm{conv}[\mathbf{B}(S^{I_{\mathcal{L}}})\odot \mathbf{Q}(S^{I_{\mathbf{B}}})]$. 

   We will next demonstrate that, analogously to the separability criterion discussed at the end of Section \ref{Section:device-independentinseparabilitywitnesses},  broader notions of composable sets play an important role in scenarios which contain multiple broadcasters as well. Such scenarios were also introduced and explored in Refs.~\cite{Bowles2021, Boghiu2023}. For the sake of completeness, we present a formal definition  of general broadcasting scenarios below. 

Suppose then that the source $S$ produces $|I^S| = |I_{\mathcal{L}}| + |I_{\vec{b}}|$ systems.  The set $I_\mathcal{L}$ denotes, as before, the parties $i\in I$ which receive a system directly from the source. The index set  $I_{\vec{b}}$ denotes the systems that are going to be broadcast in individual channels, each producing a disjoint set $I_{B_{b}}, b\in I_{\vec{b}}$ of systems, which are distributed to space-like separated parties. The resulting $|I|$-partite system is then probed in the scenario $S=I,M,O$ where $I = I_{\mathcal{L}}\cup(\bigcup_{b\in I_{\vec{b}}}{I_{B_b}})$  is a union of disjoint sets. A general broadcasting scenario $S_{\mathcal{GB}}$ of this type is then fully characterized by the scenario $S$ and the collection $\{I_{B_b} \}_{b\in I_{\vec{b}}}$ so that $S_{\mathcal{GB}} = (S,\{I_{B_{b}} \}_{b\in I_{\vec{b}}})$. A basic scenario corresponds to a general scenario with $|I_{\vec{b}}|=1$ and therefore, for the sake of removing redundancy in the following discussion, it is assumed that $|I_{\vec{b}}|\geq 2$.  An example of a general broadcasting scenario which does not fit into the definition of a basic scenario is provided in Fig.~\ref{fig:GeneralBroadcastingFigure}.

\begin{figure}
    \centering
    \includegraphics[width=\linewidth]{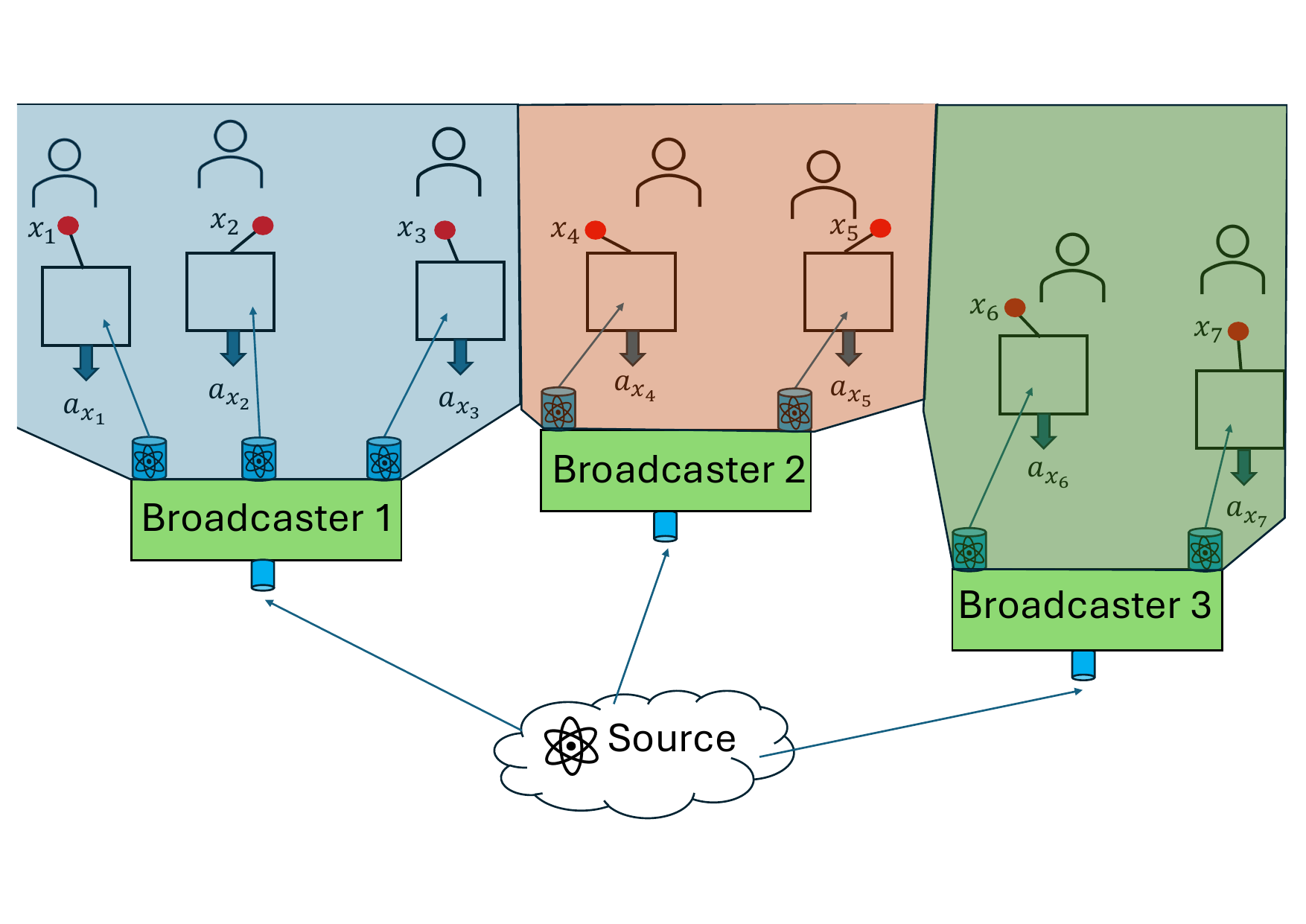}
    \caption{A general broadcasting scenario $S_{\mathcal{GB}} = (S, \{I_{B_b}\}_{b\in I_{\vec{b}}}$) with a total of seven agents performing the experiment. Here $I_{\vec{b}} = \{1,2,3\}$ is a set which indexes the three broadcasters, and the sets $I_{B_b} \subset I$ denote the subsets of the seven parties which receive a system from a given broadcaster. The subsets $I_{B_b}$ of parties have been colour coded for clarity, with $I_{B_1} = \{1,2,3\}, I_{B_2}=\{4,5\}$ and $I_{B_3}= \{6,7\}$.  In this scenario $I_{\mathcal{L}}=\emptyset$, so that no one receives a system directly from the original 3-partite source.  }
    \label{fig:GeneralBroadcastingFigure}
\end{figure}

Definition \ref{Definition:BroadcastLocality} is easily extended to these more general scenarios.  Namely,  the set of broadcast-local correlations $\mathbf{BL}(S_{\mathcal{GB}})$ in a general scenario $S_{\mathcal{GB}}$ is again defined as the set of correlations which are compatible with the assumption of a classical source emitting system to broadcast channels which can create arbitrary no-signalling correlations.  Following analogous arguments which lead to Definition \ref{Definition:BroadcastLocality} of broadcast-locality and Theorem \ref{Theorem:broadcastLocalCorrelationsEqualPDP} which identified the set of broadcast-local correlations constrained only by no-signalling in the basic scenario $S_{\mathbf{B}}$ as the partially deterministic polytope $\mathbf{PD}(S,M^{I_{\mathcal{L}}})$, it is easy to see that  the following proposition holds for the set $\mathbf{BL}(S_{\mathcal{GB}})$.

\proposition{\label{Proposition:GeneralBroadcastingScenarioblIscomposablesET}Let $S_{\mathcal{GB}} = (S,\{I_{B_b}\}_{b\in I_{\vec{b}}})$ be a general broadcasting scenario and let   $\mathbf{BL}(S_{\mathcal{GB}})$ denote the set of broadcast-local correlations  in $S_{\mathcal{GB}}$. Then  
\begin{align}
    \mathbf{BL}(S_{\mathcal{GB}}) = \mathrm{conv}[\mathbf{B}(S^{I_{\mathcal{L}}}) \odot (\bigodot_{b\in I_{\vec{b}}} \mathbf{NS}(S^{I_{B_b}}))].
\end{align}

}
\begin{proof}
    Omitted, follows from the definition of broadcast-locality and the definition of composability. 
\end{proof}
\normalfont

As an example, the broadcast-local set $\mathbf{BL}(S_{\mathcal{GB}})$ in the type of scenario depicted in Fig.~\ref{fig:GeneralBroadcastingFigure} would be composed as the product of three no-signalling sets $\mathbf{NS}(S^{I_{B_b}})$, as 

\begin{align}
    \mathbf{BL}(S_{\mathcal{SG}}) = \mathrm{conv}[\mathbf{NS}(S^{I_{B_1}}) \odot \mathbf{NS}(S^{I_{B_2}}) \odot \mathbf{NS}(S^{I_{B_3}})]. \label{Equation:GeneralBroadcastingFigureDecomposed}
\end{align}

The set in Eq.~\eqref{Equation:GeneralBroadcastingFigureDecomposed} is contained in the no-signalling set $\mathbf{NS}(S)$. It is easy to see that it is in fact strictly contained in it. This can be established, for example, by considering the image of a restriction $R_{|V}$ with $V = M^{\{j,k\}}$ where $j,k$ are labels of parties belonging to different subsets $I_{B_b}$. Namely, from Lemmas  \ref{Lemma:convexityofRestrictionmap}, \ref{PropositionNSisBell} and Proposition \ref{Proposition:ImageRestrictionofBehproductistheDomain} it then follows that 
\begin{align}
    R_{|V}(\mathbf{BL}(S_{\mathcal{GB}})) &= \mathrm{conv}[\bigodot_{b\in \{1,2,3\}} \label{Equation:ImageofBLsGBisBellLine1}\mathbf{NS}(S_{|M^{I_{B_b}}\cap V)})] \\
    &=  \mathrm{conv}[\mathbf{NS}(S^{\{j\}}) \odot \mathbf{NS}(S^{\{k\}})]\label{Equation:ImageofBLsGBisBellLine2} \\
    &= \mathbf{B}(S^{\{j,k\}}), \label{Equation:ImageofBLsGBisBellLine3}
\end{align}
whereas by Proposition \ref{Proposition:restrictionmapidentities} $R_{|V}(\mathbf{NS}(S)) = \mathbf{NS}(S^{\{j,k\}})\neq \mathbf{B}(S^{\{j,k\}})$.  On the other hand, the set $\mathbf{BL}(S_{\mathcal{SG}})$ in Eq.~\eqref{Equation:GeneralBroadcastingFigureDecomposed}) is not any partially deterministic polytope $\mathbf{PD}(S,M^{I'})$. This suggests that in order to investigate properties of broadcast-local sets in full generality a broader study of composable sets is in order.

If in a general broadcasting scenario $S_{\mathcal{GB}}$  the experiment is assumed to be  quantum modelable, then  the set $\mathbf{BL}(S_{\mathcal{GB}})$ is composable similarly to the case of Proposition \ref{Proposition:GeneralBroadcastingScenarioblIscomposablesET} with the no-signalling sets replaced by appropriate quantum sets instead.

\proposition{\label{Proposition:GeneralbroadcastscenarioIfquantumthenComposable} Let $S_{\mathcal{G}}$ be a general broadcasting scenario and let us denote by $\mathbf{QBL}(S_{\mathcal{GB}})$ the set of broadcast-local quantum correlations in that scenario. Then 
\begin{align}
    \mathbf{QBL}(S_{\mathcal{GB}}) = \mathrm{conv}[\mathbf{B}(S^{I_{\mathcal{L}}})\odot (\bigodot_{b\in I_{\vec{b}}} \mathbf{Q}(S^{I_{B_b}})].
\end{align}
}
\begin{proof}
    Omitted, follows the argument leading to Definition \ref{Definition:quantumBroadcastlocality} when generalized to $S_{\mathcal{GB}}$.
\end{proof}
\normalfont

The  set $\mathbf{QBL}(S_{\mathcal{GB}})$ is clearly strictly contained in the set $\mathbf{BL}(S_{\mathcal{GB}})$. A few similarities between these sets can be easily extracted from the property of composability. For example, analogously to the the three-partite example of Eqs.~\eqref{Equation:GeneralBroadcastingFigureDecomposed}-\eqref{Equation:ImageofBLsGBisBellLine3} it is seen that for a restriction onto any $V = \{j,k\}$, where $j,k$ belong to different subsets $I_{B_b}$, it is also the case that 

\begin{align}
    R_{|V}(\mathbf{QBL}(S_{\mathcal{GB}})) = \mathbf{B}(S^{j,k}).
\end{align}
Therefore, one may expect families of similar nontrivial constraints to be valid for both objects. Here the expression 'nontrivial' is used in the sense of the constraints not being necessarily satisfied by $\mathbf{NS}(S)$ or $\mathbf{Q}(S)$.

Note that in any general scenario $S_{\mathcal{GB}}$ the set of broadcast-local correlations $\mathbf{BL}(S_{\mathcal{GB}})$ is  always  a genuine subset of the partially deterministic polytope $\mathbf{PD}(S,M^{I_{\mathcal{L}}})$ as 
\begin{align}
\begin{split}
   \mathrm{conv}[\mathbf{B}(S^{I_{\mathcal{L}}}) \odot( \bigodot_{b\in I_{\vec{b}}} \mathbf{NS}(S^{I_{B_b}}))] \label{equation:nonemptyI_LBL_GB}\\
   \subsetneq  \mathrm{conv}[\mathbf{B}(S^{I_{\mathcal{L}}}) \odot \mathbf{NS}(S^{I\setminus I_{\mathcal{L}}})],
   \end{split}
\end{align}
which can be established, for example,  by a technique similar to the one used in Eqs.~\eqref{Equation:ImageofBLsGBisBellLine1}-\eqref{Equation:ImageofBLsGBisBellLine3} regarding the relationship of product of no-signalling sets $\mathbf{NS}(S^{I_{B_b}})$ and the no-signalling set $\mathbf{NS}(S)$ itself. Equation \eqref{equation:nonemptyI_LBL_GB} holds for all $I_{\mathcal{L}}$, but the right hand side provides no nontrivial constraints if $I_{\mathcal{L}}=\emptyset$, as then $\mathbf{PD}(S,M^{I_{\mathcal{L}}}) = \mathbf{NS}(S)$. If, however, $I_{\mathcal{L}}\neq \emptyset$ then the the partially deterministic polytope $\mathbf{PD}(S,M^{I_{\mathcal{L}}})$ may provide some relevant testable criteria of broadcast-locality, though the composable set\footnote{This composable set is, similarly to the polytope $\mathbf{PD}(S,M^{I_{\mathcal{L}}})$, partially deterministic with respect to the set $M^{I_{\mathcal{L}}}$ and a convex polytope. It could perhaps be referred to as an example of a  'partially deterministic sub-polytope' as it does not fit Definition \ref{DefinitionPartialDeterministicNosignallingbehaviours} of a partially deterministic polytope in full generality, but is nonetheless a polytope strictly contained in it.  } expressed on the left hand side of Eq.~\eqref{equation:nonemptyI_LBL_GB} is, as we have argued, the more appropriate witness.

A more comprehensive study of properties of general broadcast-local, or composable sets is beyond the scope of this work. Let us point out, however, that an example of the usefulness of studying behaviours in the more general scenarios has already been demonstrated in Sec.~5 of Ref.~\cite{Bowles2021}, where a symmetric general broadcast scenario with two broadcasters producing two systems at each arm distributing in total to four agents each with two two-outcome inputs and no one receiving a system directly from the source was considered. There \cite{Bowles2021}, they derived a threshold for device independent certifying of entanglement in a  class of mixed bipartite states (so-called isotropic states) which exceeded the previously known bound in visibility and suggested that it may even be possible to certify entanglement across the entire region of visibility where the state is entangled. This possibility received further affirmation in Ref.~\cite{Boghiu2023}, which pushed the visibility bound even lower, but not precisely to the entanglement bound,  with the aid of numerical methods in the same scenario. It seems then, that aside from providing biseparability witnesses mentioned at the end of Sec.~\ref{Section:device-independentinseparabilitywitnesses}, understanding the structure of these types of composable sets may provide insight into fundamental questions related to entanglement and quantum information processing. 

Finally, let us acknowledge that in Ref.~\cite{Boghiu2023} the relationship between the broadcast-local polytope in the kind of  three partite basic scenario illustrated in Fig.~\ref{fig:BroadcastingScenarioExample} and the $\mathbf{NS}_2$-local polytope (which we also pointed out is equal to $\mathrm{conv}[\bigcup_{I'\in \mathcal{I}^{\max}}\mathbf{PD}(S,M^{I'})$] in Sec.~\ref{Section:device-independentinseparabilitywitnesses}) was also recognized. Furthermore, they  provided an example of an LHV-modelable entangled state which can be activated following a broadcasting operation to a genuinely not $\mathbf{NS}_2-$local state, see Sec.4. of Ref.\cite{Boghiu2023}. Their example naturally provides an alternative to our Example \ref{Example:ThreeQubitNotWeaklySeparable} of a state which is not weakly separable with respect to to the maximal collection $\mathcal{I}^{\max}$. We believe recognizing the common mathematical structures underlying  device-independent behaviour sets derived in different contexts in this manner is of importance as it facilitates crosstalk between fields and therefore minimizes, among other things, computational resources spent in characterizing these kinds of sets.

\normalfont

\subsection{\label{Section:GeneralSequentialWignerSection}Local Friendliness in general sequential Wigner's friend scenarios}

In Sections \ref{Section:device-independentinseparabilitywitnesses} and \ref{Section:BroadcastLocality} we have shown that  specific classes of partially deterministic polytopes, namely ones for which all the inputs of one or more observers are deterministic, witness forms of quantum state inseparability and bound the behaviour compatible with the assumption of broadcast-locality in a variety of scenarios. In this section we establish a result which may be considered strictly more extensive: that \emph{essentially every kind} of partially deterministic polytope is of physical relevance. We demonstrate the result by generalizing the bipartite sequential Wigner's friend experiment which was introduced in Ref.~\cite{Utreras-Alarcon2024} as means to probe a recent result in quantum foundations known as the Local Friendliness no-go theorem \cite{Bong2020} to the multipartite $|I|>2$ case, and by allowing the agents more general choices of inputs. 

First, let us briefly introduce the core elements going into the Local Friendliness no-go theorem and review some of the relevant results already present in the literature.

\definition{(Local Friendliness)\label{definition:LocalFriendliness}\\
Local Friendliness is the compound condition formed of the following two assumptions 
\begin{enumerate}[label=\roman*), ref=\ref{definition:LocalFriendliness}.\roman*] 
    \item (Absoluteness of Observed Events) \label{definition:LocalfriendlinessAOE}
    Every observed event is a real single event, not relative to anything or anyone. 
    \item (Local Agency)\label{definition:LocalFriendlinessLA}
    Any intervention is uncorrelated with events outside its future light cone. 
\end{enumerate}
}
\normalfont
The assumptions in Def.~\ref{definition:LocalFriendliness} are strictly weaker than those that lead to Bell's theorem \cite{Wiseman2017, Cavalcanti2021}. In particular, Absoluteness of Observed Events (termed 'Macroreality' in  Ref.~\cite{Wiseman2017}) is a principle which has, especially in older literure, seldom been articulated as one of the assumptions used to derive Bell's theorem as the only events observed in a Bell scenario are exactly those recorded by the agents at the end of each round the statistics which form the behaviour $\wp$ and hence place no constraints.  Local Agency, on the other hand, is a natural assumption in a Bell-type setup which incorporates ideas of relativistic causal arrow and independent interventions as causes for correlations \cite{Cavalcanti2021}. In a general Bell scenario of the kind in Fig.~\ref{fig:correlationscenarioFig} Local Friendliness merely implies that the behaviour obey no-signalling \cite{Haddara2025} which is a condition not plausibly violated in such an experiment.

The key insight of Ref.~\cite{Bong2020} was that by considering 'extended Wigner's friend scenarios' of the kind examined by Brukner \cite{Brukner2017, Brukner2018} instead of the usual Bell scenario,  the implications of Local Friendliness become testable. These so-called 'canonical Local Friendliness scenarios' \cite{Haddara2025} include a set $I$ of space-like separated agents termed ``superobservers'', some subset $I_F \subset I$ of which have a local observer, also referred to as their ``friend'', over which the superobservers are assumed to have close to complete control over. In the first stage of the protocol, the friends measure a system recording an outcome. The  superobservers may then decide to read that outcome in which case the friends record determines the outcome of the measurement for the superobserver. Alternatively, they may perform an arbitrary possibly disruptive measurement on the friend. 

Intuitively, the reason why Absoluteness of Observed Events of Def.~\ref{definition:LocalfriendlinessAOE} places nontrivial constraints in extended Wigner's friend scenarios is because it demands  that the observation of the friend ought to have a well defined value whether or not the superobserver reads it or not, which combined with Local Agency of Def.~\ref{definition:LocalfriendlinessAOE} essentially leads to a specific kind of partially deterministic model for the resulting behaviour, namely one where exactly one input of each superobserver with a friend is deterministic. The Local Friendliness no-go result may then demonstrated by showing that there are quantum behaviour outside that partially deterministic polytope \cite{Bong2020}.  Ref.~\cite{Bong2020} also provided a proof of principle experimental demonstration of the violation of Local Friendliness with qubit systems taking the role of observers.

The original work \cite{Bong2020} investigated Local Friendliness in specific bipartite scenarios and completely solved for the facet-defining inequalities of the polytope in the scenario where two space-like separated agents each with a friend have three two-outcome inputs at each site. Those Local Friendliness polytopes were recognized  to be closely related to the bipartite partially deterministic polytopes investigated by Woodhead \cite{Woodhead2014} (see the acknowledgements of Ref.~\cite{Bong2020}), which allow also for the more general situations with more than one deterministic input per site.  In Ref.~\cite{Haddara2025} a comprehensive investigation of Local Friendliness polytopes in arbitrarily generalizations of the canonical scenarios was conducted while Ref.~\cite{Utreras-Alarcon2024} introduced a novel instance of a bipartite scenario with sequential measurements by the friend, where the assumption of Local Friendliness was shown to lead the Bell-polytope owing to results concerning more general (more than one deterministic input) bipartite partially deterministic polytopes derived by Woodhead \cite{Woodhead2014}. In summary,  the physical relevance of various classes of partially deterministic polytopes has been recognized in a number of previous works which explore Local Friendliness. The contribution we  make in  this section is twofold: first, we provide a multipartite generalization of the sequential Wigner's friend scenario of Ref.~\cite{Utreras-Alarcon2024} allowing for arbitrary numbers of sequential measurements and non-deterministic inputs and second, we show that the resulting correlation sets compatible with Local Friendliness in Def.~\ref{definition:LocalFriendliness} are in one-to-one correspondence with the partially determinstic polytopes. In particular, for every partially deterministic polytope $\mathbf{PD}(S,M')$ there is a sequential scenario\footnote{While preparing this manuscript we became aware of work done by other authors \cite{walleghem2024connectingextendedwignersfriend} which also considered a multipartite generalization of the scenario in Ref.~\cite{Utreras-Alarcon2024} and identified situations where the Local Friendliness polytope equals the Bell polytope $\mathbf{B}(S)$. Our presentation of the scenario differs from theirs in that we also allow for arbitrary numbers of non-deterministic measurements which is necessary to establish the general correspondence with the partially deterministic polytopes $\mathbf{PD}(S,M')$.} where the assumptions in Def.~\ref{definition:LocalFriendliness} imply it, and vice versa. 

While in this work we mostly focus on adding to the understanding of the mathematical constraints related to the assumptions that go into Local Friendliness, let us also acknowledge that other aspects of the  no-go result \cite{Bong2020} are of course of independent interest. A number of works have explored, for example,  the possible  implications \cite{Cavalcanti2021, Cavalcanti2021Foundphys, DiBiagio2021, Yang2022, Ying2024relatingwigners}, intimately related alternative no-go results \cite{Haddara_2023, Wiseman2023thoughtfullocal} and prospects of experimental tests of Local Friendliness \cite{Ding2023, zeng2024violationslocalfriendlinessquantum}. There are also attempts to dispense of the space-like separation \cite{walleghem2024extendedwignersfriendparadoxes,walleghem2025extendedwignersfriendnogo} usually assumed in this context. These are important avenues of investigation which we are not going to discuss here, though we believe that the general  results of this work could find use for those purposes as well.

\begin{figure}
    \centering
    \includegraphics[width= \linewidth]{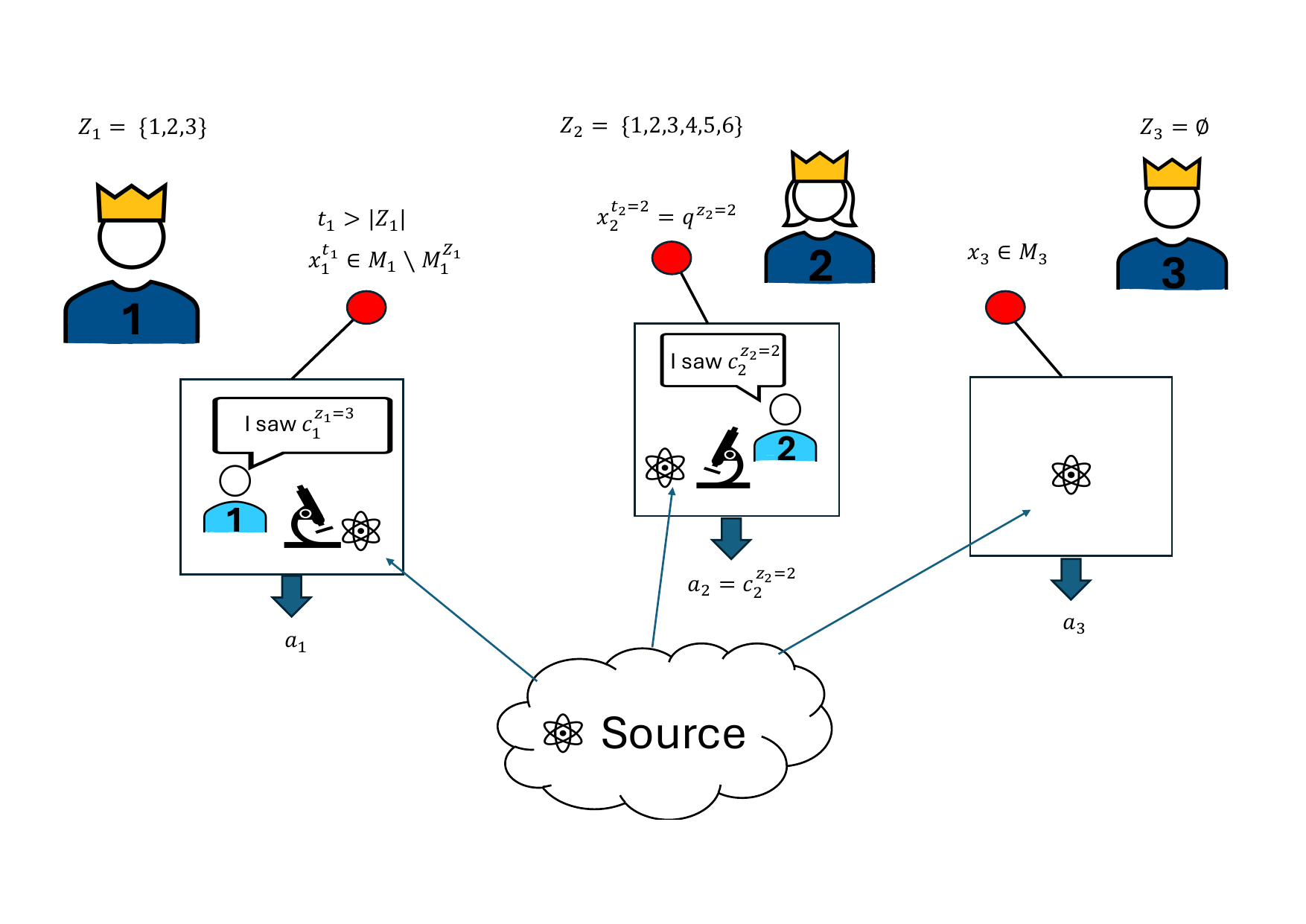}
    \caption{An illustration of a round of a sequential Wigner's friend experiment $S_{sW} = (I,M,O,Z)$, with three superobservers illustrated wearing crowns, and shirts indicating their labels. The first two superobservers have a friend. The first superobserver is allowed to query their friend on three occasions, since $|Z|=3$, but on this round, they have chosen to perform another arbitrary measurement as indicated by $x_1^{t_1}, t_1>3$ meaning that the friends  evolution got reversed twice in the protocol. The second superobserver is allowed a maximum of 6 queries, and on this round they chose to take the second chance with the setting $x_2^{t_2=2}$. This constraints the outcome $a_2$ to match the second observation $c_2^{z=2}$ made by their friend following reversal of the first record. No special constraints are imposed on the measurements of the third superobserver in any possible run of the experiment.   }
    \label{fig:SequentialWigner}
\end{figure}

\begin{figure}
    \centering
    \includegraphics[width=\linewidth]{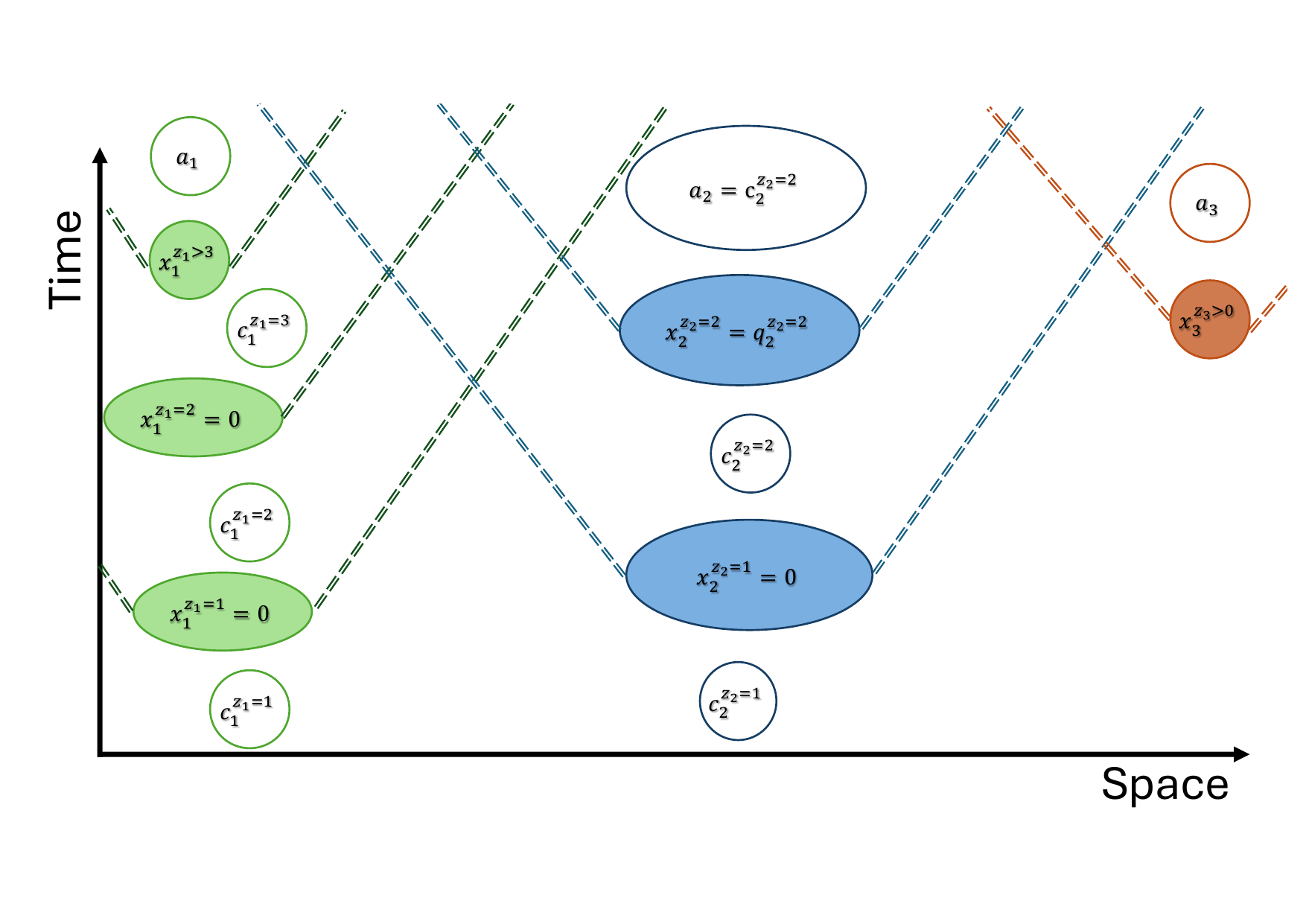}
    \caption{A spacetime diagram for the run of the tripartite sequential Wigner's friend experiment illustrated in Fig.~\ref{fig:SequentialWigner}. The friend of the first superobserver $i=1$ made a total of three observations (the maximum allowed), with the superobserver reversing the evolution of the friend following the first two, depicted with the setting $x_1^{t=1}=x_1^{t=2}=0$. Note that whether or not another reversal $x_1^{z_1=3}=0$ is made is the redundant, since the intervention $x_1^{z_1>3}$ is arbitrary, such a situation may always be  absorbed to it.  The values of $c_1^{z_1=t_1}$ are independent of interventions $x^{z_1\geq t_1}$ made in the future, and of any interventions made by other agents. The second superobserver reversed their friend once, and then measured the second observation made by the friend. The choice to query their friend ended the experiment on this round, which according to the protocol, could otherwise have continued up to 6 observations made by the friend.  The last superobserver always measures a system directly.  }
    \label{fig:SequentialWignerSpacetime}
\end{figure}

In what follows a multipartite-multi input generalization of the sequential protocol in \cite{Utreras-Alarcon2024} is described. We will first present the parameter sets of interest, followed by an account of the protocol. 

Let $I = \{1,2, \ldots ,|I| \}$ be a set of agents, or superobservers. In addition to the set $I$, let , $I_F \subset I$ denote the of the agents who have a friend in their laboratory. As before, each $i\in I$ has some set $M_i$ of inputs representing interventions they may perform on the contents of the laboratory, including on the possible friend. Each input $x_i \in M_i$ also has associated outcome sets $O_{x_i}$. In anticipation of discussing relevant behaviour, we will already drop the labels of the inputs from the corresponding outputs.  We will, however, make further demands and conventional choices on the input sets based on other parameters in the scenario.

Let us then introduce an index set $Z_i   = \{ 1 ,2,  \ldots, |Z_i|  \}$, with $Z_i \neq \emptyset$ for every $i \in I_F$ and $Z_i = \emptyset$ otherwise. Each $z_i \in Z_i$ indexes a possible measurement the friend performs during the experiment which is known to the superobserver.  The cardinality $|Z_i|$ therefore decides the maximum number of observation the friend makes during a round of the experiment\footnote{While our presentation of the sequential scenario is a direct generalization of that  in  Ref.~\cite{Utreras-Alarcon2024} we make some distinct notational conventions and changes to the representation of the scenario for the sake of added coherence with respect to other parts of this work.}.  

 For every $z_i \in Z_i$ there is a corresponding 'query measurement' $q^{z_i} \in M_i$ the agent can ask his friend, so that $|M_i| \geq |Z_i|$. Let us denote by $M_i^{Z} \subset M_i$ the subset of the inputs which are query measurements for the party $i$. Note that if $i\notin I_F$ then $M_i^{Z} = \emptyset$. For every $i\in I$ we also allow an arbitrary number of alternative measurements.  It will be useful to introduce an index $t_i: 1 \leq t_i \leq |M_i|$  for each $i\in I$ and adopt the convention where the superscript $t_i$ is added to the inputs $x_i\in M_i$, so that the input $x_i^{t_i}$ for $ 1\leq t_i \leq |Z_i|$ denotes that the query $q^{z_i = t_i}$ is performed.  For $  |Z_i| < t_i\leq |M_i|$, the reading is that $x^{t_i}_i\in M_i \setminus M^{Z_i}_i  =M_i^{{Z}^\perp}$, without the need for further specification.

The protocol for a round of the experiment goes as follows:

\begin{enumerate}
    \item A common source sends systems to each laboratory $i \in I$. If $ i\in I\cap I_F$ The friend measures the observable $z_i = 1$ as instructed by $i$ and records an outcome $c_i^{z_i =1}$. If $i\notin I_F$, the superobserver measures the system directly.
    
    \item The superobservers $i \in I_F$ have two choices. They may ask their friend for their observed outcome, corresponding to the input $x^{t_i=1}_i = q^{z_i = 1}$ , and assign their outcome $a_i = c_i^{z_i=1}$, in which case the experiment ends for them. They may alternatively choose $x_i^{t_i=1} =0$ in which case they reverse the evolution of the friend to what it was prior to observing $c_i^{z_i=1}$, and instruct them to measure the next observable in $Z_i$, in this case $z_i=2$. 

    \item If some observer $i$ reversed their friend in step 2, the experiment continues by giving them another choice; they may again ask for their record with the input $x_i^{t_i = 2}= q^{z_i =2}$ so that $a_i = c_i^{z_i = 2}$, which ends the experiment for them.  Alternatively, they may again reverse their friends evolution, by setting $x_i^{z_i=2} = 0$ and give them the next instruction. 

 \item The round of the experiment continues until all superobservers with a friend either asked their friend for their record $x_i^{t_i}=q^{z_i}$ and set $a_i = c_i^{r_i}$ or they reached their limit of $|Z_i|-1$ reversals. If the latter happened, they perform an arbitrary measurement $x_i^{t_i>|Z_i|} \in M_i$ on the friend/system pair and record the corresponding outcome $a_i$. 
\end{enumerate}

Following numerous rounds of the experiment, the statistics are gathered to estimate the behaviour $\wp$ with distributions $\wp(\vec{a}|\vec{x}^{\vec{t}})$, where the string notation has been extended to the superscript so that $\vec{x}^{\vec{t}} = x_1^{t_1}, x_2^{t_2}, \ldots x_{|I|}^{t_{|I|}}$. Note that it is not necessary to include the string of zeroes representing the choices to reverse the evolution of the friend in a given round, as that is implicit in any $x_i^{t_i}$. We will show later that emphasis for the reversal interventions will be irrelevant for the sake of specifying the model as well, owing to the assumption of Local Agency. For now, we merely stress that inclusion of the reversal therefore does not make a difference for specifying the behaviour, which is defined over the structure $S= (I,M,O)$, where as before, $O$ and $M$ represent the collections of $\{O_{x_i}\}_{i\in I, x^{t_i}_i \in M_i}$ and $\{M_i\}_{i\in I}$.  The term ``public scenario'' has also been used in the context of Wigner's friend-type scenarios \cite{Haddara2025} while referring to $S$, for the reason that the statistics available at the end of the experiment are defined over $S$. Further details about the set of friends and numbers of sequential measurements are needed  of course,  to specify the additional structure necessary to model the experiment to explore the implications of assumptions such as those in Def.~\ref{definition:LocalFriendliness}.

\definition{(Local Friendliness behaviours)\label{Definition:LocalFriendlinessBehaviours}\\
Let $S_{sW}$ be a general sequential Wigner's friend scenario. The set $\mathbf{LF}(S_{sW})$ of Local Friendliness behaviours in that scenario consists of those which are compatible with the assumptions of Absoluteness of Observed Events and Local Agency in Def.~\ref{definition:LocalFriendliness}.
}\\
\normalfont

 Note that in a sequential Wigner's friend scenario the set $I_F$ of agents who have a friend can be specified given the collection $ Z = \{Z_i \}_{i\in I}$  as $I_F = \{i\in I: Z_i \neq \emptyset \}$. We can therefore fully specify a sequential Wigner's friend scenario $S_{sW}$ of this type by the quadruple $S_{sW} = (I,M,O,Z) = (S,Z)$, respectively. Nonetheless, we shall keep referring to the set of friends $I_F$ as a conversational aide.

In general, information about the  observation $c_i^{z_i=t_i}$ does not persist at the end of the experiment, unless the superobserver performs the query $q^{z_i=t_i}$. Owing to the assumption Absoluteness of Observed Events, however, all the variables observed during a round of experiment ought to have a well defined value. Importantly, the number of observations $c_i^{z_i}$ of  now depends explicitly on the interventions the superobserver $i$ performs, and thus may change from round to round.  To account for this, it will be useful to  introduce the notation 
\begin{align}
    \mathbf{C}^{\vec{t}} = (c_1^{z_1=1}, c_1^{z_1 = 2}, \ldots c_1^{z_1 = t_1}, \ldots, c_{|I|}^{z_{|I|} = t_{|I|}}),
\end{align}
for the records of the observers where each superobserver $i\in I$ finished the experiment at the $t_i$'th opportunity. Furthermore, let $\mathbf{C}^{|\vec{Z}|}$ represent the records in a situation where each friend made the maximum number $|Z_i|$ of observations, or equivalently, the cases where the superobservers $i\in I$ finished their experiment with any input $x_i^{t_i \geq |Z_i|}$.  

In what follows, we give a complete mathematical characterization of Local Friendliness behaviours $\mathbf{LF}(S_{sW})$ in an arbitrary general Wigner's friend scenario $S_{sW}$.

First, from  Absoluteness of Observed Events it follows that one can assign a joint distribution  over all the variables observed in a round of the experiment specified by $\vec{x}^{\vec{t}}$. The distributions $\wp(\vec{a}|\vec{x}^{\vec{t}})$ in the behaviour ought to be recovered from those joint distributions via marginalization as in

\begin{align}
    \wp(\Vec{a}|\Vec{x}^{\Vec{t}}) &= \sum_{\mathbf{C}^{\Vec{t}}} P^{\mathcal{LA}}(\Vec{a}, \mathbf{C}^{\Vec{t}}|\Vec{x}^{\Vec{t}}), \label{Equation:BehaviourasMarginalinsWs}
\end{align}
where the where the superscript $\mathcal{LA}$ has been added to emphasize that the distributions $P^{\mathcal{LA}}(\Vec{a}, \mathbf{C}^{\Vec{t}}|\Vec{x}^{\Vec{t}})$ are expected to satisfy Local Agency of Def.~\ref{definition:LocalFriendlinessLA}.

Let us define two sets which are specified by the context $\vec{x}^{\vec{t}}$: $F(\vec{x},\vec{t}) = \{ i \in I| x_i^{t_i} \in M_i^{Z}$ \} and $J(\vec{x},\vec{t}) = \{ i \in I | x_i^{_i} \in M^{Z^{\perp}}_i \} = I \setminus F(\vec{x},\vec{t})$. These represent the sets of agents who choose to ask the $t_i$'th observation $z_i$ of their friend, and those who either have no friend or let their friends make the maximum number of observations and then choose to make an alternative measurement, respectively. Then, using the properties of conditional probabilities,  and the conditions for the $x^{t_i}_i \in M^{Z}_i$ it can be seen that the following context-dependent expansion is valid for the expression $P^{\mathcal{LA}}(\vec{a}|\vec{x}^{\vec{t}})$ on the right-hand side of Eq.~\eqref{Equation:BehaviourasMarginalinsWs}
\begin{align}
\begin{split}
  P^{\mathcal{LA}}(\Vec{a}, \mathbf{C}^{\Vec{t}}|\Vec{x}^{\Vec{t}}) &= P^{\mathcal{LA}}(\vec{a}_{F(\vec{x},\vec{t})}|\vec{x}^{\vec{t}}, \mathbf{C}^{\vec{t}}, \vec{a}_{J(\vec{x},\vec{t})}) \\  & \hspace{0.5cm} \times P^{\mathcal{LA}}(\vec{a}_{J(\vec{x},\vec{t})}, \mathbf{C}^{\vec{t}}|\vec{x}^{\vec{t}})
  \end{split}
  \\[2ex]
  &= \prod_{i \in F(\vec{x},\vec{t})} \delta_{a_i, c_i^{t_i}} P^{\mathcal{LA}}(\vec{a}_{J(\vec{x},\vec{t})}, \mathbf{C}^{\vec{t}}|\vec{x}^{\vec{t}}).
\end{align}
The next steps in our argument are more delicate. 
The term $P^{\mathcal{LA}}(\vec{a}_{J(\vec{x},\vec{t})}, \mathbf{C}^{\vec{t}}|\vec{x}^{\vec{t}})$ above could be further manipulated by the properties of conditional probabilities and application of Local Agency, but we find it easier to work with this form. Since by definition the parties in $J(\vec{x},\vec{t})$ allowed their friends to make the maximum number of observations, it is useful to extend this notation  by using the shorthand $\mathbf{C}^{\vec{t}} = \mathbf{C}^{\vec{t}}_F \mathbf{C}^{|\vec{Z}|}_J$ as means to indicate the observations done at the sites $i\in F(\vec{x},\vec{t})$ and $i\in J(\vec{x},\vec{t})$ respectively.   

Now, owing to the space-time relations of the experiment any value $c_i^{z_i = t_i}$ ought to be independent of interventions that are in the future or space-like separated from it, hence in particular of all the choices $x_i^{t_i'}$ for $t_i' \geq t_i$ and $x_j^{t_j}$ for $j \neq i$ for all $t_j$. Therefore the family of equalities with
\begin{align}
\begin{split}
  \label{Equation:T'equalsTequations}  & P^{\mathcal{LA}}(\vec{a}_{J(\vec{x},\vec{t})}, \mathbf{C}^{\vec{t}}_F \mathbf{C}^{|\vec{Z}|}_J|\vec{x}_{F(\vec{x},\vec{t})}^{\vec{t}}\vec{x}^{\vec{t}}_{J(\vec{x},\vec{t})} ) \\
     &= P^{\mathcal{LA}}(\Vec{a}_{J(\vec{x},\vec{t})}, \mathbf{C}_F^{\Vec{t}} \mathbf{C}^{|\vec{Z}|}_J|\Vec{x}^{\vec{t'}}_{F(\vec{x},\vec{t'})}, \vec{x}^{\vec{t}}_{J(\vec{x},\vec{t})}) 
     \end{split}
\end{align}
holds for all  $\vec{x}^{\vec{t}}_{J(\vec{x},\vec{t})}$ and $1 \leq t_i \leq t_i' \leq |Z_i|$  with $i\in F(\vec{x},\vec{t}) = F(\vec{x},\vec{t'})$. This is because all of the above interventions $x_i^{t_i'}$ are by construction in the future of every $c_i^{t_i}$ while space-like separated from the other sites $j\in I$ $j\neq i$, as portrayed in the example of Fig.~\ref{fig:SequentialWignerSpacetime}, and thus, by appeal to Local Agency, such interventions are uncorrelated with the $c^{t_i}_i$. Therefore, as mentioned before, emphasizing whether $x^{t_i{'}}_i=0 $ or $x^{t_i'}_i = q^{{z_i} =t_{i}'}$ plays no difference for the variables $c_i^{t_i}$ with $t_i \leq t_i'$. 

On the other hand, since by Absoluteness of Observed Events, a joint distribution $P^{\mathcal{LA}}(\vec{a}, \mathbf{C}^{\vec{t}}|\vec{x}^{\vec{t}})$ over all the observed variables is well defined in any context it must be that in the case of Eq.~\eqref{Equation:T'equalsTequations} where $F{(\vec{x},\vec{t})} = F(\vec{x},\vec{t}')$ one can define a distribution $P^{\mathcal{LA}}(\vec{a}, \mathbf{C}^{\vec{t'}}_F \mathbf{C}^{|\vec{Z}|}_J|\vec{x}_{F(\vec{x},\vec{t'})}^{\vec{t'}}\vec{x}^{\vec{t}}_{J(\vec{x},\vec{t})} )$ which includes the possible additional observations $c_i^{z_i=k_i}$ with $ t_i <k_i \leq t_i'$ as well. The distributions in Eq.~\eqref{Equation:T'equalsTequations} have to be compatible with this distribution meaning simply that 

\begin{align}
\begin{split}
    P^{\mathcal{LA}}(\Vec{a}_{J(\vec{x},\vec{t})}, \mathbf{C}_F^{\Vec{t}} \mathbf{C}^{|\vec{Z}|}_{J}|\Vec{x}_{F(\vec{x},\vec{t'})}^{\Vec{t'}}\vec{x}^{\vec{t}}_{J(\vec{x},\vec{t})})\\
    =   \sum_{c_i^{z_i = k_i}}  P^{\mathcal{LA}}(\Vec{a}_{J(\vec{x},\vec{t})}, \mathbf{C}_F^{\Vec{t}'}\mathbf{C}_J^{|\vec{Z}|}|\Vec{x}_{F(\vec{x},\vec{t'})}^{\Vec{t'}},\vec{x}^{\vec{t}}_{J(\vec{x},\vec{t})}),\label{Equation:summingoverSomeK_i's}
    \end{split}
\end{align}
where in the summation $k_i>t_i $ and $i\in F(\vec{x},\vec{t'})$. Note that in the right hand side we have already summed over the suitable $a_i$. Since Eq.~\eqref{Equation:summingoverSomeK_i's} holds for all $i\in F(\vec{x},\vec{t'})$, $t_i' >t_i$ it holds, in particular,  in the extreme cases where $t_i's$ reach their maximum value $|Z_i|$. Therefore  
\begin{align}
\begin{split}
    P^{\mathcal{LA}}(\Vec{a}_{J(\vec{x},\vec{t})}, \mathbf{C}_F^{\Vec{t}}\mathbf{C}_J^{|\vec{Z}|}|\Vec{x}_{F(\vec{x},\vec{t})}^{\Vec{t}}\vec{x}^{\vec{t}}_{J(\vec{x},\vec{t})}) \\
    = \sum_{c_i^{z_i=k_i }, k_i>t_i}  P^{\mathcal{LA}}(\Vec{a}_{J(\vec{x},\vec{t})}, \mathbf{C}^{|\vec{Z}|}|\Vec{x}_{F(\vec{x},\vec{t})}^{|\vec{Z}|}\vec{x}^{\vec{t}}_{J(\vec{x},\vec{t})}).
    \end{split}
\end{align}

Finally,  the term $P^{\mathcal{LA}}(\Vec{a}_{J(\vec{x},\vec{t})}, \mathbf{C}^{|\vec{Z}|}|\Vec{x}_{F(\vec{x},\vec{t})}^{|\vec{Z}|}\vec{x}^{\vec{t}}_{J(\vec{x},\vec{t})})$  can be decomposed by using the properties of conditional probabilities and the assumption of Local Agency into
\begin{align}
\begin{split}
   & P^{\mathcal{LA}}(\Vec{a}_{J(\vec{x},\vec{t})}, \mathbf{C}^{|\vec{Z}|}|\Vec{x}_{F(\vec{x},\vec{t})}^{|\vec{Z}|}\vec{x}^{t}_{J(\vec{x},\vec{t})})  \\
    &=  P^{\mathcal{LA}}(\Vec{a}_{J(\vec{x},\vec{t})}|\Vec{x}_{F(\vec{x},\vec{t})}^{|\vec{Z}|}\vec{x}^{\vec{t}}_{J(\vec{x},\vec{t})}, \mathbf{C}^{|\vec{Z}|}) \\
   & \hspace{0.5cm} \times P^{\mathcal{LA}}(\mathbf{C}^{|\vec{Z}|}|\Vec{x}_{F(\vec{x},\vec{t})}^{|\vec{Z}|}\vec{x}^{\vec{t}}_{J(\vec{x},\vec{t})}) \\
\end{split}\\
    &=P^{\mathcal{LA}}(\Vec{a}_{J(\vec{x},\vec{t})},|\Vec{x}_{J(\vec{x},\vec{r})}^{|\vec{Z}|}, \mathbf{C}^{|\vec{Z}|})\cdot P^{\mathcal{LA}}(\mathbf{C}^{|\vec{Z}|}),
\end{align}
where the second equality follows from the fact that the $a_j$'s with $j\in J(\vec{x},\vec{t})
$ are outside the light cones of the interventions $x_i^{t_i}$ with $i\in F(\vec{x},\vec{t})$ and the interventions $\vec{x}^{|\vec{Z}|}$ are in the future of all the records $\mathbf{C}^{|\vec{Z}|}$  have been used.

Putting everything together, we find that any behaviour $\wp$ in a sequential Wigner's friend scenario can be decomposed as

\begin{align}
\begin{split}
 \label{Equation:LocalFriendlinessmodelInSWS}   \wp(\vec{a}|\vec{x}^{\vec{t}}) = &\sum_{\mathbf{C}^{|\vec{Z}|}}\left[ \prod_{i \in F(\vec{x},\vec{t})}\delta_{a_i, c_i^{t_i}} \cdot P^{\mathcal{LA}}(\vec{a}_{J(\vec{x},{\vec{t}})}|\vec{x}^{\vec{t}}_{J(\vec{x},\vec{t})}, \mathbf{C}^{|\vec{Z}|} ) \right. \\
    & \hspace{0.5cm} \left. \times  P^{\mathcal{LA}}(\mathbf{C}^{|\vec{Z}|})\right]
    \end{split}
\end{align}
with $F(\vec{x},\vec{t}) = \{i\in I_| x_i^{t_i} \in M_i^{Z}\}$, $J(\vec{x},\vec{r})$ = $I \setminus F(\vec{x}, \vec{t})$. The properties of this model immediately entail the desired result. 

\begin{theorem}\label{Theorem:LFcorrelationsarePDP}
    Let $S_{sW} =  (S,Z)$ be an arbitrary sequential Wigner's friend scenario. Then, the set $\mathbf{LF}(S_{sW})$ of correlations compatible with the assumptions going into Local Friendliness form a partially deterministic polytope. More presicely,
    \begin{align}
        \mathbf{LF}(S_{sW}) = \mathbf{PD}(S, M^{Z}),
    \end{align}
    where the collection $M^{Z}$ consists of the query measurements.
\end{theorem}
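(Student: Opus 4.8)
The plan is to show the two inclusions $\mathbf{LF}(S_{sW}) \subset \mathbf{PD}(S,M^{Z})$ and $\mathbf{PD}(S,M^{Z}) \subset \mathbf{LF}(S_{sW})$ separately, with the forward inclusion being essentially already established by the decomposition in Eq.~\eqref{Equation:LocalFriendlinessmodelInSWS}. First I would observe that the collection $M^{Z}$ consists exactly of the query measurements, so that $F(\vec{x},\vec{t}) = \{i\in I: x_i^{t_i}\in M_i^{Z}\}$ coincides precisely with the context-dependent set $F_{\vec{x}}=\{i\in I: x_i\in M_i^{Z}\}$ used throughout Section \ref{section:PartialDeterminismSubsection}. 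With this identification, I would rewrite the derived model by setting $P(\mathbf{C}^{|\vec{Z}|}) =: P(\lambda)$ as a hidden-variable distribution, absorbing the product of Kronecker deltas $\prod_{i\in F_{\vec{x}}}\delta_{a_i,c_i^{t_i}}$ into a deterministic response function $D(\vec{a}_{F_{\vec{x}}}|\vec{x}_{F_{\vec{x}}},\lambda)\in\{0,1\}$ that reads off the friend's record, and identifying the remaining factor $P^{\mathcal{LA}}(\vec{a}_{J(\vec{x},\vec{t})}|\vec{x}^{\vec{t}}_{J(\vec{x},\vec{t})},\mathbf{C}^{|\vec{Z}|})$ as a no-signalling distribution on the complementary set. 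This matches exactly the form of the partially deterministic model in Eq.~\eqref{Equation:partialdeterminismDefiningEQ.} (or equivalently the partially factorizable form of Definition \ref{Definition:PartialFactorizability}), so I would invoke Theorem \ref{Theorem:PartiallyFactorizableEqualsPartiallyDeterministicPolytope} to conclude $\wp\in\mathbf{PD}(S,M^{Z})$.

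For the converse inclusion, I would take an arbitrary $\wp\in\mathbf{PD}(S,M^{Z})$ and construct an explicit Local Friendliness model realising it, thereby showing $\wp\in\mathbf{LF}(S_{sW})$. Using the partially deterministic decomposition, I would let the hidden variable $\lambda$ carry both a choice of deterministic vertex $D_r^{S'}$ over the query-substructure $S'=S_{|M^Z}$ and the values of the friends' complete records $\mathbf{C}^{|\vec{Z}|}$, with the response functions of the parties in $J$ drawn from the extremal no-signalling component $P_k^{S'^{\perp}}$. The key physical point to verify is that this mathematical model can be instantiated by a genuine protocol respecting Absoluteness of Observed Events (each $c_i^{z_i}$ has a definite value independent of whether it is queried) and Local Agency (each record is uncorrelated with space-like and future interventions). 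Since the no-signalling constraints on the $J$-component guarantee precisely the independence of local marginals from distant settings, and the determinism on $F$ guarantees $a_i=c_i^{t_i}$ whenever the query is chosen, the causal structure of Figs.~\ref{fig:SequentialWigner} and \ref{fig:SequentialWignerSpacetime} is respected.

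I would then combine the two inclusions to obtain the claimed equality $\mathbf{LF}(S_{sW})=\mathbf{PD}(S,M^{Z})$. The main obstacle, in my view, is not the forward direction—which is largely bookkeeping given Eq.~\eqref{Equation:LocalFriendlinessmodelInSWS}—but rather making the converse rigorous: one must argue that \emph{every} partially deterministic behaviour admits a model that is not merely of the correct algebraic form but is genuinely compatible with Absoluteness of Observed Events and Local Agency as physical principles, rather than just formal constraints. In particular, care is needed to confirm that the deterministic component can always be interpreted as ``reading off a pre-existing friend record'' and that the reversal structure of the protocol (the fact that emphasising $x_i^{t_i'}=0$ versus $x_i^{t_i'}=q^{z_i=t_i'}$ is irrelevant, as argued around Eq.~\eqref{Equation:T'equalsTequations}) does not introduce hidden signalling. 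Establishing this correspondence cleanly is what upgrades the result from a formal rewriting to the stated \emph{one-to-one correspondence} between sequential Wigner's friend scenarios and partially deterministic polytopes.
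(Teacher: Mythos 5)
Your proposal is correct, and its forward half is precisely the paper's entire proof: the published argument consists solely of the pointer ``see the steps between Eqs.~\eqref{Equation:BehaviourasMarginalinsWs}--\eqref{Equation:LocalFriendlinessmodelInSWS}'', i.e.\ exactly your bookkeeping identification of $F(\vec{x},\vec{t})$ with $F_{\vec{x}}$, of the product of deltas $\prod_{i\in F_{\vec{x}}}\delta_{a_i,c_i^{t_i}}$ with a deterministic response function conditioned on $\lambda = \mathbf{C}^{|\vec{Z}|}$, and of $P^{\mathcal{LA}}(\vec{a}_{J}|\vec{x}^{\vec{t}}_{J},\mathbf{C}^{|\vec{Z}|})$ with a no-signalling component, followed by an appeal to the form in Definition~\ref{PartiallyDeterministicPolytopeDefinition} (or Theorem~\ref{Theorem:PartiallyFactorizableEqualsPartiallyDeterministicPolytope}). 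Where you genuinely go beyond the paper is the converse inclusion $\mathbf{PD}(S,M^{Z})\subset\mathbf{LF}(S_{sW})$: strictly speaking the paper's derivation only establishes that Local Friendliness \emph{implies} the partially deterministic form, and the equality is treated as implicit in the reversibility of the manipulations. Your explicit construction --- letting the ontic variable carry a deterministic vertex $D_r^{S'}$ over the query substructure (which fixes all records $\mathbf{C}^{|\vec{Z}|}$ in a way independent of which $t_i$ is chosen and of remote settings, securing Local Agency) together with an extremal no-signalling behaviour $P_k^{S'^{\perp}}$ for the complementary parties, and positing the resulting joint distribution over records (securing Absoluteness of Observed Events) --- is what turns the claimed equality into a rigorous two-sided statement. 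So this is the same route as the paper with the converse direction made explicit; the cost is length, the benefit is that the ``one-to-one correspondence'' advertised after the theorem actually rests on a completed argument rather than on reading the forward derivation as an equivalence.
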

\begin{proof}
    See the steps between Eqs.~\eqref{Equation:BehaviourasMarginalinsWs}-\eqref{Equation:LocalFriendlinessmodelInSWS}.
\end{proof}

Theorem \ref{Theorem:LFcorrelationsarePDP} establishes partially deterministic polytopes as the relevant mathematical objects for bounding Local Friendliness correlations in seqential scenarios. To establish a no-go result, one would need behaviours which are outside the polytope. In particular, by Theorem \ref{Theorem:QuantumsetandPDP}, there are always quantum behaviour outside every partially deterministic polytope, excluding the extreme case where there are no deterministic measurements. It is worth pointing out here that this fact is a priori not enough to state that any such scenario can be used to demonstrate a quantum no-go result, since the set  $\mathbf{Q}(S)$ of quantum correlations was defined over the ordinary Bell scenario depicted in Fig.~\ref{fig:correlationscenarioFig} and does not contain information of the additional steps in the protocol which would have to be included in the model. Similarly to Proposition 1 of Ref.~\cite{Haddara2025} in the case of canonical scenarios, we can however establish a general equivalence between the sets $\mathbf{Q}(S_{sW})$ of quantum behaviours that can be produced in a sequential Wigner's friend scenario and $\mathbf{Q}(S)$. In the case of the sequential scenario, we the claim can be seen to hold by generalizing the model in Ref.~\cite{Utreras-Alarcon2024}.   For completeness, we report this conclusion as a proposition. 

\proposition{Let $S_{sW}= (S,Z)$ be an arbitrary sequential Wigner's friend scenario. The set $Q(S_{sW})$ of correlations compatible with a quantum model in that scenario equals the set of correlations in an ordinary Bell scenario $S$. That is
\begin{align}
    \mathbf{Q}(S) = \mathbf{Q}(S_{sW}).
\end{align}
}
\begin{proof}[proof sketch.]
   We omit a detailed proof, as it is a straightforward generalization of the model considered in Ref.~\cite{Utreras-Alarcon2024}, and instead sketch the general idea. 

Let $\wp(\vec{a}|\vec{x}) = tr[(\bigotimes\limits_{i\in I}M_{a_i|x_i}) \rho_{S}]$ be a behaviour realised in an ordinary Bell scenario with local measurements on the state $\rho_{S}$ sent by the source. The claim is that this behaviour can be realised in a quantum description the sequential Wigner's friend protocol with a reading consistent with the demand that the inputs $x_i^{t_i}$, $1\leq t_i \leq |Z_i|$ correspondto querying the friend for their observation.

Note first that without loss of generality the operators $M_{a_i|x_i}$ may be taken to be projective, and the state $\rho_{S}$ to be a pure $|I|$-partite state. The claim can then be seen simply by considering a model where the evolution of the friend+system state due to observation by the friend is modelled unitarily and the non-query measurements first reverse the friends laboratory and measure the system directly with the original $M_{a_i|x_i}$. The correct reading for the query measurements can be guaranteed, for example, by restricting each friend to always make the same observation, and  having  all the query measurements match a projector onto the subspace of the friends memory where a given outcome is stored. 
\end{proof}
\normalfont

Finally let us draw attention to the fact that the equality of polytopes in Theorem \ref{Theorem:LFcorrelationsarePDP} can be viewed to be a consequence of the one-to-one correspondence  between the structure $S_{sW} = (S,Z)$ which under Local Friendliness imposes the collection $M^{Z}$ to be deterministic, and the structure $S$ with a collection $M'$ of deterministic inputs. The mapping of sequential Wigner's friend scenarios $S_{sW}$ and the corresponding Local Friendliness-polytopes in those scenarios is therefore many-to-one, owing to Theorem \ref{Theorem:EquivalenceClassesofPDP's}, where it was established that there are nontrivial classes where different collections of deterministic inputs lead to the same polytope.  In this context, these equivalence classes can be seen to attain considerable practical  significance, since two scenarios  $S_{sW} = (S,Z)$ and $S'_{sW}=(S,Z')$ correspond to physically different experiments, with possibly different numbers of observers and sequences of measurements. Thus, in particular, Theorem \ref{Theorem:EquivalenceClassesofPDP's} can be used to find the 'simplest' implementation of a sequential Wigner's friend scenario where the Local Friendliness polytope matches a desired partially deterministic polytope. This observation complements a similar result in the context of canonical scenarios \cite{Haddara2025} where it was recognized to  be relevant for  potential experimental implementations.  Indeed, from an experimental perspective in this setting,  there may be motivation to minimize the numbers of reversals and inclusion of additional physical systems to avoid inclusion of ancillary noise.

\section{\label{Section:conclusions}Conclusions}

In this work we have defined and thoroughly investigated  the properties of the  device-independent behaviour sets called \emph{partially deterministic polytopes}, following terminology introduced by Woodhead \cite{Woodhead2014}, special cases of which have been studied in the literature under different names before,  e.g. as Local Friendliness polytopes in Refs.~\cite{Bong2020, Haddara2025}. Loosely speaking, these polytopes consists of those behaviours which are compatible with a given \emph{partially deterministic model}, a  notion which generalizes that  of a local deterministic model by imposing that only a certain subset of inputs at each site are determined by a local variable.

We studied the families of polytopes which arise from the distinct ways partial determinism may be imposed in a given scenario. Among our results we found that the polytopes may be arranged into equivalence classes and completely classified them. The equivalence class corresponding to the no-signalling polytope always has a single element, corresponding to no deterministic inputs at any site, while at the other extreme, the Bell-polytope in particular can in general be represented in multiple different ways by imposing partial determinism. Nontrivial equivalence classes may also arise for polytopes which do not equal the Bell or the no-signalling sets. 

While every partially deterministic polytope contains the Bell polytope and is contained in the no-signalling polytope, two different partially deterministic polytopes belonging to different equivalence classes may either obey a strict subset relation, or it may be that neither polytope is contained in the other. We have also shown, that in every case where a partially deterministic polytope does not equal the Bell or the no-signalling polytopes, then the quantum set is not contained in, nor fully contains, that partially deterministic polytope. 

When two partially deterministic polytopes can be ordered with respect to the strict subset relation, so can the sets of their extreme points. As a corollary, every vertex of any partially deterministic polytope is also a vertex of the no-signalling polytope in particular. Furthermore if the vertex representation of a no-signalling polytope for a given scenario is known, it can be used to give vertex representations of any partially deterministic polytope for that scenario by excluding all the, and only the vertices which do not give a deterministic value for the inputs which are deterministic for that polytope.  

We also demonstrated that, in analogy with Fine's theorem \cite{Fine1982,Fine1982b} for the case of local determinism, the assumption that a behaviour has a partially deterministic model is equivalent to a seemingly more general notion, which we call ``partial factorizability''. By pursuing this analogy further, and considering statements about the existence of joint distributions with certain marginals,  we proved a theorem  that strictly generalizes Fine's theorem, recovering it as a special case. More generally, this theorem introduces new necessary and sufficient conditions for the existence of a partially deterministic model of a certain kind. In particular, the numerous  ways to impose partial determinism leading to the equivalence class corresponding to the Bell-polytope can be used to obtain novel equivalent ways to express necessary and sufficient conditions for LHV-modelability in general Bell scenarios, further sharpening the usual formulation of Fine's theorem.

On the technical side we provided a general method to construct new behaviour sets in a given scenario from behaviours defined over subscenarios of the original one. This type of  map, termed the `behaviour product', generalises the notion introduced in the bipartite case by Woodhead \cite{Woodhead2014}, in his definition of partially deterministic polytopes. Indeed, we find that generally, a partially deterministic polytope can be equivalently expressed using the behaviour product of two subscenarios, one scenario with local deterministic behaviours, and the other constrained to no-signalling correlations. We showed that not every set of behaviours -- including the quantum and no-signalling sets -- can be represented as a nontrivial combination of sets over smaller scenarios, and hence this property, which we termed ``composability'', is a nontrivial  structural property of some behaviour sets. Our formalism provides the basic tools to investigate the features of such sets. 

In previous works, some properties of partially deterministic polytopes have explicitly been studied in the context of randomness certification \cite{Woodhead2014,ramanathan2024maximumquantumnonlocalitysufficient} and the foundational Local Friendliness no-go theorem \cite{Bong2020, Utreras-Alarcon2024, Haddara2025}. We identified new applications of partially deterministic polytopes, such as device-independent (indeed theory-independent) quantum state inseparability witnesses, and as precisely the bounds for the assumption of  ``broadcast-locality'' \cite{Bowles2021, Boghiu2023} in a class of broadcasting scenarios. The intersections, unions, and convex hulls of the unions of partially deterministic polytopes were also shown to be of interest, and closely related to notions of anonymous nonlocality \cite{Liang2020}, separability with respect to a given partition, and  genuine multipartite nonlocality \cite{Svetlichny1987, Bancal2013, Gallego2012}. We also recognized a limitation of the applicability of  definitions of partial determinism for those purposes, suggesting that more general notions of composable sets, and the objects obtainable from them by set-theoretic relations,  should be investigated in order to capture bounds related to broadcast-locality in scenarios which include multiple broadcasters, and other separability notions in multipartite scenarios with more than three parties.

As another demonstration of applicability, we considered a generalization of the sequential extended Wigner's friend scenario introduced in Ref.~\cite{Utreras-Alarcon2024}, and showed that the polytopes which constrain the Local Friendliness correlations in those scenarios are in one-to-one correspondence with partially deterministic polytopes, so that in this sense, every partially deterministic polytope may be considered of physical relevance. In this case, the equivalence classes of polytopes also attain physical relevance, as their elements correspond to different experimental set  ups. Therefore, in particular,  given some partially deterministic polytope, it is always possible to decide the simplest experimental realisation of a sequential Wigner's friend scenario which leads to those constraints. This observation generalises that of Ref.~\cite{Haddara2025}, where a similar potentially useful consequence of the equivalence classes of polytopes was recognized in the context of so-called canonical Local Friendliness scenarios. 

As our framework is very general, we believe that classes of partially deterministic polytopes could be of significance in various other topics in device-independent quantum information, motivated by distinct information processing tasks or physical principles. Some candidates for further enquiry include explorations of (non)joint measurability structures \cite{Quintino2019}, and classes of experiments which aim to simultaneously probe features of Bell-nonlocality  and noncontextuality  \cite{Temistocles2019, Mazzari2023}.  The concept may also be useful in situations where non-classical properties emerge but for whatever reason one may expect some subsets of systems, or measurements on those systems,  to be compatible with a classical model \cite{Chiribella2024}.

Finally, while reliance on the precise causal structure of a Bell scenario, as our examples demonstrate, is not required to apply our results, we have essentially built our definitions on families of `marginal compatibility structures' which are constructed on top of the Bell-scenario. Given that not all compatibility structures may be realised in a Bell scenario \cite{BudroniContextualityreview2022}, it seems there is room for further generalization of our definitions and results. Such generalizations could be of relevance, for example, for attempts to prove Wigner's friend no-go theorems  without assuming space-like separation \cite{walleghem2024extendedwignersfriendparadoxes, walleghem2024connectingextendedwignersfriend, walleghem2025extendedwignersfriendnogo}. Another interesting potential generalization of our results concerns more general correlation scenarios with multiple independent sources  \cite{Branciard2010, Fritz_2012}. To what extent the techniques used in this work could be applied to those situations is in our view, worth future investigation.

\section*{Acknowledgements}
MH gratefully acknowledges Yeong-Cherng Liang for useful discussions and for pointing out the results Ref.~\cite{Liang2020} which showed that the intersection of three distinct tripartite partially deterministic polytopes is not equal to the Bell polytope. 
MH is also grateful to Elie Wolfe for pointing out the work in Ref.~\cite{ramanathan2024maximumquantumnonlocalitysufficient} which made use of partially deterministic polytopes, and to Luis-Villegas Aguilar for pointing out Refs.~\cite{Temistocles2019, Mazzari2023} which described objects closely related to the ones investigated here. While preparing this manuscript, we became aware of work in Ref.~\cite{walleghem2024connectingextendedwignersfriend} which also explored constraints arising from Local Friendliness in sequential Wigner's friend scenarios. This research was funded by the Australian Government through the Australian Research Council’s Centre of Excellence Program CE170100012, the Centre for Quantum Computation and Communication Technology (CQC2T), and an Australian Government Research Training Program (RTP) Scholarship. This work was also supported by ARC Grant No. DP210101651. 
\normalfont

\newpage
\bibliography{LF_PDPs}

\appendix

\section{ Proof of Theorem \ref*{Theorem:LDisInQuantum}\label{appendixProofthatLDisInQuantum}}

\begin{customthm}{\ref*{Theorem:LDisInQuantum}}
Let $S = (I,M,O)$ be a scenario. Then $\mathbf{LD}(S) \subset \mathbf{Q}(S)$. 
\begin{proof}
    Let $\wp \in \mathbf{LD}(S)$ be an arbitrary behaviour. By definition, $\wp = \sum_{j\in J}\lambda_j D_j$, with $D_j \in \mathbf{PP}_{NS}(S)$ for all $j\in J$, $\lambda_j\geq 0$ for all $j\in J$ and $\sum_{j\in J}\lambda_j =1$. Consider the distributions $D_j(\vec{a}|\vec{x}) = \prod_{i\in I}D_j(a_i|x_i) \in \{0,1\}$. Owing to normalization, there is exactly one outcome $a_i = \alpha^j_{x_i}$, say,  for each $j$ with probability one occurring at each site. We will construct a quantum state and a collection of projective measurements which reproduces $D_j$.  Let $\mathcal{H}_{\vec{A}} =  \bigotimes_{i\in I}\mathcal{H}_{A_i}$ be the Hilbert space of all the agents $i\in I$ and for each agent $i$, their Hilbert space $\mathcal{H}_{A_i} = \bigotimes\limits_{1 \leq m_i \leq |M_i|  } \mathcal{H}_{x^{m_i}_i}$ with $\mathrm{dim}({\mathcal{H}_{x^{m_i}_i}}) = |O_{x_i}|$ for all $i\in I, x_i\in M_i$. Consider the pure state $\rho^j_{\boldsymbol{\alpha}_i} := \ket{\boldsymbol{\alpha^j_i}}\bra{\boldsymbol{\alpha^j_i}} \in \mathbb{S}(\mathcal{H}_{A_i})$ for each agent, where
    \begin{align}
        \ket{\boldsymbol{\alpha^j_i}} = \bigotimes_{1\leq m_i \leq |M_i|}\ket{\alpha^j_{x^{m_i}_i}}
    \end{align}
    and the $\ket{\alpha^j_{x^{m_i}_i}}$ form an orthonormal basis of $\mathcal{H}^A_{x^{m_i}_i}$ for all $i\in I$ with $\alpha^j_{x_i^{m_i}}\in O_{x_i = x_i^{m_i}}$. 
    Define the projections $M_{a_i|x_i}:= \ket{a_{x_i}}\bra{a_{x_i}} \otimes id_{\mathcal{H}_{A_i \setminus \{x_i\}}}$ for all $a_i \in O_{x_i}, x_i\in M_i$ and $ i\in I$. Then 
\begin{align}
    \mathrm{tr}[M_{a_i|x_i}\rho^j_{\boldsymbol{\alpha}_i}] = D^j(a=\alpha_{x_i}| \boldsymbol{\alpha}^j_i),
\end{align}
for each party, as required. By construction, the state $\rho^j =\bigotimes\limits_{i\in I}\rho^j_{\boldsymbol{\alpha}_i} \in \mathbb{S}(\mathcal{H_{\vec{A}}}) $
    can be used to recover the entire collection $D_j$ by use of local measurements $\bigotimes_{i\in I}M_{a_i|x_i}$,  which shows that $\mathbf{PP_{\mathbf{NS}}}(S)\subset \mathbf{Q}(S)$. Since the quantum state space is convex, the mixed state 
    \begin{align}
        \rho = \sum_{j\in J}\lambda_j \rho^j = \sum_{j\in J} \lambda_j(\bigotimes\limits_{i\in I}\rho^j_{\boldsymbol{\alpha}_i} )
    \end{align}
is a well defined quantum state, which owing to linearity of the trace reproduces any $\wp \in \mathbf{LD}(S)$ with $\wp = \sum_{j\in J}\lambda_jP_j$, proving the claim. 
\end{proof}
\end{customthm}
\section{Proof of Theorem \ref*{Theorem:unionof3partypdp'SiSNOTCONVEX}}
\label{appendixSection:ProofofProp}

The complete list of facet inequalities for the tripartite Bell-polytope was solved by Sliwa \cite{Sliwa2003, Sliwa2003arXiv} (and also by Pitowsky and Svozil \cite{Pitowsky2001}), leading to 46 inequivalent classes of inequalities when relabeling symmetries are taken into account. Sliwa listed his results \cite{Sliwa2003arXiv} in terms of the correlators such as
\begin{align}
    \expectation{A_iB_jC_k} = \sum\limits_{a,b,c \in \{\pm 1\}} abc\wp(abc|x_A=i, x_B = j, x_C = k),
    \end{align}
with $i,j,k \in \{1,2\}$ labelling the inputs of parties $A,B,C$, with similar definitions for the single-party expressions such as $\expectation{A_i}$ and two-party expressions such as $\expectation{A_iC_j}$. For convenience, we will adopt a similar notational conventions here. 

We draw attention to class (3) of the list in Ref.~\cite{Sliwa2003arXiv} in particular, and pick three representatives from that class, which we refer to as the Sliwa 3A, 3B, and 3C inequalities, respectively.  \newline

\begin{flushleft}
    Sliwa 3A:
\end{flushleft}\begin{align}\label{Equation:SliwaA1type3}
       \begin{split}
           \expectation{A_1 B_1 C_1} + \expectation{A_2 B_1 C_1} + \expectation{A_1 B_2 C_2} - \expectation{A_2 B_2 C_2} \leq 2.
       \end{split} 
    \end{align}
Sliwa 3B:

\begin{align}\label{Equation:SliwaA2type3} 
       \begin{split}
           \expectation{A_1 B_1 C_1} + \expectation{A_1 B_2 C_1} + \expectation{A_2 B_1 C_2} - \expectation{A_2 B_2 C_2} \leq 2.
       \end{split} 
    \end{align}

\begin{flushleft}
Sliwa 3C 
\end{flushleft}\begin{align}
    \label{Equation:SliwaA3type3}
    \begin{split}
        \expectation{A_1 B_1 C_1} + \expectation{A_1 B_1 C_2} + \expectation{A_2 B_2 C_1} - \expectation{A_2 B_2 C_2} \leq 2.
    \end{split}
\end{align}

The Sliwa 3A inequality of Eq.~\eqref{Equation:SliwaA1type3} is exactly the representative of that class given in Ref.~\cite{Sliwa2003arXiv}. That the other two are in the same class, can be seen by noting that the expressions in 3B and 3C can be obtained from 3A by relabeling the parties  $A-B$ and $A-C$, respectively. These are among the allowed symmetries since each party has the same number of inputs at each site \cite{Sliwa2003,Sliwa2003arXiv}.

\proposition{\label{Proposition:Sliwa3relationsInthreepartitescenario}Let $S$ be a tripartite scenario with $I = \{A,B,C\}$, $M_i = \{1,2\}$ for all $i\in I$ and $O_{x_i} = \{-1,1 \}$ for all $x_i \in M_i$. Let $M^{\{k\}}$ denote the collection $M^{\{k\}}_i = M_i$, if $i=k$ and $ M_i^{\{k\}} = \emptyset$ otherwise. Then the following three statements are true:
\begin{enumerate}[label=\roman*), ref=\ref{Proposition:Sliwa3relatonsInthreepartitescenario}.\roman*]

  \item \label{Proposition:Sliwa3.i}The Sliwa 3A inequality of Eq.~ \eqref{Equation:SliwaA1type3} is valid for the partially deterministic polytope $\mathbf{PD}(S,M^{\{A\}}) $ but not for $\mathbf{PD}(S,M^{\{B\}})$ and $\mathbf{PD}(S,M^{\{C\}})$.
  
    \item \label{Proposition:Sliwa3.ii}The Sliwa 3B inequality of Eq.~\eqref{Equation:SliwaA2type3} is valid for $\mathbf{PD}(S,M^{\{B\}})$ but not for $\mathbf{PD}(S,M^{A})$ and $\mathbf{PD}(S,M^{\{C\}})$.

     \item \label{Proposition:Sliwa3.iii}The Sliwa 3C inequality of Eq.~\eqref{Equation:SliwaA2type3} is valid for $\mathbf{PD}(S,M^{\{C\}})$ but not for $\mathbf{PD}(S,M^{\{A\}})$ and $\mathbf{PD}(S,M^{\{B\}})$.

\end{enumerate}
  }
\begin{proof}
    Since the polytopes $\mathbf{PD}(S,M^{\{k\}})$ are convex, it is sufficient to verify that the claim holds for the extreme points. Furthermore, since the inequalities \eqref{Equation:SliwaA1type3} and \eqref{Equation:SliwaA2type3} are satisfied by all predictable behaviour owing to the fact that they are  facets of the polytope $\mathbf{B}(S)$ as established by Sliwa \cite{Sliwa2003,Sliwa2003arXiv}, it is sufficient to constrain the analysis to only the non-classical, i.e. genuinely partially predictable, extreme points of the polytopes. From Theorem \ref{TheoremextremepointsofPDP} it follows that the non-predictable extreme points of $\mathbf{PD}(S,M^{\{k\}})$ are behaviour products of deterministic distributions for the party $k$ and PR-boxes for the remaining two parties.   We will only  show the claim for the case of Eq.~\eqref{Equation:SliwaA1type3} as the other two cases are analogous. 
    
  For the extreme points of $\mathbf{PD}(S,M^{\{A\}})$ Eq.~\eqref{Equation:SliwaA1type3} reduces to
    \begin{align}
    \begin{split}\label{Equation:Sliwa3EquationReducedforFirstpolytope}
      &  \expectation{A_1}\expectation{B_1 C_1} + \expectation{A_2} \expectation{B_1C_1} + \expectation{A_1}\expectation{B_2 C_2} - \expectation{A_2}\expectation{B_2 C_2} \\
       & = \expectation{B_1 C_1}( \expectation{A_1} + \expectation{A_2}) + \expectation{B_2 C_2}( \expectation{A_1} - \expectation{A_2}) \leq 2,
        \end{split}
    \end{align}
following from the fact that the party $A$ with the deterministic input is uncorrelated with $B$ and $C$. One of the two terms in  Equation \eqref{Equation:Sliwa3EquationReducedforFirstpolytope} always vanishes owing to the fact that $\expectation{A_k} \in \{-1,1 \}$ for all $k \in \{1,2\}$, while the pairwise correlators $\expectation{B_i C_i}$ for a PR box are similarly always bounded to take values in the set $\{-1,1\}$ with the additional constraint that exactly three of the correlators (with the same signature) take the same value. It follows that any behaviour $\wp^{1} \in \mathbf{PD}(S,M^{\{A\}})$ satisfies the inequality \eqref{Equation:SliwaA1type3}, as claimed. Note also, that the equality in Eq.~\eqref{Equation:Sliwa3EquationReducedforFirstpolytope} can be achieved for example by a behaviour $\wp_A^{PR}$ with a PR-box between $B-C$ with the property that $\expectation{B_1C_1}=1$ and $\expectation{A_i}=1$.

It remains to show that the same inequality can be violated by some $\wp^{2}\in \mathbf{PD}(S,M^{\{2\}})$ and $\wp^3 \in \mathbf{PD}(S,M^{\{C\}})$. 

For a non-classical extreme point of $\mathbf{PD}(S,M^{\{B\}})$ Eq.~\eqref{Equation:SliwaA1type3} is, in a similar manner,  seen to be equivalent to 
\begin{align}
\begin{split}
    \expectation{B_1}(\expectation{A_1 C_1} + \expectation{A_2 C_1}) + \expectation{B_2}(\expectation{A_1 C_2} - \expectation{A_2 C_2}) \leq 2.
    \end{split}
\end{align}
This inequality can however be violated up to the value $4>2$, for example, by the partially predictable behaviour $\wp_B^{PR} \in \mathbf{PD}(S,M^{\{B\}})$ for which $\expectation{B_i} = 1$ for $i\in \{1, 2\}$ and the PR-box between $A-C$ with $\expectation{A_2 C_2} = -1$ and $\expectation{A_i B_j} = 1$, otherwise. 

A similar observation is made for a non-classical extreme point $\wp^{C}\in \mathrm{Ext}(\mathbf{PD}(S,M^{\{C\}}))$, that Eq.~\eqref{Equation:SliwaA1type3} implies 

\begin{align}
    \expectation{C_1}(\expectation{A_1B_1} + \expectation{A_2B_1}) +\expectation{C_2}(\expectation{A_1B_2} - \expectation{A_2B_2}) \leq 2,
\end{align}
which can be violated up to the maximum value 4 with a partially predictable point $\wp_C^{PR}$ that has $\expectation{C_i}=1$ and a PR-box for $A-B$ with $\expectation{A_2B_2}=-1$ and $\expectation{A_iA_j}$ otherwise.

Finally, note that the inequalities \eqref{Equation:SliwaA2type3} and \eqref{Equation:SliwaA3type3} can be obtained from \eqref{Equation:SliwaA1type3} by swapping the labels $A \leftrightarrow B$ and $A-C$ respectively,  and hence the same argument can be seen to hold in that case by symmetry, as they correspond to permutations of the deterministic party.   
\end{proof}
\normalfont

With respect to specific values of violations of the Sliwa-3 inequalities, we also state the following immediate observation as a lemma.

\lemma{\label{Lemma:LemmaPartialPRboxesLemma}Let $S$ be a tripartite scenario with $I = \{A,B,C\}$ and $M_i = \{1,2\}$ for all $i\in I$ and $O_{x_i}=\{-1,1\}.$ The partially predictable behaviours
$\wp_A^{PR}, \wp_B^{PR}$ and $\wp_C^{PR}$ with 
\begin{align}
    \expectation{A_j}=1, \expectation{B_1C_1}=1, \expectation{B_2C_2}=-1, \label{Equation:LemmaPartialPRboxesRelations}
\end{align}
and similarly for $\wp_B^{PR}$ and $\wp_{C}^{PR}$, each reach the maximum value of 2, for the inequality valid for their polytope $\mathbf{PD}(S,M^{k}), k\in I$ and the maximum value of 4, for the inequalities of Proposition \ref{Proposition:Sliwa3relationsInthreepartitescenario},  which are not valid for that $k$. 
}
\begin{proof}
    In the proof of Proposition \ref{Proposition:Sliwa3relationsInthreepartitescenario} it was shown that $\wp_A^{PR}$ can reach the maximum value of 2 of the equation valid for it, if $\expectation{A_j}=1$ for both inputs, and $\expectation{B_1C_1}$. For $\wp_B^{PR}$ and $\wp_C^{PR}$, it was shown that they violate the equation valid for $\mathbf{PD}(S,M^{\{A\}})$ up to the maximum value 4, if $\expectation{A_2C_2}=-1$ and $\expectation{A_2B_2}=-1$, respectively. It is clear that the type constraints in \eqref{Equation:LemmaPartialPRboxesRelations} can therefore be satisfied simultaneously while being compatible with the claimed maximal violations. Since the behaviours that satisfy \eqref{Equation:LemmaPartialPRboxesRelations} are related to eachother by a symmetry $A-B$, $A-C$ similarly to the relations of the inequalities \eqref{Equation:SliwaA1type3}-\eqref{Equation:SliwaA3type3} the claim follows.  
\end{proof}
\normalfont
We are now ready to prove Theorem \ref{Theorem:unionof3partypdp'SiSNOTCONVEX}.

\

\begin{customthm}{\ref*{Theorem:unionof3partypdp'SiSNOTCONVEX}}
    Let $S= (I, M, O)$ be a scenario with $|I| =3.$ If the polytopes $\mathbf{PD}(S,M^{I'})$ are not equal for at least two $I' \in \mathcal{I}$, then the union  $\bigcup_{I' \in \mathcal{I}}\mathbf{PD}(S,M^{I'})$ is not convex.
\end{customthm}
\begin{proof}
    From Theorem \ref{Theorem:EquivalenceClassesofPDP's}, see also Example \ref{Example:partialdeterminismInThreepartiteTwo-inputScenario} and Fig.~\ref{fig:TripartiteEquivalenceClasses}, it follows that there are three inequivalent classes of partially deterministic polytopes $\mathbf{PD}(S,M^{k})$, $k\in I$. That at least two of them is contained in the union $\bigcup_{I' \in \mathcal{I}}\mathbf{PD}(S,M^{I'})$ is therefore necessary for the union to not be convex, since otherwise $\mathbf{PD}(S,M^{I'})$ equals the Bell-polytope $\mathbf{B}(S)$, which is convex and contained in every other partially deterministic polytope which are individually convex. 

    We will show that condition is also sufficient.  

    By proposition \ref{Proposition:Sliwa3relationsInthreepartitescenario} a necessary condition for convexity is that any mixture satisfies at least one of the  Sliwa 3 inequalities in Eqs.~\eqref{Equation:SliwaA1type3}-\eqref{Equation:SliwaA3type3}. Consider a uniform mixture   of three behaviours $\wp_A^{PR}, \wp_B^{PR}$ and $\wp^{PR}_{C}$ of the kind described in Lemma \ref{Lemma:LemmaPartialPRboxesLemma}. Since each of the expressions Eqs.~\eqref{Equation:SliwaA1type3}-\eqref{Equation:SliwaA3type3} is linear in probabilities, evaluating each expression with the mixture $\frac{1}{3}\sum_{k}\wp^{PR}_k$ gives the convex weight of the individual values given $\wp_k^{PR}$. Since $\frac{1}{3}(4+4+2)= \frac{8}{3}>2$, and each inequality attains the same value by Lemma \ref{Lemma:LemmaPartialPRboxesLemma}, this mixture is not contained in any single one of the polytopes $\mathbf{PD}(S,M^{\{k\}})$. Thus, the maximal union $\bigcup_{I' \in \mathcal{I}^{\max}} \mathbf{PD}(S,M^{I'})$ in particular is not convex. 

    Similarly, it can be seen that a union of any pair $k,j \in I, k\neq j$, of the partially deterministic polytopes is not convex, by considering the mixture $\frac{1}{2}\wp_k^{PR} + \frac{1}{2}\wp_j^{PR}$ instead, and observing that a value of $3>2$ is obtained for any pair of appropriate inequalities. 
    \end{proof}

\end{document}